\newcommand{\preprint}[1]{\iftoggle{preprint}{#1}{}}
\newcommand{\jasa}[1]{\iftoggle{jasa}{#1}{}}
\preprint{
\usepackage{arxiv}

\usepackage{amsmath,amssymb,bbm}
\usepackage{amsthm}
\usepackage{thmtools}
\usepackage{natbib}
\bibliographystyle{chicago}
\usepackage[hyperfootnotes=false, hidelinks]{hyperref}
\usepackage[capitalize,nameinlink]{cleveref}

\usepackage{breqn}
\usepackage{changepage}
\usepackage{dsfont}
\usepackage{multirow}
\usepackage{makecell}
\usepackage{booktabs}
\usepackage{array}
\usepackage{mathtools}
\usepackage{thmtools}
\usepackage{esvect}
\usepackage{url}
\usepackage{algorithm}
\usepackage{algorithmic}
\usepackage{bm}
\usepackage{wrapfig}
\usepackage{mathrsfs}
\usepackage{subfiles}
\usepackage[caption = false]{subfig} 
\usepackage{graphicx}
\usepackage{float}
\usepackage{tabularx}
\usepackage[symbol]{footmisc}
\usepackage{enumerate}
\usepackage{comment}
\usepackage{appendix}
\makeatletter
\patchcmd{\@makecaption}
  {\parbox}
  {\advance\@tempdima-\fontdimen2} %
  {}{}

\newcommand{\thickhline}{%
    \noalign {\ifnum 0=`}\fi \hrule height 1pt
    \futurelet \reserved@a \@xhline
}
\makeatother 

\def\[#1\]{\begin{equation}\begin{aligned}#1\end{aligned}\end{equation}}
\def\*[#1\]{\begin{equation*}\begin{aligned}#1\end{aligned}\end{equation*}}

\usepackage{crossreftools}
\pdfstringdefDisableCommands{%
    \let\Cref\crtCref
    \let\cref\crtcref
}

\hypersetup{%
    bookmarksnumbered, bookmarksopen=true, bookmarksopenlevel=1,colorlinks=false%
}

\crefname{subsection}{Section}{Sections}
\crefname{lemma}{Lemma}{Lemmas}
\crefname{corollary}{Corollary}{Corollaries}
\crefname{theorem}{Theorem}{Theorems}
\crefname{definition}{Definition}{Definitions}
\crefname{assumption}{Assumption}{Assumptions}
\crefname{supplement}{Supplement}{Supplements}

\newcommand{\N}{\mathbb{N}}
\newcommand{\NN}{\mathcal{N}}
\newcommand{\JJ}{\mathcal{J}}

\newcommand{\R}{\mathbb{R}}

\newcommand{\GP}{\mathcal{GP}}
\newcommand{\VE}{\mathcal{E}}
\newcommand{\PolyP}{\mathcal{P}}

\newcommand{\eps}{\varepsilon}

\newcommand{\dee}{\mathrm{d}}
\newcommand{\ind}{\mathbb{I}}

\newcommand{\sT}{\mathrm{\ s.t. \ \ }}
\newcommand{\negsetdelim}{\vert}
\newcommand{\setdelim}{\ \vert \ }

\newcommand{\Bigsetdelim}{\ \Big\vert \ }

\newcommand{\tpose}{^{\top}}

\DeclareMathOperator{\EE}{\mathbb{E}}
\DeclareMathOperator{\PP}{\mathbb{P}}

\DeclareMathOperator{\argmax}{\arg\max}

\newcommand{\absbig}[1]{\Big\lvert #1 \Big\rvert}
\newcommand{\abssmall}[1]{\lvert #1 \rvert}
\newcommand{\Norm}[1]{\left\lVert#1\right\rVert}

\newcommand{\isim}{\overset{iid}\sim}
\newcommand{\Tr}{{}^\intercal}

\newcommand{\Ord}{\mathcal{O}}
\newcommand{\Ordp}{\mathcal{O}_{P}}

\newcommand{\iid}{\text{i.i.d.}}

\newcommand{\AGHQ}{{\upshape AGHQ}}
\newcommand{\GHQ}{{\upshape GHQ}}
\newcommand{\MCMC}{{\upshape MCMC}}

\newcommand{\property}[2]{\mathscr{P}(#1,#2)}

\newcommand{\mleaccent}[1]{{#1}^{\mathrm{\scriptscriptstyle MLE}}}
\newcommand{\modeaccent}[1]{\widehat{#1}}
\newcommand{\intapproxaccent}[1]{\check{#1}}

\newcommand{\dist}{\pi} %
\newcommand{\distfunc}{\pi^{\meanfunc}} %
\newcommand{\margdist}[1]{\pi^{#1}}

\newcommand{\llhood}{\ell_n}

\newcommand{\llandp}{{\ell^\dist_n}}

\newcommand{\logposthess}{\mb{H}_{n}}
\newcommand{\logposthessfunc}{\mb{H}^{\meanfunc}_{n}}
\newcommand{\logposthessmarg}{\mb{H}^{\margparamval}_{n}}

\newcommand{\logpostchol}{\modeaccent{\mb{L}}_{n}}
\newcommand{\logpostcholfunc}{\modeaccent{\mb{L}}^{\meanfunc}_{n}}
\newcommand{\logpostcholmarg}{\modeaccent{\mb{L}}^{\margparamval}_{n}}

\newcommand{\approxdist}{\widetilde \dist} %
\newcommand{\approxmargdist}[1]{{\widetilde \dist}^{#1}} %
\newcommand{\quadapproxdist}[2]{\widetilde \dist_{(#1,#2)}}
\newcommand{\approxint}{\widetilde \genintFunc}
\newcommand{\quadapproxint}[2]{\widetilde \genintFunc_{(#1,#2)}}
\newcommand{\quadadaptapproxint}[2]{\widetilde \genintFunc^{\adapt}_{(#1,#2)}}

\newcommand{\adapt}{\mathcal{A}}

\newcommand{\quadadaptmarg}[2]{{\quadapproxdist{#1}{#2}^{\adapt}(\data)}}

\newcommand{\quadadaptapprox}[3]{\quadapproxdist{#1}{#2}^{\adapt}(#3 \setdelim \data)}

\newcommand{\GHmarg}{\approxdist^{\text{\upshape\tiny AGHQ}}(\data)}
\newcommand{\GHapprox}[1]{\approxdist^{\text{\upshape\tiny AGHQ}}(#1 \setdelim \data)} %
\newcommand{\LAapprox}{\approxdist_{\text{\upshape\tiny LA}}}
\newcommand{\GGapprox}{\approxdist_{\text{\upshape\tiny G}}}
\newcommand{\LAGHapprox}[1]{\approxdist_{\text{\upshape\tiny LA}}^{\text{\upshape\tiny AGHQ}}(#1 \setdelim \data)}
\newcommand{\GHapproxidx}[2]{\approxdist^{\text{\upshape\tiny AGHQ}}_{#1}(#2 \setdelim \data)}
\newcommand{\GHapproxcdfidx}[2]{\widetilde{F}^{\text{\upshape\tiny POLY}}_{#1}(#2 \setdelim \data)}
\newcommand{\GHapproxidxpoly}[2]{\approxdist^{\text{\upshape\tiny POLY}}_{#1}(#2 \setdelim \data)}

\newcommand{\dataidx}{\mb{Y}}

\newcommand{\data}{\mb{Y}^{(n)}} %
\newcommand{\paramtrue}{\mb{\theta}^{*}} %
\newcommand{\paramtrueidx}{\theta^{*}} %
\newcommand{\paramdum}{\mb{\theta^\prime}}

\newcommand{\datatrueprobn}{\PP^*_{\!n}}

\newcommand{\post}[1]{\dist(#1 \setdelim \data)}
\newcommand{\margpost}[2]{\dist^{(#1)}(#2 \setdelim \data)}
\newcommand{\intapproxmargpost}[2]{\intapproxaccent{\dist}^{(#1)}(#2 \setdelim \data)}
\newcommand{\postidx}[2]{\dist_{#1}(#2 \setdelim \data)}
\newcommand{\postcdfidx}[2]{F_{#1}(#2 \setdelim \data)}
\newcommand{\approxpost}[1]{\approxdist(#1 \setdelim \data)}
\newcommand{\poststar}[1]{\dist(#1 , \data)}

\newcommand{\multitaylorfunc}{A_n}
\newcommand{\midmultitaylorfunc}{\overline{A_n}}

\newcommand{\parammle}{\mleaccent{\param}_n} %

\newcommand{\parammode}{\modeaccent{\param}_n} %
\newcommand{\parammodemarg}{\modeaccent{\param}_n^{\margparamval}} 
\newcommand{\margparammode}{\modeaccent{\margparam}_n}
\newcommand{\nuisparammode}{\modeaccent{\nuis}_n}

\newcommand{\parammodefunc}{\modeaccent{\param}_n^{\meanfunc}} %
\newcommand{\parammid}{\mb{\vartheta}_n}
\newcommand{\parammodemid}{\mb{\modeaccent{\vartheta}}_n}
\newcommand{\midweight}{\tau}
\newcommand{\modemidweight}{\modeaccent{\tau}}

\newcommand{\dummydim}{p}

\newcommand{\dummytau}{b}
\newcommand{\paramdim}{p}
\newcommand{\paramdimbig}{m}
\newcommand{\datadim}{d}

\newcommand{\param}{\mb{\theta}}
\newcommand{\highdimparam}{\mb{w}}
\newcommand{\paramidx}{\theta}

\newcommand{\paramspace}{\Theta}

\newcommand{\hermite}{H}
\newcommand{\hermitezero}{z^*}
\newcommand{\hermitezerovec}{\mb{z}}

\newcommand{\hermitezerobound}{\overline{\mb{z}}_{H}}
\newcommand{\hermiteset}{\mathcal{H}}

\newcommand{\weight}{\mb{\omega}}
\newcommand{\weightidx}{\omega}

\newcommand{\quadpointvec}{\mb{z}}
\newcommand{\quadpointidx}{z}
\newcommand{\quadpointbound}{\overline{\mb{z}}}
\newcommand{\quadpointset}{\mathcal{Q}}
\newcommand{\quadpointsetidx}{Q}
\newcommand{\quadrule}[2]{\mathfrak{R}(#1, #2)}

\newcommand{\quadnum}{k}
\newcommand{\quadnumadj}{\kappa}

\newcommand{\derivnum}{m} %
\newcommand{\derivbound}{M}
\newcommand{\derivboundmode}{\modeaccent{M_n}}
\newcommand{\dumderivvec}{\mb{\alpha}}
\newcommand{\dumderiv}{\alpha}
\newcommand{\multitaylorexpbound}{\modeaccent{\Lambda}_n}
\newcommand{\llhoodmargin}{b}
\newcommand{\priorbig}{c_2}
\newcommand{\priorsmall}{c_1}

\newcommand{\eigen}{\lambda}

\newcommand{\conmargin}{\beta}

\newcommand{\shrinkingrad}{\gamma_n}
\newcommand{\shrinkingconst}{\gamma}
\newcommand{\hessbig}{\overline{\eta}}
\newcommand{\hesssmallmean}{\underline{\eta}^\meanfunc}
\newcommand{\hessbigmean}{\overline{\eta}^\meanfunc}
\newcommand{\hessbigmode}{\overline{\eta_n}}
\newcommand{\hesssmallmode}{\underline{\eta_n}}
\newcommand{\hessbigtrue}{\overline{\eta^*_n}}
\newcommand{\hesssmalltrue}{\underline{\eta^*_n}}
\newcommand{\hesssmall}{\underline{\eta}}

\newcommand{\boundofinterest}{B_{n,\quadnum}}

\newcommand{\normaldens}[3]{\phi(#1; #2, #3)}
\newcommand{\stdnormaldens}[1]{\phi(#1)}

\newcommand{\nuis}{\mb{\lambda}}
\newcommand{\interestdim}{q}

\newcommand{\meanfunc}{g}
\newcommand{\genintfunc}{f}
\newcommand{\genintFunc}{F}
\newcommand{\polyfunc}{P}

\newcommand{\genintmode}{\modeaccent{\param}}
\newcommand{\genloghess}{\mb{H}}
\newcommand{\genlogchol}{\modeaccent{\mb{L}}}

\newcommand{\ball}[3]{B^{#1}_{#2}(#3)} %
\newcommand{\ballc}[3]{[B^{#1}_{#2}(#3)]^c} %

\newcommand{\constt}{C}
\newcommand{\paramconst}{C_{\quadnum, \paramdim}}

\newcommand{\Reals}{\mathbb{R}}
\newcommand{\PosReals}{\mathbb{R}_{+}}
\newcommand{\Nats}{\mathbb{N}}

\newcommand{\PosInts}{\mathbb{Z}_{+}}
\newcommand{\Borel}[1]{\mathcal{B}\left(\R^{#1}\right)}
\newcommand{\Borelset}{\mathcal{K}}

\newcommand{\norm}[1]{\left\lVert#1\right\rVert}

\newcommand{\approxEE}{\widetilde{\EE}}
\newcommand{\intapproxEE}{\intapproxaccent{\EE}}
\newcommand{\GHapproxEE}{\widetilde{\EE}^{\text{\upshape\tiny AGHQ}}}

\newcommand{\intapproxcredfunc}{\intapproxaccent{\mathcal{K}}}
\newcommand{\credfunc}{\mathcal{K}}

\newcommand{\margparam}{\mb{\psi}}
\newcommand{\margparamval}{\margparam_{0}}
\newcommand{\identmat}{\mb{I}}

\newcommand{\quant}[2]{q_n^{(#1)}(#2)}
\newcommand{\approxquant}[2]{\widetilde{q}_n^{(#1)}(#2)} 
\newcommand{\intapproxquant}[2]{\intapproxaccent{q}_n^{(#1)}(#2)} 
\newcommand{\approxquantpoly}[2]{\widetilde{q}^{\text{\upshape\tiny POLY}}_{#1}(#2)}

\newcommand{\cdfpost}[1]{F^{(#1)}_n}
\newcommand{\cdfpostinv}[1]{[F^{(#1)}_n]^{-1}}
\newcommand{\cdfapprox}[1]{\widetilde{F}^{(#1)}_n}

\newcommand{\cdfintapprox}[1]{\intapproxaccent{F}^{(#1)}_n}
\newcommand{\cdfintapproxinv}[1]{[\intapproxaccent{F}^{(#1)}_n]^{-1}}
\newcommand{\cdfderivbound}{L}
\newcommand{\universalradius}{\delta}

\newcommand{\multicoeff}[2]{\mathbb{M}_{#1, #2}}
\newcommand{\tvec}{\bm{t}}

\newcommand{\smalltset}[1]{\tau^{(#1)}_{<2\quadnum}}
\newcommand{\smalltsetdef}[2][1]{ \tau^{(#1)}_{{ < #2 }} }
\newcommand{\equaltsetdef}[2][1]{ \tau^{(#1)}_{{ = #2 }} }
\newcommand{\bigtset}[1]{\tau^{(#1)}_{\geq2\quadnum}}
\newcommand{\bigtsetdef}[2][1]{\tau^{(#1)}_{\geq #2}}
\newcommand{\customtset}[2]{\tau^{(#1)}_{\geq#2}}
\newcommand{\customtsetequal}[2]{\tau^{(#1)}_{=#2}}
\newcommand{\tsum}{\tau(\tvec)}

\newcommand{\bracevec}[1]{\left\{ #1 \right\}}

\newcommand{\mb}[1]{\boldsymbol{#1}}

\newcommand{\relerror}{\text{E}_{\mathrm{rel}}}

\newcommand{\TVerror}{\text{E}_{\mathrm{TV}}}

\newcommand{\samplesizes}{\NN}
\newcommand{\simsize}{M}

\usepackage{authblk}

\makeatletter
\renewcommand{\maketag@@@}[1]{\hbox{\m@th\normalsize\normalfont#1}}%
\makeatother

\makeatletter
\let\reftagform@=\tagform@
\def\tagform@#1{\maketag@@@{\ignorespaces\textcolor{gray}{(#1)}\unskip\@@italiccorr}}
\renewcommand{\eqref}[1]{\textup{\reftagform@{\ref{#1}}}}
\makeatother

\newtheoremstyle{thmstyle}
{10pt} %
{10pt} %
{} %
{} %
{\bfseries} %
{.} %
{.5em} %
{} %

\newcommand{\manualendproof}{\hfill\qedsymbol\\[2mm]}

\declaretheorem[name=Theorem]{theorem}

\declaretheorem[name=Lemma]{lemma}
\declaretheorem[name=Proposition]{proposition}
\declaretheorem[name=Corollary]{corollary}

\declaretheorem[name=Definition]{definition}
\declaretheorem[name=Assumption]{assumption}
\declaretheorem[name=Assumption, numbered=no]{assumption*}

\declaretheorem[qed=$\triangleleft$,name=Remark]{remark}

\bibpunct[, ]{(}{)}{;}{a}{}{,}
\usepackage{tikz}

\usepackage{titlesec}
\usepackage[subfigure]{tocloft}
\setlength{\cftbeforesecskip}{5pt}
	\titlelabel{\thetitle.\quad}
	\titleformat*{\section}{\large\bfseries}
	\titleformat*{\subsection}{\bfseries}
	\titleformat*{\subsubsection}{\itshape}

\titlespacing*\section{0pt}{12pt plus 4pt minus 2pt}{4pt plus 2pt minus 2pt}
\titlespacing*\subsection{0pt}{12pt plus 4pt minus 2pt}{4pt plus 2pt minus 2pt}
\titlespacing*\subsubsection{0pt}{12pt plus 4pt minus 2pt}{4pt plus 2pt minus 2pt}

\bibliographystyle{abbrvnat}

\makeatletter
\newcommand{\printfnsymbol}[1]{%
  \textsuperscript{\@fnsymbol{#1}}%
}
\makeatother

\title{Stochastic Convergence Rates and Applications of Adaptive Quadrature in Bayesian Inference}

\date{}           %

\author[1]{Blair Bilodeau}
\author[2]{Alex Stringer}
\author[3]{Yanbo Tang}
\affil[1]{Department of Statistical Sciences, University of Toronto}
\affil[2]{Department of Statistics and Actuarial Science, University of Waterloo}
\affil[3]{Department of Mathematics, Imperial College London}

}%
  \newcommand*\patchAmsMathEnvironmentForLineno[1]{%
    \expandafter\let\csname old#1\expandafter\endcsname\csname #1\endcsname
    \expandafter\let\csname oldend#1\expandafter\endcsname\csname end#1\endcsname
    \renewenvironment{#1}%
                     {\linenomath\csname old#1\endcsname}%
                     {\csname oldend#1\endcsname\endlinenomath}%
  }%
  \newcommand*\patchBothAmsMathEnvironmentsForLineno[1]{%
    \patchAmsMathEnvironmentForLineno{#1}%
    \patchAmsMathEnvironmentForLineno{#1*}%
  }%
\newcommand{\blind}{0}
\newtheorem{definition}{Definition}
\newtheorem{corollary}{Corollary}
\newtheorem{remark}{Remark}
\newtheorem{lemma}{Lemma}
\newtheorem{proposition}{Proposition}
\newtheorem{assumption}{Assumption}
\newcommand{\manualendproof}{}
\renewcommand{\thefootnote}{\fnsymbol{footnote}}
\begin{document}

\preprint{
\maketitle
}
\jasa{
\def\spacingset#1{\renewcommand{\baselinestretch}%
{#1}\small\normalsize} \spacingset{1}

\title{\bf Stochastic Convergence Rates and Applications of Adaptive Quadrature\\ in Bayesian Inference}
\if0\blind
{
\author{$\text{Blair Bilodeau}^{1}$, $\text{Alex Stringer}^{2}$, $\text{Yanbo Tang}^{1}$\thanks{
Authors listed in alphabetical order.
BB is supported by an NSERC Canada Graduate Scholarship and the Vector Institute.
AS is supported by an NSERC Postgraduate Scholarship and the Centre for Global Health Research at St. Michael's Hospital, Toronto, Canada.
YT is supported by an NSERC Postgraduate Scholarship and the Vector Institute.
We thank Jeffrey Negrea, Nancy Reid, Daniel Roy, and Jamie Stafford for helpful comments and suggestions.}
\hspace{.2cm}\\
Department of Statistical Sciences, $\text{University of Toronto}^{1}$\\
Department of Statistics and Actuarial Science, $\text{University of Waterloo}^{2}$}
} \fi
\if 1\blind
{
\author{\emph{Authors Blinded}}
} \fi
\date{}
\maketitle

} %

\preprint{
\begingroup\renewcommand\thefootnote{*}
\footnotetext{Authors listed alphabetically.}
}

\jasa{\bigskip}
\begin{abstract}
We provide the first stochastic convergence rates for a family of adaptive quadrature rules used to normalize the posterior distribution in Bayesian models. 
Our results apply to the uniform relative error in the approximate posterior density, the coverage probabilities of approximate credible sets, and approximate moments and quantiles, therefore guaranteeing fast asymptotic convergence of approximate summary statistics used in practice. 
The family of quadrature rules includes adaptive Gauss-Hermite quadrature, and we apply this rule in two challenging low-dimensional examples.
Further, we demonstrate how adaptive quadrature can be used as a crucial component of a modern approximate Bayesian inference procedure for high-dimensional additive models.
The method is implemented and made publicly available in the \texttt{aghq} package for the \texttt{R} language, available on \texttt{CRAN}. 
\end{abstract}

\jasa{
	\noindent%
{\it Keywords:}  Adaptive quadrature; Approximate inference; Asymptotic convergence; Higher-order; Bayesian; Spatial.
\vfill

\newpage
\spacingset{1.5} %
}%

\jasa{
\sectionfont{\MakeUppercase}
\titlelabel{\thetitle.\quad}
\renewcommand{\thefootnote}{\arabic{footnote}}
}
\preprint{\renewcommand{\thefootnote}{\arabic{footnote}}}

\preprint{
\newpage
\tableofcontents
}
\preprint{\newpage}

\section{Introduction}\label{sec:intro}

The central challenge of Bayesian inference is computing the \emph{posterior distribution}, which requires evaluating an integral---the \emph{normalizing constant} or \emph{marginal likelihood}---that is intractable in all but the simplest models. 
\emph{Numerical quadrature} (or \emph{cubature} in multiple dimensions) comprises a range of techniques for approximating deterministic integrals via function evaluations at finitely many points, and is a mature field of study in applied mathematics; see \citet{numint} for an overview.
However, in Bayesian inference, the integrand is necessarily changing with the observed data, and consequently fixed quadrature rules can perform arbitrarily poorly by failing to capture the shifting mass of the integrand. 
To address this limitation, \emph{adaptive quadrature}
techniques based on shifting and scaling fixed quadrature rules using the mode and curvature of the integrand have been proposed in Bayesian inference since at least \citet{nayloradaptive}.
More recently, adaptive quadrature has been employed as a fundamental component of approximate Bayesian inference in the popular INLA framework \citep{inla}, for integrating out random effects \citep{non-linear-latent,latent}, and in the context of Bayesian inversion \citep{schillings}. 
In the present work, we study stochastic convergence rates as the sample size tends to infinity for fixed parameter dimension and number of quadrature points.
\cref{THM:MAINRESULT} and its corollaries provide the first stochastic convergence rates for the error in normalizing Bayesian posterior densities with adaptive quadrature, and for computing approximate moments and marginal densities.

Despite its broad applicability and usage by practitioners, relatively little is known about the theoretical properties of adaptive quadrature for Bayesian inference.
\cite{nayloradaptive} discuss the practical application of what \citet{laplace} call \emph{adaptive Gauss--Hermite quadrature} (\AGHQ{}) in Bayesian inference, arguing that it is a useful tool for normalizing posterior distributions and computing approximate summary statistics, but do not provide any theoretical guarantees on its accuracy. 
\citet{laplace} use \AGHQ{} to renormalize a Laplace approximation (which itself corresponds to \AGHQ{} with a single quadrature point) of the marginal posterior density, however they do not discuss the effect that the error introduced by this numerical renormalization may have on their convergence rate. 
\citet{validitylaplace} rigorously establish convergence rates for the Laplace approximation in Bayesian inference, but their proof only applies to one-dimensional parameters, and they do not provide results for \AGHQ{} with multiple quadrature points. 
More recently, for a specific, restricted class of one-dimensional functions that vary only through a scaling parameter $n$, \citet{adaptive_GH_2020} expand upon the work of \citet{adaptive_GH_1994} to show that \AGHQ{} with $\quadnum\in\Nats$ quadrature points converges at relative rate $\Ord(n^{-\lfloor (\quadnum+2)/3\rfloor})$, and comment that this rate holds in multiple dimensions using a specific extension of the univariate rule.

In statistical problems,
the integrand
varies through $n$ both as a scaling factor and the data observed at time $n$, which is not captured by
the class of functions considered by \citet{adaptive_GH_1994} and \citet{adaptive_GH_2020}.
A \emph{stochastic convergence} perspective is needed in order to quantify the behaviour of \AGHQ{} when used in fitting any statistical model, including when used to approximate the normalizing constant in Bayesian inference. Beyond the univariate Laplace analysis by \citet{validitylaplace}, 
a stochastic convergence perspective is also taken by \citet{schillings} in the context of Bayesian inversion of operator equations,
where
they study convergence as the variance of the data tends to zero. 
\citet{dick19inversion} study Bayesian PDE inversion using a Quasi-Monte Carlo variant of numerical quadrature.
As is standard in the numerical analysis literature, they suppose that the data (rather than the model) satisfies certain regularity conditions, and provide a convergence rate as the number of quadrature points tends to infinity.
Convergence limits as the data variability tends to zero and as the number of quadrature points tends to infinity for a fixed data sequence are both distinct from the asymptotic statistics perspective we take of letting the sample size $n$ tend to infinity, and our results explicitly identify the effect of $\quadnum$ on the approximation error.

In this work, we quantify the stochastic error of using adaptive quadrature rules to approximate the 
posterior distribution and summary statistics based on it.
We demonstrate that the crucial property of \AGHQ{} is that the underlying quadrature rule, Gauss-Hermite quadrature, exactly integrates the product of the Gaussian density and any polynomial of \emph{total order} $2\quadnum-1$ or less for a well-chosen integer $\quadnum$ in $\paramdim$ dimensions; we call this property $\property{\quadnum}{\paramdim}$ (see \cref{def:exact-integrate-property}).
More precisely, under standard regularity assumptions, we prove that if the posterior normalizing constant is approximated using the adaptive form of \emph{any} quadrature rule that satisfies $\property{\quadnum}{\paramdim}$, then the relative error of the approximate normalizing constant to the true normalizing constant 
converges in probability at rate $\Ordp(n^{-\lfloor (\quadnum+2)/3\rfloor})$, where $n$ is the number of observed data points (\cref{thm:mainresult}). 
The main technical contribution that enables this result is a precise quantification of how well the posterior distribution locally approximates a Gaussian distribution as a function of the data.

We further describe how to approximate marginal posterior densities and moments using a second application of \AGHQ{}, and show that the convergence rate is preserved for the error of these approximate summary statistics (\cref{fact:pos-marg-compute,fact:pos-mean-compute}).
For approximate quantiles and credible sets, we show that the rate is preserved in the ideal case where one can exactly integrate the \AGHQ{} approximation (\cref{cor:applications}), and provide a computational method to approximate these.
We include a simulation study (\cref{sec:simulations}) that illustrates the stochastic nature of the convergence rates and demonstrates an example of a simple model in which our stated rate is achieved in finite samples empirically. 

To illustrate the breadth of models for which adaptive quadrature provides a useful tool for inference, 
in \cref{sec:exampleslowdim}
we directly apply the method to two challenging examples that satisfy the conditions of our theoretical results.
For a distance-based, individual-level model for the spread of infectious disease, we show that the results obtained using \AGHQ{} are nearly identical to those obtained via traditional sampling-based approaches by \citet{epi}, at a substantial reduction in computational time. 
Additionally, we apply \AGHQ{} to estimate the mass of the Milky Way galaxy using multivariate position and velocity measurements of star clusters and a complex astrophysical model.

In addition to the low-dimensional examples of \cref{sec:exampleslowdim}, 
in \cref{sec:highdimexamples} we demonstrate the applicability of adaptive quadrature for
high-dimensional models by employing it within a broader method to fit a complex zero-inflated geostatistical binomial regression model, which we use to infer the spatial risk of contracting a certain tropical disease in West Africa, and for which Bayesian inferences had not previously been made. This example is particularly challenging and is not compatible with INLA or (to our knowledge) any other existing (non-\MCMC{}) framework for making approximate Bayesian inferences, and we discuss some observed difficulties with applying \MCMC{} to it as well.

\AGHQ{} and the corresponding high-dimensional method are implemented in the \texttt{aghq} package 
in the \texttt{R} statistical programming language, made publicly available on the \texttt{CRAN} package repository. 
All code for the examples and simulations in \preprint{the present}\jasa{this} paper is available \preprint{at \href{https://github.com/awstringer1/aghq-paper-code}{https://github.com/awstringer1/aghq-paper-code}}\jasa{in the supplementary material}.
\preprint{For more details on \texttt{aghq}, see the vignette by \citet{aghqsoftware}.}

\section{Preliminaries}\label{sec:prelim}

\subsection{Bayesian Inference}\label{subsec:posterior}

Suppose that we observe a dataset $\data = (\dataidx_1,\dots,\dataidx_n) \subseteq \R^{\datadim}$ generated from some unknown probability distribution.
We fix a \emph{model} for the data defined by a \emph{parameter space} $\paramspace \subseteq \R^{\paramdim}$ and \emph{likelihood} $\dist(\data \setdelim \param)$.
Often, the model assumes independent and identically distributed (\iid{}) data, in which case the likelihood factors into $\dist(\data \setdelim \param) = \prod_{i=1}^n \dist(\dataidx_i \setdelim \param)$, but we do not require this restriction. 
Further, we do not require that the model be well-specified; the only constraint on the data-generating distribution is that it satisfies certain regularity assumptions for the chosen model.

For some \emph{prior density} $\dist(\cdot)$ on $\paramspace$, the object of inferential interest is the \emph{posterior density}
\[
	\post{\param}
  = \frac{\poststar{\param}}{\int_{\paramspace}\poststar{\paramdum}\dee\paramdum},
\label{eqn:posterior}
\]
where $\poststar{\param} = \dist(\param)\dist(\data \setdelim \param)$ is the \emph{unnormalized posterior density}.
Inference for $\param$ is based upon point and interval estimates computed from $\post{\param}$.

All posterior summaries (marginals, moments, quantiles, etc.)
require knowledge of the normalized posterior distribution, which requires computing the denominator of \cref{eqn:posterior}, referred to as the \emph{marginal likelihood} or \emph{normalizing constant},
\*[
	\dist(\data) = \int_{\paramspace}\poststar{\paramdum}\dee\paramdum.
\]
The computation of this integral---as well as the further integration required to compute posterior summary statistics---is typically not analytically tractable, so inference is instead conducted using integral approximations.
\subsection{Numerical Quadrature}\label{subsec:adaptivequadrature}

A \emph{quadrature rule} for 
approximating an integral $\genintFunc = \int \genintfunc(\param)\dee\param$ of a function $\genintfunc:\paramspace\to\Reals$
is a collection of \emph{points} $\quadpointset \subseteq \paramspace$ and a \emph{weight function} $\weight: \quadpointset \to \Reals$, and is denoted by $\quadrule{\quadpointset}{\weight}$. The approximate 
integral
under such a rule is denoted by
\*[
	\quadapproxint{\quadpointset}{\weight}
	= 
	\sum_{\quadpointvec\in\quadpointset}\genintfunc(\quadpointvec) \, \weight(\quadpointvec),
\]
which we denote by just $\approxint$ when $\quadrule{\quadpointset}{\weight}$ is clear.
Often, quadrature rules are designed to be exact for specific functions of interest, most often polynomials. A $\paramdim$-dimensional polynomial $\polyfunc$, defined by  
\*[
	\polyfunc(x_1,\dots,x_\paramdim) = \sum_{t=1}^T a_t \prod_{d=1}^\paramdim x_d^{j_{t,d}}
\]
for some $T \in \Nats$, $(a_1,\dots,a_T) \in \Reals^T$, and $(j_{t,1},\dots,j_{t,\paramdim})_{t\in[T]} \subseteq \Nats^{\paramdim}$,
is said to have \emph{total order} $\max_{t\in[T]} \sum_{d=1}^\paramdim j_{t,d}$ \citep{heiss08sparse}.

\begin{definition}\label{def:exact-integrate-property}
For any $\quadnum,\paramdim\in\Nats$, a quadrature rule $\quadrule{\quadpointset}{\weight}$ satisfies $\property{\quadnum}{\paramdim}$ if for all $\paramdim$-dimensional polynomials $\polyfunc:\paramspace\to\Reals$ of total order $2\quadnum-1$ or less,
\[\label{eqn:exact-integration}
	\int_{\paramspace} \normaldens{\param}{0}{\identmat_\paramdim} \polyfunc(\param) \dee \param = \sum_{\quadpointvec \in \quadpointset} \normaldens{\quadpointvec}{0}{\identmat_\paramdim} \polyfunc(\quadpointvec) \weight(\quadpointvec),
\]
where $\normaldens{\param}{0}{\identmat_\paramdim}$ is the standard $\paramdim$-dimensional Gaussian density.
\end{definition}

The choice of the multivariate Gaussian density in \cref{def:exact-integrate-property} is strategic, since in many parametric models the posterior distribution asymptotically (in sample size) converges to a Gaussian distribution by the Bernstein von Mises theorem \citep[Chapter 10]{vaart_1998}.
For a non-standard limit, another baseline density can be used in place of the Gaussian in \cref{def:exact-integrate-property}, but we do not consider such models at present.

Consider the univariate case (i.e., $\paramdim=1$) with $\paramspace=\Reals$. For $\quadnum \in \N$, let $\hermite_\quadnum$ be the $\quadnum^{\text{th}}$ Hermite polynomial, defined for all $\quadpointidx \in \Reals$ by
\[\label{eqn:hermitepoly}
	\hermite_\quadnum(\quadpointidx) = (-1)^\quadnum e^{\quadpointidx^2/2} \frac{\dee^\quadnum e^{-\quadpointidx^2/2}}{\dee \quadpointidx^\quadnum},
\]
and denote its zeroes by $\hermitezero_1,\dots,\hermitezero_\quadnum$. These zeroes are distinct, symmetric about $0$, and include $0$ if and only if $\quadnum$ is odd. The Gauss--Hermite quadrature (\GHQ) rule uses the points $\quadpointset = (\hermitezero_j)_{j\in[\quadnum]}$ and the weight function
\[\label{eqn:weights}
	\weight(\hermitezero_j) = \frac{\quadnum!}{[H_{\quadnum+1}(\hermitezero_j)]^{2}\stdnormaldens{\hermitezero_j}},
\]
where $\stdnormaldens{\cdot}$ denotes the standard normal density. It is well known \citep[e.g., Eq. (3.6.11) of][]{numint} that \GHQ{} satisfies $\property{\quadnum}{1}$ using the smallest number of points $\quadnum$ possible.
Hence, a naive application of \GHQ{} for integrating $\genintfunc$ may be expected to perform well in the case that $\genintfunc$ is centred at $\paramidx = 0$ and $\genintfunc(\paramidx) / \stdnormaldens{\paramidx}$ is well-approximated by a univariate polynomial of total order $2\quadnum-1$ or less.
For a strict subset $\paramspace \subset \Reals$, there exist finite interval Gauss quadrature rules (Section~2.7 of \citeauthor{numint}, \citeyear{numint}; Theorem~1 of \citeauthor{bojanov01interval}, \citeyear{bojanov01interval}) that satisfy $\property{\quadnum}{\paramdim}$.

Quadrature rules in $\paramdim$ dimensions are formed by combining a univariate quadrature rule for each dimension. The most common combination technique is the \emph{product rule}. If $\quadpointsetidx \subseteq \Reals$ is a collection of univariate quadrature points with weight function $\weightidx:\Reals \to \Reals$, the product rule induced multivariate quadrature rule is defined by
$\quadpointset = \quadpointsetidx^\paramdim$ and $\weight(\quadpointvec) = \prod_{j=1}^\paramdim \weightidx(\quadpointidx_{i_j})$
for any $\quadpointvec = (\quadpointidx_{i_1},\dots,\quadpointidx_{i_\paramdim}) \in \quadpointset$. \GHQ{} with the product rule satisfies $\property{\quadnum}{\paramdim}$ \citep[see Section~5.6 of][]{numint}, and requires $\quadnum^\paramdim$ quadrature points. 
Sparse rules are also available that satisfy $\property{\quadnum}{\paramdim}$, see \cref{sec:convergence}.

In addition to $\property{\quadnum}{\paramdim}$, a commonly desired property of quadrature rules is that they are \emph{symmetric}, which we formalize now in the context of our problem. We note that \GHQ{} with both product rule and the sparse rule we consider is symmetric, and hence restricting ourselves to symmetric rules is benign.
\begin{definition}
$\quadrule{\quadpointset}{\weight}$ is \emph{symmetric} if for all $\quadpointvec\in\quadpointset$, $-\quadpointvec\in\quadpointset$ and $\weight(\quadpointvec) = \weight(-\quadpointvec)$.
\end{definition}

For Bayesian inference, we are interested in integrating functions that depend on $\data$, such as $\genintfunc(\param) = \poststar{\param}$.
A limitation of numerical quadrature rules is that the points and weights remain fixed regardless of the shape and location of $\poststar{\param}$. As $n\to\infty$, standard Bayesian asymptotic theory \citep[][Ch.\ 10]{vaart_1998} guarantees that with high probability over $\data$, the posterior mode concentrates to some $\paramtrue$ and the variance at the mode tends to $0$. 
Consequently, any fixed rule will miss most of the mass of $\poststar{\param}$, and this problem worsens as $n\to\infty$. 
A procedure that explicitly \emph{adapts} to the changing location and shape of the posterior density is necessary to obtain statistical performance guarantees that hold assuming only standard regularity conditions on the model.

\subsection{Adaptive Quadrature}\label{subsec:adaptivegausshermitequadrature}

\citet{nayloradaptive} introduced a technique that was eventually named \emph{adaptive} Gauss-Hermite quadrature (\AGHQ{}) by \citet{laplace}, which we extend here to adaptive quadrature in general.

Given a function $\genintfunc$ to integrate (which one expects is well-approximated by a Gaussian density), 
define the mode, curvature at the mode, and Cholesky decomposition of the inverse curvature by
\[\label{eqn:adaptive-objects}
	\genintmode = \argmax_{\param\in\paramspace}\genintfunc(\param);
	\quad
	\genloghess(\genintmode) = -\partial^{2}_{\param}\log\genintfunc(\genintmode);
	\quad
	\big[\genloghess(\genintmode)\big]^{-1} = \genlogchol\genlogchol\tpose.
\]
While we focus on the Cholesky decomposition for concreteness, any other matrix decomposition of this form could be used in our results.
For any quadrature rule $\quadrule{\quadpointset}{\weight}$, the adapted integral approximation under this rule is
\*[
	\quadadaptapproxint{\quadpointset}{\weight} = 
	\abssmall{\genlogchol} \sum\limits_{\quadpointvec\in\quadpointset} 
	\genintfunc(\genlogchol \, \quadpointvec + \genintmode)
	\, \weight(\quadpointvec) .
\]
\subsection{Approximate Bayesian Inference}\label{sec:approximatebayesianinference}
We will use adaptive quadrature to approximate three integrals for Bayesian inference: the normalizing constant to obtain an approximate posterior density, and then the further integration needed to obtain approximate marginal posterior densities and moments.
For any approximation $\approxdist(\data)$ of $\dist(\data)$, the approximate posterior distribution is %
\[\label{eqn:approxposterior}
	\approxpost{\param} = \frac{\poststar{\param}}{\widetilde{\dist}(\data)}.
\]

First, to approximate the normalizing constant, denote the analogous quantities of \cref{eqn:adaptive-objects} for the function $\genintfunc(\param) = \poststar{\param}$ by $\parammode$, $\logposthess(\parammode)$, and $\logpostchol$.
Then, for any quadrature rule $\quadrule{\quadpointset}{\weight}$, the adapted approximate normalizing constant under this rule is
\[
	\quadadaptmarg{\quadpointset}{\weight} = 
	\abssmall{\logpostchol} \sum\limits_{\quadpointvec\in\quadpointset} 
	\poststar{
	\logpostchol \, \quadpointvec + \parammode}
	\, \weight(\quadpointvec) .
\label{eqn:aghq-normalizing}
\]
When $\quadrule{\quadpointset}{\weight}$ is \GHQ{}, the adaptive form is \AGHQ{} by definition, and we denote the approximate normalizing constant by $\GHmarg$. When $\quadnum=1$, $\hermitezerovec = 0$ and $\weight(\hermitezerovec) = (2\pi)^{\paramdim/2}$, so the \AGHQ{} approximation is actually a Laplace approximation \citep{laplace} and may be applied 
to integrals of any dimension without computational difficulties. 
In \cref{sec:highdimexamples}, we provide an example of how to combine \AGHQ{} on low-dimensional parameters of interest with a Laplace approximation for high-dimensional nuisance parameters.

Second, to approximate the marginal posterior distribution, we apply \AGHQ{} twice.
Suppose the parameter can be decomposed into $\param = (\margparam, \nuis)$ where $\margparam\in\R^{\interestdim}$, and we are interested in computing the marginal posterior density at $\margparam=\margparamval$,
\*[
	\dist(\margparamval \setdelim \data) = \frac{\int \dist(\margparamval, \nuis,\data) \dee\nuis}{\int \dist(\param,\data)\dee \param}.
\]
Define 
 $\parammodemarg = \argmax_{\param\in\paramspace(\margparamval) }\dist(\margparamval, \nuis,\data)$ for $\paramspace(\margparamval) = \{\param\in\paramspace: \param=(\margparamval,\nuis)\}$, $\logposthessmarg(\param) = -\partial^{2}_{\nuis}\log\dist(\margparamval, \nuis,\data)$, and $\logposthessmarg(\parammodemarg)^{-1} = \logpostcholmarg(\logpostcholmarg)\tpose.$
Then, given a fixed $\quadrule{\quadpointset}{\weight}$ and $\quadrule{\quadpointset^\prime}{\weight^\prime}$ of dimensions $\paramdim$ and $\paramdim-\interestdim$ respectively, the approximate marginal density is
\[\label{eqn:approx-marg-dist-defn}
	\approxdist(\margparamval \setdelim \data)
	= \frac{\abssmall{\logpostcholmarg} \sum_{\quadpointvec^\prime \in\quadpointset^\prime } \dist\left( (0, \logpostcholmarg \, \quadpointvec\Tr)\Tr  
	+ \parammodemarg, \data \right) \weight^\prime(\quadpointvec^\prime) }{\abssmall{\logpostchol} \sum_{\quadpointvec\in\quadpointset} 
	\poststar{
	\logpostchol \, \quadpointvec + \parammode}
	\, \weight(\quadpointvec)},
\]
which we denote by $\GHapprox{\margparamval}$ when both rules correspond to \AGHQ{}.

Third, let $\meanfunc:\paramspace \to \PosReals$ be any nonnegative function satisfying $\int \meanfunc(\param)\dist(\param)\dee \param < \infty$ . 
Denote the analogous quantities of \cref{eqn:adaptive-objects} for the function $\genintfunc(\param) = \distfunc(\param,\data) = \dist(\param, \data)\meanfunc(\param)$ by
 $\parammodefunc$, $\logposthessfunc(\param)$, and $\logpostcholfunc$, when they exist 
 (see \cref{sec:proof-pos-mean-compute}).
Then, given a fixed $\quadrule{\quadpointset}{\weight}$, define
\[\label{eqn:approx-moment-defn}
	\approxEE[\meanfunc(\param) \setdelim \data]
	= \frac{\abssmall{\logpostcholfunc} \sum_{\quadpointvec\in\quadpointset} \distfunc(\logpostcholfunc \, \quadpointvec + \parammodefunc, \data) \weight(\quadpointvec)}{\abssmall{\logpostchol} \sum_{\quadpointvec\in\quadpointset} 
	\poststar{
	\logpostchol \, \quadpointvec + \parammode}
	\, \weight(\quadpointvec)},
\]
which we denote by $\GHapproxEE[\meanfunc(\param) \setdelim \data]$ when the rule is \AGHQ{}.

\section{Convergence Rates}\label{sec:convergence}

In this section,
we provide stochastic convergence rates for adaptive quadrature applied to Bayesian inference as well as stochastic convergence rates for various summary statistics of inferential interest. 
All proofs are deferred to the \preprint{appendix}\jasa{supplement}.
We denote probability under the true data-generating distribution by $\datatrueprobn$, and use $\constt$ to denote a generic constant in $n$ that may otherwise depend on $\paramdim$, $\quadnum$, and the universal constants in \cref{sec:assumptions}.

\subsection{Approximate Posterior}

\begin{restatable}{theorem}{MainResult}
\label{thm:mainresult}\label{THM:MAINRESULT}
Suppose there exists $\derivnum\geq 4$ such that the likelihood of the data $\dist(\data \setdelim \param)$ is $\derivnum$-times differentiable as a function of $\param$ and the regularity assumptions of \cref{sec:assumptions} hold. 
For $1 \leq \quadnum \leq \lfloor \derivnum/2 \rfloor$,
if $\quadrule{\quadpointset}{\weight}$ is a symmetric quadrature rule satisfying $\property{\quadnum}{\paramdim}$ then
\*[
	\lim_{n\to\infty}\datatrueprobn\left( \absbig{\frac{\dist(\data)}{\quadadaptmarg{\quadpointset}{\weight}} - 1} \leq \constt \, n^{-\lfloor \frac{\quadnum+2}{3} \rfloor}\right) = 1.
\]
\end{restatable}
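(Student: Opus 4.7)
The plan is to reduce the claim to a careful accounting of the Taylor expansion of $\log\poststar{\cdot}$ around $\parammode$, after the adaptive change of variables, on a high-probability event where the posterior is sufficiently concentrated. First, using the regularity assumptions of \cref{sec:assumptions}, I would localize: up to an additive error that is exponentially small in $n$, both $\dist(\data)$ and $\quadadaptmarg{\quadpointset}{\weight}$ are determined by their contributions from a shrinking ball $\ball{}{\shrinkingrad}{\parammode}$ of radius $\shrinkingrad = \Ord(n^{-1/2}\sqrt{\log n})$. On the integral side this follows from posterior concentration; on the quadrature side, since the underlying quadrature points $\quadpointvec$ are fixed and bounded and $\Norm{\logpostchol} = \Ordp(n^{-1/2})$, every adapted evaluation point $\logpostchol\quadpointvec + \parammode$ lies inside the ball for $n$ large.

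Inside the ball I would perform the change of variables $\param = \logpostchol\quadpointvec + \parammode$, rewriting the integrand as
\[\abssmall{\logpostchol}\,\poststar{\parammode}\,\exp(-\quadpointvec^\top\quadpointvec/2)\,\exp(R_n(\quadpointvec)),\]
where by construction of $\logpostchol$ the linear term in $\quadpointvec$ vanishes and the quadratic term is exactly $-\quadpointvec^\top\quadpointvec/2$. Here $R_n(\quadpointvec) = \sum_{j=3}^{2\quadnum} T_{n,j}(\quadpointvec) + \mathrm{Rem}_n(\quadpointvec)$, with each $T_{n,j}$ a homogeneous polynomial of total degree $j$ in $\quadpointvec$ whose coefficients are of stochastic order $n^{1-j/2}$ (since $j$-th derivatives of $\log\poststar{\cdot}$ are $\Ordp(n)$ and each factor of $\logpostchol$ contributes $\Ordp(n^{-1/2})$), and the $(2\quadnum+1)$-st order Taylor remainder $\mathrm{Rem}_n$ is negligible on the shrinking ball by the $\derivnum$-times differentiability hypothesis with $\derivnum\geq 2\quadnum$. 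Expanding $\exp(R_n(\quadpointvec))$ as a finite Taylor series in $R_n$ produces a sum of monomials of the form $T_{n,j_1}(\quadpointvec)\cdots T_{n,j_s}(\quadpointvec)$, each of total degree $j_1+\cdots+j_s$ in $\quadpointvec$ and aggregate stochastic order $n^{s-(j_1+\cdots+j_s)/2}$.

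For each such monomial I would then compare the exact Gaussian integral with its quadrature-sum approximation. By $\property{\quadnum}{\paramdim}$, every monomial of total degree at most $2\quadnum-1$ is integrated with zero error; by symmetry of both the Gaussian measure and of $\quadrule{\quadpointset}{\weight}$, every odd-degree monomial integrates to zero under both. The surviving error terms are therefore the even-degree monomials of total degree $d\geq 2\quadnum$, and the dominant stochastic magnitude is attained at $d=2\quadnum$ with the maximum possible number $s=\lfloor 2\quadnum/3\rfloor$ of cubic factors $T_{n,3}$, producing magnitude
\[n^{\lfloor 2\quadnum/3\rfloor - \quadnum} = n^{-\lceil \quadnum/3\rceil} = n^{-\lfloor(\quadnum+2)/3\rfloor}.\]
Normalizing by $\dist(\data) = \abssmall{\logpostchol}\,\poststar{\parammode}\,(2\pi)^{\paramdim/2}(1+o_P(1))$ converts this additive bound into the stated relative-error rate.

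The principal obstacle is the stochastic nature of the construction: $\parammode$, $\logpostchol$, and the higher-derivative tensors of $\log\poststar{\cdot}$ at $\parammode$ are all random, and the Taylor-remainder bounds must hold \emph{uniformly} in $\quadpointvec$ over the shrinking ball on a single high-probability event. Assembling the consistency and local asymptotic normality guarantees of \cref{sec:assumptions} into uniform derivative bounds---and in particular showing that the cancellation arising from $\property{\quadnum}{\paramdim}$ and symmetry is not spoiled by the randomness in the coefficients of the $T_{n,j}$'s---is the most delicate part of the argument.
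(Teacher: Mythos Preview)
Your proposal is correct and follows essentially the same route as the paper: localize to a shrinking $\sqrt{(\log n)/n}$-ball, Taylor-expand $\llandp$ around $\parammode$ after the adaptive change of variables, expand $\exp(R_n)$ as a polynomial in the degree-$\geq 3$ Taylor terms, cancel degree-$\leq 2\quadnum-1$ monomials via $\property{\quadnum}{\paramdim}$ and odd-degree monomials via symmetry, then identify the surviving leading term of order $n^{\lfloor 2\quadnum/3\rfloor-\quadnum}=n^{-\lfloor(\quadnum+2)/3\rfloor}$. Two small corrections to keep in mind when you write it up: the contribution outside the shrinking ball is only polynomially small (to arbitrary order, by choosing the radius constant large), not exponentially small; and since only $\derivnum\geq 2\quadnum$ derivatives are guaranteed, you should take the $(2\quadnum)$-th Taylor term in Lagrange-remainder form (evaluated at an intermediate point, as the paper does) rather than invoking a separate $(2\quadnum{+}1)$-st order remainder---this also forces the $\quadnum=1$ case to be handled by a fourth-order expansion separately.
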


\begin{remark}
For \AGHQ{} with $\quadnum=1$,
\cref{THM:MAINRESULT} recovers the known $\Ordp(n^{-1})$ rate for the Laplace approximation \citep{validitylaplace}.
\end{remark}

\begin{remark}
If $\quadnum > \lfloor \derivnum/2 \rfloor$ the rate for $\quadnum = \lfloor \derivnum/2 \rfloor$ applies; this can be seen by reproducing the proof with a Taylor expansion of order $\derivnum$ rather than one of order $2\quadnum$.
\end{remark}

The following corollary demonstrates the utility of \cref{thm:mainresult}, and follows immediately from the algebra of \cref{sec:ABI-accuracy} and the definition of $\quadadaptapprox{\quadpointset}{\weight}{\param}$.
\begin{corollary}\label{cor:rateofothererrors}
Under the conditions of \cref{thm:mainresult},
\*[
	\lim_{n\to\infty}\datatrueprobn\left( \sup_{\param\in\paramspace}\absbig{\frac{\post{\param}}{\quadadaptapprox{\quadpointset}{\weight}{\param}} - 1} \leq \constt \, n^{-\lfloor \frac{\quadnum+2}{3} \rfloor}\right) = 1
\]
and
\*[
	\lim_{n\to\infty}\datatrueprobn\left( \norm{\post{\cdot} - \quadadaptapprox{\quadpointset}{\weight}{\cdot}}_{\mathrm{TV}} \leq \constt \, n^{-\lfloor \frac{\quadnum+2}{3} \rfloor}\right) = 1.
\]
\end{corollary}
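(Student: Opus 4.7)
The plan is to observe that both quantities in the corollary are, by elementary algebra, functions only of the ratio $\quadadaptmarg{\quadpointset}{\weight}/\dist(\data)$, and then to invoke \cref{thm:mainresult} to control this ratio. No further probabilistic machinery is needed beyond what the theorem already delivers.

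First, I would unpack the definitions. Since $\post{\param} = \poststar{\param}/\dist(\data)$ and $\quadadaptapprox{\quadpointset}{\weight}{\param} = \poststar{\param}/\quadadaptmarg{\quadpointset}{\weight}$, taking their ratio cancels $\poststar{\param}$ and yields
\[
\frac{\post{\param}}{\quadadaptapprox{\quadpointset}{\weight}{\param}} \;=\; \frac{\quadadaptmarg{\quadpointset}{\weight}}{\dist(\data)},
\]
which is constant in $\param$. Hence the supremum over $\paramspace$ collapses and $\sup_{\param\in\paramspace}\abssmall{\post{\param}/\quadadaptapprox{\quadpointset}{\weight}{\param} - 1} = \abssmall{\quadadaptmarg{\quadpointset}{\weight}/\dist(\data) - 1}$. \cref{thm:mainresult} gives a bound of the form $\abssmall{\dist(\data)/\quadadaptmarg{\quadpointset}{\weight} - 1} \leq \constt\, n^{-\lfloor(\quadnum+2)/3\rfloor}$ with probability tending to one, and inverting via the identity $\abssmall{1/a - 1} = \abssmall{a-1}/\abssmall{a}$ converts this to a bound on the reciprocal: on the high-probability event, $\dist(\data)/\quadadaptmarg{\quadpointset}{\weight}$ lies in $(1/2, 3/2)$ for all $n$ large enough, so the reciprocal relative error is at most $2\constt\, n^{-\lfloor(\quadnum+2)/3\rfloor}$. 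Replacing $\constt$ by $2\constt$ yields the first claim.

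For the TV bound, I would write
\[
\post{\param} - \quadadaptapprox{\quadpointset}{\weight}{\param} \;=\; \poststar{\param}\left[\frac{1}{\dist(\data)} - \frac{1}{\quadadaptmarg{\quadpointset}{\weight}}\right],
\]
and observe that the bracketed factor does not depend on $\param$. Integrating the absolute value over $\paramspace$ and using $\int \poststar{\param}\dee\param = \dist(\data)$ gives
\[
\int_{\paramspace} \abssmall{\post{\param} - \quadadaptapprox{\quadpointset}{\weight}{\param}} \dee\param \;=\; \abssmall{1 - \dist(\data)/\quadadaptmarg{\quadpointset}{\weight}},
\]
so the TV distance (half this integral) is at most $\frac{1}{2}\constt\, n^{-\lfloor(\quadnum+2)/3\rfloor}$ on the same high-probability event from \cref{thm:mainresult}.

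There is no real obstacle here: the content of the corollary is entirely in \cref{thm:mainresult}, and the algebra above simply routes its bound through two elementary transformations. The only care needed is in handling the reciprocal, which is routine because the bound vanishes as $n\to\infty$.
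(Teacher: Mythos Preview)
Your proposal is correct and matches the paper's own argument: the paper states the corollary ``follows immediately from the algebra of \cref{sec:ABI-accuracy} and the definition of $\quadadaptapprox{\quadpointset}{\weight}{\param}$,'' and that section does exactly the cancellation of $\poststar{\param}$ you describe, together with \cref{lem:relative-helper} for the reciprocal inversion. One harmless slip: because $\post{\param} - \quadadaptapprox{\quadpointset}{\weight}{\param}$ has constant sign in $\param$, the $\sup_{\Borelset}$ definition of TV used in the paper equals the full $L^1$ integral here (not half of it), but this only affects the constant and not the argument.
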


The results of this section apply to any symmetric quadrature rule satisfying $\property{\quadnum}{\paramdim}$.
In our applications, we focus on \AGHQ{} defined using the product rule due to its simplicity and the fact that we have provided a robust implementation in the {\upshape\texttt{aghq}} package.
It is of interest to compare alternatives empirically, such as the nested rule considered by \cite{genz1996fully}, which is a Gaussian extension of the Gauss–Kronrod–Patterson construction.

Further, for multidimensional posteriors, our theoretical results apply to symmetric quadrature rules based on ``sparse grids'' as long as they satisfy $\property{\quadnum}{\paramdim}$. For example, Smolyak's quadrature rule satisfies this criteria \citep[Theorem 1]{heiss08sparse}, which reduces the dependence on the dimension from exponential to polynomial asymptotically. For specific $\quadnum$ and $\paramdim$, however, sparse rules may actually be more computationally intensive than the product rule; for example, when $\quadnum=5$ and $\paramdim=2$, the product rule uses 25 quadrature points while the Smolyak rule uses 55. 

The convergence rate depends directly on the number of quadrature points as follows. If one uses a product rule extension, then $\quadnum^\paramdim$ quadrature points are needed to satisfy $\property{\quadnum}{\paramdim}$. Hence, if the model is of dimension $\paramdim$ and one uses a product rule extension with $\abssmall{\quadpointset}$ quadrature points, the convergence rate will be $n^{-\lfloor (\abssmall{\quadpointset}^{1/\paramdim}+2)/3 \rfloor}$.

\subsection{Approximate Posterior Summaries}\label{sec:convergence-summary}

We now show that the convergence rate of \cref{thm:mainresult} is realized for the approximations to marginal distributions and moments.
Refer to \cref{sec:approximatebayesianinference} for defintions and see \cref{app:computesummaries} for computational details.

\begin{restatable}{corollary}{PosMargCompute}
\label{fact:pos-marg-compute}
Fix the value of $\margparamval$. 
Suppose the conditions in \cref{thm:mainresult} are satisfied when replacing all instances of $\param$ with $(\margparamval, \nuis)$,  $\paramtrue$ with some constant $\paramtrue_\margparamval = (\margparamval, \nuis_\margparamval^\star)$, $\logposthess$ with $\logposthessmarg$ and $\ball{\paramdim}{\paramtrue}{\cdot}$ with $\ball{\paramdim - \interestdim}{\nuis^\star_\margparamval}{\cdot}$.
Then 
\*[
	\lim_{n \to \infty}
	\datatrueprobn
	\bigg(
	\absbig{\frac{\dist(\margparamval \setdelim \data)}{\approxdist(\margparamval \setdelim \data)} - 1} \leq \constt \, \, n^{-\lfloor \frac{\quadnum+2}{3} \rfloor}
	\bigg)
	= 1. 
\]
\end{restatable}

\begin{restatable}{corollary}{PosMeanCompute}
\label{fact:pos-mean-compute}
Suppose $\meanfunc:\paramspace \to \PosReals$ satisfies 
assumptions (M1) through (M3) from \cref{sec:proof-pos-mean-compute}.
Then, if the conditions of \cref{thm:mainresult} also hold,
\*[
	\lim_{n \to \infty}
	\datatrueprobn
	\bigg(
	\absbig{\frac{\EE[\meanfunc(\param) \setdelim \data]}{\approxEE[\meanfunc(\param) \setdelim \data]} - 1} \leq \constt \, \, n^{-\lfloor \frac{\quadnum+2}{3} \rfloor}
	\bigg)
	= 1. 
\]
\end{restatable}

Both \cref{fact:pos-marg-compute,fact:pos-mean-compute} require additional assumptions to be verified. In \cref{sec:proof-pos-marg-compute,sec:proof-pos-mean-compute}, we show that (a) \cref{fact:pos-marg-compute} applies to all values of $\margparamval$ in a $n^{-1/2}$-neighbourhood of the unrestricted posterior mean (\cref{fact:pos-marg-vals}) and (b) \cref{fact:pos-mean-compute} applies to all marginal posterior moments (\cref{fact:pos-mean-compute-verify}).

\subsection{Proof Sketch of \cref{THM:MAINRESULT}}

Finally, we provide a brief sketch of the proof of \cref{thm:mainresult}
to
highlight the intuition for how convergence rates of the posterior inform the ultimate approximation error rate, 
and contrast our result with previous analyses of adaptive quadrature rules.

\begin{proof}[Proof sketch (informal) of \cref{thm:mainresult}]
The proof quantifies the rate at which the posterior behaves locally Gaussian with polynomial error, combines this with the polynomial exactness property $\property{\quadnum}{\paramdim}$, and then quantifies that the contribution to the posterior mass outside of this local neighbourhood is negligible with high probability.
Specifically, the proof of \cref{thm:mainresult} is composed of demonstrating that the following key facts hold with high probability asymptotically. We refer to the corresponding formal statements by their location in the \preprint{appendix}\jasa{supplement}.

\begin{enumerate}[(1)]
\item There exists a neighbourhood centered at a fixed parameter with radius defined by the curvature of the likelihood such that the likelihood is exponentially small outside of the neighbourhood. See \cref{lem:fixed_numer} for the precise statement.
\item Within this neighbourhood, there exists a smaller neighbourhood with radius decaying at rate $\sqrt{\log(n)/n}$ such that the likelihood is polynomially small within the annulus outside of the shrinking neighbourhood.
See \cref{lem:shrinking_numer} for the precise statement.
\item A Taylor series expansion (with order depending on $\quadnum$) of the unnormalized posterior provides an accurate polynomial approximation within the shrinking neighbourhood. See the proof of \cref{lem:normalizing_constant_as} for details. \qedhere
\end{enumerate}
\end{proof}

The two most relevant works to our result are \citet{validitylaplace} and \citet{adaptive_GH_2020}. We now contrast our proof with the analyses in both to highlight our technical contribution. 
First, we note that all of these works have only proved results for specifically \AGHQ{}, while we have distilled the rate down to a simpler set of assumptions satisfied by more rules.
Second, \citet{validitylaplace} only prove the $\quadnum=\paramdim=1$ case. They remark $\paramdim>1$ is trivial, however, multivariate Taylor expansions lead to a product of sums rather than simply a sum of products, which consequently must be further upper bounded (see \cref{eqn:post-cancel-all-terms}). 

Third, for the $\quadnum>1$ case, higher-order derivatives are required, and \citet{adaptive_GH_2020} sketch a proof for a limited class of functions in this setting. However, their analysis requires limiting assumptions that rule out posterior functions. 
Specifically, \citet{adaptive_GH_2020} \citep[inheriting the assumptions of][]{adaptive_GH_1994} only allow integrands of the form $\exp\{n \ell(\param)\}$ rather than $\exp\{n \llhood(\param)\}$, meaning $\ell$ cannot depend on $n$ and consequently also cannot depend on data; this eliminates \emph{all} log-likelihoods. 

Finally, neither \citet{adaptive_GH_2020} nor \citet{adaptive_GH_1994} provide an explicit argument for the order of the remainder terms from a Taylor series expansion. 
Handling these remainder terms is highly nontrivial, as can be seen by the proofs in \citet{validitylaplace} and \preprint{our appendix}\jasa{our supplement}. Specifically, this requires a) identifying whether the remainder term is odd or even, b) controlling the higher-order derivatives of the likelihood at various distances from the target parameter \citep[undefined for][]{adaptive_GH_2020}, and c) applying probabilistic concentration results depending on the order of the remainder \citep[also undefined for][who take a deterministic approach]{adaptive_GH_2020}.

\section{Low-Dimensional Parameter Spaces}\label{sec:exampleslowdim}

In the next two sections, we complement the theoretical results of \cref{sec:convergence} through three challenging examples, demonstrating the attractive computational properties of \AGHQ{} for approximate Bayesian inference. In all of the examples, the quadrature rule we use is \AGHQ{} with the product rule extension to multiple dimensions. Because \MCMC{} is arguably the most widely researched method for making approximate Bayesian inferences, and enjoys robust implementation in open-source software, we pay attention to the practical advantages of \AGHQ{} compared to state of the art \MCMC{} methods for the chosen examples.

\subsection{Example: Modelling Infectious Disease Spread}\label{subsec:infectiousdisease}

We consider the popular \emph{Susceptible, Infectious, Removed} (SIR) model for infectious disease spread as implemented in the {\tt EpiILMCT} package in {\tt R} \citep{epi}. Despite the low dimension of the parameter space, \MCMC{} is the methodology of choice for fitting these models, leading to long run times and the need for specialized tuning and practical assessment of convergence. We demonstrate here that \AGHQ{} gives fast and stable results that closely match the output of \MCMC{} in a small fraction of the run time.

\citet{epi} consider an outbreak of Tomato Spotted Wilt Virus in $n = 520$ plants. Plants were grown on an even grid and checked for the virus every 14 days, a total of 7 times. There were $n_{0} = 327$ plants infected by the end of the study period. For each plant we observe infection times $I_{1} \leq \ldots \leq I_{n_{0}}$ and $I_{i} = \infty$ for $i = n_{0} + 1,\ldots,n$. Plants may infect other plants while they are infected, and we observe associated removal times $R_{i},i\in[n]$ where a plant is no longer infectious. The likelihood function for these observed infection and removal times is given by
\*[
\dist(\mb{I},\mb{R} | \alpha,\beta) = \prod_{j=2}^{n_{0}}\left( \sum_{i:I_{i} < I_{j} \leq R_{i}}\lambda_{ij}\right)\exp\bracevec{-\sum_{i=1}^{n_{0}}\sum_{j=1}^{n}[\text{min}(R_{i},I_{j}) - \text{min}(I_{i},I_{j})]\lambda_{ij}},
\]
where $\lambda_{ij} = \alpha d_{ij}^{-\beta}$
is the \emph{infectivity rate}: the rate at which an infectious plant $i$ passes the disease to a susceptible plant $j$. Here $d_{ij}$ is the Euclidean distance between plants $i$ and $j$, and $\alpha,\beta > 0$ are the parameters of inferential interest. Independent $\text{Exponential}(.01)$ priors are placed on $\alpha,\beta$. As discussed in \cref{sec:computational}, we transform the parameters as $\theta_{1} = \log\alpha$ and $\theta_{2} = \log\beta$, perform the quadrature on this transformed scale, and then transform back when reporting results.

\begin{figure}
\centering
	\subfloat[$\alpha, k = 3$]{\includegraphics[width=.3\textwidth]{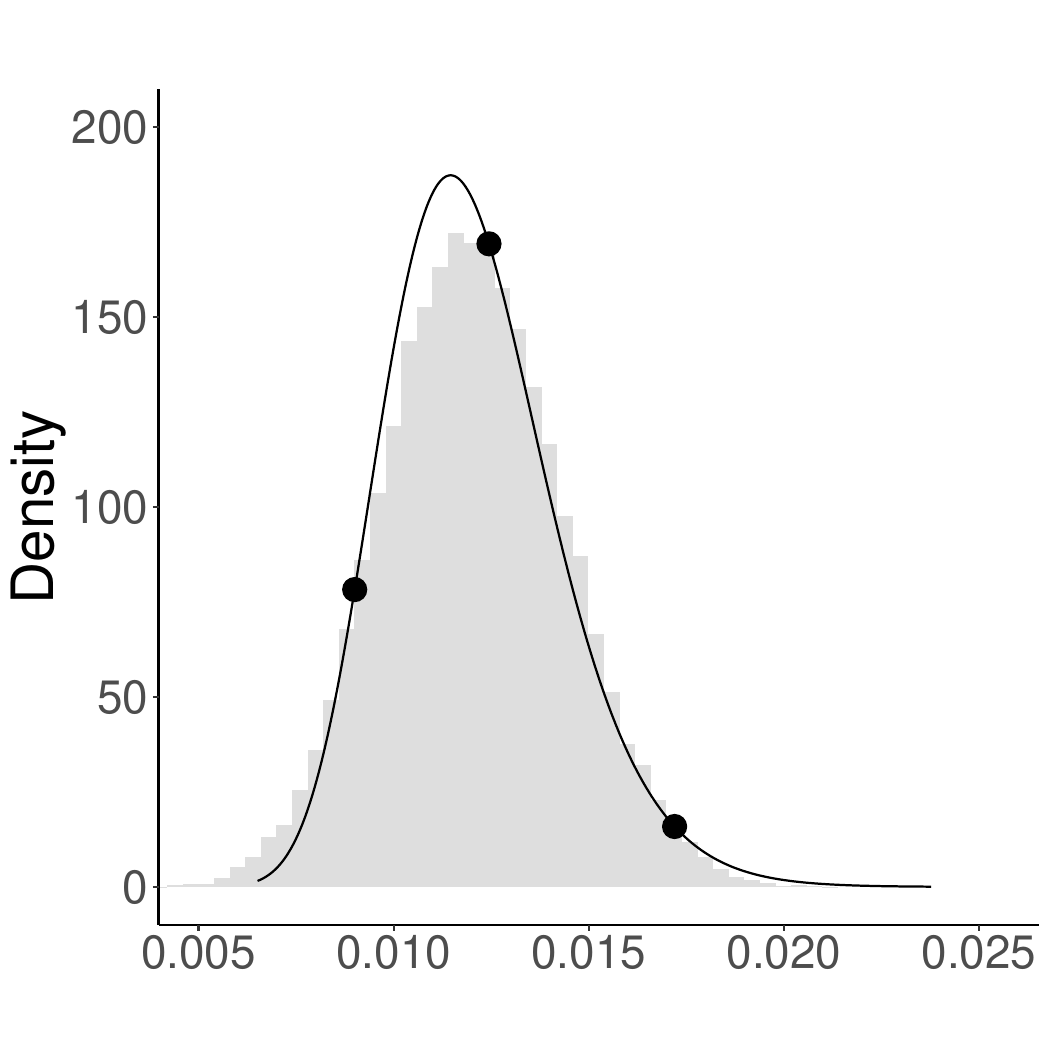}}
	\subfloat[$\alpha, k = 5$]{\includegraphics[width=.3\textwidth]{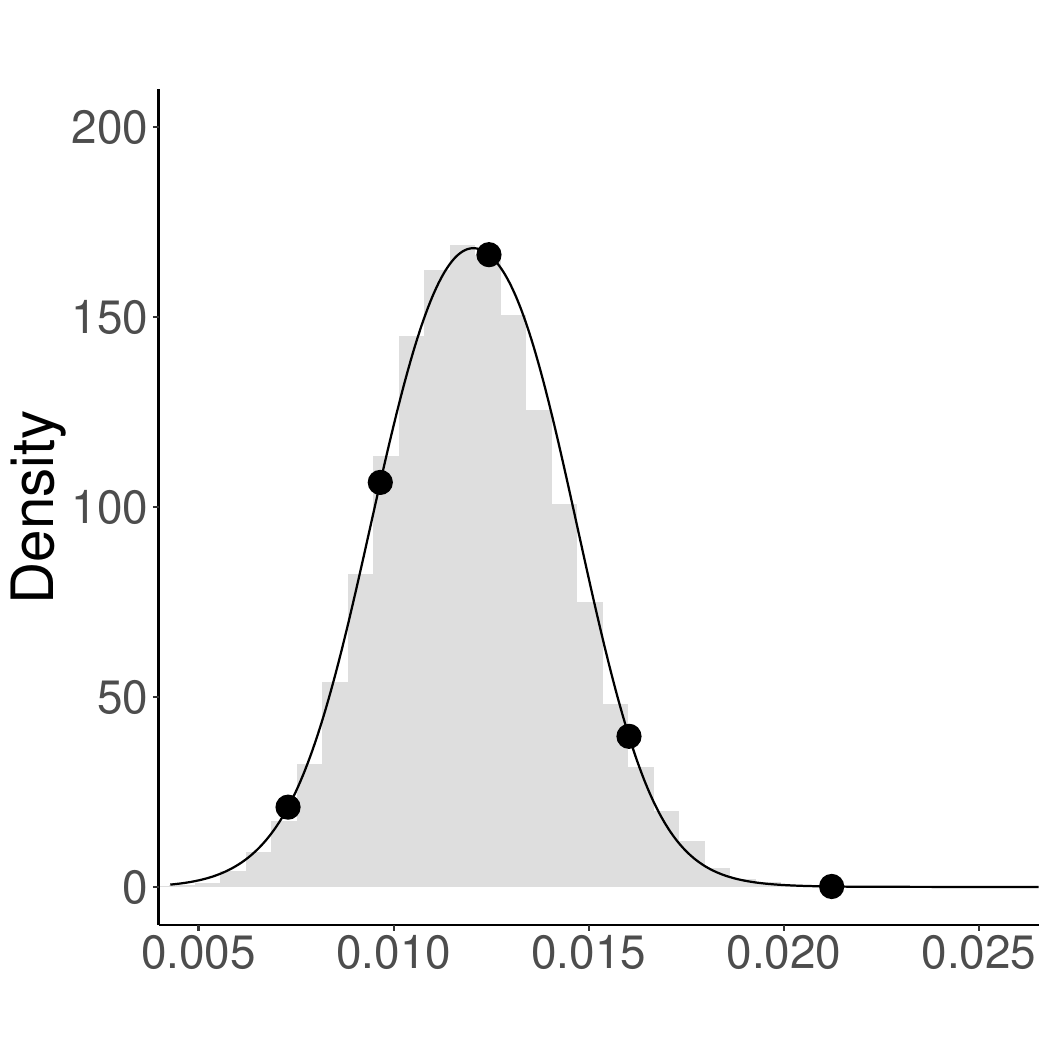}}
	\subfloat[$\alpha, k = 7$]{\includegraphics[width=.3\textwidth]{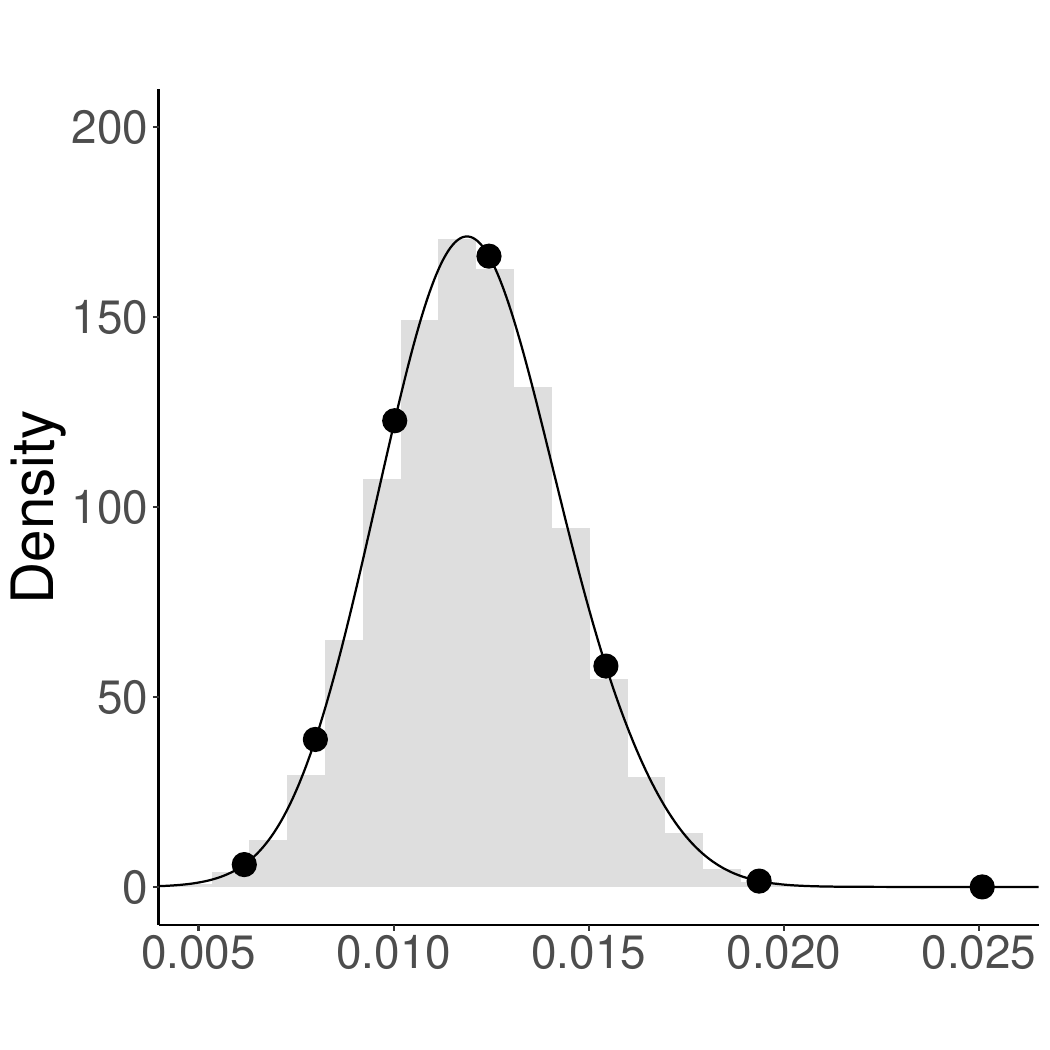}}\\
	\subfloat[$\beta, k = 3$]{\includegraphics[width=.3\textwidth]{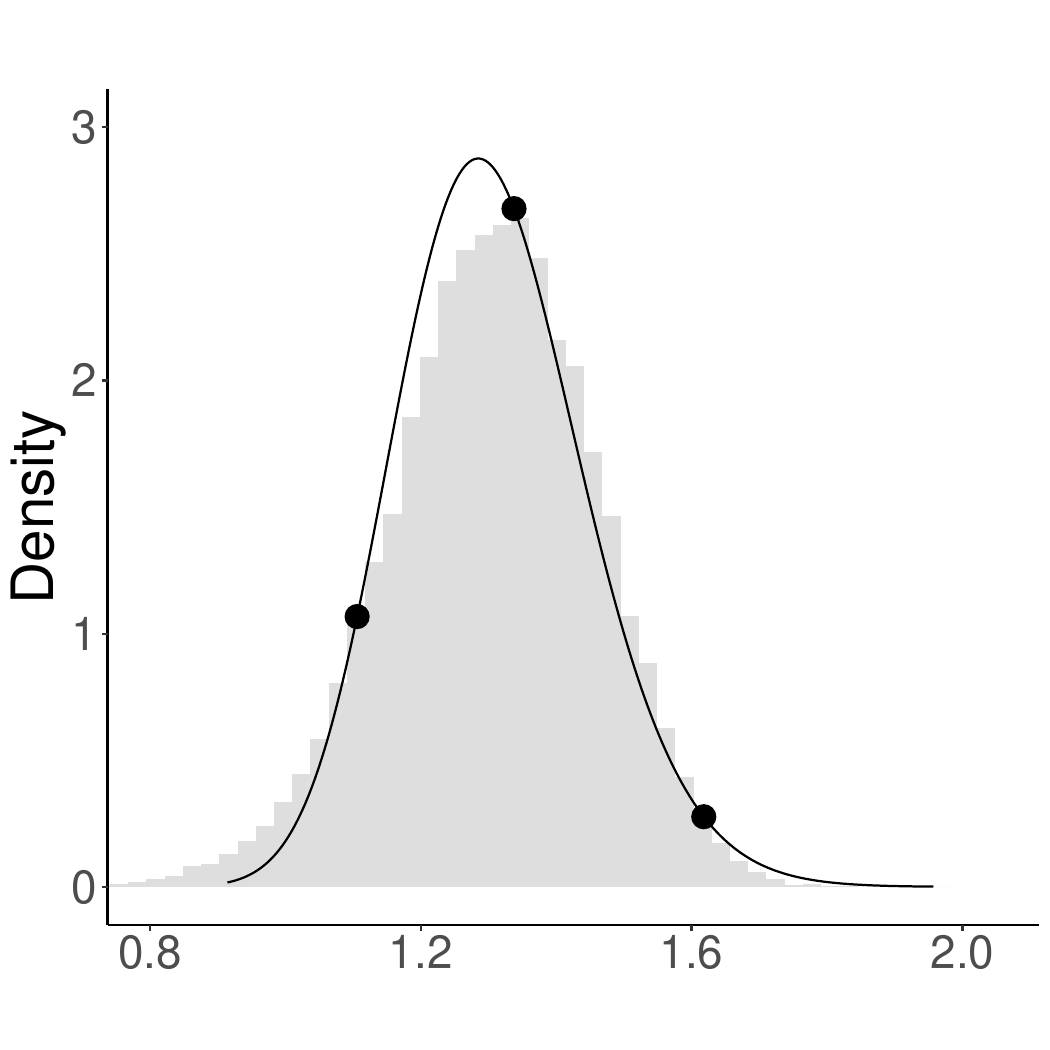}}
	\subfloat[$\beta, k = 5$]{\includegraphics[width=.3\textwidth]{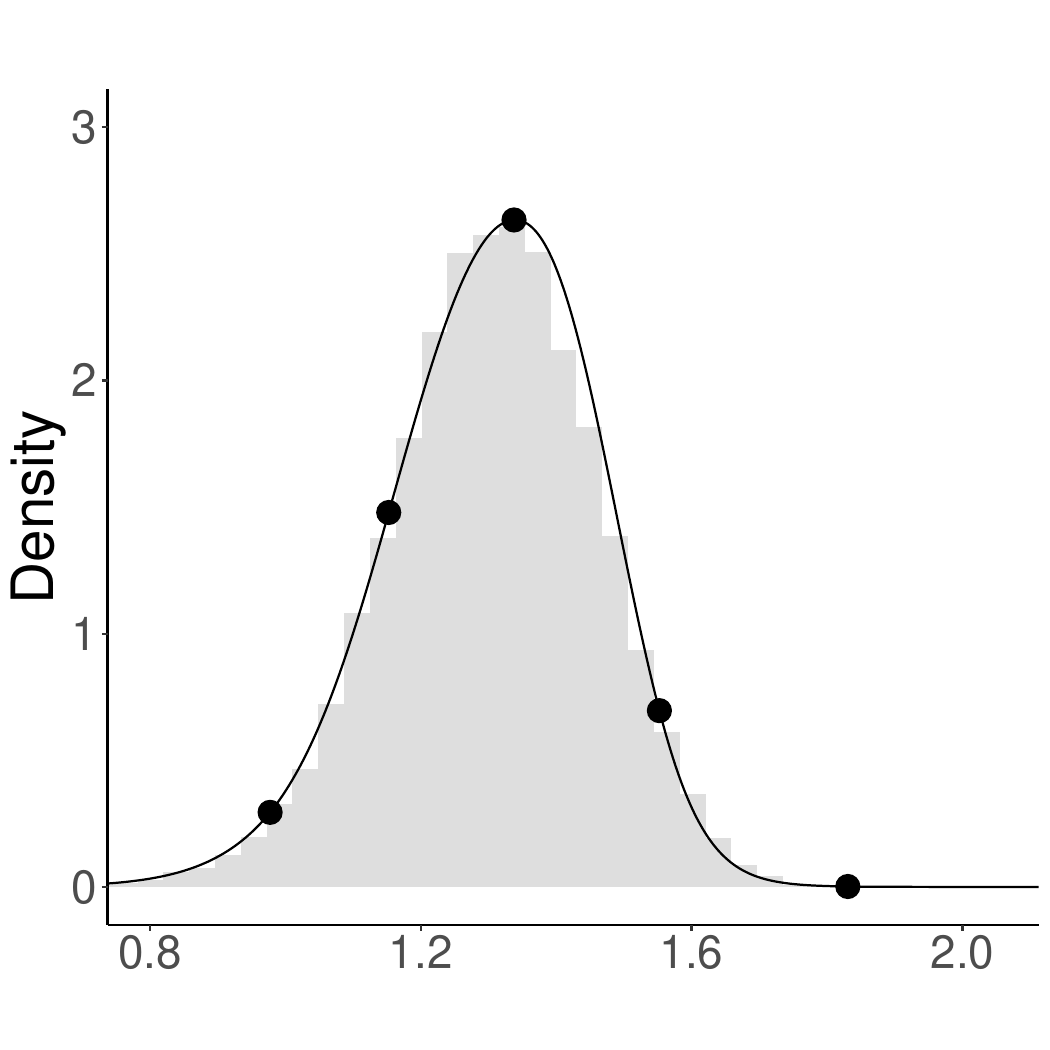}}
	\subfloat[$\beta, k = 7$]{\includegraphics[width=.3\textwidth]{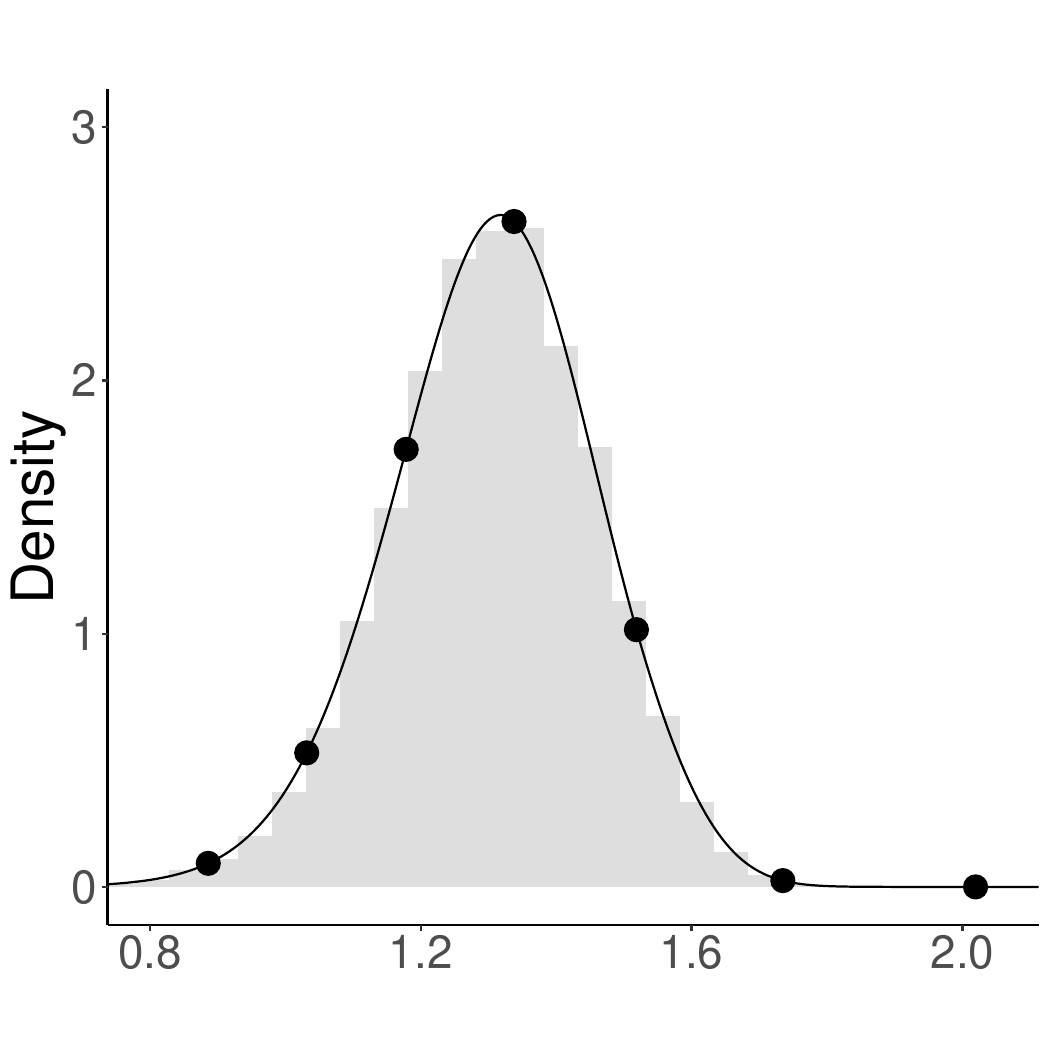}}
\caption{\AGHQ{} ($\bullet$, ---) and \MCMC{} (\textcolor{lightgray}{$\blacksquare$}) results for the infectious disease data of \cref{subsec:infectiousdisease}.}
\label{fig:diseasemcmccompare}
\end{figure}

\newcolumntype{?}{!{\vrule width 1pt}}
\begin{table}
\centering
\caption{\label{tab:diseasetable} Posterior summaries are reported for \AGHQ{} applied to the infectious disease data of \cref{subsec:infectiousdisease}. Comparison with \MCMC{} is reported using the Kolmogorov-Smirnov (KS) distance. Mean, SD, and quantiles for $\alpha$ are multiplied by 100.
}
\begin{tabular}{crrrrrrrrrr}
\hline
 & \multicolumn{2}{r}{Mean} & \multicolumn{2}{r}{SD} & \multicolumn{2}{r}{$2.5\%$} & \multicolumn{2}{r}{$97.5\%$} & \multicolumn{2}{c}{KS} \\
$\quadnum$ & $\alpha$ & $\beta$ & $\alpha$ & $\beta$ & $\alpha$ & $\beta$ & $\alpha$ & $\beta$ & $\alpha$ & $\beta$ \\
\hline
3 & 1.21 & 1.31 & 0.239 & 0.148 & 0.829 & 1.06 & 1.70 & 1.63 & 0.0326 & 0.0362 \\
5 & 1.20 & 1.30 & 0.232 & 0.152 & 0.750 & 0.982 & 1.60 & 1.55 & 0.0234 & 0.0258 \\
7 & 1.20 & 1.30 & 0.233 & 0.153 & 0.758 & 0.984 & 1.67 & 1.59 & 0.0133 & 0.0129 \\
9 & 1.20 & 1.30 & 0.233 & 0.153 & 0.759 & 0.985 & 1.66 & 1.58 & 0.0139 & 0.0131 \\
11 & 1.20 & 1.30 & 0.233 & 0.153 & 0.758 & 0.984 & 1.66 & 1.58 & 0.0126 & 0.0158 \\
13 & 1.20 & 1.30 & 0.233 & 0.153 & 0.757 & 0.984 & 1.66 & 1.58 & 0.0168 & 0.0157 \\
\MCMC{} & 1.20 & 1.30 & 0.228 & 0.151 & 0.761 & 0.986 & 1.65 & 1.58 & - & - \\
\hline
\end{tabular}
\end{table}

\begin{table}
\caption{\label{tab:diseasetableruntimes} Median (over 100 replications) run times to compute the marginal posteriors of both $\alpha$ and $\beta$ using \AGHQ{} for the infectious disease data of \cref{subsec:infectiousdisease}. Effective iterations are the number of \MCMC{} iterations that could have been performed in the same time it took to run \AGHQ{}, calculated based on a maximum run time of $84$ seconds for $10,000$ \MCMC{} iterations using $4$ parallel chains in \texttt{tmbstan}.}
\centering
\begin{tabular}{crrrrrr}
\hline
$\quadnum$ & 3 & 5 & 7 & 9 & 11 & 13 \\
\hline
Time (Seconds) 
& 0.101 
& 0.160 
& 0.224 
& 0.285 
& 0.357 
& 0.441 \\
Effective Iterations 
& 12
& 19
& 27
& 34 
& 42 
& 52 \\
\hline
\end{tabular}
\end{table}

\cref{fig:diseasemcmccompare} shows the posterior density estimates obtained using $\quadnum = 3$, $5$, and $7$, and a comparison to a long \MCMC{} run. The $\quadnum = 5$ and $7$ results are visually indistinguishable from the density obtained through \MCMC{} (and each other). Table \ref{tab:diseasetable} makes this more precise, with comparisons of posterior summaries of interest for $\quadnum = 3,5,7,9,11,13$. 

\cref{tab:diseasetableruntimes} shows the dramatic improvement in run time of \AGHQ{}, as measured by the number of \MCMC{} iterations (not including any time spent tuning the sampler) that could have been run in the same amount of time as it took to run the full \AGHQ{} procedure. Running \MCMC{} for the maximum such number of iterations resulted in all such iterations being marked as divergent and \texttt{NaN} estimates for the number of effective parameters. This demonstrates the substantial computational gains attained by \AGHQ{} in this simple example when compared to \MCMC{}.

In practice, choosing $\quadnum$ remains an open question. 
As helpfully suggested by a referee, one strategy is to fit the model with successively increasing $\quadnum$ until inferences no longer change with $\quadnum$. \cref{tab:diseasetable} shows this occurring for the infectious disease example, where up to $\quadnum=13$ was fit, with similar estimates from about $\quadnum=7$ or so. Adding up the first row of \cref{tab:diseasetableruntimes}, we see that the total time for this entire strategy is about $1.568$ seconds, or 186 total \MCMC{} iterations, still a dramatic computational gain.

\subsection{Example: Estimating the Mass of the Milky Way}\label{subsec:astro}

Estimating the mass of the Milky Way Galaxy (hereafter the ``Galaxy'') is of importance to astrophysicists interested in determining the amount of Dark Matter in the universe, among other things. \citet{gwen2} describe a probabilistic model for estimating and, importantly, quantifying uncertainty in the mass of the Galaxy using Bayesian inference. They use three-dimensional observed position and velocity measurements of star clusters in orbit of the Galaxy within a probabilistic physical model whose parameters determine the mass of the Galaxy at any radial distance from its centre. The parameters are subject to nonlinear constraints and are found to have strongly correlated, highly skewed posteriors with mode lying on or near the boundary of the parameter space \citep{gwen2}. Care is required in implementing \AGHQ{} for this problem.

The choice of priors was observed to have a substantial effect on inference in this problem \citep{gwen2}, and a large body of knowledge on how to do this is available from the underlying physics. \citet{gwen2} consider many different choices of priors and subsets of their data and the effect that this has on the estimated mass of the Galaxy. They use \MCMC{} for inference, where each new model fit in their application requires careful tuning and assessment of the chains as well as potentially inconvenient run times. We find that \AGHQ{} exhibits fast and stable performance in this challenging problem, although we note that our present implementation with \texttt{tmbstan} \citep[software that was not available at the time][was written]{gwen2} seems to avoid some of the reported challenges with \MCMC{} as well. Nonetheless, this example serves to illustrate the application of \AGHQ{} in a challenging applied problem.

Let $\dataidx_{i} = (y_{i1},y_{i2},y_{i3})$ denote the three measurements for each star cluster: position, radial velocity, and tangential velocity relative to the centre of the Galaxy (referred to as \emph{galactocentric} measurements), and let the full matrix of data be $\data = \bracevec{\dataidx_{i}:i\in[n]}$. There are $n = 70$ clusters with complete measurements. The probability density for $\dataidx_{i}$ is $$f(\dataidx_{i};\Psi_{0},\gamma,\alpha,\beta) = \frac{L_{i}^{-2\beta}\VE_{i}^{\frac{\beta(\gamma-2)}{\gamma} + \frac{\alpha}{\gamma} - \frac{3}{2}}\Gamma\left( \frac{\alpha}{\gamma} - \frac{2\beta}{\gamma} + 1\right)} {\sqrt{8\pi^{3}2^{-2\beta}} \Psi_{0}^{-\frac{2\beta}{\gamma} + \frac{\alpha}{\gamma}}\Gamma\left( \frac{\beta(\gamma - 2)}{\gamma} + \frac{\alpha}{\gamma} - \frac{1}{2}\right)},$$ where $L_{i} = y_{i1}y_{i3}$, $\VE_{i} = \Psi_{0}y_{i1}^{1-\gamma} - (y_{i2}^{2} + y_{i3}^{2})/2$, and $i\in[n]$. The parameters $\Xi = (\Psi_{0},\gamma,\alpha,\beta)$ determine the mass of the Galaxy at radial distance $r$ kiloparsecs (kpc) from its centre according to $M(r) = \Psi_{0}\gamma r^{1-\gamma}$. While $M(r)$ only directly depends on $\Psi_{0}$ and $\gamma$, its posterior will depend indirectly on all four parameters due to correlation between them.

\citet{gwen2} consider many different strongly informative priors for the four model parameters. We choose one configuration of theirs: $\Psi_{0}\sim\text{Unif}(1,200)$, $\gamma\sim\text{Unif}(0.3,0.7)$, $\alpha - 3 \sim\text{Gamma}(1,4.6)$, and $\beta\sim\text{Unif}(-0.5,1)$. The parameters are further subject to nonlinear constraints $\alpha > \gamma, \alpha > \beta(2 - \gamma) + \gamma/2$, and $\VE_{i} > 0, i\in[n]$.

We find the following transformations convenient in this example:
\*[
\theta_{j} = \log\left(-\log\left[ \frac{\Xi_{j} - a_{j}}{b_{j} - a_{j}}\right]\right), j=1,2,4, \qquad \theta_{3} = \log(\alpha - 3), \\
\]
where $(a_{j},b_{j})$ are the endpoints of the uniform prior for $\Xi_{j}$. We let $\mb{\theta} = (\theta_{1},\theta_{2},\theta_{3},\theta_{4})$ and normalize the posterior $\pi(\mb{\theta}|\data)$ using \AGHQ{} with $\quadnum=5$. We emphasize that these transformations are not required to apply the theoretical results of \cref{sec:convergence}, and refer the reader to \cref{sec:computational} for further discussion about implementation. To find the posterior mode accounting for the remaining non-linear constraints, we perform a constrained optimization using the \texttt{IPOPT} software \citep{ipopt} with derivatives of the log-likelihood and constraints provided by the \texttt{TMB} software \citep{tmb}.

\begin{figure}[h!]
\centering
\subfloat[$\Psi_{0}$]{\includegraphics[width=0.3\textwidth]{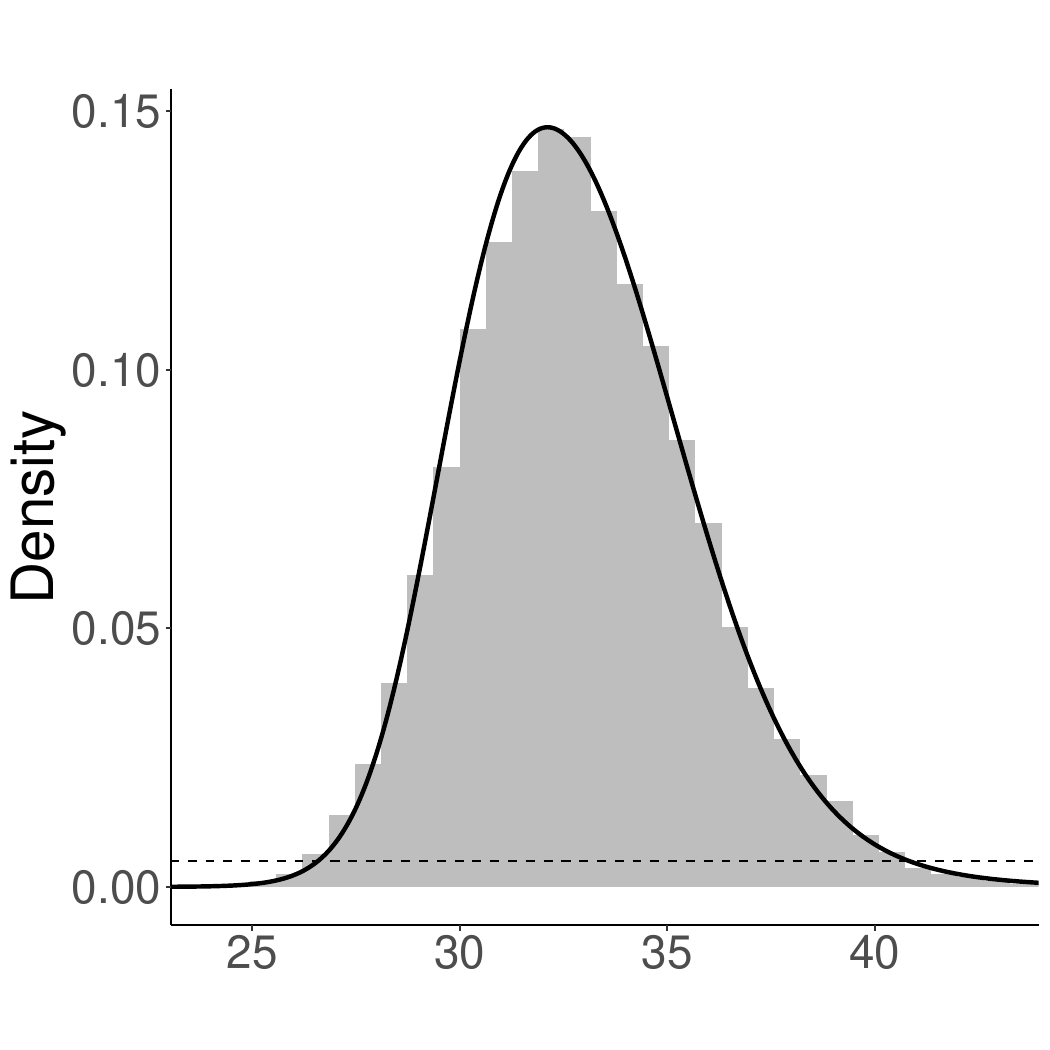}}
\subfloat[$\gamma$]{\includegraphics[width=0.3\textwidth]{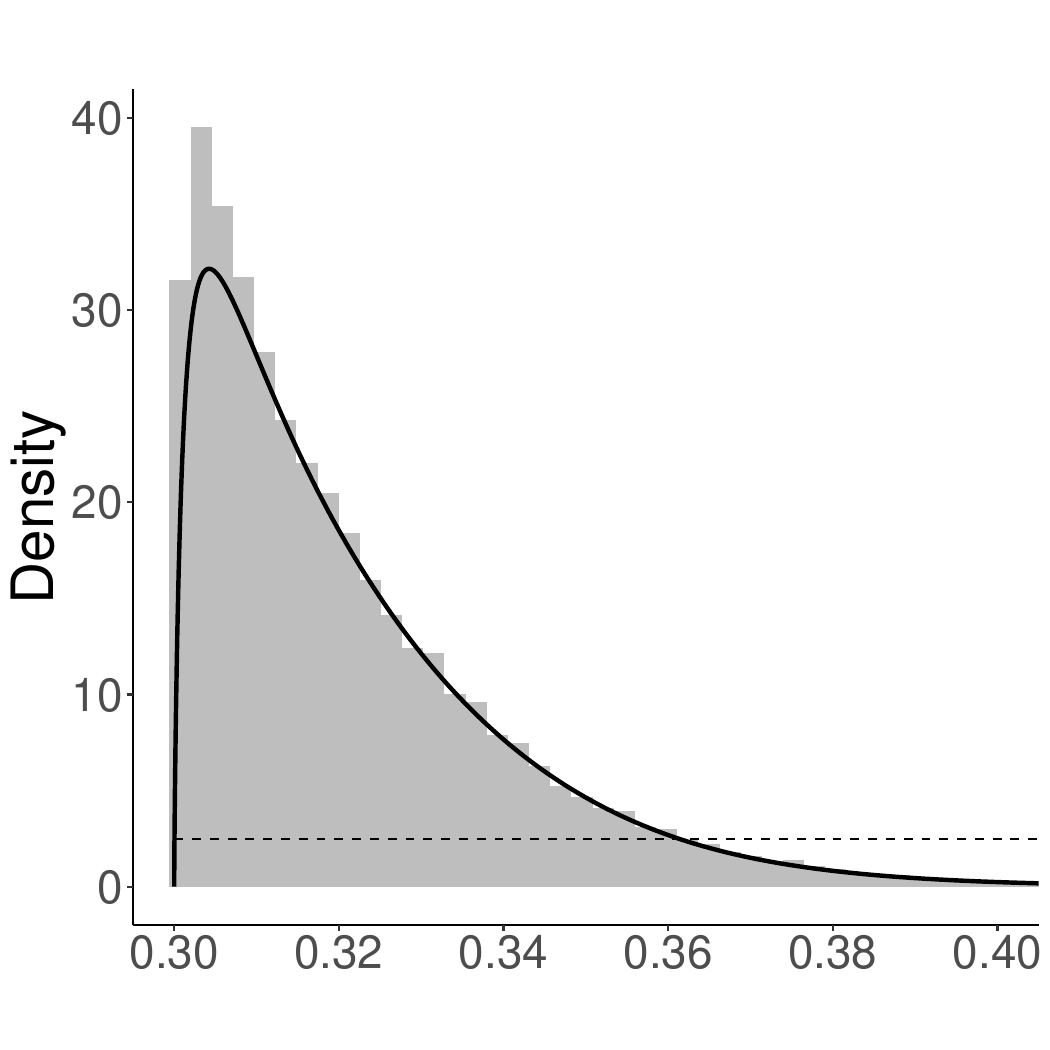}}\\
\subfloat[$\alpha$]{\includegraphics[width=0.3\textwidth]{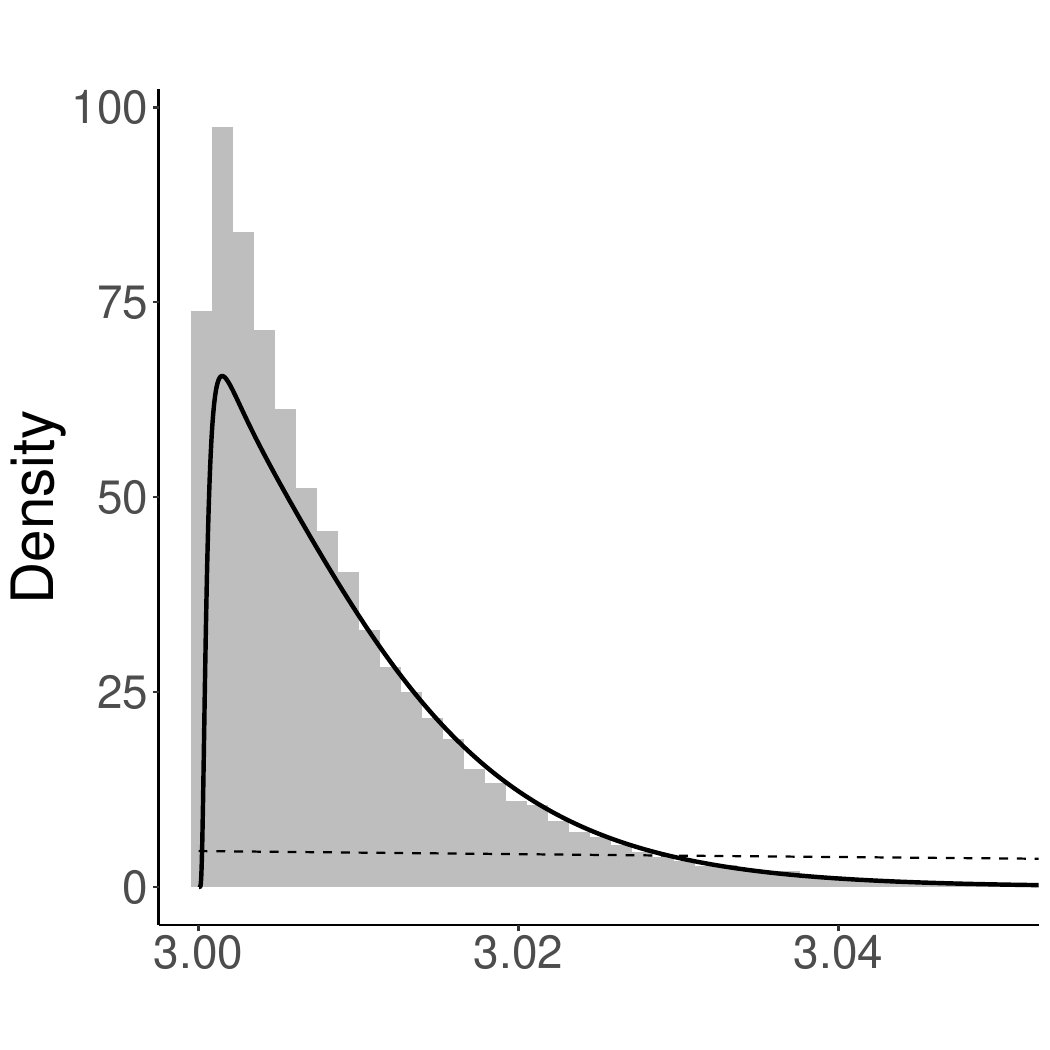}}
\subfloat[$\beta$]{\includegraphics[width=0.3\textwidth]{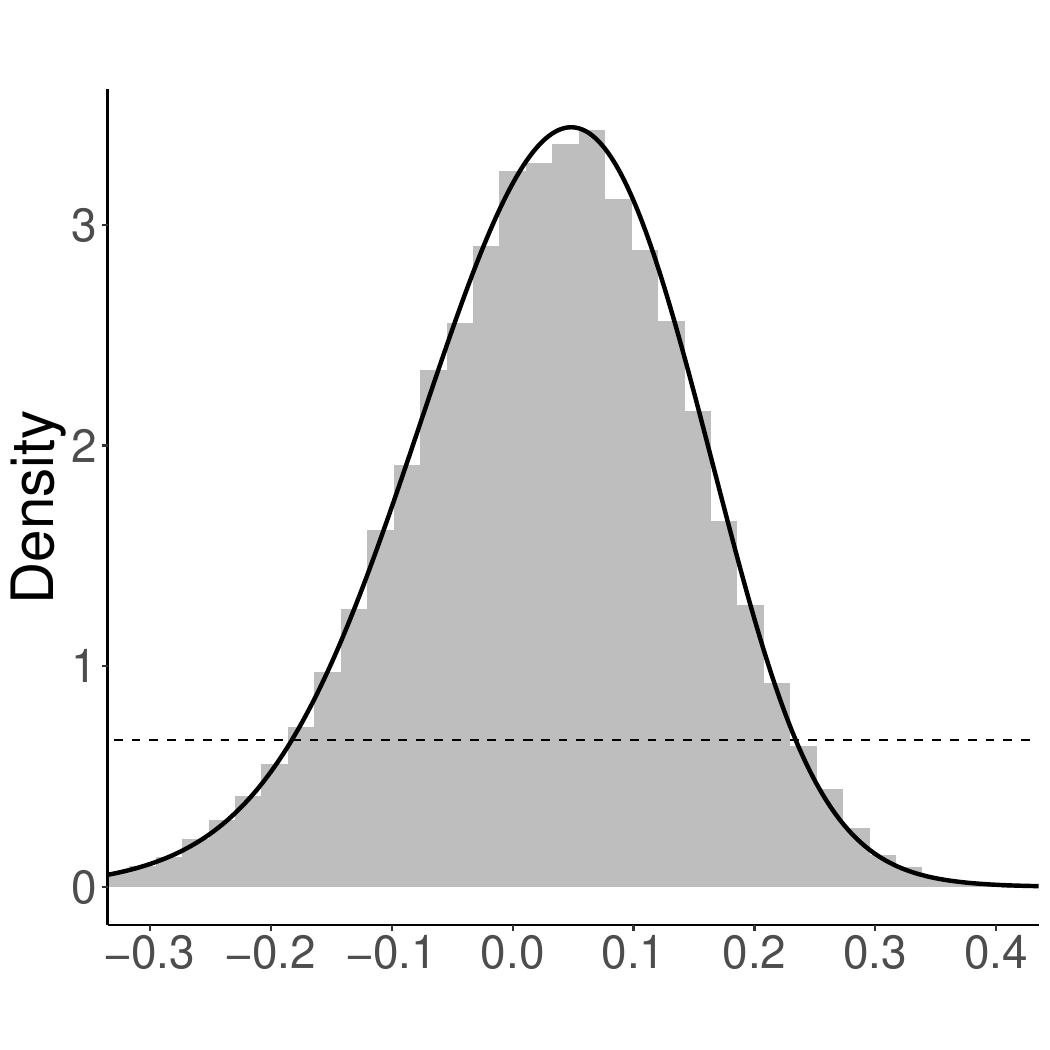}}
\caption{(a) -- (d): prior (- - -), and \AGHQ{} (---) and \MCMC{} ($\textcolor{lightgray}{\blacksquare}$) approximate posterior distributions for the four parameters from the astronomy data of \cref{subsec:astro}. The marginal posteriors for $\gamma$ (b) and $\alpha$ (c) are particularily skewed, and the approximation appears very accurate in the tails of both distributions, which is important for accurately quantifying uncertainty using marginal credible intervals.}
\label{fig:astroparam}
\end{figure}
\begin{figure}[h!]
\centering
\includegraphics[width=0.4\textwidth]{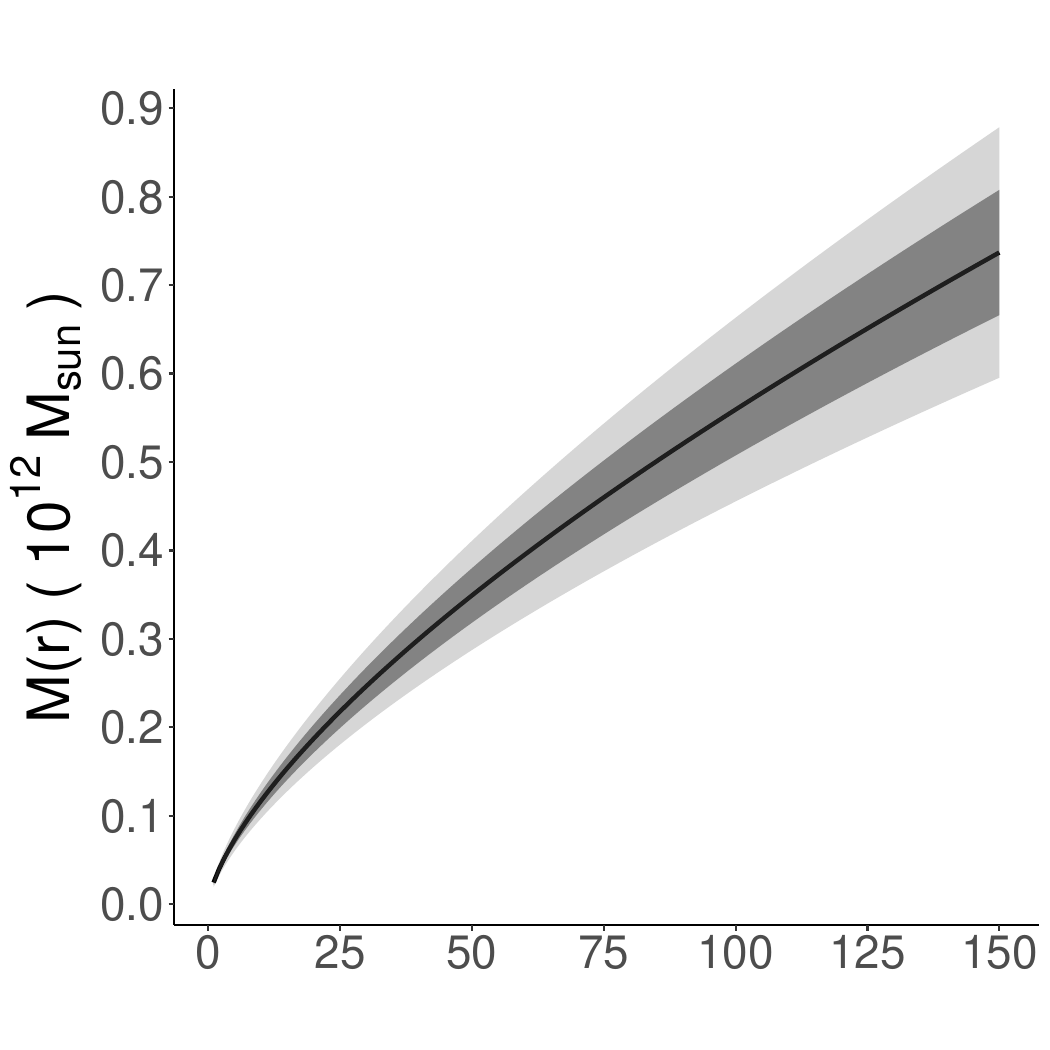}
\caption{\AGHQ{} estimated posterior mean mass (---), relative to the mass of the sun, of the Milky Way galaxy as a function of radial distance from galaxy centre (kpc), with one- (\textcolor{gray}{$\blacksquare$}) and two- (\textcolor{lightgray}{$\blacksquare$}) standard deviation bands for the astronomy data of \cref{subsec:astro}.}
\label{fig:astromass}
\end{figure}

\cref{fig:astroparam,fig:astromass} show the marginal posteriors of $\Xi$ and \cref{fig:astromass} the posterior mean and standard deviation of $M(r)$ respectively using $\quadnum = 5$, and hence \cref{thm:mainresult} prescribes an $\Ordp(n^{-2})$ relative error rate. The total computation time for the optimization, quadrature, and computation of marginal posteriors was around 1.3 seconds on a modern laptop using the \texttt{aghq} package. \cref{tab:astroks} shows the estimated KS statistic between the \AGHQ{} and \MCMC{} approximate empirical CDFs. 
\AGHQ{} is generally quite accurate, with slight disagreement in the middle of the posterior for $\alpha$, although the tail appears accurately estimated, which is reflected both visually in \cref{fig:astroparam} and numerically in \cref{tab:astroks}.

\begin{table}
\caption{\label{tab:astroks} Comparison of \AGHQ{} with $\quadnum=5$ to \MCMC{} using the KS distance for the astronomy data of \cref{subsec:astro}.}
\centering
\begin{tabular}{crrrr}
\hline
Param. & $\Psi_{0}$ & $\gamma$ & $\alpha$ & $\beta$ \\
\hline
KS(\AGHQ{},\MCMC{})
& 0.00872
& 0.00844
& 0.0358 
& 0.00739 \\
\hline
\end{tabular}
\end{table}
An interesting computational challenge emerges in this example: we observe that using a larger number of quadrature points to satisfy $\quadnum>5$ results in points outside the constraint regions and is hence infeasible. A similar challenge is observed when computing the marginal posterior for $\beta$, and in this case only we report results of a simpler method based on reuse of the original adapted points. These challenges may be due to the low sample size: $\logposthess^{-1}$ has a wide spectrum that causes the quadrature points to be spread far apart. As $n$ becomes larger this spectrum would be expected to become smaller and hence a larger number of quadrature points may be expected to lie inside the constraint region. However, we reiterate that $\quadnum = 5$ still yields a very fast $\Ordp(n^{-2})$ relative error rate by \cref{thm:mainresult} as well as empirically accurate results in this example (\cref{tab:astroks}).

\section{High-Dimensional Parameter Spaces}\label{sec:highdimexamples}

Adaptive quadrature is an increasingly popular technique in modern Bayesian statistics as one important component of more complicated methods for approximate posteriror inference in models with high-dimensional parameter spaces. In this section we demonstrate the use of one such type of method, based on the INLA method of \citet{inla}, through fitting a spatial model for zero-inflated counts, for which MCMC-based inference is observed to be challenging. 

The methods described in this section have no known convergence theory, and their usefulness in applied Bayesian statistics makes development of such theory a topic of substantial current interest. \cref{THM:MAINRESULT}, which describes the convergence properties of the adaptive quadrature rules used at the core of these methods, is a first step in this direction.

\subsection{High-Dimensional Approximation Method}\label{subsec:marginallaplace}

Consider a parameter vector $(\highdimparam,\param)$ where $\param\in\R^{\paramdim}$ and $\highdimparam\in\R^{\paramdimbig}$ with $\paramdim\ll \paramdimbig$. Bayesian inferences for these parameters are made using the posterior distributions:
\*[
\post{\param} &= \frac{\int\dist(\highdimparam,\param,\data)\dee\highdimparam}{\int\int\dist(\highdimparam,\param,\data)\dee\highdimparam\dee\param}, \\
\post{\highdimparam} &= \int\dist(\highdimparam \setdelim \param,\data)\dist(\param \setdelim \data)\dee\param.
\]
It is assumed that $\paramdim$ is small enough to make it computationally feasible to directly apply adaptive quadrature to $\dee\param$ integrals, but that $\paramdimbig$ is large enough for this to be infeasible for $\dee\highdimparam$ integrals, even using sparse grids or other non-product rule extensions to multiple dimensions. This occurs, for example, in hierarchical models (\citealt{kassandsteffy,inla,smoothestimation,geirsson20lgm}; \cref{SUBSEC:ZEROINF}) where $\highdimparam$ typically relate to the mean response, and $\param$ are variance components.

For any fixed $\param$, \citet{laplace} suggest approximating $\post{\param}\approx\LAGHapprox{\param}$ by first approximating $\int\dist(\highdimparam,\param,\data)\dee\highdimparam\approx\LAapprox(\param,\data)$ using \AGHQ{} with $k=1$ (a Laplace approximation), 
and then renormalizing the result using numerical integration, for which they also use \AGHQ{} in their experiments. \citet{casecrossover} combine this approximation with a Gaussian approximation $\dist(\highdimparam \setdelim \param,\data) \approx \GGapprox(\highdimparam \setdelim \param,\data)$, obtaining
\[\label{eqn:Wapprox}
\approxdist(\highdimparam \setdelim \data) \approx
\int
\GGapprox(\highdimparam \setdelim \param,\data)
\LAGHapprox{\param}
\dee\param.
\]
The integration in \cref{eqn:Wapprox} is approximated with the same \AGHQ{} points and weights used to obtain $\LAGHapprox{\param}$, so that $\approxdist(\highdimparam \setdelim \data)$ corresponds to a discrete mixture of Gaussian approximations with weights determined by \AGHQ{}. Inferences for $\highdimparam$ are then made by sampling from this Gaussian mixture. The INLA method of \citet{inla} uses an alternative adaptive quadrature rule for the renormalization, and then another Laplace approximation to approximate the marginal distributions $\dist(w_{j}|\param,\data)$.

There is a growing body of evidence suggesting that approximations based on \cref{eqn:Wapprox} give results empirically similar to those returned by \MCMC{} and other methods \citep{inla,geostatsp,inlamcmc,casecrossover,simplifiedinla} in faster computational times. In \cref{subsec:zeroinf} we show an example of a model for which a state-of-the-art MCMC algorithm runs for days and fails to converge (\cref{subsec:mcmc-results}) to a suitable solution, while \cref{eqn:Wapprox} provides a potentially suitable (\cref{subsec:empirical-accuracy-aghq}) solution in minutes. However, we stress that the convergence properties of \cref{eqn:Wapprox} are not known, and the apparent practical utility of this approximation makes establishing such properties an important area of research. Because \AGHQ{} is used several times in the computation of \cref{eqn:Wapprox}, \cref{THM:MAINRESULT} is a first step towards this broader goal.

\subsection{Example: Zero-Inflated Geostatistical Binomial Regression}\label{subsec:zeroinf}\label{SUBSEC:ZEROINF}

\citet{geostatlowresource} introduce a zero-inflated geostatistical binomial regression model, where both the incidence rate and suitability of infection (zero-inflation probability) varies spatially. They argue that such models are of substantial importance in the mapping of tropical diseases, and make frequentist inferences for the parameters of interest. Here we make Bayesian inferences for the spatial patterns in indidence and suitability of infection of a tropical disease in Nigeria and Cameroon, based on a dataset of subjects who tested positive in $n=190$ villages in this region \citep{loaloazero}. Data are obtained from the \texttt{loaloa} object in the \texttt{geostatsp} package \citep{geostatsp}. A simpler model that does not allow for zero-inflation has been fit using INLA \citep{geostatsp} as well as \MCMC{} and maximum likelihood \citep{prevmap}. 
To our knowledge, no previous Bayesian implementation of this zero-inflated model exists.

We apply \cref{eqn:Wapprox} to fit this model. Let $0 \leq y_{i} \leq N_{i},i\in[n]$ represent the counts of people infected and total number of people in the $i^{th}$ village out of the $n=190$ included in the data, and let $\mb{s}_{i}\in\R^{2}$ denote the geographical coordinates of this village. For every location $\mb{s}\in\R^{2}$, let $\phi(\mb{s})$ denote the probability that this location is capable of disease transmission (the \emph{suitability} probability), and $p(\mb{s})$ denote the probability that transmission occurs at this location, conditional on it being suitable (the \emph{incidence} probability). 
\citet{geostatlowresource} stress the practical importance of allowing observed zero counts $y_{i} = 0$ to either be \emph{haphazard} zeroes arising from sampling variability, or \emph{structural} zeroes arising from a location being unsuitable for disease transmission. They also discuss how this makes joint inference of the underlying spatial fields governing suitability and incidence very challenging. The full model is 
\*[
\mathbb{P}\left[Y_{i} = y_{i} | p(\mb{s}_{i}),\phi(\mb{s}_{i})\right] &= \left[1 - \phi(\mb{s}_{i})\right]\text{I}\left(y_{i} = 0\right) + \phi(\mb{s}_{i})\times\text{Binomial}[y_{i};N_{i},p(\mb{s}_{i})], \\
\log\left[\frac{\phi(\mb{s})}{1-\phi(\mb{s})}\right] &= \beta_{\texttt{suit}} + u(\mb{s}); \ \log\left[\frac{p(\mb{s})}{1-p(\mb{s})}\right] = \beta_{\texttt{inc}} + v(\mb{s}), \mb{s}\in\R^{2}\\
u(\cdot) | \param &\sim \GP(0,\text{C}_{\param}); \ v(\cdot) | \param \sim \GP(0,\text{C}_{\param}), \\
\]
where the unknown functions $u(\cdot),v(\cdot)$ are modelled as independent Gaussian Processes with the same Mat\'{e}rn covariance function, $\text{C}_{\param}$, with $\param = (\sigma,\rho)$  and the two intercepts are given independent Gaussian priors with variance $1000$. We assign
$\sigma$ and $\rho^{-2}$ independent exponential priors satisfying $\PP(\rho < 200\text{km}) = \PP(\sigma < 4) = 97.5\%$, following \citet{geostatsp} and \citet{pcpriormatern}. 

Inference for $u(\cdot)$ and $v(\cdot)$ is based on their values at the observed locations $\mb{s}_{i}$, and then posterior distributions for their values at any new location $\mb{s}\in\R^{2}$ are obtained using standard methods for spatial interpolation. 
Define $\mb{U} = \bracevec{u(\mb{s}_{i}):i\in[n]}$, $\mb{V} = \bracevec{v(\mb{s}_{i}):i\in[n]}$, and let $\highdimparam = (\mb{U},\beta_{\texttt{suit}},\mb{V},\beta_{\texttt{inc}})\in\R^{\paramdimbig},\paramdimbig=2n+2$. The Gaussian process priors on $u(\cdot)\negsetdelim\param$ and $v(\cdot)\negsetdelim\param$ imply that $\mb{U}\negsetdelim\param\sim\text{N}\left[0,\mb{\Sigma}(\param)\right]$ and $\mb{V}|\param\sim\text{N}\left[0,\mb{\Sigma}(\param)\right]$ independently, where $\left[\mb{\Sigma}(\param)\right]_{ij} = \text{C}_{\param}\left( \norm{\mb{s}_{i} - \mb{s}_{j}}\right), i,j\in[n]$. To infer $\mb{U}^{*} \equiv \bracevec{u(\mb{s}^{*}_{t}):t\in[T]}$ and $\mb{V}^{*} \equiv \bracevec{v(\mb{s}^{*}_{t}):t\in[T]}$ for any set of new locations $\bracevec{\mb{s}^{*}_{t}:t\in[T]}\subseteq\R^{2}$, we simulate from the predictive distribution $(\mb{U}^{*},\mb{V}^{*})|\mb{Y}$ by first drawing $\highdimparam$ from $\widetilde{\pi}(\highdimparam|\data)$ using standard methods \citep{fastsamplinggmrf}, and then sampling from $(\mb{U}^{*},\mb{V}^{*})\negsetdelim\highdimparam$ using existing algorithms for conditional simulation of Gaussian fields, implemented in the \texttt{geostatsp} \citep{geostatsp} and \texttt{RandomFields} \citep{randomfields} packages. 

We fit the model using \AGHQ{} with $\quadnum=7$ and the approximations described in \cref{subsec:marginallaplace}, and show the resulting spatial interpolations on a fine grid in \cref{fig:loaloaresults}. Total computation time for parameter estimation was 225 seconds. The predicted incidence probabilities appear visually similar to those reported by \citet{geostatsp} and \citet{prevmap} for the simpler model without zero-inflation, and the novel plot of predicted suitability probabilities identifies a cluster of villages that have a low posterior probability of being suitable for transmission. Owing to the lack of available convergence theory in this problem, we include a brief simulation study in \cref{subsec:empirical-accuracy-aghq} to assess the empirical accuracy of this procedure for this model and these data.

To better illustrate the difficulty of fitting this model with existing methods, we fit the model using \MCMC{} by running the ``NUTS'' sampler \citep{nuts} through the \texttt{tmbstan} package \citep{tmbstan} using the default settings. Eight chains of $10,000$ iterations each (including a $1,000$ iteration warmup) were run in parallel on a remote server at a total ``wall'' computation time of $66$ hours. The resulting chains exhibited divergent transitions according to \texttt{STAN}'s built in diagnostics. We investigated this in \cref{subsec:mcmc-results}, finding that $\beta_{\texttt{suit}}$ is poorly identified by the sampler. We ran both \cref{eqn:Wapprox} and \MCMC{} with $\beta_{\texttt{suit}}$ and $\beta_{\texttt{inc}}$ fixed at their estimated posterior means obtained from the initial fit of \cref{eqn:Wapprox}. This sampler converged without warnings in just over 19 hours for 10,000 iterations. The Kolmogorov-Smirnov (KS) statistics for the difference between approximate marginal CDFs from \MCMC{} and \cref{eqn:Wapprox} indicate that the two procedures provide mostly comparable inferences, with disagreement in a small number of villages. See \cref{subsec:mcmc-beta-fixed} for further details. We re-iterate that \MCMC{} did not  produce a complete answer for $\beta_{\texttt{suit}}$ in this problem.

Inferences made using \cref{eqn:Wapprox} produce a complete answer in around three and a half minutes on a modern server for this problem of substantial practical importance \citep{geostatlowresource}. In this same problem and on the same hardware, \MCMC{} either (a) runs for almost a day and produces an incomplete answer, or (b) runs for almost 3 days and fails. This example illustrates why these types of approximations have such high potential value in applied statistics, and why convergence theory for \cref{eqn:Wapprox} is of such importance. \cref{THM:MAINRESULT} provides a first step towards this goal.

\begin{figure}[t]
\caption{\label{fig:loaloaresults} \AGHQ{} estimated posterior mean (a) suitability probabilities and (b) incidence rates for the \texttt{loaloa} example of \cref{sec:highdimexamples}. 
}
\centering
\subfloat[{$\EE\left[\phi(\cdot) | \data \right]$}]{\includegraphics[width=0.45\textwidth]{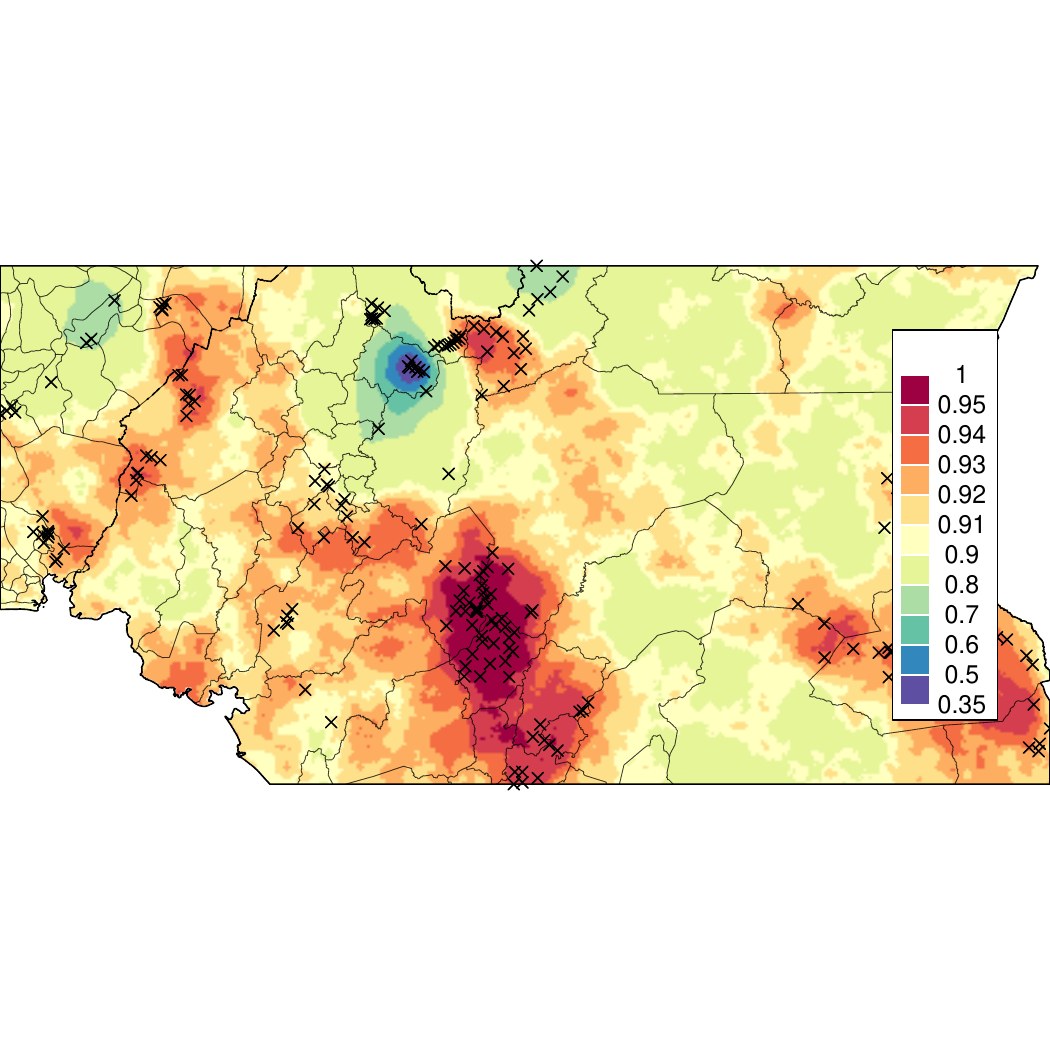}}
\subfloat[{$\EE\left[\phi(\cdot)\times p(\cdot) | \data\right]$}]{\includegraphics[width=0.45\textwidth]{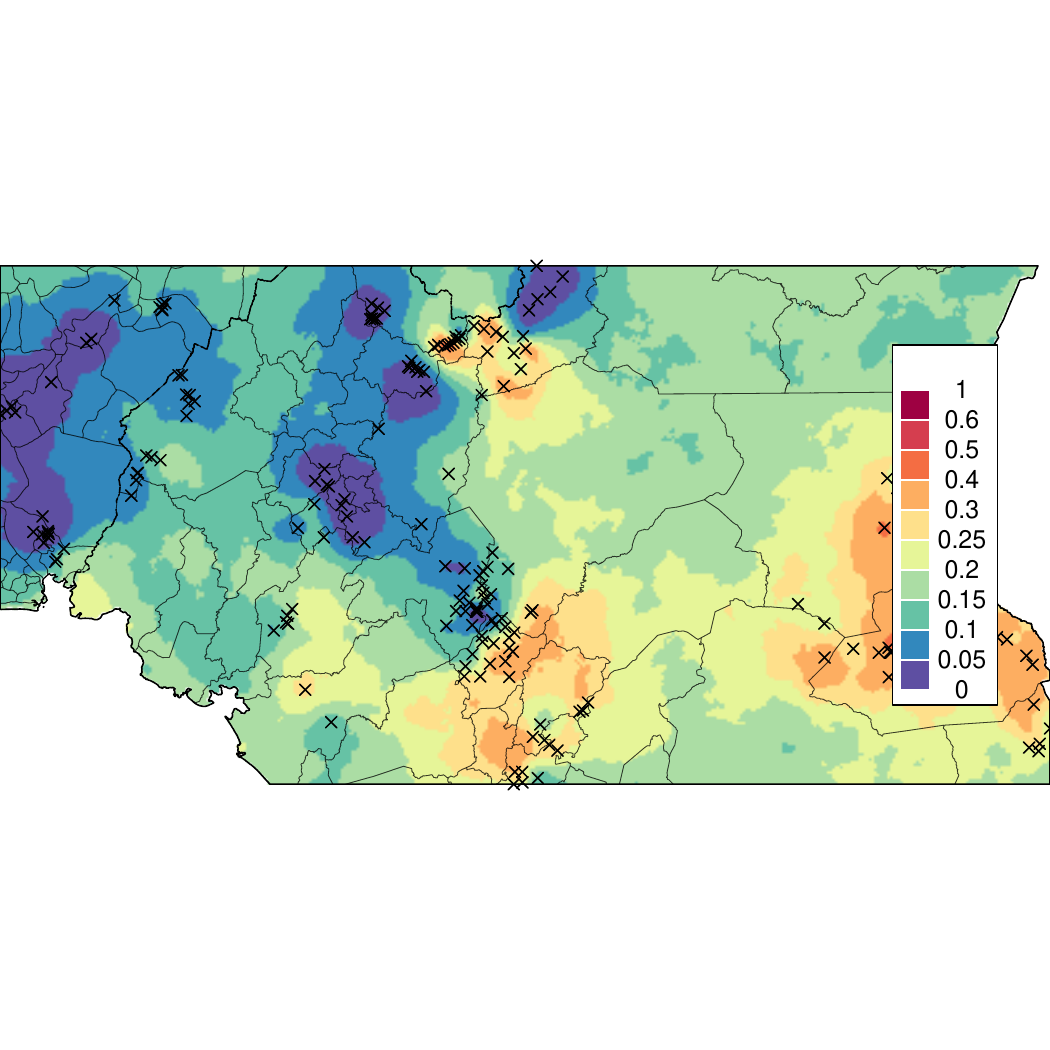}} \\
\end{figure}

\section{Discussion}\label{sec:discussion}

Using standard regularity assumptions, we have provided the first stochastic convergence rate for adaptive quadrature in Bayesian inference, and showed that this rate applies to the approximate normalizing constant, posterior density,
moments, and marginal densities. 
Using our \texttt{R} package \texttt{aghq}, available on \texttt{CRAN},
we demonstrated the use of \AGHQ{} for Bayesian inference in two challenging low-dimensional models and one high-dimensional model. 
We now briefly discuss five open problems for the theory of adaptive quadrature in Bayesian inference.

First, computing approximate quantiles and credible sets requires further integration of the approximate posterior over a subset of the parameter space, and hence a quadrature rule is needed that satisfies a truncated version of $\property{\paramdim}{\quadnum}$.
Providing a robust method for this computation with corresponding theoretical guarantees (analogous to \cref{fact:pos-marg-compute,fact:pos-mean-compute}) will complete the justification of using \AGHQ{} for all facets of Bayesian inference in low-dimensional models.
The current implementation uses an interpolation-based method with no theoretical guarantees, but appears to provide reasonable output in challenging examples.
Second, for high-dimensional models, the current implementation uses a Gaussian approximation and an adaptive quadrature approximation with reused points and weights. 
Providing full theoretical guarantees for the output of this entire procedure remains an open problem, and will not only validate the use of the the \texttt{aghq} package for such models but also provide the first theoretical guarantees for \texttt{INLA}-like methods; we believe that \cref{thm:mainresult} is an important first step towards this goal.
Third, our theoretical guarantees are all asymptotic and worst-case subject to the regularity assumptions. 
A challenging open problem is to provide theoretical guarantees that hold for finite samples and adapt to properties such as smoothness and sparsity, leading to improved performance for ``benign'' data and models.
Fourth, a principled choice of $\quadnum$ in any given practical application, for any given data set, remains an open problem. The recommendation from \cref{subsec:infectiousdisease} is feasible due to the fast run time of \AGHQ{}, and a more formally-motivated approach based on this could lead to a clearer and more useful practical recommendation.
Lastly, developing methods with comparable accuracy to AGHQ that are computationally feasible in very high dimensions remains a challenging open problem.

\preprint{

\section*{Acknowledgements}\label{sec:acknowledgements}

BB acknowledges support from an NSERC Canada Graduate Scholarship and the Vector Institute.
AS acknowledges support from an NSERC Postgraduate Scholarship and the Centre for Global Health Research at St. Michael's Hospital, Toronto, Canada.
YT acknowledges support from an NSERC Postgraduate Scholarship and the Vector Institute.
We thank Jeffrey Negrea, Nancy Reid, Daniel Roy, and Jamie Stafford for helpful comments and suggestions.

\bibliography{biblio.bib}
\newpage
}

\appendix

\jasa{
\bibliography{biblio.bib}

\newpage
\setstretch{1}
\newgeometry{left=1in,right=1in,bottom=1in,top=1in}
\begin{center}
{\Large\bf SUPPLEMENTARY MATERIAL}
\end{center}

\setcounter{section}{0}
\renewcommand{\theHsection}{S.\arabic{section}}
\renewcommand{\thesection}{S.\arabic{section}}
\crefalias{appendix}{supplement}
\crefalias{subappendix}{supplement}
}

\section{Regularity Assumptions}\label{sec:assumptions}

We state here some more notation and the required modelling assumptions for \cref{thm:mainresult}.
The log-likelihood of a parameter $\param\in\R^\paramdim$ is denoted by $\llhood(\param;\data) = \log \dist(\data \setdelim \param)$.
When the dependence on the data is clear, we may use $\llhood(\param)$ for brevity. Denote the 
log-posterior (unnormalized) by $\llandp(\param) = \log \poststar{\param} = \llhood(\param) + \log\dist(\param)$.
The maximum likelihood estimator is $\parammle = \argmax_{\param\in\paramspace}\llhood(\param)$, and the posterior mode is $\parammode = \argmax_{\param\in\paramspace}\llandp(\param)$.
The 
negative Hessian of the log-posterior is 
\*[
\logposthess(\param) = -\frac{\partial^{2}}{\partial \param\partial \param\tpose}\llandp(\param).
\] 
Further, we make frequent use of
the Cholesky decomposition of the inverse curvature of
the log-posterior at $\parammode$, which for symmetric, positive-definite $\logposthess(\parammode)$ is the unique lower-triangular matrix that satisfies
\*[
	\Big[\logposthess(\parammode)\Big]^{-1} = \logpostchol\logpostchol\tpose.
\]

For any $x_0 \in \Reals^\dummydim$ and $\universalradius>0$, let $\ball{\dummydim}{x_0}{\universalradius} = \left\{ x\in\R^{\dummydim} : \norm{x - x_0}_{2} < \universalradius \right\}$ denote the open ball in $\R^{\dummydim}$ of radius $\universalradius$ centred at $x_0$ with respect to the Euclidean norm. Let $\normaldens{x}{\mb{\mu}}{\mb{\Sigma}}$ denote the multivariate normal density evaluated at $\mb{x}$ with mean $\mb{\mu}$ and variance $\mb{\Sigma}$.
For a positive-definite $\dummydim\times\dummydim$ matrix $A$, let $\eigen_{1}(A) \geq \cdots \geq \eigen_{\dummydim}(A) > 0$ denote its ordered eigenvalues. For any $f: \Reals^\dummydim \rightarrow \Reals$, $\dumderivvec \subseteq \N^\dummydim$, and $\mb{x} \in \Reals^\dummydim$, we define
\*[
\abssmall{\dumderivvec} = \sum_{j = 1}^{\dummydim} \alpha_i, \quad \dumderivvec! = \prod_{j =1}^\dummydim \alpha_j!, \quad \mb{x}^{\dumderivvec} =\mb{x}_{\dumderivvec} =\prod_{j = 1}^\dummydim x_j^{\alpha_j}, \text{ and }\\
\partial^{\dumderivvec} f(\mb{x}) = \partial x_1^{\alpha_1} \partial x_2^{\alpha_2} \cdots \partial x_\dummydim^{\alpha_\dummydim} f( x) = \frac{\partial^{\abssmall{\dumderivvec}} f(x)} {\partial x_1^{\alpha_1} \partial x_2^{\alpha_2}\cdots \partial x_\dummydim^{\alpha_\dummydim}}. 
\]

For any data-generating distribution $\datatrueprobn$, we say the following assumptions hold if there exist $\universalradius>0$ and $\paramtrue\in\paramspace$ such that all five statements are true.

\begin{assumption}\label{assn:kderiv}
There exists $\derivnum,\derivbound>0$ such that for all $\dumderivvec \subseteq \N^\paramdim$ with $0 \leq \abssmall{\dumderivvec} \leq \derivnum $,
\*[
	\lim_{n \to \infty} \datatrueprobn\Big[\sup_{\param \in \ball{\paramdim}{\paramtrue}{\universalradius}}\absbig{\partial^{\dumderivvec}\llandp(\param)} < n \derivbound\Big]=1.
\]
\end{assumption}

\begin{assumption} \label{assn:hessian}
There exist $0 < \hesssmall \leq \hessbig < \infty$ such that 
\*[
	\lim_{n \to \infty} \datatrueprobn\Big[n \hesssmall \leq \inf_{\param \in \ball{\paramdim}{\paramtrue}{\universalradius}}\eigen_{\paramdim}(\logposthess(\param)) \leq \sup_{\param \in \ball{\paramdim}{\paramtrue}{\universalradius}} \eigen_{1}(\logposthess(\param))  \leq n \hessbig\Big] = 1.
\]
\end{assumption}

\begin{assumption}\label{assn:limsup}
There exists $\llhoodmargin>0$ such that 
\*[
	\lim_{n \to \infty} \datatrueprobn\Big[\sup_{\param \in \ballc{\paramdim}{\paramtrue}{\universalradius}} \llhood(\param) - \llhood(\paramtrue) \leq -n \llhoodmargin\Big] = 1.
\]
\end{assumption}

\begin{assumption}\label{assn:consistency}
For any $\conmargin > 0$ and function $G(n)$ such that $\lim_{n \rightarrow \infty} G(n) = \infty$,
\*[
  \lim_{n \to \infty} \datatrueprobn \left[ \frac{\sqrt{n}}{G(n)} \Norm{\parammode - \paramtrue}_2 > \conmargin \right] = 0.
\]
\end{assumption}

\begin{assumption}\label{assn:prior}
There exist  $0 < \priorsmall < \priorbig < \infty$ such that
\*[
   \priorsmall \leq \inf_{\param \in \ball{\paramdim}{\paramtrue}{\universalradius}} \dist(\param) \leq \sup_{\param \in \ball{\paramdim}{\paramtrue}{\universalradius}} \dist(\param) \leq \priorbig.
\]
\end{assumption}

\begin{remark}
\cref{assn:kderiv,assn:hessian,assn:prior} are standard assumptions that can be found in the asymptotic literature.
\cref{assn:limsup} corresponds to a consistency condition for the MLE (see the paragraph before Theorem~8 in \citealt{validitylaplace}). \cref{assn:consistency} is implied by $n^{1/2}$ the consistency of the MLE and \cref{assn:prior}.
In the presence of \cref{assn:prior}, \cref{assn:kderiv,assn:hessian,assn:limsup,assn:consistency} are equivalent to analogous assumptions on the log-likelihood and the MLE. 
\end{remark}

\begin{remark}
Our assumptions are similar to those found in Section 3 of \citet{validitylaplace}, with the exception that the number of derivatives we require can potentially be higher since $\quadnum \geq 1$, and our assumptions hold in probability rather than almost surely.
\end{remark}

\begin{remark}
Assumptions \ref{assn:kderiv}--\ref{assn:hessian} (where $m \geq 2$ in Assumption \ref{assn:kderiv}) and  \ref{assn:consistency}--\ref{assn:prior} are sufficient to imply the Bernstein-von Mises theorem holds for our model, meaning that the posterior distribution is asymptotically Gaussian.
Using Theorem 10.1 in \cite{vaart_1998}, the conditions on the model are: differentiability in quadratic mean, invertability of the Fisher information matrix at $\paramtrue$, continuity and positivity of the prior distribution at $\paramtrue$, and finally the existence of tests $\zeta_n$ such that for every $\epsilon > 0$:
\*[ 
\lim_{n \to \infty} \datatrueprobn\zeta_n = 0 \quad\text{ and } \quad \lim_{n \to \infty} \sup_{\norm{\param - \paramtrue}_2 \geq \eps}(1 - \zeta_n) = 0.
\] 
Assumptions \ref{assn:kderiv}, \ref{assn:hessian} and \ref{assn:prior} directly imply the first three conditions, as for the final requirement, let 
\*[
\zeta_n = \mathbb{I}\left\{ \norm{\parammle - \paramtrue}_2 > \frac{\log(n)}{n^{1/2}} \right\},
\]
then by Assumption \ref{assn:consistency} this sequence of test satisfies the final condition.
\end{remark}

\section{Proof of \cref{THM:MAINRESULT}}\label{sec:main-thm-proof}

\subsection{Quantifying Accuracy for Approximate Bayesian Inference}\label{sec:ABI-accuracy}

We measure the accuracy of a normalizing--constant approximation by the \emph{relative error},
\*[
	\relerror(\data) = \absbig{\frac{\dist(\data)}{\approxdist(\data)} - 1}.
\]
Since we are ultimately interested in summary statistics of the posterior for Bayesian inference, we require further integration of the approximate posterior density. 
To measure this error, we use the \emph{total variation error}, 
\*[
	\TVerror(\data) = \sup_{\Borelset\in\Borel{\paramdim}}\absbig{\int_{\Borelset}\approxpost{\param} - \post{\param}\dee\param}.
\] 
Fortunately, by the definition of $\approxpost{\param}$, $\TVerror(\data)$ simplifies to
\*[
	\sup_{\Borelset\in\Borel{\paramdim}}\absbig{\int_{\Borelset}\approxpost{\param} - \post{\param}\dee\param} =&\absbig{\frac{1}{\approxdist(\data)} - \frac{1}{\dist(\data)}} \sup_{\Borelset\in\Borel{\paramdim}}\absbig{\int_{\Borelset}\poststar{\param}\dee\param} \\
	=& \absbig{\frac{1}{\approxdist(\data)} - \frac{1}{\dist(\data)}} \underbrace{\int_{\paramspace} \poststar{\param} \dee\param}_{\dist(\data)} \\
	=& \absbig{\frac{\dist(\data)}{\approxdist(\data)} - 1}, 
\]
so $\TVerror(\data) = \relerror(\data)$ and it suffices to analyse $\relerror(\data)$.

Finally, the choice of positioning the approximation in the numerator or denominator in the definition of $\relerror(\data)$ does not affect the discussion of asymptotic rates, as is made clear in the following Lemma.
\begin{lemma}\label{lem:relative-helper}
For any sequences of random variables $(A_n)$ and $(B_n)$ such that\\
\phantom{}\hspace{\parindent} {\upshape a)} $\PP_n(A_n > 0) = \PP_n(B_n > 0) = 1$ for all $n$, and\\ 
\phantom{}\hspace{\parindent} {\upshape b)} there exists $r>0$ and $\constt>0$ satisfying
\*[
	\lim_{n \to \infty} \PP_n\left(\absbig{\frac{A_n}{B_n}-1} \leq \constt n^{-r}\right) = 1,
\]
it holds that
\*[
	\lim_{n \to \infty} \PP_n\left(\absbig{\frac{B_n}{A_n}-1} \leq 2\constt n^{-r}\right) = 1.
\]
\end{lemma}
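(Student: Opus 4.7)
The plan is to pass from control of $A_n/B_n$ to control of its reciprocal via an elementary algebraic identity, taking care to work on the high-probability event where the ratio is bounded away from zero so that inversion is stable.

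First I would define, for each $n$, the event $E_n = \{\, |A_n/B_n - 1| \leq \constt n^{-r}\,\}$. By assumption (b), $\PP_n(E_n) \to 1$, and by (a) the ratio $A_n/B_n$ is well-defined with probability one. Choose $N$ large enough that $\constt n^{-r} \leq 1/2$ for all $n \geq N$; this is possible because $r > 0$. Then on $E_n$ with $n \geq N$, the quantity $\epsilon_n := A_n/B_n - 1$ satisfies $|\epsilon_n| \leq 1/2$, and in particular $A_n/B_n \geq 1/2 > 0$, so $B_n/A_n$ is bounded above by $2$.

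The key algebraic step is the identity
\*[
\frac{B_n}{A_n} - 1 = \frac{B_n - A_n}{A_n} = -\frac{\epsilon_n}{1 + \epsilon_n},
\]
which, combined with $|1+\epsilon_n| \geq 1/2$ on $E_n \cap \{n \geq N\}$, yields
\*[
\absbig{\frac{B_n}{A_n} - 1} \leq \frac{|\epsilon_n|}{1 - |\epsilon_n|} \leq 2|\epsilon_n| \leq 2\constt n^{-r}.
\]
Thus on this event the desired bound holds.

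To finish, I would note that $E_n \subseteq \{\,|B_n/A_n - 1| \leq 2 \constt n^{-r}\,\}$ for all $n \geq N$, whence by monotonicity of $\PP_n$,
\*[
\PP_n\left(\absbig{\frac{B_n}{A_n} - 1} \leq 2\constt n^{-r}\right) \geq \PP_n(E_n) \to 1.
\]
There is no real obstacle here; the only subtlety to watch is ensuring the denominator $1+\epsilon_n$ stays uniformly bounded below in absolute value, which is handled by restricting to $n \geq N$ so that $|\epsilon_n|\leq 1/2$. The result then follows immediately.
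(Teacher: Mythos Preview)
Your proof is correct and follows essentially the same elementary idea as the paper's: both arguments amount to showing that if $|x-1|\le \constt n^{-r}\le 1/2$ then $|1/x-1|\le 2\constt n^{-r}$. The paper splits into the two tail events $\{B_n/A_n>1+2\constt n^{-r}\}$ and $\{B_n/A_n<1-2\constt n^{-r}\}$ and uses the inequalities $1/(1-z)\ge 1+z$ and $1/(1+z)\le 1-z/2$ separately, whereas your single identity $B_n/A_n-1=-\epsilon_n/(1+\epsilon_n)$ handles both directions at once and is slightly more streamlined.
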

\begin{proof}[Proof of \cref{lem:relative-helper}]
By assumption $\PP_n(A_n =  0) = \PP_n(B_n =  0) = 0$ for all $n$, so in what follows we work on the event $\{ \abssmall{A_n} > 0  \} \cap \{ \abssmall{B_n} > 0  \}$. 
For any $z \in [0,1)$, $\frac{1}{1-z} \geq 1+z$. So, for all $n > (2\constt)^{1/r}$,
\*[
	\PP_n\left(\frac{B_n}{A_n} < 1 - 2\constt n^{-r}\right) 
	=& \PP_n\left(\frac{A_n}{B_n} \geq \Big[1 - 2\constt n^{-r}\Big]^{-1}\right) \\
	\leq& \PP_n\left(\frac{A_n}{B_n} \geq 1 + 2\constt n^{-r}\right) \\
	\leq& \PP_n\left(\absbig{\frac{A_n}{B_n} - 1} \geq \constt n^{-r}\right).
\]  
Similarly, for any $z \in [0,1]$, $\frac{1}{1+z} \leq 1-z/2$. So, for all $n>(2\constt)^{1/r}$,
\*[
	\PP_n\left(\frac{B_n}{A_n} > 1 + 2\constt n^{-r}\right) 
	=& \PP_n\left(\frac{A_n}{B_n} \leq \Big[1 + 2\constt n^{-r}\Big]^{-1}\right) \\
	\leq& \PP_n\left(\frac{A_n}{B_n} \leq 1 - \constt n^{-r}\right) \\
	\leq& \PP_n\left(\absbig{\frac{A_n}{B_n} - 1} \geq \constt n^{-r}\right).
\]  
Thus,
\*[
	&\hspace{-2em}\lim_{n \to \infty} \PP_n\left(\absbig{\frac{B_n}{A_n}-1} > 2\constt n^{-r}\right) \\
	&= \lim_{n \to \infty} \PP_n\left(\frac{B_n}{A_n} > 1 + 2\constt n^{-r} \ \bigcup \ \frac{B_n}{A_n} < 1 - 2\constt n^{-r} \right) \\
	&\leq \lim_{n \to \infty} \PP_n\left(\frac{B_n}{A_n} > 1 + 2\constt n^{-r}\right) 
		+ \lim_{n \to \infty} \PP_n\left( \frac{B_n}{A_n} < 1 - 2\constt n^{-r} \right) \\
	&\leq 2 \lim_{n \to \infty} \PP_n\left(\absbig{\frac{A_n}{B_n} - 1} \geq \constt n^{-r}\right) \\
	&= 0.
\]
\end{proof}

\MainResult*

The proof of \cref{thm:mainresult} follows directly from the combination of the following two lemmas. 

\begin{lemma} \label{lem:normalizing_constant_approx}
Under \cref{assn:kderiv,assn:hessian,assn:limsup,assn:prior,assn:consistency}, 
for all $1 \leq \quadnum \leq \lfloor \derivnum/2 \rfloor$,
if $\quadrule{\quadpointset}{\weight}$ is a quadrature rule satisfying $\property{\quadnum}{\paramdim}$ then
there exists a constant $\constt>0$ such that
\*[
	\lim_{n \to \infty}
	\datatrueprobn\bigg(\frac{\abssmall{\dist(\data) - \quadadaptmarg{\quadpointset}{\weight}}}{\dist(\data \setdelim \parammode)}
	\leq \constt \frac{1}{n^{\paramdim/2 + \lfloor (\quadnum +2)/3 \rfloor }}\bigg) = 1.
\]
\end{lemma}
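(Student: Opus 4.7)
The plan is to analyse $|\dist(\data) - \quadadaptmarg{\quadpointset}{\weight}|$ via the change of variables $\uvec = \logpostcholnon^{-1}(\param - \parammode)$, which transforms both sides into objects of the form $\abssmall{\logpostchol}\int \phi(\uvec) f_n(\uvec)\, d\uvec$ (respectively $\abssmall{\logpostchol}\sum_{\quadpointvec\in\quadpointset} \phi(\quadpointvec) f_n(\quadpointvec)\weight(\quadpointvec)$), where $f_n(\uvec) = \poststar{\logpostchol\uvec + \parammode}/\phi(\uvec)$ and $\phi$ is the standard Gaussian density on $\R^\paramdim$. Since $\abssmall{\logpostchol} = \Ordp(n^{-\paramdim/2})$ by \cref{assn:hessian}, this prefactor accounts for the $n^{-\paramdim/2}$ in the target rate, and it remains to show that the difference of integral and sum (stripped of the prefactor) is of order $n^{-\lfloor(\quadnum+2)/3\rfloor}\,\dist(\data\setdelim\parammode)$ in probability. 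I would then split the $\uvec$-domain into a shrinking ball $\setofinterest = \ball{\paramdim}{\mb{0}}{\shrinkingrad}$ of radius $\shrinkingrad \asymp \sqrt{\log n}$ and its complement.

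For the tails, I further split $\setofinterest^c$ into the pre-image of $\ballc{\paramdim}{\paramtrue}{\universalradius}$ and the remaining Gaussian-style annulus inside $\ball{\paramdim}{\paramtrue}{\universalradius}$. On the former, \cref{assn:limsup} gives $\llhood(\param) - \llhood(\paramtrue) \leq -n\llhoodmargin$ so the contribution is $O(e^{-n\llhoodmargin})\,\dist(\data\setdelim\parammode)$, negligible at any polynomial rate. On the annulus, the Hessian bound of \cref{assn:hessian} yields $\llandp(\param) - \llandp(\parammode) \leq -\tfrac{n\hesssmall}{2}\norm{\param-\parammode}^2$, so that the integral beyond $\shrinkingrad$ is $O(n^{-M})$ for arbitrary $M$ by taking the constant in $\shrinkingrad$ large enough. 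On the quadrature side, because $\quadpointset$ is finite and $\norm{\logpostchol} = \Ordp(n^{-1/2})$, all adapted nodes lie in the pre-image of $\setofinterest$ for $n$ large enough, so the tail of the sum vanishes exactly. The problem thereby reduces to bounding the difference of integral and sum both restricted to $\setofinterest$.

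Inside $\setofinterest$ I Taylor-expand $g_n(\uvec) := \llandp(\logpostchol\uvec + \parammode) - \llandp(\parammode) + \tfrac{1}{2}\uvec^\top\uvec$ to order $2\quadnum$; by the mode and adaptation conditions, $g_n$ and its first two derivatives vanish at the origin. Writing $\exp(g_n(\uvec)) = \sum_{j=0}^{2\quadnum-1} g_n(\uvec)^j / j! + R_n(\uvec)$ and expanding each $g_n^j$ as a polynomial in $\uvec$, every resulting monomial of total degree at most $2\quadnum-1$ cancels in the integral-minus-sum difference by $\property{\quadnum}{\paramdim}$, and every odd-degree monomial cancels by the symmetry of $\phi$ and of $\quadrule{\quadpointset}{\weight}$. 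The surviving contributions are indexed by tuples $\tvec = (t_r)_{r\geq 3}$ counting copies of the order-$r$ partial-derivative tensor of $\llandp$ at $\parammode$; each such term is bounded in absolute value by $n^{\sum_r t_r}\cdot n^{-\sum_r r t_r /2}\cdot (\log n)^{\sum_r r t_r/2}$ after accounting for $\Ordp(n)$-size derivatives (\cref{assn:kderiv}), factors of $\norm{\logpostchol}$, and the supremum of $\uvec^{\dumderivvec}$ over $\setofinterest$. The Taylor remainder $R_n(\uvec)$ is controlled by the same kind of bound at the extremal degree $2\quadnum+1$.

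The main obstacle is the final combinatorial step: showing that among tuples $\tvec$ that survive both exactness ($\sum_r r t_r \geq 2\quadnum$) and parity ($\sum_r r t_r$ even), subject to the structural constraint $\sum_r r t_r \geq 3\sum_r t_r$ (since each derivative contributes degree $r \geq 3$), the maximum of the scaling exponent $\sum_r t_r - \tfrac{1}{2}\sum_r r t_r$ equals $-\lfloor(\quadnum+2)/3\rfloor$. The optimum is attained at configurations with many third-order derivatives, balancing extra factors of $n$ (one per additional derivative) against losses of $n^{-1/2}$ per extra monomial degree. This optimization, which was implicit only for $\quadnum=\paramdim=1$ in \citet{validitylaplace} and which \citet{adaptive_GH_2020} only sketch deterministically for integrands not depending on data, becomes a genuinely multivariate counting argument here; it is the technical heart of the proof and combines the polynomial exactness budget with \cref{assn:kderiv,assn:hessian,assn:consistency} to yield the claimed bound $\constt\, n^{-\paramdim/2 - \lfloor(\quadnum+2)/3\rfloor}$.
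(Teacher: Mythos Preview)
Your proposal follows essentially the same route as the paper: split into a shrinking ball and its complement, handle the far tail via \cref{assn:limsup} and the annulus via the Hessian bound, Taylor-expand $\llandp$ to order $2\quadnum$ around $\parammode$, Taylor-expand the resulting exponential, cancel low-degree monomials via $\property{\quadnum}{\paramdim}$ and odd-degree ones via symmetry, and then optimize the exponent $\sum_r t_r - \tfrac{1}{2}\sum_r r t_r$ over the surviving tuples $\tvec$. Your identification of this combinatorial optimization as the heart of the argument is exactly right, and your claimed optimum $-\lfloor(\quadnum+2)/3\rfloor$ is correct. The paper organises the bookkeeping slightly differently (it expands $\exp$ only to order $\quadnumadj+2$ with $\quadnumadj$ the least integer satisfying $3(\quadnumadj+1)\geq 2\quadnum$, rather than to order $2\quadnum-1$), but this is cosmetic.

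There is one technical slip that would cost you the stated rate. You bound the surviving integral terms via $\sup_{\uvec\in\setofinterest}|\uvec^{\dumderivvec}|$, which on a ball of radius $\asymp\sqrt{\log n}$ contributes a factor $(\log n)^{\tsum/2}$. Your bound as written therefore yields only $\Ordp\big(n^{-\paramdim/2-\lfloor(\quadnum+2)/3\rfloor}(\log n)^{C}\big)$, strictly weaker than the lemma. The paper avoids this by not taking a supremum: since the surviving terms are integrated against $\normaldens{\uvec}{0}{\identmat_\paramdim}$, one simply extends the integral to all of $\R^{\paramdim}$ and uses the Gaussian absolute-moment bound $\int\normaldens{\uvec}{0}{\identmat_\paramdim}\,|\uvec_i|^{d}\,d\uvec = O_d(1)$, which replaces your $(\log n)^{d/2}$ by a constant. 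On the quadrature side the nodes $\quadpointvec\in\quadpointset$ are fixed, so no correction is needed there. A second, smaller point: the parity cancellation applies only to the genuinely polynomial part of $g_n^j$; any factor carrying the $2\quadnum$-th derivative at the intermediate point $\parammodemid^{\param}$ depends on $\uvec$ through $\parammodemid$ and is not odd, so those terms must be bounded directly rather than cancelled (they are already of degree $\geq 2\quadnum$, so this does not affect the leading rate).
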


\begin{lemma}\label{lem:lowerbound_quad}
Under \cref{assn:prior,assn:hessian,assn:consistency}, 
for all $1 \leq \quadnum \leq \lfloor \derivnum/2 \rfloor$,
if $\quadrule{\quadpointset}{\weight}$ is a quadrature rule satisfying $\property{\quadnum}{\paramdim}$ then
there exists a constant $\constt>0$ such that
\*[
	\lim_{n \to \infty} \datatrueprobn
	\left(
	\frac{\quadadaptmarg{\quadpointset}{\weight}}{\dist(\data \setdelim \parammode)}
	\geq \constt
	\frac{1}{n^{\paramdim/2}}
	\right)
	= 1.
\]
\end{lemma}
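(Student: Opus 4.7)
The plan is to bound $\quadadaptmarg{\quadpointset}{\weight}/\dist(\data \setdelim \parammode) = |\logpostchol| \sum_{\quadpointvec\in\quadpointset} \poststar{\logpostchol \quadpointvec + \parammode} \weight(\quadpointvec)/\dist(\data \setdelim \parammode)$ from below by handling three factors in turn: the Cholesky Jacobian $|\logpostchol|$, a single summand of the quadrature sum, and the ratio of the unnormalized posterior at that summand to the likelihood at the mode.

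For the Jacobian, $|\logpostchol|^2 = [\det \logposthess(\parammode)]^{-1}$, so the upper eigenvalue bound from \cref{assn:hessian} gives $|\logpostchol| \geq \hessbig^{-\paramdim/2} n^{-\paramdim/2}$ on an event of probability tending to one. This factor alone produces the full $n^{-\paramdim/2}$ rate in the statement, so the remaining lower bounds only need to be $n$-independent constants.

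For the quadrature sum, I would fix any node $\quadpointvec_0 \in \quadpointset$ with positive weight $\weight(\quadpointvec_0) > 0$; such a node exists for the Gauss--Hermite product rule and the Smolyak construction discussed in \cref{sec:convergence} since their weights are non-negative and $\property{1}{\paramdim}$ forces the effective weights $\normaldens{\quadpointvec}{0}{\identmat_\paramdim}\weight(\quadpointvec)$ to sum to one. Non-negativity of all weights then gives $\sum_{\quadpointvec} \poststar{\logpostchol\quadpointvec + \parammode}\weight(\quadpointvec) \geq \poststar{\logpostchol \quadpointvec_0 + \parammode}\weight(\quadpointvec_0)$. To bound the surviving summand, I would Taylor expand $\llandp$ to second order around $\parammode$ (whose gradient vanishes) and combine the operator-norm bound $\|\logpostchol\|_{\mathrm{op}}^{2} \leq (n\hesssmall)^{-1}$ from \cref{assn:hessian} with \cref{assn:consistency} to place $\logpostchol\quadpointvec_0 + \parammode$ inside $\ball{\paramdim}{\paramtrue}{\universalradius}$ with high probability; the Hessian upper bound from \cref{assn:hessian} then yields $\llandp(\logpostchol\quadpointvec_0 + \parammode) - \llandp(\parammode) \geq -\hessbig\|\quadpointvec_0\|^{2}/(2\hesssmall)$, which together with $\dist(\parammode) \geq \priorsmall$ from \cref{assn:prior} delivers the required constant lower bound on $\poststar{\logpostchol\quadpointvec_0 + \parammode}/\dist(\data \setdelim \parammode)$. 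Multiplying the three pieces gives the claim.

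The main obstacle is the reduction to a single summand, which relies on non-negativity of the quadrature weights. This is standard for Gauss rules and the sparse extensions of interest, but is not implied by $\property{\quadnum}{\paramdim}$ or symmetry alone. In a general signed-weight setting one would instead exploit $\sum_{\quadpointvec}\normaldens{\quadpointvec}{0}{\identmat_\paramdim}\weight(\quadpointvec) = 1$ (a consequence of $\property{1}{\paramdim}$) together with the observation that $\poststar{\logpostchol\quadpointvec + \parammode}/[\normaldens{\quadpointvec}{0}{\identmat_\paramdim}\dist(\data \setdelim \parammode)]$ is uniformly close at every node to the strictly positive constant $(2\pi)^{\paramdim/2}|\logposthess(\parammode)|^{-1/2}\dist(\parammode)$, which would bring the argument into closer contact with the Taylor-expansion machinery already developed for \cref{lem:normalizing_constant_approx}.
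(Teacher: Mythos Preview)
Your primary argument---drop all but one quadrature summand and lower-bound that term---is clean, and the second-order Taylor bound you give is correct. The problem is generality: the single-summand reduction needs non-negative weights, and this is \emph{not} implied by $\property{\quadnum}{\paramdim}$, nor does it hold for all rules the paper intends to cover. In particular, your claim that the Smolyak construction has non-negative weights is false: Smolyak's combination coefficients are signed, and negative weights routinely occur (this is one of the standard caveats about sparse grids). So as written, your proof handles the \GHQ{} product rule but does not establish the lemma at the stated level of generality.

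The fix is exactly what you sketch in your closing paragraph, and that is what the paper does. Concretely: Taylor-expand $\llandp$ around $\parammode$ to third order, so that for every $\quadpointvec\in\quadpointset$
\[
\frac{\poststar{\logpostchol\quadpointvec+\parammode}}{\poststar{\parammode}} = \exp\Bigl\{-\tfrac{1}{2}\|\quadpointvec\|^{2} + R_{3}(\quadpointvec)\Bigr\},
\]
where the quadratic term is exactly $-\tfrac12\|\quadpointvec\|^{2}$ because $\logpostchol\tpose\logposthess(\parammode)\logpostchol = \identmat_{\paramdim}$, and $|R_{3}(\quadpointvec)| \leq B(n)$ uniformly in $\quadpointvec$ with $B(n)\to 0$ on the high-probability event furnished by \cref{assn:hessian,assn:consistency,assn:kderiv}. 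Using $e^{x}\geq 1-|x|$ and then $\property{\quadnum}{\paramdim}$ applied to the constant polynomial $f\equiv 1$ gives
\[
\sum_{\quadpointvec\in\quadpointset}\weight(\quadpointvec)\,e^{-\|\quadpointvec\|^{2}/2+R_{3}(\quadpointvec)} \;\geq\; \sum_{\quadpointvec\in\quadpointset}\weight(\quadpointvec)\,e^{-\|\quadpointvec\|^{2}/2} \;-\; B(n)\sum_{\quadpointvec\in\quadpointset}|\weight(\quadpointvec)| \;=\; (2\pi)^{\paramdim/2} - B(n)\sum_{\quadpointvec\in\quadpointset}|\weight(\quadpointvec)|,
\]
which is bounded below by a positive constant for large $n$ regardless of the signs of the weights, since $\quadpointset$ is finite. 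Combined with your lower bound $|\logpostchol|\geq(\hessbig n)^{-\paramdim/2}$ and $\dist(\parammode)\geq\priorsmall$, this finishes the proof. Note that the key difference from your single-summand route is that here the Taylor expansion must be taken at the mode (so the quadratic term is exactly $-\tfrac12\|\quadpointvec\|^{2}$) with a third-order remainder, rather than a second-order Lagrange remainder at an intermediate point.
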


The rest of this section is devoted to proving \cref{lem:normalizing_constant_approx,lem:lowerbound_quad}.
For notational simplicity in the multivariate case, we adopt Einstein notation for tensor products throughout our proofs. 
In particular, when upper and lower indices appear twice in a term, this will denote summation over the relevant range of this index. For example, if $\mb{a},\mb{b} \in \Reals^\dummydim$, then
\*[
	\mb{a}_{i} \mb{b}^{i} = \sum_{i = 1}^\dummydim a_i b_i = \mb{a}\tpose \mb{b}.
\]
More specifically, we use this in the context of multivariate Taylor expansions. That is, for any $j \in [\dummydim]$,
\*[
	a_{i_1\dots i_j} \partial^{i_1\dots i_j} f(x)
	= \sum_{i_1,\dots,i_j \in [\dummydim]} \left(a_{i_1} \cdots a_{i_j} \right) \times \left( \partial_{x_{i_1}} \cdots \partial_{x_{i_j}} f(x) \right).
\]
Finally, to account for the constants in a Taylor series, we introduce the new notation:
\*[
	a_{[i_1\dots i_j]!} \partial^{i_1\dots i_j} f(x)
	= \sum_{i_1,\dots,i_j \in [\dummydim]} \frac{1}{i_1! \cdots i_j!} \left(a_{i_1} \cdots a_{i_j} \right) \times \left( \partial_{x_{i_1}} \cdots \partial_{x_{i_j}} f(x) \right).
\]

To ease notational burden when writing large polynomials, we also define for $\dummytau > 3$:
\*[
	\smalltsetdef[j]{\dummytau} &= \left\{(t_3,\dots,t_{2\quadnum}) \in \PosInts^{2\quadnum-3} \Bigsetdelim \sum_{s=3}^{2\quadnum} t_s = j \text{ and } \sum_{s=3}^{2\quadnum} s t_s \leq\dummytau - 1 \right\},\\
	\equaltsetdef[j]{\dummytau} &= \left\{(t_3,\dots,t_{2\quadnum}) \in \PosInts^{2\quadnum -3} \Bigsetdelim \sum_{s=3}^{2\quadnum} t_s = j \text{ and } \sum_{s=3}^{2\quadnum} s t_s = \dummytau \right\},\\
	\bigtsetdef[j]{\dummytau} &= \left\{(t_3,\dots,t_{2\quadnum}) \in \PosInts^{2\quadnum -3} \Bigsetdelim \sum_{s=3}^{2\quadnum} t_s = j \text{ and } \sum_{s=3}^{2\quadnum} s t_s \geq \dummytau \right\}.
\]
The significance of only considering $s\geq 3$ is made clear in \cref{sec:proof_const_as}, but arises from considering only the higher-order terms of a Taylor series expansion. We also use $\tsum = \sum_{s=3}^{2\quadnum} s t_s$ for any $\tvec = (t_3,\dots,t_{2\quadnum}) \in \PosInts^{2\quadnum -3}$.

\subsection{Proof of \cref{lem:normalizing_constant_approx}}\label{sec:proof-lemmas}

Fix arbitrary $\shrinkingconst >0$ (to be tuned at the end as a function of $\paramdim$ and $\quadnum$) and let $\shrinkingrad = \shrinkingconst \sqrt{(\log n)/n}$ for each $n \in \Nats$. First, expand the fraction of interest, giving
\*[
	&\frac{\absbig{\dist(\data) - \quadadaptmarg{\quadpointset}{\weight}}}{\dist(\data \setdelim \parammode)} \\
	&= 
	\frac{
	\absbig{
		\int_{\paramspace}\dist(\param) \dist(\data \setdelim \param)\dee\param  
		- 
		\abssmall{\logpostchol} \sum\limits_{\quadpointvec\in\quadpointset} \weight(\quadpointvec) \dist\Big(\logpostchol \, \quadpointvec + \parammode\Big) \dist\Big(\data \Bigsetdelim \param = \logpostchol \, \quadpointvec + \parammode\Big)}
	}
	{\dist(\data \setdelim \parammode)} \\
	&= \dist(\parammode) 
	\bigg\lvert
		\int_{\paramspace}\exp\left\{\llandp(\param) - \llandp(\parammode) \right\}\dee\param \\
		&\qquad\qquad\qquad - 
		\abssmall{\logpostchol} \sum\limits_{\quadpointvec\in\quadpointset} \weight(\quadpointvec) \exp\left\{\llandp(\logpostchol \, \quadpointvec + \parammode) - \llandp(\parammode) \right\}
	\bigg\rvert, \]
	which after splitting the region of integration and applying the triangle inequality is:
\[\label{eqn:initial-expansion}
	&\leq \dist(\parammode) 
	\bigg\lvert
		\int_{\ball{\paramdim}{\parammode}{\shrinkingrad}}\exp\left\{\llandp(\param) - \llandp(\parammode) \right\}\dee\param \\
		&\qquad\qquad\qquad - 
		\abssmall{\logpostchol} \sum\limits_{\quadpointvec\in\quadpointset} \weight(\quadpointvec) \exp\left\{\llandp(\logpostchol \, \quadpointvec + \parammode) - \llandp(\parammode) \right\}
	\bigg\rvert \\
	&\qquad + \dist(\parammode) \int_{\ballc{\paramdim}{\parammode}{\shrinkingrad}}\exp\left\{\llandp(\param) - \llandp(\parammode) \right\}\dee\param.
\]

Our strategy is to upper bound \cref{eqn:initial-expansion} using a few key quantities, and then show that the regularity assumptions imply these quantities are of the correct order with probability tending to 1. Specifically, we use
\*[
	\derivboundmode = \sup_{\dumderivvec: \abssmall{\dumderivvec}\leq \derivnum} \sup_{\param\in\ball{\paramdim}{\parammode}{\quadpointbound\lvert\logpostchol\rvert \lor \shrinkingrad}} \absbig{\partial^{\dumderivvec} \llandp(\param)} ,
	\quad 
	\hessbigmode = \frac{\eigen_{1}(\logposthess(\parammode))}{n} ,
	\quad \text{and} \quad
	\hesssmallmode = \frac{\eigen_{\paramdim}(\logposthess(\parammode))}{n},
\]
where $\quadpointbound = \sup_{\quadpointvec \in \quadpointset} \norm{\quadpointvec}_2 < \infty$.

We use these quantities to state the following result, which handles the primary technical difficulties for proving \cref{thm:mainresult}. We defer its proof
to \cref{sec:proof_const_as}.

\begin{lemma} \label{lem:normalizing_constant_as}\label{LEM:NORMALIZING_CONSTANT_AS}
For all $1 \leq \quadnum \leq \lfloor \derivnum/2 \rfloor$,
if $\quadrule{\quadpointset}{\weight}$ is a quadrature rule satisfying $\property{\quadnum}{\paramdim}$ then
there exists a constant $\paramconst>0$ depending only on $\paramdim$ and $\quadnum$ such that for all $n \in \Nats$ it holds $\datatrueprobn$-a.s. that
\[\label{eqn:almost-sure-lemma}
	&\absbig{
		\int_{\ball{\paramdim}{\parammode}{\shrinkingrad}}\exp\left\{\llandp(\param) - \llandp(\parammode) \right\}\dee\param
		- 
		\abssmall{\logpostchol} \sum\limits_{\quadpointvec\in\quadpointset} \weight(\quadpointvec) \exp\left\{\llandp(\logpostchol \, \quadpointvec + \parammode) - \llandp(\parammode) \right\}
	} \\
	&\leq
	\paramconst
	\left(
	\left(\frac{\hessbigmode}{\hesssmallmode}\right)^{\paramdim/2}
	+
	1
	\right) n^{-\paramdim/2} \\
	&\qquad\times
	\Bigg[
	\max_{j \in [\quadnumadj]}
	\max_{\tvec \in \bigtset{j}}
	(\derivboundmode)^{j}  \,
	(\hesssmallmode \, n)^{-\tsum/2}
	+
	\max_{\tvec \in \customtset{\quadnumadj+1}{3(\quadnumadj +1) + \ind\{2\quadnum = 2 \!\!\!\! \pmod 3\}} } (\derivboundmode)^{\quadnumadj + 1} 
	(\hesssmallmode \, n)^{-\tsum/2}
	\\
	&\qquad\qquad +
	(\hesssmallmode)^{-1/2}
	\max_{j \in \{0\} \cup [\quadnumadj]} 
	(\derivboundmode)^{j} 
	n^{-\frac{\shrinkingconst^2 \hesssmallmode + 2}{4}}
	+
	\multitaylorexpbound
	(\derivboundmode)^{\quadnumadj+2} 
	\max_{\tvec \in \customtset{\quadnumadj+2}{3(\quadnumadj+2)}}
	(\hesssmallmode \, n)^{-\tsum/2}
	\Bigg],
\]
where
\*[
	\multitaylorexpbound
	&= \Bigg[
	\left(\frac{\hessbigmode}{\hesssmallmode}\right)^{\paramdim/2} 
	\exp\left\{(2\quadnum) \, \paramdim^{2\quadnum} \max\{1, \shrinkingconst^{2\quadnum}\} \derivboundmode \Big(\frac{\log(n)}{n} \Big)^{3/2} \right\} \\
	&\qquad\qquad+
	\exp\left\{(2\quadnum) \, \paramdim^{2\quadnum} \max\{1, \quadpointbound^{2\quadnum}\} \derivboundmode \max\left\{(\hesssmallmode \, n)^{-3/2}, (\hesssmallmode \, n)^{-\quadnum}\right\}\right\}
	\Bigg]
\]
and $\quadnumadj$ is the smallest integer such that $3(\quadnumadj+1)\geq 2\quadnum$.
\end{lemma}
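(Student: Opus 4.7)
The plan is to reduce both the integral and the adaptive quadrature sum to standard Gaussian-weighted expressions via the change of variables $\param = \logpostchol\, \quadpointvec + \parammode$, Taylor expand the transformed integrand around $\quadpointvec = 0$, and then exploit $\property{\quadnum}{\paramdim}$ to cancel all polynomial contributions of total order at most $2\quadnum-1$. Because the statement is deterministic (an a.s.\ bound), all probabilistic machinery is deferred; what matters here is the bookkeeping of the Taylor expansion.

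First I would change variables and set $g(\quadpointvec) = \llandp(\logpostchol \quadpointvec + \parammode) - \llandp(\parammode)$. By definition of the mode and the Cholesky factorization $\logpostchol\logpostchol\tpose = \logposthess(\parammode)^{-1}$, $g$ satisfies $g(\0) = 0$, $\grad g(\0) = 0$, and $\grad^{2} g(\0) = -\identmat_{\paramdim}$. I would Taylor expand $g$ to order $2\quadnum$ at $\0$, obtaining
\[
g(\quadpointvec) = -\tfrac12 \quadpointvec\tpose \quadpointvec + Q(\quadpointvec) + R(\quadpointvec),
\]
where $Q(\quadpointvec) = \sum_{s=3}^{2\quadnum} \tfrac{1}{s!}\partial^{s}\llandp(\parammode)[\logpostchol \quadpointvec,\ldots,\logpostchol \quadpointvec]$ and $R$ is the order-$(2\quadnum+1)$ remainder, controlled uniformly on $\ball{\paramdim}{\0}{\shrinkingrad\lor \quadpointbound}$ by $\derivboundmode$ times $\|\quadpointvec\|^{2\quadnum+1}$.

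Next I would expand $\exp\{Q(\quadpointvec) + R(\quadpointvec)\}$ via the exponential series truncated at order $\quadnumadj+1$, writing it as $\sum_{j=0}^{\quadnumadj+1}\frac{1}{j!}Q(\quadpointvec)^{j} + E_{\mathrm{exp}}(\quadpointvec) + E_R(\quadpointvec)$. Each $Q^j$ is a multinomial in the order-$s$ tensors; its monomials are indexed by $\tvec=(t_3,\ldots,t_{2\quadnum})\in\PosInts^{2\quadnum-3}$ with $\sum_s t_s = j$ and total $\quadpointvec$-degree $\tsum$. The critical dichotomy is
\[
Q(\quadpointvec)^{j} = \sum_{\tvec\in\smalltsetdef[j]{2\quadnum}} (\cdots) + \sum_{\tvec\in\bigtsetdef[j]{2\quadnum}}(\cdots).
\]
Everything in the left sum is a polynomial in $\quadpointvec$ of total order at most $2\quadnum-1$, so after multiplying by the Gaussian $e^{-\quadpointvec\tpose\quadpointvec/2}$, property $\property{\quadnum}{\paramdim}$ ensures that its $\R^{\paramdim}$ integral equals the quadrature sum exactly, and thus these terms cancel in \cref{eqn:almost-sure-lemma}. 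The remaining four sources of error are (i) the high-degree part $\bigtsetdef[j]{2\quadnum}$ for $j\in[\quadnumadj]$, (ii) the $j=\quadnumadj+1$ piece, whose minimal degree is $3(\quadnumadj+1)\geq 2\quadnum$, (iii) the exponential tail $E_{\mathrm{exp}}$ and Taylor remainder $E_R$ (absorbed into the $\multitaylorexpbound$ prefactor and the $\derivboundmode^{\quadnumadj+2}$ term), and (iv) the mismatch between integrating over $\ball{\paramdim}{\0}{\shrinkingrad/|\logpostchol|}$ versus summing quadrature over all of $\quadpointset$, which produces a Gaussian tail on $\|\quadpointvec\|\gtrsim \shrinkingconst\sqrt{\log n \cdot \hesssmallmode}$ evaluated against polynomials — a standard Chernoff computation yields the $n^{-(\shrinkingconst^{2}\hesssmallmode+2)/4}$ factor.

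For each of (i)--(iv) I would bound coefficients of the monomials by $(\derivboundmode)^{j}$ times products of entries of $\logpostchol$, turn Gaussian polynomial integrals into powers of $(\hesssmallmode n)^{-1/2}$ via $\|\logpostchol\|\lesssim (\hesssmallmode n)^{-1/2}$, and pull out the overall $|\logpostchol|\asymp (\hesssmallmode n)^{-\paramdim/2}$ along with the ratio $(\hessbigmode/\hesssmallmode)^{\paramdim/2}$ that appears when one compares the Gaussian reference measure $\phi(\cdot;0,\identmat)$ used by $\property{\quadnum}{\paramdim}$ to the curvature-rescaled Gaussian implicit in the Laplace picture. Combining the four contributions yields precisely the right-hand side of \cref{eqn:almost-sure-lemma}.

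The main obstacle is the combinatorial bookkeeping in steps (i)--(ii): correctly identifying which monomials fall into $\smalltsetdef[j]{2\quadnum}$ versus $\bigtsetdef[j]{2\quadnum}$ across the entire truncated exponential series, and checking the threshold $\quadnumadj$ (smallest integer with $3(\quadnumadj+1)\geq 2\quadnum$) so that the lowest degree appearing in $Q^{\quadnumadj+1}$ is already beyond the exactness range, which is what drives the exponent $\lfloor(\quadnum+2)/3\rfloor$. The secondary difficulty is showing that the prefactor $\multitaylorexpbound$ remains bounded in probability under the assumptions, but since $\shrinkingrad=\Ord(\sqrt{\log n/n})$ and $\derivboundmode$ is $\Ord_P(1)$, both exponentials in $\multitaylorexpbound$ collapse to constants as $n\to\infty$; the remaining tail integral $R$ is handled by the uniform derivative bound on a neighbourhood containing both $\ball{\paramdim}{\parammode}{\shrinkingrad}$ and the image of $\quadpointset$ under $\logpostchol$, which is exactly how $\derivboundmode$ is defined.
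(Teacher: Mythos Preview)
Your strategy coincides with the paper's: Taylor expand $\llandp$ about $\parammode$, factor out the Gaussian from the quadratic term, expand the exponential of the cubic-and-higher part to order $\quadnumadj+2$, use $\property{\quadnum}{\paramdim}$ to cancel all monomials of total degree at most $2\quadnum-1$, and bound the four residual sources you list. The paper organizes these into seven ``Terms'' but they are exactly your (i)--(iv).

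Two technical points need repair, however.

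First, you expand $g$ to order $2\quadnum$ with a Lagrange remainder $R$ of order $2\quadnum+1$. The hypotheses only give $\derivnum\ge 2\quadnum$ derivatives, so $\partial^{2\quadnum+1}\llandp$ need not exist. The paper instead stops at order $2\quadnum-1$ and places the order-$2\quadnum$ term, evaluated at an intermediate point $\parammodemid^{\param}$, \emph{inside} what you call $Q$. This costs nothing for the exactness argument because any $\tvec\in\smalltsetdef[j]{2\quadnum}$ automatically has $t_{2\quadnum}=0$, so the intermediate-point term never enters the polynomials to which $\property{\quadnum}{\paramdim}$ is applied.

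Second, your treatment of (ii) is incomplete: you note that the minimal total degree in $Q^{\quadnumadj+1}$ is $3(\quadnumadj+1)\ge 2\quadnum$, but this alone does not produce the indicator $\ind\{2\quadnum\equiv 2\pmod 3\}$ in the lemma's bound. When $2\quadnum\equiv 2\pmod 3$ one has $3(\quadnumadj+1)=2\quadnum+1$, which is odd; the paper then uses that a homogeneous polynomial of odd degree integrates to zero against a centred Gaussian \emph{and}, crucially, sums to zero under a \emph{symmetric} quadrature rule. This kills the $\tvec\in\equaltsetdef[\quadnumadj+1]{3(\quadnumadj+1)}$ contribution and pushes the effective threshold to $3(\quadnumadj+1)+1$. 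Without this parity step you cannot match the stated bound (and hence not the eventual $n^{-\lfloor(\quadnum+2)/3\rfloor}$ rate for $\quadnum\in\{1,4,7,\ldots\}$).

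Finally, a side remark: in your closing paragraph you write that $\derivboundmode$ is $\Ord_P(1)$. Under \cref{assn:kderiv} it is $\Ord_P(n)$, not $\Ord_P(1)$; this is exactly why the bound tracks $(\derivboundmode)^j(\hesssmallmode n)^{-\tsum/2}$ rather than simply $n^{-\tsum/2}$, and why the exponents in $\multitaylorexpbound$ still vanish (since $n\cdot(\log n/n)^{3/2}\to 0$).
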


We then want to make use of the following, which provides the necessary convergence for each of the quantities used in \cref{lem:normalizing_constant_as}.
\begin{lemma}\label{lem:constant_convergence}
Under \cref{assn:kderiv,assn:consistency,assn:hessian,assn:prior}, the following hold:
\begin{enumerate}
\item[\upshape i)]
$\underset{n\to\infty}{\text{\upshape lim}} \datatrueprobn\!\bigg(\hesssmall \leq \hesssmallmode \leq \hessbigmode \leq \hessbig \bigg) = 1.$

\item[\upshape ii)] 
$\underset{n\to\infty}{\text{\upshape lim}} \datatrueprobn\!\bigg(\derivboundmode \leq n \derivbound \bigg) = 1.$

\item[\upshape iii)]
$\underset{n\to\infty}{\text{\upshape lim}} \datatrueprobn\!\bigg(\dist(\parammode) \leq \priorbig \bigg) = 1.$
\end{enumerate}
\end{lemma}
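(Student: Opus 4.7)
The plan is to reduce each of the three parts to the corresponding assumption in \cref{sec:assumptions} by first establishing that, with probability tending to one, the posterior mode $\parammode$ lies in the deterministic neighbourhood $\ball{\paramdim}{\paramtrue}{\universalradius}$ on which those assumptions hold. Applying \cref{assn:consistency} with $G(n) = \sqrt{n}$ (which diverges to infinity) and $\conmargin = \universalradius/2$ yields
\[
  \lim_{n \to \infty} \datatrueprobn\!\left(\norm{\parammode - \paramtrue}_2 \leq \universalradius/2\right) = 1,
\]
and I denote this event by $\mathcal{A}_n$. Throughout the argument I would work on the intersection of $\mathcal{A}_n$ with the high-probability events supplied by \cref{assn:kderiv,assn:hessian,assn:prior}.

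For part (i), on $\mathcal{A}_n$ we have $\parammode \in \ball{\paramdim}{\paramtrue}{\universalradius}$, so \cref{assn:hessian} directly yields $n\hesssmall \leq \eigen_{\paramdim}(\logposthess(\parammode)) \leq \eigen_{1}(\logposthess(\parammode)) \leq n\hessbig$; dividing by $n$ gives the stated bounds on $\hesssmallmode$ and $\hessbigmode$. Part (iii) follows by the same reduction from \cref{assn:prior}, since $\dist(\parammode) \leq \sup_{\param \in \ball{\paramdim}{\paramtrue}{\universalradius}} \dist(\param) \leq \priorbig$ once $\parammode$ is in that ball.

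Part (ii) is the most delicate step, since $\derivboundmode$ is a supremum of derivatives over the data-dependent shrinking ball $\ball{\paramdim}{\parammode}{R_n}$ with $R_n = \quadpointbound\,\abssmall{\logpostchol} \vee \shrinkingrad$. The plan is to show $\ball{\paramdim}{\parammode}{R_n} \subseteq \ball{\paramdim}{\paramtrue}{\universalradius}$ with probability tending to one, after which \cref{assn:kderiv} bounds $\abssmall{\partial^{\dumderivvec}\llandp(\param)} < n\derivbound$ uniformly, giving $\derivboundmode \leq n\derivbound$. To bound $R_n$, note that $\shrinkingrad = \shrinkingconst\sqrt{(\log n)/n} \to 0$ deterministically, while $\abssmall{\logpostchol}$ is controlled by the smallest eigenvalue of $\logposthess(\parammode)$ through the identity $[\logposthess(\parammode)]^{-1} = \logpostchol\logpostchol\tpose$; on the event of \cref{assn:hessian} this eigenvalue is at least $n\hesssmall$, forcing $\abssmall{\logpostchol} = O(n^{-1/2})$ and hence $R_n \to 0$ on this event. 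Combining with the triangle inequality on $\mathcal{A}_n$ gives the desired containment for all sufficiently large $n$ on an event whose probability tends to one.

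The main obstacle is precisely this book-keeping for part (ii): combining three separate convergences (consistency of $\parammode$ to $\paramtrue$, the lower Hessian bound forcing $\abssmall{\logpostchol}$ to shrink, and the explicit form of $\shrinkingrad$) to place the data-dependent shrinking ball around $\parammode$ inside the fixed neighbourhood of $\paramtrue$ on which the regularity assumptions are posited. Parts (i) and (iii) are then essentially immediate corollaries of this localisation step.
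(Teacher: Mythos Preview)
Your proposal is correct and follows essentially the same approach as the paper's proof: use \cref{assn:consistency} to place $\parammode$ inside $\ball{\paramdim}{\paramtrue}{\universalradius}$ with high probability, then read off parts (i) and (iii) directly from \cref{assn:hessian,assn:prior}, and for part (ii) use the eigenvalue bound from (i) to force $\abssmall{\logpostchol}\to 0$ so that the shrinking ball $\ball{\paramdim}{\parammode}{\quadpointbound\abssmall{\logpostchol}\vee\shrinkingrad}$ is eventually contained in $\ball{\paramdim}{\paramtrue}{\universalradius}$, at which point \cref{assn:kderiv} applies. The only cosmetic difference is that you localise $\parammode$ to radius $\universalradius/2$ up front to make the triangle-inequality step explicit, whereas the paper invokes \cref{assn:consistency} a second time; and the paper records the sharper bound $\abssmall{\logpostchol}\leq(\hesssmall n)^{-\paramdim/2}$ rather than your $O(n^{-1/2})$, but either suffices since all that is needed is $R_n\to 0$.
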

\begin{proof}[Proof of \cref{lem:constant_convergence}]$ $\\
i) 
By \cref{assn:consistency},
$\lim_{n \to \infty} \datatrueprobn \left( \parammode \in \ball{\paramdim}{\paramtrue}{\universalradius} \right) = 1$.
Thus,
\*[ 
	\lim_{n \to \infty} \datatrueprobn\left(\hessbigmode \leq \hessbig \right)
	&= \lim_{n \to \infty} \datatrueprobn\left(\frac{\eigen_1(\logposthess(\parammode))}{n} \leq \hessbig \right) \\
	&\geq \lim_{n \to \infty} \datatrueprobn\left(\parammode \in \ball{\paramdim}{\paramtrue}{\universalradius}, \sup_{\param \in \ball{\paramdim}{\paramtrue}{\universalradius}} \frac{\eigen_{1}(\logposthess(\param))}{n} \leq \hessbig \right) \\
	&= 1,
\]
where the last step uses \cref{assn:hessian}. The inequality for $\hesssmallmode$ is proven in the same fashion. 

\noindent
ii) 
First, recall that $\abssmall{\logpostchol} \leq (\hesssmallmode \, n)^{-\paramdim/2}$. By i), we have
\*[
	\lim_{n \to \infty} \datatrueprobn \left((\hesssmallmode \, n)^{-\paramdim/2} \leq (\hesssmall \, n)^{-\paramdim/2} \right) = 1.
\]
That is, for all $\eps > 0$, 
\*[
	\lim_{n \to \infty} \datatrueprobn \left(\quadpointbound\abssmall{\logpostchol}t \lor \shrinkingrad < \eps \right) = 1,
\]
so by \cref{assn:consistency} again, 
\*[
	\lim_{n \to \infty} \datatrueprobn \left( \ball{\paramdim}{\parammode}{\quadpointbound\lvert\logpostchol\rvert \lor \shrinkingrad} \subseteq \ball{\paramdim}{\paramtrue}{\universalradius} \right) = 1.
\]

Thus,
\*[ 
	&\hspace{-2em}\lim_{n \to \infty} \datatrueprobn\left(\derivboundmode \leq n \derivbound \right) \\
	&= \lim_{n \to \infty} \datatrueprobn\left(\sup_{\dumderivvec: \abssmall{\dumderivvec}\leq \derivnum} \sup_{\param\in \ball{\paramdim}{\parammode}{\quadpointbound\lvert\logpostchol\rvert \lor \shrinkingrad}} \absbig{\partial^{\dumderivvec} \llandp(\param)} \leq n \derivbound \right)  \\
	&\geq \lim_{n \to \infty} \datatrueprobn\left(\ball{\paramdim}{\parammode}{\quadpointbound\lvert\logpostchol\rvert \lor \shrinkingrad} \subseteq \ball{\paramdim}{\paramtrue}{\universalradius}, \sup_{\dumderivvec: \abssmall{\dumderivvec}\leq \derivnum} \sup_{\param\in \ball{\paramdim}{\paramtrue}{\universalradius}} \absbig{\partial^\alpha \llandp(\param)} \leq n \derivbound \right) \\
	&= 1,
\]
where the last step uses \cref{assn:kderiv}. 

\noindent
iii) This follows directly from \cref{assn:consistency,assn:prior}.

\end{proof}

Now, if $\hesssmall \leq \hesssmallmode \leq \hessbigmode \leq \hessbig$ and $\derivboundmode \leq n \derivbound$, for large enough $n$ it holds that $\multitaylorexpbound \leq 2$, so we can further upper bound the RHS of \cref{eqn:almost-sure-lemma} to obtain
\*[
	&\hspace{-2em}\absbig{
		\int_{\ball{\paramdim}{\parammode}{\shrinkingrad}}\exp\left\{\llandp(\param) - \llandp(\parammode) \right\}\dee\param
		- 
		\abssmall{\logpostchol} \sum\limits_{\quadpointvec\in\quadpointset} \weight(\quadpointvec) \exp\left\{\llandp(\logpostchol \, \quadpointvec + \parammode) - \llandp(\parammode) \right\}
	} \\
	&\leq
	\constt n^{-\paramdim/2}
	\Bigg[
	\max_{j \in [\quadnumadj]}
	\max_{\tvec \in \bigtset{j}}
	n^{j-\tsum/2} 
	+
	\max_{\tvec \in \customtset{\quadnumadj+1}{3(\quadnumadj +1) + \ind\{2\quadnum = 2 \!\!\!\! \pmod 3\}} } n^{\quadnumadj+1-\tsum/2} \\
	&\qquad\qquad\qquad+
	\max_{j \in \{0\} \cup [\quadnumadj]} 
	n^{j-\frac{\shrinkingconst^2 \hesssmall + 2}{4}} 
	+
	\max_{\tvec \in \customtset{\quadnumadj+2}{3(\quadnumadj+2)}}
	n^{\quadnumadj+2-\tsum/2}
	\Bigg]
\]
for some constant $\constt>0$ that depends on $\paramdim$ and $\quadnum$ as well as $\hesssmall$, $\hessbig$, and $\derivbound$. 

Clearly, we can take $\shrinkingconst$ arbitrarily large to make the third term as small (polynomially) as we desire. Upon inspection, the first term is maximized at $j = \quadnumadj$ and $\tsum = 2\quadnum$, the second term is maximized at $\tsum = 3(\quadnumadj +1) + \ind\{2\quadnum = 2\pmod 3\}$, and the fourth term is maximized at $\tsum = 3(\quadnumadj+2)$. Thus,
\[\label{eqn:conditional-rate}
	&\hspace{-2em}\absbig{
		\int_{\ball{\paramdim}{\parammode}{\shrinkingrad}}\exp\left\{\llandp(\param) - \llandp(\parammode) \right\}\dee\param
		- 
		\abssmall{\logpostchol} \sum\limits_{\quadpointvec\in\quadpointset} \weight(\quadpointvec) \exp\left\{\llandp(\logpostchol \, \quadpointvec + \parammode) - \llandp(\parammode) \right\}
	} \\
	&\leq
	\constt n^{-\paramdim/2}
	\Bigg[
	\frac{1}{n^{\quadnum - \quadnumadj}} 
	+
	\frac{\ind\{2\quadnum \neq 2\pmod 3\}}{n^{(\quadnumadj+1)/2}} 
	+
	\frac{\ind\{2\quadnum = 2\pmod 3\}}{n^{(\quadnumadj+1)/2 + 1/2}} 
	+ 
	\frac{1}{n^{(\quadnumadj+2)/2}} 
	\Bigg]. 
\]

Now, recall that $\quadnumadj$ is chosen to be the smallest integer such that $3(\quadnumadj+1) \geq 2\quadnum$. There are three cases to consider. 
If $2\quadnum = 0 \pmod 3$, then $\quadnumadj = 2\quadnum/3 - 1$, if $2\quadnum = 1 \pmod 3$, then $\quadnumadj = (2\quadnum-1)/3$, and if $2\quadnum = 2 \pmod 3$, then $\quadnumadj = (2\quadnum-2)/3$. Substituting this into \cref{eqn:conditional-rate} gives
\*[
	&\hspace{-2em}\absbig{
		\int_{\ball{\paramdim}{\parammode}{\shrinkingrad}}\exp\left\{\llandp(\param) - \llandp(\parammode) \right\}\dee\param
		- 
		\abssmall{\logpostchol} \sum\limits_{\quadpointvec\in\quadpointset} \weight(\quadpointvec) \exp\left\{\llandp(\logpostchol \, \quadpointvec + \parammode) - \llandp(\parammode) \right\}
	} \\
	&\leq
	\constt n^{-\paramdim/2}
	\Bigg[
	\frac{\ind\{2\quadnum = 0\pmod 3\}}{n^{\quadnum/3}} 
	+
	\frac{\ind\{2\quadnum = 1\pmod 3\}}{n^{(\quadnum+1)/3}} 
	+
	\frac{\ind\{2\quadnum = 2\pmod 3\}}{n^{(\quadnumadj+2)/3}} 
	\Bigg] \\
	&= \constt n^{-\paramdim/2 - \lfloor \frac{\quadnum+2}{3} \rfloor}.
\]

So, in conjuction with \cref{lem:constant_convergence}, this controls the first term on the RHS of \cref{eqn:initial-expansion}. For the second term of the RHS, \cref{lem:fixed_numer,lem:shrinking_numer} (stated below) together complete the proof of \cref{lem:normalizing_constant_approx} by taking $\shrinkingconst$ large enough (in $\quadnum$).

\manualendproof 

\begin{lemma}\label{lem:fixed_numer}
Under \cref{assn:limsup,assn:prior},
\*[
	\lim_{n \to \infty}
	\datatrueprobn
	\left[
	\int_{\ballc{\paramdim}{\paramtrue}{\universalradius}} \exp\{ { \llandp(\param) - \llandp(\parammode)} \} \dee\param
	\leq
	\frac{e^{-n\llhoodmargin}}{\priorsmall}  
	\right] = 1, 	
\]
\end{lemma}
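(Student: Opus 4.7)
The plan is to show that the bound holds deterministically on the high-probability event guaranteed by \cref{assn:limsup}, after which the probability claim is immediate. The key idea is to exploit three facts simultaneously: (i) on the complement of the ball the likelihood is exponentially small relative to its value at $\paramtrue$ by \cref{assn:limsup}, (ii) the prior is a probability density so it integrates to one, and (iii) the posterior mode dominates the posterior at $\paramtrue$ by definition of $\parammode$.

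First I would rewrite the integrand using $\llandp(\param)=\llhood(\param)+\log\dist(\param)$, giving
\*[
\exp\{\llandp(\param)-\llandp(\parammode)\}
= \frac{\dist(\param)\,\dist(\data\setdelim\param)}{\dist(\parammode)\,\dist(\data\setdelim\parammode)}.
\]
On the event $\{\sup_{\param\in\ballc{\paramdim}{\paramtrue}{\universalradius}}\llhood(\param)-\llhood(\paramtrue)\leq -n\llhoodmargin\}$ supplied by \cref{assn:limsup}, for every $\param\in\ballc{\paramdim}{\paramtrue}{\universalradius}$ we have $\dist(\data\setdelim\param)\leq e^{-n\llhoodmargin}\dist(\data\setdelim\paramtrue)$. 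Pulling out the constant factor and using that $\dist(\cdot)$ is a probability density yields
\*[
\int_{\ballc{\paramdim}{\paramtrue}{\universalradius}}\exp\{\llandp(\param)-\llandp(\parammode)\}\dee\param
\leq \frac{e^{-n\llhoodmargin}\,\dist(\data\setdelim\paramtrue)}{\dist(\parammode)\,\dist(\data\setdelim\parammode)}\int_{\paramspace}\dist(\param)\dee\param
= \frac{e^{-n\llhoodmargin}\,\dist(\data\setdelim\paramtrue)}{\dist(\parammode)\,\dist(\data\setdelim\parammode)}.
\]

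Next I would eliminate the likelihood ratio by the defining property of the posterior mode: since $\parammode=\argmax_{\param}\dist(\param)\dist(\data\setdelim\param)$, we have $\dist(\parammode)\dist(\data\setdelim\parammode)\geq\dist(\paramtrue)\dist(\data\setdelim\paramtrue)$, so that
\*[
\frac{\dist(\data\setdelim\paramtrue)}{\dist(\parammode)\,\dist(\data\setdelim\parammode)} \leq \frac{1}{\dist(\paramtrue)}.
\]
Finally, \cref{assn:prior} applied at the point $\paramtrue\in\ball{\paramdim}{\paramtrue}{\universalradius}$ gives $\dist(\paramtrue)\geq\priorsmall$, producing the deterministic bound $\tfrac{e^{-n\llhoodmargin}}{\priorsmall}$ on the event described above. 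Since this event has probability tending to one by \cref{assn:limsup}, the stated limit follows.

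There is no substantial technical obstacle here; the only subtlety worth flagging is that the likelihood-ratio upper bound is obtained indirectly, by routing $\llhood(\parammode)$ through $\llhood(\paramtrue)$ via the posterior mode inequality rather than trying to control $\llhood(\parammode)$ directly. This trick is what lets a bound involving only $\paramtrue$ (as in \cref{assn:limsup}) translate into a bound centered at $\parammode$ while incurring only the prior factor $\priorsmall^{-1}$ rather than $\priorbig/\priorsmall$.
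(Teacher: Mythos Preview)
Your proof is correct and follows essentially the same approach as the paper: both use that $\parammode$ maximizes $\llandp$ to replace $\llandp(\parammode)$ by $\llandp(\paramtrue)$, then split off the prior (which integrates to one) from the likelihood ratio controlled by \cref{assn:limsup}, and finally invoke \cref{assn:prior} at $\paramtrue$. The only cosmetic difference is ordering: the paper applies the mode inequality first and then the likelihood bound, whereas you do the reverse, but the ingredients and resulting bound are identical.
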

\begin{proof}[Proof of \cref{lem:fixed_numer}]
Since $\parammode$ maximizes $\llandp$,
\*[
	\int_{\ballc{\paramdim}{\paramtrue}{\universalradius}}
	\exp\{ { \llandp(\param) - \llandp(\parammode)} \}
	\dee \param
	&\leq
	\int_{\ballc{\paramdim}{\paramtrue}{\universalradius}}
	\frac{\dist(\param)}{\dist(\paramtrue)} \exp\{  \llhood(\param) - \llhood(\paramtrue) \}
	\dee \param \\
	&\leq \frac{1}{\dist(\paramtrue)} \sup_{\paramdum \in \ballc{\paramdim}{\paramtrue}{\universalradius}}\exp\{  \llhood(\paramdum) - \llhood(\paramtrue) \} 
	\int_{\paramspace} \dist(\param) \dee\param \\
	&= \frac{1}{\dist(\paramtrue)} \sup_{\paramdum \in \ballc{\paramdim}{\paramtrue}{\universalradius}}\exp\{  \llhood(\paramdum) - \llhood(\paramtrue) \} .
\]
The result then follows by applying \cref{assn:limsup,assn:prior}.
\end{proof}

\begin{lemma}\label{lem:shrinking_numer}
Under \cref{assn:hessian,assn:limsup,assn:prior,assn:consistency}, there exists a constant $\constt>0$ such that
\*[
  \lim_{n \to \infty} \datatrueprobn\left[\int_{\ballc{\paramdim}{\parammode}{\shrinkingrad} \cap \ball{\paramdim}{\paramtrue}{\universalradius} } \exp\{ { \llandp(\param) - \llandp(\parammode)} \} \dee\param
  \leq \constt \frac{1}{n^{ \shrinkingconst^2\hesssmall/4 + \paramdim/2}} \right] = 1.
\]
\end{lemma}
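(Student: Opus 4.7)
The plan is to control the integrand by a Gaussian through a quadratic Taylor upper bound on $\llandp(\param)-\llandp(\parammode)$, and then reduce the problem to the classical tail probability of a standard Gaussian in $\R^\paramdim$. The constant $\shrinkingconst^{2}\hesssmall/4$ in the stated rate is the exponent that the Gaussian tail at radius $\shrinkingconst\sqrt{\hesssmall \log n}$ produces, with the factor $1/4$ (rather than $1/2$) providing the slack needed to absorb polynomial-in-$\log n$ factors coming from the surface area of a $\paramdim$-dimensional sphere.

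First, I would work on the event $E_n$ that (i) $\parammode\in\ball{\paramdim}{\paramtrue}{\universalradius}$ and (ii) $\inf_{\param\in\ball{\paramdim}{\paramtrue}{\universalradius}}\eigen_\paramdim(\logposthess(\param))\geq n\hesssmall$. By \cref{assn:consistency,assn:hessian}, $\datatrueprobn(E_n)\to 1$. For any $\param$ in the domain of integration $\ballc{\paramdim}{\parammode}{\shrinkingrad}\cap\ball{\paramdim}{\paramtrue}{\universalradius}$, the segment joining $\parammode$ and $\param$ lies in the convex set $\ball{\paramdim}{\paramtrue}{\universalradius}$. Since $\parammode$ is a critical point of $\llandp$, the second-order Taylor expansion with integral remainder gives, for some $\tilde\param$ on this segment,
\[
\llandp(\param)-\llandp(\parammode)
= -\tfrac{1}{2}(\param-\parammode)\tpose\logposthess(\tilde\param)(\param-\parammode)
\leq -\tfrac{n\hesssmall}{2}\,\norm{\param-\parammode}_2^{2}.
\]

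Next, I would extend the domain of integration to all of $\ballc{\paramdim}{\parammode}{\shrinkingrad}$ (which only increases the integral) and change variables $\uvec=\sqrt{n\hesssmall}\,(\param-\parammode)$, reducing the quantity of interest to
\[
\frac{1}{(n\hesssmall)^{\paramdim/2}}\int_{\norm{\uvec}_2>\shrinkingconst\sqrt{\hesssmall\log n}}\exp\!\left\{-\tfrac{1}{2}\norm{\uvec}_2^{2}\right\}\dee\uvec.
\]
The remaining integral is a standard Gaussian tail on $\R^\paramdim$: using spherical coordinates and the bound $\int_r^\infty t^{\paramdim-1}e^{-t^2/2}\dee t \leq C_\paramdim\, r^{\paramdim-2}e^{-r^2/2}$ for $r\geq 1$, evaluated at $r=\shrinkingconst\sqrt{\hesssmall\log n}$, gives an overall bound of order
\[
\frac{C_\paramdim\,(\shrinkingconst\sqrt{\hesssmall\log n})^{\paramdim-2}}{(n\hesssmall)^{\paramdim/2}}\; n^{-\shrinkingconst^{2}\hesssmall/2}.
\]
Since $n^{-\shrinkingconst^2\hesssmall/4}(\log n)^{(\paramdim-2)/2}\to 0$, for $n$ sufficiently large the polylog factor is absorbed into $n^{-\shrinkingconst^2\hesssmall/4}$, which yields the bound $\constt\, n^{-\shrinkingconst^{2}\hesssmall/4-\paramdim/2}$ claimed in the statement.

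The main obstacle is bookkeeping rather than analytic difficulty: I need to make sure the quadratic lower bound on $-[\llandp(\param)-\llandp(\parammode)]$ is available on the \emph{whole} segment between $\parammode$ and $\param$ uniformly in $\param$ in the integration domain, which is why I pass to the convex neighbourhood $\ball{\paramdim}{\paramtrue}{\universalradius}$ and invoke consistency (\cref{assn:consistency}) in tandem with the uniform Hessian bound (\cref{assn:hessian}). Once this event is isolated, everything reduces to a Gaussian tail computation and the probability statement follows from $\datatrueprobn(E_n)\to 1$. Note that \cref{assn:limsup,assn:prior} are not explicitly used in this argument; they appear among the lemma's hypotheses because they are invoked in neighbouring lemmas that together assemble \cref{lem:normalizing_constant_approx}.
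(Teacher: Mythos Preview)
Your proof is correct and follows essentially the same approach as the paper: a second-order Taylor expansion at $\parammode$ combined with the uniform Hessian eigenvalue lower bound to reduce to a Gaussian tail outside radius $\shrinkingrad\sqrt{n\hesssmall}$, then sacrificing half of the exponent $\shrinkingconst^{2}\hesssmall/2$ to absorb polylog factors. The only cosmetic difference is that the paper expresses the tail as a $\chi^{2}_{\paramdim}$ probability and bounds it via Lemma~3 of Fan together with $\log x<x/2$, whereas you use spherical coordinates directly; both routes yield the same $n^{-\shrinkingconst^{2}\hesssmall/4-\paramdim/2}$ rate, and your observation that \cref{assn:limsup,assn:prior} are not actually invoked here matches the paper's proof.
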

\begin{proof}[Proof of \cref{lem:shrinking_numer}]

Using a second order Taylor expansion for each $\param\in\ballc{\paramdim}{\parammode}{\shrinkingrad} \cap \ball{\paramdim}{\paramtrue}{\universalradius}$ gives
\*[
	\exp\{ { \llandp(\param) - \llandp(\parammode)} \} 
	= \exp\left\{-\frac{1}{2}(\param - \parammode)\Tr \logposthess(\parammid^\param) (\param - \parammode)\right\},
\]
where $\parammid^\param = \modemidweight_\param \parammode + (1-\modemidweight_\param)\param$ for some $\modemidweight_\param \in [0,1]$.
Consider the case where $\parammode \in \ball{\paramdim}{\paramtrue}{\shrinkingrad}$. This implies that, since $\shrinkingrad \to 0$, for large enough $n$
\*[
	&\hspace{-2em}\sup_{\param\in\ball{\paramdim}{\paramtrue}{\universalradius}}
	\Norm{\parammid^\param - \paramtrue}_2 \\
	&= \sup_{\param\in\ball{\paramdim}{\paramtrue}{\universalradius}}
	\Norm{\modemidweight_\param \parammode + (1-\modemidweight_\param)\param - \modemidweight_\param\paramtrue - (1-\modemidweight_\param)\paramtrue}_2 \\
	&\leq \sup_{\param\in\ball{\paramdim}{\paramtrue}{\universalradius}}
	\modemidweight_\param \shrinkingrad + (1-\modemidweight_\param)\universalradius \\
	&\leq \universalradius.
\]
Thus, letting $\hesssmalltrue = \inf_{\param \in \ball{\paramdim}{\paramtrue}{\universalradius}}\eigen_{\paramdim}(\logposthess(\param))/n$, we have
\*[
	\sup_{\param\in\ball{\paramdim}{\paramtrue}{\universalradius}} (\param - \parammode)\Tr \logposthess(\parammid^\param) (\param - \parammode) 
	\geq \hesssmalltrue \, n (\param - \parammode)\Tr (\param - \parammode).
\]

That is,
\*[
	&\hspace{-2em}\int_{\ballc{\paramdim}{\parammode }{\shrinkingrad} \cap \ball{\paramdim}{\paramtrue}{\universalradius}}  \exp\{ {\llandp(\param) - \llandp(\parammode)} \}\dee\param \\
	&\leq  
	\int_{\ballc{\paramdim}{\parammode}{\shrinkingrad} \cap \ball{\paramdim}{\paramtrue}{\universalradius}}  \exp\left\{\frac{-\hesssmalltrue \, n}{2} (\param - \parammode)\Tr (\param - \parammode) \right\} \dee\param \\
	&\leq 
	\int_{\ballc{\paramdim}{\parammode}{\shrinkingrad}}  \exp\left\{\frac{-\hesssmalltrue \, n}{2} (\param - \parammode)\Tr(\param - \parammode) \right\} \dee\param \\
	&=  \left( \frac{2\pi}{\hesssmalltrue \, n } \right)^{\paramdim/2} \PP \left[ \chi_\paramdim^2 \geq  \hesssmalltrue \, n \shrinkingrad^2  \right]  \\
	&=  \left( \frac{2\pi}{\hesssmalltrue \, n} \right)^{\paramdim/2} \PP \left[ \chi_\paramdim^2/\paramdim \geq 1 + \zeta_n  \right],
\]

 where $\zeta_n = \shrinkingconst^2 \log(n) \hesssmalltrue/ \paramdim - 1 $. Then, by Lemma 3 in \cite{Fan},
\*[
	\PP \left[ \chi_\paramdim^2/\paramdim \geq 1 + \zeta_n \right] 
	&\leq \exp\left\{\frac{\paramdim}{2} [\log(1 +\zeta_n )-\zeta_n ] \right\} \\
	&= \exp\left\{\frac{\paramdim}{2}\left[\log\left(\frac{\shrinkingconst^2 \log(n) \hesssmalltrue}{\paramdim}\right) - \frac{\shrinkingconst^2 \log(n) \hesssmalltrue}{\paramdim} + 1\right] \right\} \\
	&\leq \exp\left\{\frac{\paramdim}{2}\left[- \frac{\shrinkingconst^2 \log(n) \hesssmalltrue}{2\paramdim} + 1\right] \right\} \\
	&= e^{\paramdim/2} n^{-\shrinkingconst^2 \hesssmalltrue/4},
\]
where we have used that $\log(x) < x/2$ for all $x>0$.

Thus,
\*[
	&\hspace{-2em}\int_{\ballc{\paramdim}{\parammode }{\shrinkingrad} \cap \ball{\paramdim}{\paramtrue}{\universalradius}}  \exp\{ {\llandp(\param) - \llandp(\parammode)} \}\dee\param \\
	&\leq  
	\left(\frac{2\pi e}{\hesssmalltrue} \right)^{\paramdim/2} n^{-\shrinkingconst^2 \hesssmalltrue/4 - \paramdim/2}.
\]

By \cref{assn:consistency}, we have $ \underset{n\to\infty}{\text{\upshape lim}} \datatrueprobn [ \parammode \in \ball{\paramdim}{\paramtrue}{\shrinkingrad} ] = 1$, and by \cref{assn:hessian} we have $ \underset{n\to\infty}{\text{\upshape lim}} \datatrueprobn [ \hesssmalltrue \geq  \hesssmall] = 1$, giving the statement of the lemma.

\end{proof}

\subsection{Proof of \cref{lem:lowerbound_quad}}
For all $\quadpointvec \in \quadpointset$, a third order Taylor expansion of the posterior around $\parammode$ gives 
\*[
\frac{\poststar{
	\logpostchol \, \quadpointvec + \parammode}}{\poststar{\parammode}} &= \exp\left\{-\frac{1}{2} (\logpostchol \, \quadpointvec)\Tr \logposthess( \parammode ) (\logpostchol \, \quadpointvec) + \frac{1}{6} (\logpostchol \, \quadpointvec)_{i_1 i_2 i_3}\partial^{i_1 i_2 i_3} \llandp(\parammid^{\logpostchol\,\quadpointvec+\parammode}) \right\}\\
	&=\exp\left\{-\frac{1}{2} \quadpointvec\Tr \quadpointvec + \frac{1}{6}(\logpostchol \, \quadpointvec)_{i_1 i_2 i_3}\partial^{i_1 i_2 i_3} \llandp(\parammid^{\logpostchol\,\quadpointvec+\parammode}) \right\},
\]
where $\parammid^{\logpostchol\,\quadpointvec+\parammode} = \midweight_\quadpointvec (\logpostchol\,\quadpointvec+\parammode) + (1-\midweight_\quadpointvec) \parammode$ for some $\midweight_\quadpointvec \in [0,1]$.
If $\parammode \in \ball{\paramdim}{\paramtrue}{\universalradius/2}$ and $\hesssmallmode \geq \hesssmall$, then
\*[
	\Norm{\parammid^{\logpostchol\,\quadpointvec+\parammode} - \paramtrue}_2
	&= \Norm{\midweight_\quadpointvec (\logpostchol \, \quadpointvec + \parammode) + (1-\midweight_\quadpointvec)\parammode - \midweight_\quadpointvec \paramtrue - (1-\midweight_\quadpointvec)\paramtrue}_2 \\
	&\leq \Norm{\logpostchol \, \quadpointvec}_{2} + \Norm{\parammode - \paramtrue}_2 \\
	&\leq [\eigen_{\paramdim}(\logposthess(\parammode))]^{-1/2} \quadpointbound + \universalradius/2 \\
	&\leq (\hesssmallmode \, n)^{-1/2} \quadpointbound + \universalradius/2 \\
	&\leq (\hesssmall \, n)^{-1/2} \quadpointbound + \universalradius/2.
\]

That is, under these conditions, for large enough $n$ we have $\parammid^{\logpostchol\,\quadpointvec+\parammode} \in \ball{\paramdim}{\paramtrue}{\universalradius}$. Further, these conditions imply
\*[
\frac{1}{6}\absbig{ (\logpostchol \, \quadpointvec)_{i_1 i_2 i_3}\partial^{i_1 i_2 i_3} \llandp(\parammid^{\logpostchol\,\quadpointvec+\parammode}) } 
&\leq \frac{1}{6} \paramdim^3 \max_{\quadpointvec\in\quadpointset} \norm{\logpostchol \, \quadpointvec}_2^3
\ \max_{\quadpointvec\in\quadpointset, (i_1, i_2, i_3 ) \in [\paramdim] } \absbig{ \partial^{i_1 i_2 i_3} \llandp(\parammid^{\logpostchol\,\quadpointvec+\parammode}) } \\
&\leq \frac{1}{6} \paramdim^3 (\hesssmall \, n)^{-3/2} \quadpointbound
\ \max_{\quadpointvec\in\quadpointset, (i_1, i_2, i_3 ) \in [\paramdim] } \absbig{ \partial^{i_1 i_2 i_3} \llandp(\parammid^{\logpostchol\,\quadpointvec+\parammode}) }
:= B(n).
\]

Thus, using the crude bound $e^x \geq 1-x$,
\[\label{eqn:lowerbound_frac}
	\frac{\quadadaptmarg{\quadpointset}{\weight}}{\dist(\data \setdelim \parammode)}
	&= \dist(\parammode) 
	\abssmall{\logpostchol} \sum\limits_{\quadpointvec\in\quadpointset} 
	\weight(\quadpointvec) 
	\frac{\poststar{
	\logpostchol \, \quadpointvec + \parammode}}{\poststar{\parammode}} \\
	&= \dist(\parammode)
	\abssmall{\logpostchol} \sum\limits_{\quadpointvec\in\quadpointset} 
	\weight(\quadpointvec) 
	\exp\left\{-\frac{1}{2} \quadpointvec\Tr \quadpointvec +  \frac{1}{6} (\logpostchol \, \quadpointvec)_{i_1 i_2 i_3}\partial^{i_1 i_2 i_3} \llandp(\parammid^{\logpostchol\,\quadpointvec+\parammode}) \right\}\\
	&\geq \dist(\parammode) 
	\abssmall{\logpostchol} \sum\limits_{\quadpointvec\in\quadpointset} 
	\weight(\quadpointvec) 
	\exp\left\{-\frac{1}{2} \quadpointvec\Tr \quadpointvec \right\}[1 - B(n)] \\
	&\geq \dist(\parammode) 
	\abssmall{\logpostchol}\left[ \sum\limits_{\quadpointvec\in\quadpointset} 
	\weight(\quadpointvec) 
	\exp\left\{-\frac{1}{2} \quadpointvec\Tr \quadpointvec \right\}  -  
	B(n) \sum\limits_{\quadpointvec\in\quadpointset} 
	\abssmall{\weight(\quadpointvec)}  \right] \\
	&= \dist(\parammode) 
	\abssmall{\logpostchol}\left[ (2\pi)^{\paramdim/2}  - 
	B(n)\sum\limits_{\quadpointvec\in\quadpointset} 
	\abssmall{\weight(\quadpointvec)}  \right],
\]
where the last step uses $\property{\quadnum}{\paramdim}$.

Note we also have, if $\hessbigmode \leq \hessbig$,
\*[
	\abssmall{\logpostchol}
	= \sqrt{\absbig{[\logposthess(\parammode)]^{-1}}} 
	= \absbig{\logposthess(\parammode)}^{-1/2} 
	\geq \Big[\eigen_1(\logposthess(\parammode))\Big]^{-\paramdim/2}
	= (\hessbigmode \, n)^{-\paramdim/2}
	\geq (\hessbig \, n)^{-\paramdim/2}.
\]

Under these conditions, $B(n) \rightarrow 0 $ and  $\sum_{\quadpointvec\in\quadpointset} \abssmall{\weight(\quadpointvec)}$ is bounded by some constant.
Thus, for large enough $n$,
\cref{eqn:lowerbound_frac} gives that if
$\parammode \in \ball{\paramdim}{\paramtrue}{\universalradius/2}$, $\hesssmallmode \geq \hesssmall$, $\hessbigtrue \leq \hessbig$, and $\hessbigmode \leq \hessbig$, then
\*[
	\frac{\quadadaptmarg{\quadpointset}{\weight}}{\dist(\data \setdelim \parammode)}
	\geq 
	\priorsmall
	(\hessbig \, n)^{-\paramdim/2} \left[ (2\pi)^{\paramdim/2}  - 
	B(n)\sum\limits_{\quadpointvec\in\quadpointset} 
	\abssmall{\weight(\quadpointvec)}  \right] 
	\geq \constt' n^{-\paramdim/2},
\]
where $\constt' > 0$ is some constant.
 The statement of the lemma then follows from \cref{assn:consistency,assn:prior,assn:hessian}, as well as \cref{lem:constant_convergence}.
\manualendproof

\section{Proof of \cref{LEM:NORMALIZING_CONSTANT_AS}}\label{sec:proof_const_as}

We first state the argument for $\quadnum\geq2$, and then address the slight modifications that must be made when $\quadnum=1$ in \cref{sec:laplace_proof}.
For all $\param \in \paramspace$, the $2\quadnum$th order Taylor expansion of $\llandp$ around $\parammode$ gives
\[\label{eqn:main-lemma_first-taylor}
	&\hspace{-1em}\exp\left\{\llandp(\param) -  \llandp(\parammode) \right\} \\
	&=  \exp\left\{ \frac{1}{2} (\param - \parammode)_{i_1 i_2} \partial^{i_1 i_2} \llandp(\parammode)\right\} \times \\
	&\qquad \exp\left\{\sum_{j = 3}^{2\quadnum - 1} (\param - \parammode)_{[i_1\dots i_j]!} \partial^{i_1 \dots i_j  } \llandp(\parammode) +
	(\param - \parammode)_{[i_1\dots i_{2\quadnum}]!}  \partial^{i_1\dots i_{2\quadnum}} \llandp(\parammodemid^\param) \right\},
\]
where $\parammodemid^\param = \modemidweight_{\param,n} \parammode + (1-\modemidweight_{\param,n})\param$ for some $\modemidweight_{\param,n} \in [0,1]$.
Further,
\[\label{eqn:main-lemma_normal-appears}
	\exp\left\{ \frac{1}{2} (\param - \parammode)_{i_1 i_2} \partial^{i_1 i_2} \llandp(\parammode)\right\}
	&= \exp\left\{-\frac{1}{2} (\param - \parammode)\Tr [\logposthess(\parammode)] (\param - \parammode) \right\}	\\
	&= (2\pi)^{\paramdim/2} \absbig{[\logposthess(\parammode)]^{-1}}^{1/2} \normaldens{\param}{\parammode}{[\logposthess(\parammode)]^{-1}} \\
	&= (2\pi)^{\paramdim/2} \abssmall{\logpostchol} \normaldens{\param}{\parammode}{[\logposthess(\parammode)]^{-1}},
\]

For notational simplicity, we define
\*[
	\multitaylorfunc(\param) =  \sum_{s = 3}^{2\quadnum - 1} (\param - \parammode)_{[i_1\dots i_s]!} \partial^{i_1 \dots i_s} \llandp(\parammode) +
	(\param - \parammode)_{[i_1\dots i_{2\quadnum}]!}  \partial^{i_1\dots i_{2\quadnum}} \llandp(\parammodemid^\param). \label{eq:A_def}
\]

Substituting \cref{eqn:main-lemma_first-taylor,eqn:main-lemma_normal-appears} into the difference of interest, we obtain
\[\label{eqn:main-lemma_polynomial-combo}
	&\hspace{-2em}
	\absbig{
		\int_{\ball{\paramdim}{\parammode}{\shrinkingrad}}\exp\left\{\llandp(\param) - \llandp(\parammode) \right\}\dee\param
		- 
		\abssmall{\logpostchol} \sum\limits_{\quadpointvec\in\quadpointset} \weight(\quadpointvec) \exp\left\{\llandp(\logpostchol \, \quadpointvec + \parammode) - \llandp(\parammode) \right\}
	} \\
	&= (2\pi)^{\paramdim/2} \abssmall{\logpostchol} \\
	&\qquad\times\Bigg \lvert 
	\int_{\ball{\paramdim}{\parammode}{\shrinkingrad}}
	\normaldens{\param}{\parammode}{[\logposthess(\parammode)]^{-1}} 
	\exp\left\{\multitaylorfunc(\param) \right\}\dee\param \\
	&\qquad\qquad- 
	\abssmall{\logpostchol} \sum\limits_{\quadpointvec\in\quadpointset} \weight(\quadpointvec) \normaldens{\logpostchol \, \quadpointvec + \parammode}{\parammode}{[\logposthess(\parammode)]^{-1}} \exp\left\{\multitaylorfunc\Big(\logpostchol \, \quadpointvec + \parammode\Big)\right\}
	\Bigg \rvert \\
	&= (2\pi)^{\paramdim/2} \abssmall{\logpostchol} \\
	&\qquad\times\Bigg \lvert 
	\int_{\ball{\paramdim}{\parammode}{\shrinkingrad}}
	\normaldens{\param}{\parammode}{[\logposthess(\parammode)]^{-1}} 
	\exp\left\{\multitaylorfunc(\param) \right\}\dee\param \\
	&\qquad\qquad\qquad - 
	\sum\limits_{\quadpointvec\in\quadpointset} \weight(\quadpointvec) \normaldens{\quadpointvec}{0}{\identmat_\paramdim} \exp\left\{\multitaylorfunc\Big(\logpostchol \, \quadpointvec + \parammode\Big)\right\}
	\Bigg \rvert.
\]

Now, evaluating the $(\quadnumadj+2)$th order Taylor expansion of $e^x$ around zero at $\multitaylorfunc(\param)$ gives
\*[
	\exp\{\multitaylorfunc(\param)\} = \sum_{j = 0}^{\quadnumadj + 1} \frac{\multitaylorfunc(\param)^j}{j!} + \frac{\multitaylorfunc(\param)^{(\quadnumadj+2)}}{(\quadnumadj+2)!} \exp\{ \midmultitaylorfunc(\param) \},
\]
where $\midmultitaylorfunc(\param) = \midweight_n \multitaylorfunc(\param)$ for some $\midweight_n\in [0,1]$. 
For each $j \leq \quadnumadj$ and $\tvec = (t_3,\dots,t_{2\quadnum})$ satisfying $\sum_{s=3}^{2\quadnum} t_s = j$, denote the multinomial coefficient by
\*[
	\multicoeff{j}{\tvec} = {j \choose t_3,\dots,t_{2\quadnum}}.
\]

This simplifies notation, so we can write $\multitaylorfunc(\param)^j$ as
\*[
	\sum_{\tvec: t_3+\dots+t_{2\quadnum} = j} \multicoeff{j}{\tvec} \left\{ \left(\prod_{s =3}^{2\quadnum-1} \left[(\param - \parammode)_{[i_1\dots i_s]!} \partial^{i_1 \dots i_s  } \llandp(\parammode)\right]^{t_s} \right) \left[(\param - \parammode)_{[i_1\dots i_{2\quadnum}]!} \partial^{i_1 \dots i_{2\quadnum}} \llandp(\parammodemid^\param) \right]^{t_{2\quadnum}} \right\}.
\] 

For each $s \in \{3,\dots,2\quadnum-1\}$, note that $(\param - \parammode)_{[i_1\dots i_s]!} \partial^{i_1 \dots i_s  } \llandp(\parammode)$ is a polynomial in $\param$ of total order $s$. Thus, $\multitaylorfunc(\param)^j$ is actually equal to
\*[
	&\sum_{\tvec \in \smalltset{j}} \multicoeff{j}{\tvec} \left\{ \left(\prod_{s =3}^{2\quadnum-1} \left[(\param - \parammode)_{[i_1\dots i_s]!} \partial^{i_1 \dots i_s  } \llandp(\parammode)\right]^{t_s} \right) \left[(\param - \parammode)_{[i_1\dots i_{2\quadnum}]!} \partial^{i_1 \dots i_{2\quadnum}} \llandp(\parammodemid^\param) \right]^{t_{2\quadnum}} \right\} \\
	&\qquad+ \sum_{\tvec \in \bigtset{j}} \multicoeff{j}{\tvec} \left\{ \left(\prod_{s =3}^{2\quadnum-1} \left[(\param - \parammode)_{[i_1\dots i_s]!} \partial^{i_1 \dots i_s  } \llandp(\parammode)\right]^{t_s} \right) \left[(\param - \parammode)_{[i_1\dots i_{2\quadnum}]!} \partial^{i_1 \dots i_{2\quadnum}} \llandp(\parammodemid^\param) \right]^{t_{2\quadnum}} \right\}.
\]

In summary, we have broken $\multitaylorfunc(\param)^j$ up into two cases: the terms in the first sum are polynomials in $\param$ of total order at most $2\quadnum-1$, while all polynomials in $\param$ contained in the terms of the second sum have degree at least $2\quadnum$.
Importantly, for any $j$, all $\tvec \in \smalltset{j}$ satisfy $t_{2\quadnum}=0$ necessarily, which means there is no dependence of $\parammodemid^\param$ in these polynomials.
Since $\quadrule{\quadpointset}{\weight}$ satisfies $\property{\quadnum}{\paramdim}$,
\*[
	&\hspace{-2em}\sum\limits_{\quadpointvec\in\quadpointset} \weight(\quadpointvec) \normaldens{\quadpointvec}{0}{\identmat_\paramdim} \exp\left\{\multitaylorfunc\Big(\logpostchol \, \quadpointvec + \parammode\Big)\right\} \\
	&= 
	\sum_{j=0}^\quadnumadj \int_{\paramspace} \normaldens{\param}{0}{\identmat_\paramdim}
	\sum_{\tvec \in \smalltset{j}} \multicoeff{j}{\tvec}
	\prod_{s =3}^{2\quadnum-1} \left[(\logpostchol \, \param)_{[i_1\dots i_s]!} \partial^{i_1 \dots i_s  } \llandp(\parammode)\right]^{t_s} \dee \param \\
	&\qquad +
	\sum_{j=1}^\quadnumadj \sum\limits_{\quadpointvec\in\quadpointset} \weight(\quadpointvec) \normaldens{\quadpointvec}{0}{\identmat_\paramdim}
	\sum_{\tvec \in \bigtset{j}} \multicoeff{j}{\tvec}
	\Bigg\{ \left(\prod_{s =3}^{2\quadnum-1} \left[(\logpostchol \, \quadpointvec)_{[i_1\dots i_s]!} \partial^{i_1 \dots i_s  } \llandp(\parammode)\right]^{t_s} \right) \\
	&\qquad\qquad\qquad\qquad\qquad\qquad\qquad\qquad\qquad\qquad
	\left[(\logpostchol \, \quadpointvec)_{[i_1\dots i_{2\quadnum}]!} \partial^{i_1 \dots i_{2\quadnum}} \llandp(\parammodemid^{\logpostchol \, \quadpointvec + \parammode}) \right]^{t_{2\quadnum}} \Bigg\} \\
	&\qquad + 
	\sum\limits_{\quadpointvec\in\quadpointset} \weight(\quadpointvec) \normaldens{\quadpointvec}{0}{\identmat_\paramdim} \frac{\multitaylorfunc(\logpostchol \, \quadpointvec + \parammode)^{(\quadnumadj + 1)}}{(\quadnumadj + 1)!}  \\
	&\qquad + 
	\sum\limits_{\quadpointvec\in\quadpointset} \weight(\quadpointvec) \normaldens{\quadpointvec}{0}{\identmat_\paramdim} \frac{\multitaylorfunc(\logpostchol \, \quadpointvec + \parammode)^{(\quadnumadj + 2)}}{(\quadnumadj + 2)!} \exp\{ \midmultitaylorfunc(\logpostchol \, \quadpointvec + \parammode) \}.
\]

Note that we have deliberately treated the $\kappa+1$ and $\kappa+2$ powers of $\multitaylorfunc(\param)$ separately, which is crucial to obtain the correct rate and to handle the remainder term $\exp\{\midmultitaylorfunc(\logpostchol \, \quadpointvec + \parammode)\}$.
Applying a change of variable, splitting up $\paramspace$, and recalling that the quadrature process applied to polynomials of total order $2\quadnum - 1$ is exactly the same as the integral of this polynomial then gives
\[\label{eqn:sum-high-low-split}
	&\sum\limits_{\quadpointvec\in\quadpointset} \weight(\quadpointvec) \normaldens{\quadpointvec}{0}{\identmat_\paramdim} \exp\left\{\multitaylorfunc\Big(\logpostchol \, \quadpointvec + \parammode\Big)\right\} \\
	&= 
	\sum_{j=0}^\quadnumadj \int_{\ball{\paramdim}{\parammode}{\shrinkingrad}} \normaldens{\param}{\parammode}{[\logposthess(\parammode)]^{-1}}
	\sum_{\tvec \in \smalltset{j}} \multicoeff{j}{\tvec}
	\prod_{s =3}^{2\quadnum-1} \left[(\param - \parammode)_{[i_1\dots i_s]!} \partial^{i_1 \dots i_s  } \llandp(\parammode)\right]^{t_s} \dee \param \\
	&\qquad +
	\sum_{j=0}^\quadnumadj \int_{\ballc{\paramdim}{\parammode}{\shrinkingrad}} \normaldens{\param}{\parammode}{[\logposthess(\parammode)]^{-1}}
	\sum_{\tvec \in \smalltset{j}} \multicoeff{j}{\tvec}
	\prod_{s =3}^{2\quadnum-1} \left[(\param - \parammode)_{[i_1\dots i_s]!} \partial^{i_1 \dots i_s  } \llandp(\parammode)\right]^{t_s} \dee \param \\
	&\qquad +
	\sum_{j=1}^\quadnumadj \sum\limits_{\quadpointvec\in\quadpointset} \weight(\quadpointvec) \normaldens{\quadpointvec}{0}{\identmat_\paramdim}
	\sum_{\tvec \in \bigtset{j}} \multicoeff{j}{\tvec}
	\Bigg\{ \left(\prod_{s =3}^{2\quadnum-1} \left[(\logpostchol \, \quadpointvec)_{[i_1\dots i_s]!} \partial^{i_1 \dots i_s  } \llandp(\parammode)\right]^{t_s} \right) \\
	&\qquad\qquad\qquad\qquad\qquad\qquad\qquad\qquad\qquad\qquad
	\left[(\logpostchol \, \quadpointvec)_{[i_1\dots i_{2\quadnum}]!} \partial^{i_1 \dots i_{2\quadnum}} \llandp(\parammodemid^{\logpostchol \, \quadpointvec + \parammode}) \right]^{t_{2\quadnum}} \Bigg\} \\
	&\qquad + 
	\sum\limits_{\quadpointvec\in\quadpointset} \weight(\quadpointvec) \normaldens{\quadpointvec}{0}{\identmat_\paramdim} \frac{\multitaylorfunc(\logpostchol \, \quadpointvec + \parammode)^{(\quadnumadj + 1 )}}{(\quadnumadj + 1)!}  \\
	&\qquad + 
	\sum\limits_{\quadpointvec\in\quadpointset} \weight(\quadpointvec) \normaldens{\quadpointvec}{0}{\identmat_\paramdim} \frac{\multitaylorfunc(\logpostchol \, \quadpointvec + \parammode)^{(\quadnumadj + 2)}}{(\quadnumadj + 2)!} \exp\{ \midmultitaylorfunc(\logpostchol \, \quadpointvec + \parammode) \}.
\]

Similarly,
\[\label{eqn:int-high-low-split}
	&\int_{\ball{\paramdim}{\parammode}{\shrinkingrad}}
	\normaldens{\param}{\parammode}{[\logposthess(\parammode)]^{-1}} 
	\exp\left\{\multitaylorfunc(\param) \right\}\dee\param \\
	&= 
	\sum_{j=0}^\quadnumadj \int_{\ball{\paramdim}{\parammode}{\shrinkingrad}} \normaldens{\param}{\parammode}{[\logposthess(\parammode)]^{-1}}
	\sum_{\tvec \in \smalltset{j}} \multicoeff{j}{\tvec}
	\prod_{s =3}^{2\quadnum-1} \left[(\param - \parammode)_{[i_1\dots i_s]!} \partial^{i_1 \dots i_s  } \llandp(\parammode)\right]^{t_s} \dee \param \\
	&\qquad +
	\sum_{j=1}^\quadnumadj \int_{\ball{\paramdim}{\parammode}{\shrinkingrad}} \normaldens{\param}{\parammode}{[\logposthess(\parammode)]^{-1}}
	\sum_{\tvec \in \bigtset{j}} \multicoeff{j}{\tvec}
	\Bigg\{ \left(\prod_{s =3}^{2\quadnum-1} \left[(\param - \parammode)_{[i_1\dots i_s]!} \partial^{i_1 \dots i_s  } \llandp(\parammode)\right]^{t_s} \right) \\
	&\qquad\qquad\qquad\qquad\qquad\qquad\qquad\qquad\qquad\qquad\qquad\qquad
	\left[(\param - \parammode)_{[i_1\dots i_{2\quadnum}]!} \partial^{i_1 \dots i_{2\quadnum}} \llandp(\parammodemid^{\param}) \right]^{t_{2\quadnum}} \Bigg\} \dee \param \\
	&\qquad +
	\int_{\ball{\paramdim}{\parammode}{\shrinkingrad}} \normaldens{\param}{\parammode}{[\logposthess(\parammode)]^{-1}} 
	\frac{\multitaylorfunc(\param)^{(\quadnumadj + 1)}}{(\quadnumadj + 1)!} \dee \param\\
	&\qquad +
	\int_{\ball{\paramdim}{\parammode}{\shrinkingrad}} \normaldens{\param}{\parammode}{[\logposthess(\parammode)]^{-1}} 
	\frac{\multitaylorfunc(\param)^{(\quadnumadj + 2)}}{(\quadnumadj + 2)!} \exp\{ \midmultitaylorfunc(\param) \} \dee \param.
\]

Substituting \cref{eqn:sum-high-low-split,eqn:int-high-low-split} into the absolute difference in \cref{eqn:main-lemma_polynomial-combo} and applying triangle inequality gives
\*[
	&\Bigg\lvert
	\int_{\ball{\paramdim}{\parammode}{\shrinkingrad}}
	\normaldens{\param}{\parammode}{[\logposthess(\parammode)]^{-1}} 
	\exp\left\{\multitaylorfunc(\param) \right\}\dee\param \\
	&\qquad\qquad\qquad - 
	\sum\limits_{\quadpointvec\in\quadpointset} \weight(\quadpointvec) \normaldens{\quadpointvec}{0}{\identmat_\paramdim} \exp\left\{\multitaylorfunc\Big(\logpostchol \, \quadpointvec + \parammode\Big)\right\}
	\Bigg \rvert \\
	&\leq
	\sum_{j=1}^\quadnumadj \int_{\ball{\paramdim}{\parammode}{\shrinkingrad}} \normaldens{\param}{\parammode}{[\logposthess(\parammode)]^{-1}}
	\sum_{\tvec \in \bigtset{j}} \multicoeff{j}{\tvec}
	\Bigg\{ \left(\prod_{s =3}^{2\quadnum-1} \absbig{(\param - \parammode)_{[i_1\dots i_s]!} \partial^{i_1 \dots i_s  } \llandp(\parammode)}^{t_s} \right) \\
	&\qquad\qquad\qquad\qquad\qquad\qquad\qquad\qquad\qquad\qquad\qquad
	\absbig{(\param - \parammode)_{[i_1\dots i_{2\quadnum}]!} \partial^{i_1 \dots i_{2\quadnum}} \llandp(\parammodemid^{\param})}^{t_{2\quadnum}} \Bigg\} \dee \param \\
	&\qquad +
	\absbig{\int_{\ball{\paramdim}{\parammode}{\shrinkingrad}} \normaldens{\param}{\parammode}{[\logposthess(\parammode)]^{-1}} 
	\frac{\multitaylorfunc(\param)^{(\quadnumadj + 1)}}{(\quadnumadj + 1)!}  \dee \param} \\
	&\qquad +
	\int_{\ball{\paramdim}{\parammode}{\shrinkingrad}} \normaldens{\param}{\parammode}{[\logposthess(\parammode)]^{-1}} 
	\absbig{\frac{\multitaylorfunc(\param)^{(\quadnumadj + 2)}}{(\quadnumadj + 2)!}} \exp\{ \midmultitaylorfunc(\param) \} \dee \param \\
	&\qquad +
	\sum_{j=0}^\quadnumadj \int_{\ballc{\paramdim}{\parammode}{\shrinkingrad}} \normaldens{\param}{\parammode}{[\logposthess(\parammode)]^{-1}}
	\sum_{\tvec \in \smalltset{j}} \multicoeff{j}{\tvec}
	\prod_{s =3}^{2\quadnum-1} \absbig{(\param - \parammode)_{[i_1\dots i_s]!} \partial^{i_1 \dots i_s  } \llandp(\parammode)}^{t_s} \dee \param \\
	&\qquad +
	\sum_{j=1}^\quadnumadj \sum\limits_{\quadpointvec\in\quadpointset} \weight(\quadpointvec) \normaldens{\quadpointvec}{0}{\identmat_\paramdim}
	\sum_{\tvec \in \bigtset{j}} \multicoeff{j}{\tvec}
	\Bigg\{ \left(\prod_{s =3}^{2\quadnum-1} \absbig{(\logpostchol \, \quadpointvec)_{[i_1\dots i_s]!} \partial^{i_1 \dots i_s  } \llandp(\parammode)}^{t_s} \right) \\
	&\qquad\qquad\qquad\qquad\qquad\qquad\qquad\qquad\qquad\qquad
	\absbig{(\logpostchol \, \quadpointvec)_{[i_1\dots i_{2\quadnum}]!} \partial^{i_1 \dots i_{2\quadnum}} \llandp(\parammodemid^{\logpostchol \, \quadpointvec + \parammode})}^{t_{2\quadnum}} \Bigg\} \\
	&\qquad + 
	\absbig{\sum\limits_{\quadpointvec\in\quadpointset} \weight(\quadpointvec) \normaldens{\quadpointvec}{0}{\identmat_\paramdim} 
	\frac{\multitaylorfunc(\logpostchol \, \quadpointvec + \parammode)^{(\quadnumadj + 1)}}{(\quadnumadj + 1)!}}  \\
	&\qquad + 
	\sum\limits_{\quadpointvec\in\quadpointset} \weight(\quadpointvec) \normaldens{\quadpointvec}{0}{\identmat_\paramdim} 
	\absbig{\frac{\multitaylorfunc(\logpostchol \, \quadpointvec + \parammode)^{(\quadnumadj + 1)}}{(\quadnumadj + 1)!}} 
	\exp\{ \midmultitaylorfunc(\logpostchol \, \quadpointvec + \parammode) \}.
\]

Next, observe that $\midmultitaylorfunc(\param) \leq \abssmall{\multitaylorfunc(\param)}$ and $\midmultitaylorfunc(\logpostchol \, \quadpointvec + \parammode) \leq \lvert \multitaylorfunc(\logpostchol \, \quadpointvec + \parammode)\rvert $. 
Additionally, by definition of the radius for which $\derivboundmode$ is a bound on the derivatives, it holds that $\abssmall{\partial^\alpha \llandp(\parammode)} \leq \derivboundmode$, $\abssmall{\partial^\alpha \llandp(\parammodemid^{\logpostchol \, \quadpointvec + \parammode})} \leq \derivboundmode$, and $\abssmall{\partial^\alpha \llandp(\parammodemid^{\param})} \leq \derivboundmode$ for any $\alpha$, $\quadpointvec \in \quadpointset$, and $\param \in \ball{\paramdim}{\parammode}{\shrinkingrad}$. Thus, we have
\[\label{eqn:post-cancel-all-terms}
	&\Bigg\lvert
	\int_{\ball{\paramdim}{\parammode}{\shrinkingrad}}
	\normaldens{\param}{\parammode}{[\logposthess(\parammode)]^{-1}} 
	\exp\left\{\multitaylorfunc(\param) \right\}\dee\param \\
	&\qquad\qquad\qquad - 
	\sum\limits_{\quadpointvec\in\quadpointset} \weight(\quadpointvec) \normaldens{\quadpointvec}{0}{\identmat_\paramdim} \exp\left\{\multitaylorfunc\Big(\logpostchol \, \quadpointvec + \parammode\Big)\right\}
	\Bigg \rvert \\
	&\leq
	\sum_{j=1}^\quadnumadj (\derivboundmode)^{j} \int_{\ball{\paramdim}{\parammode}{\shrinkingrad}} \normaldens{\param}{\parammode}{[\logposthess(\parammode)]^{-1}}
	\sum_{\tvec \in \bigtset{j}} \multicoeff{j}{\tvec} 
	\prod_{s =3}^{2\quadnum} \absbig{\sum_{i_1,\dots,i_s \in [\paramdim]} \prod_{\iota=1}^s (\param-\parammode)_{i_\iota}}^{t_s} \dee \param  \\
	&\qquad +
	\absbig{\int_{\ball{\paramdim}{\parammode}{\shrinkingrad}} \normaldens{\param}{\parammode}{[\logposthess(\parammode)]^{-1}} 
	\frac{\multitaylorfunc(\param)^{(\quadnumadj + 1)}}{(\quadnumadj + 1)!} \dee \param}  \\
	&\qquad +
	\int_{\ball{\paramdim}{\parammode}{\shrinkingrad}} \normaldens{\param}{\parammode}{[\logposthess(\parammode)]^{-1}} 
	\absbig{\frac{\multitaylorfunc(\param)^{(\quadnumadj + 2)}}{(\quadnumadj + 2)!}}
	\exp\left\{\absbig{\multitaylorfunc(\param)} \right\} \dee \param\\
	&\qquad +
	\sum_{j=0}^\quadnumadj (\derivboundmode)^{j} \int_{\ballc{\paramdim}{\parammode}{\shrinkingrad}} \normaldens{\param}{\parammode}{[\logposthess(\parammode)]^{-1}}
	\sum_{\tvec \in \smalltset{j}} \multicoeff{j}{\tvec}
	\prod_{s =3}^{2\quadnum-1} \absbig{\sum_{i_1,\dots,i_s \in [\paramdim]} \prod_{\iota=1}^s (\param-\parammode)_{i_\iota}}^{t_s} \dee \param \\
	&\qquad +
	\sum_{j=1}^\quadnumadj (\derivboundmode)^{j} \sum\limits_{\quadpointvec\in\quadpointset} \weight(\quadpointvec) \normaldens{\quadpointvec}{0}{\identmat_\paramdim}
	\sum_{\tvec \in \bigtset{j}} \multicoeff{j}{\tvec}
	\prod_{s =3}^{2\quadnum} \absbig{\sum_{i_1,\dots,i_s \in [\paramdim]} \prod_{\iota=1}^s (\logpostchol \, \quadpointvec)_{i_\iota}}^{t_s} \\
	&\qquad + 
	\absbig{\sum\limits_{\quadpointvec\in\quadpointset} \weight(\quadpointvec) \normaldens{\quadpointvec}{0}{\identmat_\paramdim} 
	\frac{\multitaylorfunc(\logpostchol \, \quadpointvec + \parammode)^{(\quadnumadj + 1)}}{(\quadnumadj + 1)!}} \\
	&\qquad + 
	\sum\limits_{\quadpointvec\in\quadpointset} \weight(\quadpointvec) \normaldens{\quadpointvec}{0}{\identmat_\paramdim} 
	\absbig{\frac{\multitaylorfunc(\logpostchol \, \quadpointvec + \parammode)^{(\quadnumadj + 2)}}{(\quadnumadj + 2)!}} \exp\left\{\absbig{\multitaylorfunc(\logpostchol \, \quadpointvec + \parammode)} \right\}. 
\]

Our next step is to simplify the dependence on $\logposthess(\parammode)$ for some terms. In particular, for every $\param \in \paramspace$,
\*[
	\normaldens{\param}{\parammode}{[\logposthess(\parammode)]^{-1}}
	&= \frac{1}{\abssmall{\logpostchol}(2\pi)^{\paramdim/2}}
	\exp\left\{-\frac{1}{2}(\param-\parammode)\Tr \logposthess(\parammode) (\param-\parammode) \right\} \\
	&\leq \frac{[\eigen_{1}(\logposthess(\parammode))]^{\paramdim/2}}{(2\pi)^{\paramdim/2}}
	\exp\left\{-\frac{1}{2} \eigen_{\paramdim}(\logposthess(\parammode)) (\param-\parammode)\Tr (\param-\parammode) \right\} \\
	&= \frac{[\eigen_{1}(\logposthess(\parammode))]^{\paramdim/2}}{[\eigen_{\paramdim}(\logposthess(\parammode))]^{\paramdim/2}} \normaldens{\param}{\parammode}{[\eigen_{\paramdim}(\logposthess(\parammode))]^{-1} \identmat_\paramdim}.
\]

Finally,
for each $j \leq \quadnumadj$, $\tvec$ such that $\sum_{s=3}^{2\quadnum} t_s = j$, and $\paramdum\in\paramspace$, it holds that
\*[
	\prod_{s=3}^{2\quadnum} \left(\sum_{i_1,\dots,i_s \in [\paramdim]} \prod_{\iota=1}^s \absbig{(\paramdum)_{i_\iota}}\right)^{t_s}
	\leq
	\sum_{i_1,\dots,i_{\tsum} \in [\paramdim]} \prod_{\iota=1}^{\tsum} \absbig{(\paramdum)_{i_\iota}}
	\leq \sum_{i_1,\dots,i_{\tsum} \in [\paramdim]} \frac{1}{\tsum} \sum_{\iota=1}^{\tsum} \absbig{(\paramdum)_{i_\iota}}^{\tsum},
\]
where the first inequality is because all permutations that show up on the LHS must appear in the sum on the RHS by definition, and the second inequality is by the AM-GM inequality. We will further control $1/\tsum$ using that if $\tvec \in \bigtset{j}$ then $\tsum \geq 2\quadnum$, and $\tsum \leq 2\quadnum j \leq 2\quadnum \quadnumadj$ for all $\tvec$.

Applying this simplification of both $\normaldens{\param}{\parammode}{[\logposthess(\parammode)]^{-1}}$ and the polynomial terms, we get, informally,
\[\label{eqn:post-informal-all-terms}
	&\Bigg\lvert
	\int_{\ball{\paramdim}{\parammode}{\shrinkingrad}}
	\normaldens{\param}{\parammode}{[\logposthess(\parammode)]^{-1}} 
	\exp\left\{\multitaylorfunc(\param) \right\}\dee\param
	- 
	\sum\limits_{\quadpointvec\in\quadpointset} \weight(\quadpointvec) \normaldens{\quadpointvec}{0}{\identmat_\paramdim} \exp\left\{\multitaylorfunc\Big(\logpostchol \, \quadpointvec + \parammode\Big)\right\}
	\Bigg \rvert \\
	&\leq
	\text{ High-degree polynomial remainder} 
	+ \text{ Low-degree polynomial remainder}  \\
	&\qquad + \text{ True posterior remainder} 
	+ \text{ Approximate posterior remainder}.
\]
Each of these terms are precisely quantified as follows, and broken up into Terms 1 through 7.

\*[
	&\text{ High-degree polynomial remainder:} \\
	&=
	\left(\frac{\hessbigmode}{\hesssmallmode}\right)^{\paramdim/2} \sum_{j=1}^\quadnumadj (\derivboundmode)^{j} 
	\sum_{\tvec \in \bigtset{j}} \frac{\multicoeff{j}{\tvec} \,  }{2\quadnum} \sum_{i_1,\dots,i_{\tsum} \in [\paramdim]} \sum_{\iota=1}^{\tsum} \\
	&\qquad\underbrace{\qquad\times
	\int_{\ball{\paramdim}{\parammode}{\shrinkingrad}} 
	\normaldens{\param}{\parammode}{[\eigen_{\paramdim}(\logposthess(\parammode))]^{-1} \identmat_\paramdim}
	\absbig{(\param - \parammode)_{i_\iota}}^\tsum \, \dee \param}_{\text{Term 1}} \\
	&\qquad +
	\underbrace{\sum_{j=1}^\quadnumadj (\derivboundmode)^{j} 
	\sum_{\tvec \in \bigtset{j}} \frac{\multicoeff{j}{\tvec} \,  }{2\quadnum} \sum_{i_1,\dots,i_{\tsum} \in [\paramdim]} \sum_{\iota=1}^{\tsum}
	\sum\limits_{\quadpointvec\in\quadpointset} \weight(\quadpointvec) \normaldens{\quadpointvec}{0}{\identmat_\paramdim}
	\absbig{(\logpostchol \, \quadpointvec)_{i_\iota}}^\tsum}_{\text{Term 2}}.
\]
These terms involve terms that are the higher order polynomial terms that are cancelled out by the quadrature process.

\*[
	&\text{ Low-degree polynomial remainder:} \\
	&=
	\left(\frac{\hessbigmode}{\hesssmallmode}\right)^{\paramdim/2}
	\sum_{j=0}^\quadnumadj (\derivboundmode)^{j} 
	\sum_{\tvec \in \smalltset{j}} \frac{\multicoeff{j}{\tvec} \,  }{\tsum} \sum_{i_1,\dots,i_{\tsum} \in [\paramdim]} \sum_{\iota=1}^{\tsum} \\
	&\qquad\underbrace{\qquad\times
	\int_{\ballc{\paramdim}{\parammode}{\shrinkingrad}} 
	\normaldens{\param}{\parammode}{[\eigen_{\paramdim}(\logposthess(\parammode))]^{-1} \identmat_\paramdim}
	\absbig{(\param - \parammode)_{i_\iota}}^\tsum \, \dee \param}_{\text{Term 3}}.
\]
These terms involve the tails integrals of the lower order polynomial terms that are not cancelled out by the quadrature process due to our truncation argument.

\*[
	&\text{ True posterior remainder:} \\
	&=
	\underbrace{\left(\frac{\hessbigmode}{\hesssmallmode}\right)^{\paramdim/2} 
	\int_{\ball{\paramdim}{\parammode}{\shrinkingrad}} 
	\normaldens{\param}{\parammode}{[\eigen_{\paramdim}(\logposthess(\parammode))]^{-1} \identmat_\paramdim}
	\absbig{\multitaylorfunc(\param)}^{(\quadnumadj + 2)} 
	\exp\left\{\absbig{\multitaylorfunc(\param)} \right\} \dee \param}_{\text{Term 4}} \\
	&\qquad +
	\underbrace{ \absbig{
	\int_{\ball{\paramdim}{\parammode}{\shrinkingrad}} 
	\normaldens{\param}{\parammode}{[\logposthess(\parammode)]^{-1}}
	\multitaylorfunc(\param)^{(\quadnumadj + 1)}
	 \dee \param}}_{\text{Term 5}}.
\]
These correspond to the integral of higher order terms that are not cancelled out by the quadrature process, but also have some peculiarities that we need to exploit or address within the proof. Term 4 contains an exponential term that needs to be bounded, and the behaviour of Term 5 changes depending on the value of $\quadnumadj$. 

\*[
	&\text{ Approximate posterior remainder:} \\
	&=
	\underbrace{\sum\limits_{\quadpointvec\in\quadpointset} \weight(\quadpointvec) \normaldens{\quadpointvec}{0}{\identmat_\paramdim} 
	\absbig{\multitaylorfunc(\logpostchol \, \quadpointvec + \parammode)}^{(\quadnumadj + 2)}
	\exp\left\{\absbig{\multitaylorfunc(\logpostchol \, \quadpointvec + \parammode)} \right\}}_{\text{Term 6}}\\
	&\qquad + 
	\underbrace{\absbig{\sum\limits_{\quadpointvec\in\quadpointset} \weight(\quadpointvec) \normaldens{\quadpointvec}{0}{\identmat_\paramdim} 
	\frac{\multitaylorfunc(\logpostchol \, \quadpointvec + \parammode)^{(\quadnumadj + 1)}}{(\quadnumadj + 1)!}}}_{\text{Term 7}}.
\]
These correspond to the numerical summation of the higher order terms that are not cancelled by the quadrature process, but also have some peculiarities that we need to exploit or address within the proof, similar to the ``true posterior terms". Term 6 contains an exponential term that needs to be bounded, and the behaviour of Term 7 changes depending on the value of $\quadnumadj$. We now handle each of these terms separately. 
\subsection{Bounding Term 1 of \cref{eqn:post-informal-all-terms}}\label{sec:term1}

For any $i \in [\paramdim]$, $j \in [\quadnumadj]$, and $\tvec \in \bigtset{j}$,
\*[
	\int_{\ball{\paramdim}{\parammode}{\shrinkingrad}} 
	\normaldens{\param}{\parammode}{[\eigen_{\paramdim}(\logposthess(\parammode))]^{-1} \identmat_\paramdim}
	\abssmall{(\param - \parammode)_i}^\tsum \, \dee \param
	&\leq
	\int_{0}^{\shrinkingrad} 
	\normaldens{\paramidx}{0}{[\eigen_{\paramdim}(\logposthess(\parammode))]^{-1}}
	\abssmall{\paramidx}^\tsum \, \dee \paramidx.
\]
This can be bounded by standard results on the moments of Gaussians. In particular, Eq.\ (18) of \citet{winkelbauer12gaussian} gives
\*[
	&\hspace{-2em}\int_{\ball{\paramdim}{\parammode}{\shrinkingrad}} 
	\normaldens{\param}{\parammode}{[\eigen_{\paramdim}(\logposthess(\parammode))]^{-1} \identmat_\paramdim}
	\abssmall{(\param - \parammode)_i}^\tsum \, \dee \param \\
	&\leq 2^{\tsum/2} \, \Gamma\left(\frac{\tsum+1}{2} \right) [\eigen_{\paramdim}(\logposthess(\parammode))]^{-\frac{1}{2} \tsum} \\
	&= 2^{\tsum/2} \, \Gamma\left(\frac{\tsum+1}{2} \right) (\hesssmallmode \, n)^{-\tsum/2} \\
	&\leq 2^{\quadnum \quadnumadj} \, \Gamma\left(\frac{2\quadnum\quadnumadj+1}{2} \right) (\hesssmallmode \, n)^{-\tsum/2}
\]

\subsection{Bounding Term 2 of \cref{eqn:post-informal-all-terms}}\label{sec:term2}

For any $i \in [\paramdim]$, $j \in [\quadnumadj]$, $\tvec \in \bigtset{j}$, and $\quadpointvec\in\quadpointset$,
\*[
	\absbig{(\logpostchol \, \quadpointvec)_i}^{\tsum}
	\leq \left[\norm{\logpostchol}_{\mathrm{Op}} \norm{\quadpointvec}_2\right]^{\tsum}
	\leq \Big[[\eigen_{\paramdim}(\logposthess(\parammode))]^{-1/2} \norm{\quadpointvec}_2\Big]^{\tsum}
	\leq (\hesssmallmode \, n)^{-\tsum/2} \quadpointbound^{\tsum}.
\]
Thus, since $\sup_{\quadpointvec} \normaldens{\quadpointvec}{0}{\identmat_\paramdim} \leq (2\pi)^{-\paramdim/2}$ and $\abssmall{\quadpointset} = \quadnum^\paramdim$,
\*[
	\sum\limits_{\quadpointvec\in\quadpointset} \weight(\quadpointvec) \normaldens{\quadpointvec}{0}{\identmat_\paramdim}
	\absbig{(\logpostchol \, \quadpointvec)_i}^\tsum
	\leq (\hesssmallmode \, n)^{-\tsum/2} \quadnum^\paramdim (2\pi)^{-\paramdim/2} \quadpointbound^{2\quadnum\quadnumadj}
\]
\subsection{Bounding Term 3 of \cref{eqn:post-informal-all-terms}}\label{sec:term3}

For any $i \in [\paramdim]$, $j \in \{0\} \cup [\quadnumadj]$, and $\tvec \in \smalltset{j}$,
\*[
	&\hspace{-2em}\int_{\ballc{\paramdim}{\parammode}{\shrinkingrad}} 
	\normaldens{\param}{\parammode}{[\eigen_{\paramdim}(\logposthess(\parammode))]^{-1} \identmat_\paramdim}
	\abssmall{(\param - \parammode)_i}^\tsum \, \dee \param \\
	&= 
	\int_{\shrinkingrad}^\infty 
	\normaldens{\paramidx}{0}{[\eigen_{\paramdim}(\logposthess(\parammode))]^{-1}}
	\abssmall{\paramidx}^\tsum \, \dee \paramidx \\
	&= 
	[\eigen_{\paramdim}(\logposthess(\parammode))]^{-1/2}
	\int_{\shrinkingrad\sqrt{\eigen_{\paramdim}(\logposthess(\parammode))}}^\infty 
	\frac{1}{\sqrt{2\pi}}
	e^{-v^2/2}
	\abssmall{v}^\tsum
	\dee v,
\]
where the last step uses the transformation $v = \paramidx  \sqrt{\eigen_{\paramdim}(\logposthess(\parammode))}$.

Then, using the further transformation $x = v^2$,
\*[
	&\hspace{-2em}\int_{\shrinkingrad\sqrt{\eigen_{\paramdim}(\logposthess(\parammode))}}^\infty 
	\frac{1}{\sqrt{2\pi}}
	e^{-v^2/2}
	\abssmall{v}^\tsum
	\dee v \\
	&= \int_{\shrinkingconst\sqrt{\log(n) \hesssmallmode}}^\infty 
	\frac{1}{\sqrt{2\pi}}
	e^{-v^2/2}
	(v^2)^{\tsum/2}
	\dee v \\
	&= \frac{1}{2} \int_{\shrinkingconst^2 \log(n) \hesssmallmode }^\infty
	\frac{x^{-1/2}}{\sqrt{2\pi}}e^{-x/2}
	x^{\tsum/2} \dee x \\
	&= \frac{\Gamma\left(\frac{\tsum+1}{2}\right) 2^{\frac{\tsum+1}{2}} }{2\sqrt{2\pi}}
	\int_{\shrinkingconst^2 \log(n) \hesssmallmode }^\infty
	\frac{1}{\Gamma\left(\frac{\tsum+1}{2}\right) 2^{\frac{\tsum+1}{2}}} x^{\frac{\tsum+1}{2} - 1} e^{-x/2} \dee x.
\]
This integral is just the tail of a chi-square distribution, which we bound using Lemma~3 of \cite{Fan}. In particular, for any $\nu \in \Nats$,
\*[
	\PP\left[ \chi_{\nu}^2 \geq \shrinkingconst^2 \log(n) \hesssmallmode \right] 
	= \PP\left[ \frac{\chi_{\nu}^2}{\nu} > 1 + \zeta_n \right] 
	\leq \exp\left\{\nu(\log(1 + \zeta_n) - \zeta_n)/2 \right\}, 
\]
where $\zeta_n = \shrinkingconst^2 \log(n) \hesssmallmode / \nu - 1$.
Substituting in $\nu = \tsum + 1$, observe that
\*[
	&\hspace{-2em}\exp\left\{\nu(\log(1 + \zeta_n) - \zeta_n)/2 \right\} \\
	&= \exp\left\{\frac{(\tsum+1)}{2}\left[\log\left(\frac{\shrinkingconst^2 \log(n) \hesssmallmode}{\tsum+1}\right) - \frac{\shrinkingconst^2 \log(n) \hesssmallmode}{\tsum+1} + 1\right] \right\} \\
	&\leq \exp\left\{\frac{(\tsum+1)}{2}\left[- \frac{\shrinkingconst^2 \log(n) \hesssmallmode}{2(\tsum+1)} + 1\right] \right\} \\
	&= e^{(\tsum+1)/2} n^{-\shrinkingconst^2 \hesssmallmode/4},
\]
where we have used that $\log(x) < x/2$ for all $x>0$.

Thus,
\*[
	&\hspace{-2em}\int_{\ballc{\paramdim}{\parammode}{\shrinkingrad}} 
	\normaldens{\param}{\parammode}{[\eigen_{\paramdim}(\logposthess(\parammode))]^{-1} \identmat_\paramdim}
	\abssmall{(\param - \parammode)_i}^\tsum \, \dee \param \\
	&\leq 
	[\eigen_{\paramdim}(\logposthess(\parammode))]^{-1/2}
	\frac{\Gamma\left(\frac{\tsum+1}{2}\right) 2^{\frac{\tsum+1}{2}} }{2\sqrt{2\pi}}
	e^{(d+1)/2} n^{-\shrinkingconst^2 \hesssmallmode/4} \\
	&\leq
	\frac{\Gamma\left(\frac{2\quadnum+1}{2}\right) (2e)^{\frac{2\quadnum+1}{2}} }{2\sqrt{2\pi \hesssmallmode}}
	n^{-\frac{\shrinkingconst^2 \hesssmallmode + 2}{4}}.
\]

\subsection{Bounding Term 4 of \cref{eqn:post-informal-all-terms}}\label{sec:term4}

First, we can control $\exp\{\abssmall{\multitaylorfunc(\param)}\}$ by observing that for any $\param \in \ball{\paramdim}{\parammode}{\shrinkingrad}$, the AM-GM inequality gives
\*[
	\absbig{\multitaylorfunc(\param)}
	&\leq
	\sum_{s = 3}^{2\quadnum - 1} \absbig{(\param - \parammode)_{[i_1\dots i_s]!} \partial^{i_1 \dots i_s} \llandp(\parammode)} +
	\absbig{(\param - \parammode)_{[i_1\dots i_{2\quadnum}]!}  \partial^{i_1\dots i_{2\quadnum}} \llandp(\parammodemid^\param)} \\
	&\leq
	\derivboundmode
	\sum_{s=3}^{2\quadnum} 
	\sum_{i_1,\dots,i_s \in [\paramdim]}
	\prod_{\iota = 1}^s \absbig{(\param - \parammode)_{i_\iota}} \\
	&\leq 
	\derivboundmode
	\sum_{s=3}^{2\quadnum} 
	\sum_{i_1,\dots,i_s \in [\paramdim]}
	\frac{1}{s}
	\sum_{\iota = 1}^s \absbig{(\param - \parammode)_{i_\iota}}^s \\
	&\leq 
	\derivboundmode \,
	\paramdim^{2\quadnum}
	\sum_{s=3}^{2\quadnum} 
	\shrinkingrad^s \\
	&= 
	\derivboundmode \,
	\paramdim^{2\quadnum}
	\sum_{s=3}^{2\quadnum} 
	\shrinkingconst^s \Big(\frac{\log(n)}{n} \Big)^{s/2} \\
	&\leq  (2\quadnum) \, \paramdim^{2\quadnum} \max\{1, \shrinkingconst^{2\quadnum}\} \derivboundmode \Big(\frac{\log(n)}{n} \Big)^{3/2}.
\]

Then, for any $\param\in\ball{\paramdim}{\parammode}{\shrinkingrad}$, $\abssmall{\multitaylorfunc(\param)}^{(\quadnumadj + 2)}$ can be written as
\*[
	&\sum_{\tvec \in \smalltset{\quadnumadj+2}} \multicoeff{\quadnumadj+2}{\tvec} \left\{ \left(\prod_{s =3}^{2\quadnum-1} \absbig{(\param - \parammode)_{[i_1\dots i_s]!} \partial^{i_1 \dots i_s  } \llandp(\parammode)}^{t_s} \right) \absbig{(\param - \parammode)_{[i_1\dots i_{2\quadnum}]!} \partial^{i_1 \dots i_{2\quadnum}} \llandp(\parammodemid^\param)}^{t_{2\quadnum}} \right\} \\
	&+ \sum_{\tvec \in \bigtset{\quadnumadj+2}} \multicoeff{\quadnumadj+2}{\tvec} \left\{ \left(\prod_{s =3}^{2\quadnum-1} \absbig{(\param - \parammode)_{[i_1\dots i_s]!} \partial^{i_1 \dots i_s  } \llandp(\parammode)}^{t_s} \right) \absbig{(\param - \parammode)_{[i_1\dots i_{2\quadnum}]!} \partial^{i_1 \dots i_{2\quadnum}} \llandp(\parammodemid^\param)}^{t_{2\quadnum}} \right\}.
\]

By definition $\sum_{s=3}^{2\quadnum} t_s = \quadnumadj+2$, so it holds that $\sum_{s=3}^{2\quadnum} s t_s \geq 3(\quadnumadj+2)$. Thus, $\abssmall{\multitaylorfunc(\param)}^{(\quadnumadj + 2)}$ can further be written as
\*[
	\sum_{\tvec \in \customtset{\quadnumadj+2}{3(\quadnumadj+2)}} \multicoeff{\quadnumadj+2}{\tvec} \left\{ \left(\prod_{s =3}^{2\quadnum-1} \absbig{(\param - \parammode)_{[i_1\dots i_s]!} \partial^{i_1 \dots i_s  } \llandp(\parammode)}^{t_s} \right) \absbig{(\param - \parammode)_{[i_1\dots i_{2\quadnum}]!} \partial^{i_1 \dots i_{2\quadnum}} \llandp(\parammodemid^\param)}^{t_{2\quadnum}} \right\}.
\]

As before, since $\param\in\ball{\paramdim}{\parammode}{\shrinkingrad}$ all derivatives of $\llandp$ are bounded by $\derivboundmode$. Similarly, the resulting polynomial in $\param$ can be bounded again using the AM-GM inequality, so
\*[
	&\int_{\ball{\paramdim}{\parammode}{\shrinkingrad}} 
	\normaldens{\param}{\parammode}{[\eigen_{\paramdim}(\logposthess(\parammode))]^{-1} \identmat_\paramdim}
	\absbig{\multitaylorfunc(\param)}^{(\quadnumadj + 2)} \dee \param \\
	&\leq (\derivboundmode)^{\quadnumadj+2}
	\sum_{\tvec \in \customtset{\quadnumadj+2}{3(\quadnumadj+2)}} 
	\frac{\multicoeff{\quadnumadj+2}{\tvec}}{3(\quadnumadj+2)} \sum_{i_1,\dots,i_{\tsum} \in [\paramdim]} \sum_{\iota=1}^{\tsum}
	\int_{\ball{\paramdim}{\parammode}{\shrinkingrad}} 
	\normaldens{\param}{\parammode}{[\eigen_{\paramdim}(\logposthess(\parammode))]^{-1} \identmat_\paramdim}
	\abssmall{(\param - \parammode)_i}^\tsum \, \dee \param \\
	&\leq
	(\derivboundmode)^{\quadnumadj+2}
	\sum_{\tvec \in \customtset{\quadnumadj+2}{3(\quadnumadj+2)}}
	\frac{\multicoeff{\quadnumadj+2}{\tvec}}{3(\quadnumadj+2)} \sum_{i_1,\dots,i_{\tsum} \in [\paramdim]} \sum_{\iota=1}^{\tsum}
	2^{\tsum/2} \, \Gamma\left(\frac{\tsum+1}{2} \right) (\hesssmallmode \, n)^{-\tsum/2} \\
	&\leq
	(\derivboundmode)^{\quadnumadj+2}
	\sum_{\tvec \in \customtset{\quadnumadj+2}{3(\quadnumadj+2)}}
	\frac{\multicoeff{\quadnumadj+2}{\tvec}}{3(\quadnumadj+2)} 
	\paramdim^{2\quadnum\quadnumadj} (2\quadnum\quadnumadj)
	2^{\quadnum\quadnumadj} \, \Gamma\left(\frac{2\quadnum\quadnumadj+1}{2} \right) (\hesssmallmode \, n)^{-\tsum/2},
\]
where we have again used Eq.\ (18) of \citet{winkelbauer12gaussian}.

\subsection{Bounding Term 5 of \cref{eqn:post-informal-all-terms}}\label{sec:term5}
In the case that $2\quadnum \neq 2 \pmod 3 $, this term can be treated in the same manner as in \cref{sec:term4}. In particular,
\*[
	&\hspace{-2em}\absbig{
	\int_{\ball{\paramdim}{\parammode}{\shrinkingrad}} 
	\normaldens{\param}{\parammode}{[\logposthess(\parammode)]^{-1}}
	\multitaylorfunc(\param)^{(\quadnumadj + 1)}
	 \dee \param} \\
	&\leq \left(\frac{\hessbigmode}{\hesssmallmode}\right)^{\paramdim/2}
	(\derivboundmode)^{\quadnumadj+1}
	\sum_{\tvec \in \customtset{\quadnumadj+1}{3(\quadnumadj+1)}}
	\frac{\multicoeff{\quadnumadj+1}{\tvec}}{3(\quadnumadj+1)} 
	\paramdim^{2\quadnum\quadnumadj} (2\quadnum\quadnumadj)
	2^{\quadnum\quadnumadj} \, \Gamma\left(\frac{2\quadnum\quadnumadj+1}{2} \right) (\hesssmallmode \, n)^{-\tsum/2}.
\]
However, in the case that $2\quadnum = 2 \pmod 3$, $3(\quadnumadj +1 ) = 2\quadnum + 1$, and we write $\multitaylorfunc(\param)^{(\quadnumadj + 1)}$ as
\[\label{eq:term5split}
	&\hspace{-2em}\sum_{\tvec \in \customtsetequal{\quadnumadj+1}{3(\quadnumadj+1)}} \multicoeff{\quadnumadj+1}{\tvec} \left\{ \left(\prod_{s =3}^{2\quadnum-1} \left((\param - \parammode)_{[i_1\dots i_s]!} \partial^{i_1 \dots i_s  } \llandp(\parammode)\right)^{t_s} \right) \left((\param - \parammode)_{[i_1\dots i_{2\quadnum}]!} \partial^{i_1 \dots i_{2\quadnum}} \llandp(\parammodemid^\param)\right)^{t_{2\quadnum}} \right\}\\
	&\hspace{-1em}+ \sum_{\tvec \in \customtset{\quadnumadj+1}{3(\quadnumadj+1)+1}} \multicoeff{\quadnumadj+1}{\tvec} \left\{ \left(\prod_{s =3}^{2\quadnum-1} \left((\param - \parammode)_{[i_1\dots i_s]!} \partial^{i_1 \dots i_s  } \llandp(\parammode)\right)^{t_s} \right) \left((\param - \parammode)_{[i_1\dots i_{2\quadnum}]!} \partial^{i_1 \dots i_{2\quadnum}} \llandp(\parammodemid^\param)\right)^{t_{2\quadnum}} \right\}.
\]
For the first of these terms, note that since $2\quadnum = 3(\quadnumadj+1)-1 > \quadnumadj + 1$, $t_{2\quadnum} = 0$. Further, by the symmetry of the multivariate normal distribution and the fact that $3(\quadnumadj+1)$ is odd, 
\*[
	\int_{\ball{\paramdim}{\parammode}{\shrinkingrad}}  \normaldens{\param}{\parammode}{[\logposthess(\parammode)]^{-1}} \sum_{\tvec \in \customtsetequal{\quadnumadj+1}{3(\quadnumadj+1)}} \multicoeff{\quadnumadj+1}{\tvec} \left(\prod_{s =3}^{2\quadnum-1} \left((\param - \parammode)_{[i_1\dots i_s]!} \partial^{i_1 \dots i_s  } \llandp(\parammode)\right)^{t_s} \right)  \dee \param = 0.
\]

Thus, in this case the magnitude of Term 5 can be instead bounded by the second term in \cref{eq:term5split}, giving 
\*[
	&\hspace{-2em}\absbig{
	\int_{\ball{\paramdim}{\parammode}{\shrinkingrad}} 
	\normaldens{\param}{\parammode}{[\logposthess(\parammode)]^{-1}}
	\multitaylorfunc(\param)^{(\quadnumadj + 1)}
	 \dee \param} \\
	&\leq \left(\frac{\hessbigmode}{\hesssmallmode}\right)^{\paramdim/2}
	(\derivboundmode)^{\quadnumadj+1}
	\sum_{\tvec \in \customtset{\quadnumadj+1}{3(\quadnumadj+1) + 1}}
	\frac{\multicoeff{\quadnumadj+1}{\tvec}}{3(\quadnumadj+1) + 1} 
	\paramdim^{2\quadnum\quadnumadj} (2\quadnum\quadnumadj)
	2^{\quadnum\quadnumadj} \, \Gamma\left(\frac{2\quadnum\quadnumadj+1}{2} \right) (\hesssmallmode \, n)^{-\tsum/2},
\]
where we have once again applied the same argument as in \cref{sec:term4}.

\subsection{Bounding Term 6 of \cref{eqn:post-informal-all-terms}}\label{sec:term6}

By the same argument for bounding $\exp\{\abssmall{\multitaylorfunc(\param)}\}$ that we used in \cref{sec:term4}, for all
$\quadpointvec \in \quadpointset$ it holds that
\*[
	\absbig{\multitaylorfunc(\logpostchol \, \quadpointvec + \parammode)}
	&\leq 
	\derivboundmode
	\sum_{s=3}^{2\quadnum} 
	\sum_{i_1,\dots,i_s \in [\paramdim]}
	\frac{1}{s}
	\sum_{\iota = 1}^s \absbig{(\logpostchol \, \quadpointvec)_{i_\iota}}^s \\
	&\leq 
	\derivboundmode \,
	\paramdim^{2\quadnum}
	\sum_{s=3}^{2\quadnum} 
	(\hesssmallmode \, n)^{-s/2} \quadpointbound^s  \\
	&\leq  (2\quadnum) \, \paramdim^{2\quadnum} \max\{1, \quadpointbound^{2\quadnum}\} \derivboundmode \max\left\{(\hesssmallmode \, n)^{-3/2}, (\hesssmallmode \, n)^{-\quadnum}\right\}.
\]

Similarly, by the same argument for bounding $\abssmall{\multitaylorfunc(\param)}^{\quadnumadj+2}$ that we used in \cref{sec:term4},
\*[
	&\hspace{-2em}\sum\limits_{\quadpointvec\in\quadpointset} \weight(\quadpointvec) \normaldens{\quadpointvec}{0}{\identmat_\paramdim} 
	\absbig{\multitaylorfunc(\logpostchol \, \quadpointvec + \parammode)}^{(\quadnumadj + 2)} \\
	&\leq
	(\derivboundmode)^{\quadnumadj+2}
	\sum_{\tvec \in \customtset{\quadnumadj+2}{3(\quadnumadj+2)}} 
	\frac{\multicoeff{\quadnumadj+2}{\tvec}}{3(\quadnumadj+2)} \sum_{i_1,\dots,i_{\tsum} \in [\paramdim]} \sum_{\iota=1}^{\tsum}
	\sum\limits_{\quadpointvec\in\quadpointset} \weight(\quadpointvec) \normaldens{\quadpointvec}{0}{\identmat_\paramdim}
	\absbig{(\logpostchol \, \quadpointvec)_{i_\iota}}^\tsum \\
	&\leq
	(\derivboundmode)^{\quadnumadj+2}
	\sum_{\tvec \in \customtset{\quadnumadj+2}{3(\quadnumadj+2)}}
	\frac{\multicoeff{\quadnumadj+2}{\tvec}}{3(\quadnumadj+2)} 
	\paramdim^{2\quadnum\quadnumadj} (2\quadnum\quadnumadj)
	(\hesssmallmode \, n)^{-\tsum/2} \quadnum^\paramdim (2\pi)^{-\paramdim/2} \quadpointbound^{2\quadnum\quadnumadj},
\]
where the last step uses the bound of \cref{sec:term2}.

\subsection{Bounding Term 7 of \cref{eqn:post-informal-all-terms}}\label{sec:term7}
We handle this term using a similar logical argument to \cref{sec:term5}.
In the case that $2\quadnum \neq 2 \pmod 3$, this term can be treated in the same manner as \cref{sec:term6}, giving 
\*[ 
	&\hspace{-2em}\absbig{\sum\limits_{\quadpointvec\in\quadpointset} \weight(\quadpointvec) \normaldens{\quadpointvec}{0}{\identmat_\paramdim} 
	\frac{\multitaylorfunc(\logpostchol \, \quadpointvec + \parammode)^{(\quadnumadj + 1)}}{(\quadnumadj + 1)!}} \\
	&\leq  (\derivboundmode)^{\quadnumadj+1}
	\sum_{\tvec \in \customtset{\quadnumadj+1}{3(\quadnumadj+1)}}
	\frac{\multicoeff{\quadnumadj+1}{\tvec}}{3(\quadnumadj+1)} 
	\paramdim^{2\quadnum\quadnumadj} (2\quadnum\quadnumadj)
	(\hesssmallmode \, n)^{-\tsum/2} \quadnum^\paramdim (2\pi)^{-\paramdim/2} \quadpointbound^{2\quadnum\quadnumadj}.
\]

However, in the case that $2\quadnum = 2 \pmod 3$, we split $\multitaylorfunc(\logpostchol \, \quadpointvec + \parammode)^{(\quadnumadj + 1)}$ in the same manner as in \cref{eq:term5split}. 
Namely, since $\quadrule{\quadpointset}{\weight}$ is symmetric, $\normaldens{\quadpointvec}{0}{\identmat_\paramdim}$ is symmetric around zero, and $3(\quadnumadj+1)$ is odd,
\*[
\sum\limits_{\quadpointvec\in\quadpointset} \weight(\quadpointvec) \normaldens{\quadpointvec}{0}{\identmat_\paramdim}  \sum_{\tvec \in \customtsetequal{\quadnumadj+1}{3(\quadnumadj+1)}} \multicoeff{\quadnumadj+1}{\tvec} \left(\prod_{s =3}^{2\quadnum-1} \left((\logpostchol \, \quadpointvec)_{[i_1\dots i_s]!} \partial^{i_1 \dots i_s  } \llandp(\parammode)\right)^{t_s} \right) = 0.
\]
That is, in this case 
\*[
	&\hspace{-2em}\absbig{\sum\limits_{\quadpointvec\in\quadpointset} \weight(\quadpointvec) \normaldens{\quadpointvec}{0}{\identmat_\paramdim} 
	\frac{\multitaylorfunc(\logpostchol \, \quadpointvec + \parammode)^{(\quadnumadj + 1)}}{(\quadnumadj + 1)!}} \\
	&\leq  (\derivboundmode)^{\quadnumadj+1}
	\sum_{\tvec \in \customtset{\quadnumadj+1}{3(\quadnumadj+1)+1}}
	\frac{\multicoeff{\quadnumadj+1}{\tvec}}{3(\quadnumadj+1) + 1} 
	\paramdim^{2\quadnum\quadnumadj} (2\quadnum\quadnumadj)
	(\hesssmallmode \, n)^{-\tsum/2} \quadnum^\paramdim (2\pi)^{-\paramdim/2} \quadpointbound^{2\quadnum\quadnumadj}. 
\]

\subsection{Combining the Bounds on \cref{eqn:post-informal-all-terms}}

We now combine the results of \cref{sec:term1,sec:term2,sec:term3,sec:term4,sec:term5,sec:term6,sec:term7} and apply them to \cref{eqn:post-informal-all-terms}. 
We group the terms slightly more compactly than in \cref{eqn:post-informal-all-terms} to summarize this, grouping them by 
bounds on the summation or integration of high-degree polynomials,
bounds on the tails of low-degree polynomials, 
and bounds on the summation or integration of high-degree polynomials involving a remainder of the expansion of the exponential function. Consider,
\*[
	&\text{ Term 1 + Term 2 + Term 5 + Term 7} \\
	&=
	\left(\frac{\hessbigmode}{\hesssmallmode}\right)^{\paramdim/2} \sum_{j=1}^\quadnumadj (\derivboundmode)^{j} 
	\sum_{\tvec \in \bigtset{j}} \frac{\multicoeff{j}{\tvec} \,  }{2\quadnum} \paramdim^{2\quadnum\quadnumadj} (2\quadnum\quadnumadj)
	2^{\quadnum \quadnumadj} \, \Gamma\left(\frac{2\quadnum\quadnumadj+1}{2} \right) (\hesssmallmode \, n)^{-\tsum/2} \\
	&\qquad+ 
	\left(\frac{\hessbigmode}{\hesssmallmode}\right)^{\paramdim/2}
	(\derivboundmode)^{\quadnumadj+1}
	\sum_{\tvec \in \customtset{\quadnumadj+1}{3(\quadnumadj+1) + \ind\{2\quadnum = 2 \!\!\!\! \pmod 3\}}}
	\frac{\multicoeff{\quadnumadj+1}{\tvec}}{3(\quadnumadj +1)} 
	\paramdim^{2\quadnum\quadnumadj} (2\quadnum\quadnumadj)
	2^{\quadnum\quadnumadj} \, \Gamma\left(\frac{2\quadnum\quadnumadj+1}{2} \right) (\hesssmallmode \, n)^{-\tsum/2}\\
	&\qquad +
	\sum_{j=1}^\quadnumadj (\derivboundmode)^{j} 
	\sum_{\tvec \in \bigtset{j}} \frac{\multicoeff{j}{\tvec} \,  }{2\quadnum} 
	\paramdim^{2\quadnum\quadnumadj} (2\quadnum\quadnumadj)
	(\hesssmallmode \, n)^{-\tsum/2} \quadnum^\paramdim (2\pi)^{-\paramdim/2} \quadpointbound^{2\quadnum\quadnumadj}\\
	&\qquad+ 
	(\derivboundmode)^{\quadnumadj+1}
	\sum_{\tvec \in \customtset{\quadnumadj+1}{3(\quadnumadj+1) + \ind\{2\quadnum = 2 \!\!\!\! \pmod 3\}}}
	\frac{\multicoeff{\quadnumadj+1}{\tvec}}{3(\quadnumadj+1)} 
	\paramdim^{2\quadnum\quadnumadj} (2\quadnum\quadnumadj)
	(\hesssmallmode \, n)^{-\tsum/2} \quadnum^\paramdim (2\pi)^{-\paramdim/2} \quadpointbound^{2\quadnum\quadnumadj}, \]
which can be upper bounded by:
\[\label{eqn:term_high}
	&\hspace{-1em}\leq \paramdim^{2\quadnum\quadnumadj} \,\quadnumadj
	\left[
	\left(\frac{\hessbigmode}{\hesssmallmode}\right)^{\paramdim/2}2^{\quadnum \quadnumadj} \, \Gamma\left(\frac{2\quadnum\quadnumadj+1}{2} \right) 
	+
	\quadnum^\paramdim (2\pi)^{-\paramdim/2} \quadpointbound^{2\quadnum\quadnumadj}
	\right] 
	\\
	&\times
	\bigg[
	\sum_{j=1}^{\quadnumadj}
	(\derivboundmode)^{j} 
	\sum_{\tvec \in \bigtset{j}} \multicoeff{j}{\tvec}  \,
	(\hesssmallmode \, n)^{-\tsum/2}
	+
	(\derivboundmode)^{\quadnumadj + 1} 
	\sum_{\tvec \in \customtset{\quadnumadj+1}{3(\quadnumadj +1) + \ind\{2\quadnum = 2 \!\!\!\! \pmod 3\}} } \!\!\!\!\!\!\!\!\!\!\! \multicoeff{\quadnumadj+1}{\tvec}  \,
	(\hesssmallmode \, n)^{-\tsum/2}
	\bigg].	
\]

Additionally
\[\label{eqn:term_tails}
	\text{ Term 3}
	&= \left(\frac{\hessbigmode}{\hesssmallmode}\right)^{\paramdim/2}
	\sum_{j=0}^\quadnumadj (\derivboundmode)^{j} 
	\sum_{\tvec \in \smalltset{j}} \multicoeff{j}{\tvec}
	\paramdim^{2\quadnum\quadnumadj} (2\quadnum\quadnumadj)
	\frac{\Gamma\left(\frac{2\quadnum+1}{2}\right) (2e)^{\frac{2\quadnum+1}{2}} }{2\sqrt{2\pi \hesssmallmode}}
	n^{-\frac{\shrinkingconst^2 \hesssmallmode + 2}{4}} .
\]

Finally,
\*[
    &\text{ Term 4 + Term 6}\\
    &= 
	\left(\frac{\hessbigmode}{\hesssmallmode}\right)^{\paramdim/2}
	\exp\left\{(2\quadnum) \, \paramdim^{2\quadnum} \max\{1, \shrinkingconst^{2\quadnum}\} \derivboundmode \Big(\frac{\log(n)}{n} \Big)^{3/2} \right\} \\
	&\qquad\qquad \times (\derivboundmode)^{\quadnumadj+2}
	\sum_{\tvec \in \customtset{\quadnumadj+2}{3(\quadnumadj+2)}}
	\frac{\multicoeff{\quadnumadj+2}{\tvec}}{3(\quadnumadj +2)} 
	\paramdim^{2\quadnum\quadnumadj} (2\quadnum\quadnumadj)
	2^{\quadnum\quadnumadj} \, \Gamma\left(\frac{2\quadnum\quadnumadj+1}{2} \right) (\hesssmallmode \, n)^{-\tsum/2} \\
	&\qquad + 
	\exp\left\{(2\quadnum) \, \paramdim^{2\quadnum} \max\{1, \quadpointbound^{2\quadnum}\} \derivboundmode \max\left\{(\hesssmallmode \, n)^{-3/2}, (\hesssmallmode \, n)^{-\quadnum}\right\}\right\} \\
	&\qquad\qquad \times (\derivboundmode)^{\quadnumadj+2}
	\sum_{\tvec \in \customtset{\quadnumadj+2}{3(\quadnumadj+2)}}
	\frac{\multicoeff{\quadnumadj+2}{\tvec}}{3(\quadnumadj+2)} 
	\paramdim^{2\quadnum\quadnumadj} (2\quadnum\quadnumadj)
	(\hesssmallmode \, n)^{-\tsum/2} \quadnum^\paramdim (2\pi)^{-\paramdim/2} \quadpointbound^{2\quadnum\quadnumadj}, 
	\]
	which can be upper bounded by:
\[ \label{eqn:term_highexp}
	&\leq  \Bigg[
	\left(\frac{\hessbigmode}{\hesssmallmode}\right)^{\paramdim/2}2^{\quadnum \quadnumadj} \, \Gamma\left(\frac{2\quadnum\quadnumadj+1}{2} \right) 
	\exp\left\{(2\quadnum) \, \paramdim^{2\quadnum} \max\{1, \shrinkingconst^{2\quadnum}\} \derivboundmode \Big(\frac{\log(n)}{n} \Big)^{3/2} \right\} \\
	&\qquad+
	\quadnum^\paramdim (2\pi)^{-\paramdim/2} \quadpointbound^{2\quadnum\quadnumadj}
	\exp\left\{(2\quadnum) \, \paramdim^{2\quadnum} \max\{1, \quadpointbound^{2\quadnum}\} \derivboundmode \max\left\{(\hesssmallmode \, n)^{-3/2}, (\hesssmallmode \, n)^{-\quadnum}\right\}\right\}
	\Bigg] \\
	&\quad\times(\derivboundmode)^{\quadnumadj+2} 
	\sum_{\tvec \in \customtset{\quadnumadj+2}{3(\quadnumadj+2)}} \multicoeff{\quadnumadj+2}{\tvec}  \,
	\paramdim^{2\quadnum\quadnumadj} \,\quadnumadj
	(\hesssmallmode \, n)^{-\tsum/2}.
\]

Combining \cref{eqn:term_high,eqn:term_tails,eqn:term_highexp}, we obtain the final inequality

\*[
	&\Bigg\lvert
	\int_{\ball{\paramdim}{\parammode}{\shrinkingrad}}
	\normaldens{\param}{\parammode}{[\logposthess(\parammode)]^{-1}} 
	\exp\left\{\multitaylorfunc(\param) \right\}\dee\param
	- 
	\sum\limits_{\quadpointvec\in\quadpointset} \weight(\quadpointvec) \normaldens{\quadpointvec}{0}{\identmat_\paramdim} \exp\left\{\multitaylorfunc\Big(\logpostchol \, \quadpointvec + \parammode\Big)\right\}
	\Bigg \rvert \\
	&\leq
	\paramdim^{2\quadnum\quadnumadj} \,\quadnumadj
	\left[
	\left(\frac{\hessbigmode}{\hesssmallmode}\right)^{\paramdim/2}2^{\quadnum \quadnumadj} \, \Gamma\left(\frac{2\quadnum\quadnumadj+1}{2} \right) 
	+
	\quadnum^\paramdim (2\pi)^{-\paramdim/2} \quadpointbound^{2\quadnum\quadnumadj}
	\right] 
	\\
	&\qquad\times
	\left[
	\sum_{j=1}^{\quadnumadj}
	(\derivboundmode)^{j} 
	\sum_{\tvec \in \bigtset{j}} \multicoeff{j}{\tvec}  \,
	(\hesssmallmode \, n)^{-\tsum/2}
	+
	(\derivboundmode)^{\quadnumadj + 1} 
	\sum_{\tvec \in \customtset{\quadnumadj+1}{3(\quadnumadj +1) + \ind\{2\quadnum = 2 \!\!\!\! \pmod 3\}} } \multicoeff{\quadnumadj+1}{\tvec}  \,
	(\hesssmallmode \, n)^{-\tsum/2}
	\right]	
	\\
	&\quad +
	\left(\frac{\hessbigmode}{\hesssmallmode}\right)^{\paramdim/2}
	\sum_{j=0}^\quadnumadj (\derivboundmode)^{j} 
	\sum_{\tvec \in \smalltset{j}} \multicoeff{j}{\tvec}
	\paramdim^{2\quadnum\quadnumadj} (2\quadnum\quadnumadj)
	\frac{\Gamma\left(\frac{2\quadnum+1}{2}\right) (2e)^{\frac{2\quadnum+1}{2}} }{2\sqrt{2\pi \hesssmallmode}}
	n^{-\frac{\shrinkingconst^2 \hesssmallmode + 2}{4}} \\
	&\quad +
	\Bigg[
	\left(\frac{\hessbigmode}{\hesssmallmode}\right)^{\paramdim/2}2^{\quadnum \quadnumadj} \, \Gamma\left(\frac{2\quadnum\quadnumadj+1}{2} \right) 
	\exp\left\{(2\quadnum) \, \paramdim^{2\quadnum} \max\{1, \shrinkingconst^{2\quadnum}\} \derivboundmode \Big(\frac{\log(n)}{n} \Big)^{3/2} \right\} \\
	&\qquad\qquad+
	\quadnum^\paramdim (2\pi)^{-\paramdim/2} \quadpointbound^{2\quadnum\quadnumadj}
	\exp\left\{(2\quadnum) \, \paramdim^{2\quadnum} \max\{1, \quadpointbound^{2\quadnum}\} \derivboundmode \max\left\{(\hesssmallmode \, n)^{-3/2}, (\hesssmallmode \, n)^{-\quadnum}\right\}\right\}
	\Bigg] \\
	&\qquad\times(\derivboundmode)^{\quadnumadj+2} 
	\sum_{\tvec \in \customtset{\quadnumadj+2}{3(\quadnumadj+2)}} \multicoeff{\quadnumadj+2}{\tvec}  \,
	\paramdim^{2\quadnum\quadnumadj} \,\quadnumadj
	(\hesssmallmode \, n)^{-\tsum/2}
	.
\]

The statement of the lemma then follows by taking the worst-case choices of the sum indices for the constants, which all depend combinatorially on only $\paramdim$ and $\quadnum$. Additionally, the statement of the lemma includes the dependence on $\abssmall{\logpostchol} \leq [\eigen_{\paramdim}(\logposthess(\parammode))]^{-\paramdim/2} = (\hesssmallmode \, n)^{-\paramdim/2}$ that was dropped after \cref{eqn:main-lemma_polynomial-combo}.

\subsection{Laplace Approximation Proof ($\quadnum=1$)}\label{sec:laplace_proof}

When $\quadnum=1$ the argument follows nearly identically. The primary difference is that for \cref{eqn:initial-expansion} we use a fourth-order rather than second-order initial Taylor expansion. The proof then follows through as though $\quadnum=2$ for the expansions, although the terms of \cref{eqn:post-informal-all-terms} are slightly different. By definition, $\quadnumadj = 0$, so Terms 1 and 2 no longer appear. Further, the only valid $\tvec$ for Term 3 is all zeros since $j=0$, and the empty sum cancels with $1/\tsum$, leaving only the tail of a multivariate Normal. The bound used to control this in \cref{sec:term3} then still applies, recalling that while we have expanded as though $\quadnum=2$, we still actually have $\quadnum=1$. Terms 4, 5, 6, and 7 are treated in exactly the same way.

\manualendproof

\section{Proofs of Convergence Rates for\jasa{\\} Approximate Posterior Summaries}\label{sec:helpers}

\subsection{Proofs for Exact Integration of Approximate Posterior}\label{sec:cor-app-proof}

As mentioned in \cref{sec:convergence}, in the idealized situation where one can exactly integrate the approximate posterior, the convergence rate of \cref{thm:mainresult} is preserved without additional assumptions. We now describe summary statistics of interest and prove this is the case. Of particular interest is credible set coverage and quantiles, since these require integration over a subset of the parameter space, and consequently the results of \cref{sec:convergence-summary} do not apply. For details on how these quantities are computed in practice (for which it remains open to prove convergence rates), see \cref{sec:computational}.

First, letting $\Borel{\paramdim}$ denote the Borel sets on $\Reals^\paramdim$, a \emph{credible function} is any function $\credfunc: \Reals^{n \times\datadim} \times [0,1] \to \Borel{\paramdim}$ such that for all datasets $\data$ and $\alpha\in[0,1]$,
\*[
	\int_{\credfunc(\data,\alpha)}\post{\param}\dee\param = \alpha.
\]
When $\data$ is clear, we denote $\credfunc(\data,\alpha)$ by $\credfunc_n(\alpha)$, and call the output an \emph{$\alpha$-credible set}.
Beyond generic credible sets, we are also interested in the \emph{marginal posterior quantiles}. 
For $j \in [\paramdim] = \{1,\dots,\paramdim\}$, the \emph{marginal posterior} evaluated at $\paramidx \in \Reals$ is
\*[
	\margpost{j}{\paramidx} = \int_{ \Reals^{\paramdim-1}} \post{\paramidx_{1}, \dots, \paramidx_{j -1}, \paramidx, \paramidx_{j+1}, \dots, \paramidx_{\paramdim}} \dee\param_{-j},
\]
and we denote its CDF by $\cdfpost{j}(x)$.
Further, the \emph{pseudo-inverse} of the marginal posterior CDF is defined in the usual way for $y \in [0,1]$ by
\*[
	\cdfpostinv{j}(y) = \inf \big\{x \in \Reals \setdelim \cdfpost{j}(x) \geq y \big\}.
\]
Then, for any $j \in [\paramdim]$ and $\alpha\in[0,1]$, the marginal posterior quantile is $\quant{j}{\alpha} = \cdfpostinv{j}(\alpha)$.
The posterior median $\quant{j}{0.5}$ is often used as a point estimate of $\paramtrueidx_{j}$, and the $95\%$ posterior credible interval $(\quant{j}{0.025},\quant{j}{0.975})$
is often used as a measure of uncertainty for this estimate. In addition to credible intervals, it is of interest to compute posterior moments, defined for any measurable function $\meanfunc:\paramspace \to \Reals$ by $\EE[\meanfunc(\param) \setdelim \data] = \int_{\paramspace} \meanfunc(\param) \post{\param} \dee\param$.

Using the accent $\check{\square}$ to denote exact integration of any approximate posterior $\approxpost{\param}$, an \emph{approximate credible function} $\intapproxcredfunc$ is analogous to a credible function that satisfies
\*[
	\int_{\intapproxcredfunc(\data,\alpha)}\approxpost{\param}\dee\param = \alpha.
\]
Similarly, 
\*[
	\intapproxmargpost{j}{\paramidx} = \int_{ \Reals^{\paramdim-1}} \approxpost{\paramidx_{1}, \dots, \paramidx_{j -1}, \paramidx, \paramidx_{j+1}, \dots, \paramidx_{\paramdim}} \dee\param_{-j},
\]
and $\cdfintapprox{j}(\cdot)$, $\cdfintapproxinv{j}(\cdot)$, and $\intapproxquant{j}{\cdot}$ are defined using $\intapproxmargpost{j}{\cdot}$.
Finally, define $\intapproxEE[\meanfunc(\param) \setdelim \data] = \int_{\paramspace} \meanfunc(\param) \approxpost{\param} \dee\param$.

\begin{restatable}{corollary}{Applications}
\label{cor:applications}
If $\approxpost{\param}$ satisfies the conditions of \cref{thm:mainresult},\\
{\upshape i)} 
For any $\intapproxcredfunc$,
\*[
	\lim_{n \to \infty}
	\datatrueprobn
	\bigg(
	\sup_{\alpha \in [0,1]}
	\absbig{\int_{\intapproxcredfunc_n(\alpha)}\post{\param}\dee\param - \alpha}
	\leq \constt \, n^{-\lfloor \frac{\quadnum+2}{3} \rfloor}
	\bigg)
	= 1.
\]
{\upshape ii)}
For all $\cdfderivbound>0$,
\*[
	\lim_{n \to \infty}
	\datatrueprobn
	\bigg(
	\sup_{j\in[\paramdim],\alpha\in[0,1]}\bigg\{\absbig{\intapproxquant{j}{\alpha} - \quant{j}{\alpha}} \sT \margpost{j}{\quant{j}{\alpha}} \geq \cdfderivbound\bigg\}
	\leq \frac{2 \constt}{\cdfderivbound} \, n^{-\lfloor \frac{\quadnum+2}{3} \rfloor}
	\bigg)
	= 1.
\]
{\upshape iii)}
For all measurable $\meanfunc:\paramspace \to \Reals$ with $\EE[\meanfunc(\param)\setdelim\data] < \infty$ a.s.,
\*[
	\lim_{n \to \infty}
	\datatrueprobn
	\bigg(
	\absbig{\frac{\EE[\meanfunc(\param) \setdelim \data]}{\intapproxEE[\meanfunc(\param) \setdelim \data]} - 1} \leq \constt \, \, n^{-\lfloor \frac{\quadnum+2}{3} \rfloor}
	\bigg)
	= 1. 
\]
\end{restatable}
\begin{remark}
\cref{cor:applications} ii) quantifies the notion that approximating quantiles in regions where the posterior cumulative distribution function is very flat is more difficult than regions where it is steep, which is seen in typical applications.
\end{remark}

\begin{proof}[Proof of \cref{cor:applications}]
{\upshape i)} 
For each $n$,
\*[
	&\hspace{-2em}
	\datatrueprobn
	\bigg(
	\sup_{\alpha \in [0,1]}
	\absbig{\int_{\intapproxcredfunc_n(\alpha)}\post{\param}\dee\param - \alpha}
	\leq \constt \, n^{-\lfloor \frac{\quadnum+2}{3} \rfloor}
	\bigg) \\
	&\geq 
	\datatrueprobn
	\bigg(
	\sup_{\alpha\in[0,1]} 
	\sup_{\paramdum\in\paramspace}
	\absbig{\frac{\post{\paramdum}}{\quadadaptapprox{\quadpointset}{\weight}{\paramdum}} \int_{\intapproxcredfunc_n(\alpha)} \quadadaptapprox{\quadpointset}{\weight}{\param} \dee \param - \alpha} \leq \constt \, n^{-\lfloor \frac{\quadnum+2}{3} \rfloor}
	\bigg) \\
	&=
	\datatrueprobn
	\bigg(
	\sup_{\alpha\in[0,1]} 
	\alpha
	\sup_{\paramdum\in\paramspace}
	\absbig{\frac{\post{\paramdum}}{\quadadaptapprox{\quadpointset}{\weight}{\paramdum}} - 1} \leq \constt \, n^{-\lfloor \frac{\quadnum+2}{3} \rfloor}
	\bigg) \\
	&=
	\datatrueprobn
	\bigg(
	\sup_{\paramdum\in\paramspace}
	\absbig{\frac{\post{\paramdum}}{\quadadaptapprox{\quadpointset}{\weight}{\paramdum}} - 1} \leq \constt \, n^{-\lfloor \frac{\quadnum+2}{3} \rfloor}
	\bigg).
\]
The results follows since the RHS tends to $1$ as $n \to \infty$ by applying \cref{lem:relative-helper} to \cref{cor:rateofothererrors}.\\

\noindent
{\upshape ii)} 
Fix $n$, and let $\boundofinterest = \constt \, n^{-\lfloor \frac{\quadnum+2}{3} \rfloor}$, where $\constt$ is the constant from \cref{thm:mainresult}. Suppose that $\norm{\post{\cdot} - \quadadaptapprox{\quadpointset}{\weight}{\cdot}}_{\mathrm{TV}} \leq \boundofinterest$. 
By smoothness conditions, $\frac{\dee}{\dee q}\cdfpost{j}(q) = \margpost{j}{q}$ for every $j \in [\paramdim]$.
Consider $j \in [\paramdim]$ and $\alpha\in[0,1]$ such that
\*[
	\margpost{j}{\quant{j}{\alpha}} \geq \cdfderivbound.
\]
This implies that
\*[
	\frac{\dee}{\dee y}\cdfpostinv{j}(y)\Big\rvert_{\alpha} \leq 1/\cdfderivbound,
\]
so
\*[
	\cdfpostinv{j}(\alpha) - \cdfpostinv{j}(\alpha - \boundofinterest)
	\leq \boundofinterest/\cdfderivbound.
\]

Choose an arbitrary $y$ (which we have just guaranteed exists) such that
$\cdfpostinv{j}(\alpha) - 2\boundofinterest/\cdfderivbound < y < \cdfpostinv{j}(\alpha - \boundofinterest)$.
This implies that $\cdfpost{j}(y) < \alpha - \boundofinterest$, which implies that $\cdfintapprox{j}(y) < \alpha$, and thus $\cdfintapproxinv{j}(\alpha) \geq y$. 
That is, $\cdfpostinv{j}(\alpha) - \cdfintapproxinv{j}(\alpha) \leq 2\boundofinterest/\cdfderivbound$.
The reverse direction follows by exactly the same logic, giving $\abssmall{\quant{j}{\alpha} - \intapproxquant{j}{\alpha}}
	= \abssmall{\cdfpostinv{j}(\alpha) - \cdfintapproxinv{j}(\alpha)}
	\leq 2 \boundofinterest/\cdfderivbound$.

Putting this together gives
\*[
	&\hspace{-2em}\lim_{n \to \infty}
	\datatrueprobn
	\bigg(
	\sup_{j\in[\paramdim],\alpha\in[0,1]}\bigg\{\absbig{\intapproxquant{j}{\alpha} - \quant{j}{\alpha}} \sT \margpost{j}{\quant{j}{\alpha}} \geq \cdfderivbound\bigg\}
	\leq 2 \boundofinterest/\cdfderivbound
	\bigg) \\
	&\geq
	\lim_{n \to \infty} \datatrueprobn\bigg(\norm{\post{\cdot} - \quadadaptapprox{\quadpointset}{\weight}{\cdot}}_{\mathrm{TV}} \leq \constt \, n^{-\lfloor \frac{\quadnum+2}{3} \rfloor}\bigg) \\
	&= 1.
\]

{\upshape iii)} 
Rearranging and then applying \cref{thm:mainresult} gives
\*[
	&\hspace{-2em}\lim_{n\to\infty}\datatrueprobn\bigg( \absbig{\intapproxEE[f(\param) \setdelim \data] - \EE[f(\param) \setdelim \data]} \leq \constt \, \EE[f(\param) \setdelim \data] \, n^{-\lfloor \frac{\quadnum+2}{3} \rfloor}\bigg) \\
	&= \lim_{n\to\infty}\datatrueprobn\left( \absbig{\int_{\paramspace} f(\param)\Big[\quadadaptapprox{\quadpointset}{\weight}{\param} - \post{\param}\Big] \dee\param} \leq \constt \, \EE[f(\param) \setdelim \data] \, n^{-\lfloor \frac{\quadnum+2}{3} \rfloor}\right) \\
	&= \lim_{n\to\infty}\datatrueprobn\left( \absbig{\int_{\paramspace} f(\param)\Big[\frac{\poststar{\param}}{\quadadaptmarg{\quadpointset}{\weight}} - \frac{\poststar{\param}}{\dist(\data)}\Big] \dee\param} \leq \constt \, \EE[f(\param) \setdelim \data] \, n^{-\lfloor \frac{\quadnum+2}{3} \rfloor}\right) \\
	&= \lim_{n\to\infty}\datatrueprobn\left( \absbig{\frac{1}{\quadadaptmarg{\quadpointset}{\weight}} - \frac{1}{\dist(\data)}} \, \dist(\data) \, \absbig{\int_{\paramspace} f(\param)\post{\param} \dee\param} \leq \constt \, \EE[f(\param) \setdelim \data] \, n^{-\lfloor \frac{\quadnum+2}{3} \rfloor}\right) \\
	&= \lim_{n\to\infty}\datatrueprobn\left( \absbig{\frac{\dist(\data)}{\quadadaptmarg{\quadpointset}{\weight}} - 1} \EE[f(\param) \setdelim \data] \leq \constt \, \EE[f(\param) \setdelim \data] \, n^{-\lfloor \frac{\quadnum+2}{3} \rfloor}\right) \\
	&= 1.
\]
The result then follows from \cref{lem:relative-helper}.

\end{proof}

\subsection{Proofs for Approximating Marginal Posterior Density}\label{sec:proof-pos-marg-compute}

Recall \cref{fact:pos-marg-compute}, which states that the convergence rate is preserved for marginal posterior approximations for values of $\margparamval$ that satisfy certain conditions.

\PosMargCompute*

We now provide sufficient conditions for \cref{fact:pos-marg-compute} to apply.
\begin{proposition}\label{fact:pos-marg-vals}
Letting $\parammode = (\margparammode, \nuisparammode)$ be the decomposition of the unrestricted mode, 
for all $\constt^\prime$ there exists $\constt$ such that under \cref{assn:kderiv,assn:hessian,assn:limsup,assn:consistency,assn:prior}
\*[
\lim_{n \to \infty}
	\datatrueprobn
	\bigg(
	\sup_{\margparamval \in \ball{\interestdim}{\margparammode}{\constt^\prime\, n^{-1/2}}} \absbig{\frac{\dist(\margparamval \setdelim \data)}{\approxdist(\margparamval \setdelim \data)} - 1} \leq \constt \, \, n^{-\lfloor \frac{\quadnum+2}{3} \rfloor}
	\bigg)
	= 1.
\]
\end{proposition}
\begin{remark}
\cref{fact:pos-marg-vals} states that the posterior marginal density can be accurately approximated in a $n^{-1/2}$-neighbourhood of the unrestricted posterior mode without any additional assumptions. By the proof of \cref{thm:mainresult}, the posterior is sufficiently small outside of this neighbourhood such that for large enough $n$ the marginal posterior density is well-approximated everywhere.
\end{remark}

\begin{proof}[Proof of \cref{fact:pos-marg-vals}]
For any $\margparamval$, let $\margdist{\margparamval}(\data) = \int \dist(\margparamval,\nuis,\data) \dee \nuis$ and
\*[
	\approxmargdist{\margparamval}(\data)
	= \abssmall{\logpostcholmarg} \sum_{\quadpointvec^\prime \in\quadpointset^\prime } \dist\left( (0, \logpostcholmarg \, \quadpointvec\Tr)\Tr  
	+ \parammodemarg, \data \right) \weight^\prime(\quadpointvec^\prime),
\]
using the quantities defined in \cref{sec:approximatebayesianinference}.
Recall that
\*[
	\dist(\margparamval \setdelim \data)
	= \frac{\margdist{\margparamval}(\data)}{\dist(\data)},
\]
so since the conditions of \cref{thm:mainresult} hold, it remains to show that
\*[
	\lim_{n \to \infty}
	\datatrueprobn
	\bigg(
	\sup_{\margparamval \in \ball{\interestdim}{\margparammode}{\constt^\prime\, n^{-1/2}}} \absbig{\frac{\margdist{\margparamval}(\data)}{\approxmargdist{\margparamval}(\data)} - 1} \leq \constt \, \, n^{-\lfloor \frac{\quadnum+2}{3} \rfloor}
	\bigg)
	= 1.	
\]
In particular, this amounts to a variant of \cref{thm:mainresult} that (a) applies to the constrained likelihood and (b) holds uniformly over a shrinking ball of values of $\margparamval$.

First, observe that for every $\margparamval$, \cref{LEM:NORMALIZING_CONSTANT_AS} will hold almost surely with the quantities adjusted appropriately to depend on $\approxmargdist{\margparamval}$, $\parammodemarg$, and $\logpostcholmarg$. 
We now focus on verifying that the other lemmas in the proof of \cref{thm:mainresult} can be appropriately applied 
in supremum over $\margparamval \in \ball{\interestdim}{\margparammode}{\constt^\prime\, n^{-1/2}}$. 

The key observation is that \cref{assn:kderiv,assn:hessian,assn:limsup,assn:prior} all hold uniformly in a fixed ball around $\paramtrue$, and \cref{assn:hessian,assn:limsup} imply that in limiting probability the unconstrained likelihood is strictly convex inside this ball and exponentially small outside of this ball respectively. By continuity of the likelihood, and hence continuity of $\parammodemarg$ as a function of $\margparamval$, these assumptions also imply
\*[
	\lim_{n\to\infty} \datatrueprobn\Big[ \{\parammodemarg: \margparamval\in \ball{\interestdim}{\margparammode}{\constt^\prime\, n^{-1/2}}\} \subseteq \ball{\paramdim}{\paramtrue}{\universalradius} \Big] = 1.
\]
This implies that the uniform analogues of \cref{assn:kderiv,assn:hessian,assn:limsup,assn:prior} all hold with respect to $\ball{\interestdim}{\margparammode}{\constt^\prime\, n^{-1/2}}$, and more specifically this implies the constants appearing in these bounds have no dependence on $\margparamval$. 

We now show that \cref{assn:consistency} holds with the constrained mode.
Following the second derivative notation and the proof of Lemma 1 in \citet{tang2020modified},
the derivative of $\parammodemarg$ is
\*[
\frac{\partial \parammodemarg}{\partial\margparamval} = [\logposthess^{\nuis\nuis}(\parammodemarg)]^{-1} \logposthess^{\margparam\nuis}(\parammodemarg),
\]
where
\*[
\logposthess(\param) = \begin{pmatrix}
\logposthess^{\margparam\margparam}(\param) & \logposthess^{\margparam\nuis}(\param) \\
\logposthess^{\nuis\margparam}(\param) & \logposthess^{\nuis\nuis}(\param)
\end{pmatrix}.
\]
By the uniform analogues of \cref{assn:kderiv,assn:hessian} on  $\ball{\interestdim}{\margparammode}{\constt^\prime\, n^{-1/2}}$, we have that the $\parammodemarg$ is a Lipschitz function in this ball when viewed as a function of $\margparamval$, which then implies the uniform version of \cref{assn:consistency}. Thus, the random coefficients appearing in the constrained variant of \cref{lem:normalizing_constant_as} can then be upper bounded uniformly using \cref{lem:constant_convergence}, since the uniform variants of \cref{assn:kderiv,assn:hessian,assn:limsup,assn:consistency,assn:prior}  have no dependence on the value of $\margparamval$.

To finish the proof, it remains to observe that the uniform analogues of \cref{lem:fixed_numer,lem:shrinking_numer} follow from the uniform analogues of the assumptions, and hence the uniform variants of \cref{lem:normalizing_constant_approx,lem:lowerbound_quad} hold.
\end{proof}

\subsection{Proofs for Approximating Marginal Posterior Expectation}\label{sec:proof-pos-mean-compute}

We require the following assumptions on $\meanfunc$ in order to prove convergence rates for marginal poster expectations computed using further applications of \AGHQ{}.

\begin{itemize}
\item[\upshape (M1)] There exists $\derivbound>0$ such that for all $\dumderivvec \subseteq \N^\paramdim$ with $0 \leq \abssmall{\dumderivvec} \leq \derivnum $ 
\*[
	\sup_{\param \in \ball{\paramdim}{\paramtrue}{\universalradius}} \abssmall{\partial^{\dumderivvec}\log\meanfunc(\param)} < \derivbound.
\]
\item[\upshape (M2)] There exists $- \infty < \hesssmallmean \leq \hessbigmean < \infty$ such that 
\*[
	\hesssmallmean \leq \inf_{\param \in \ball{\paramdim}{\paramtrue}{\universalradius}}\eigen_{\paramdim}(\partial^2[-\log \meanfunc(\param)]) \leq \sup_{\param \in \ball{\paramdim}{\paramtrue}{\universalradius}} \eigen_{1}(\partial^2[-\log \meanfunc(\param)])  \leq \hessbigmean.
\] 
\item[\upshape (M3)] There exist  $0 < \priorsmall < \priorbig < \infty$ such that
\*[
   \priorsmall \leq \inf_{\param \in \ball{\paramdim}{\paramtrue}{\universalradius}} \meanfunc(\param) \leq \sup_{\param \in \ball{\paramdim}{\paramtrue}{\universalradius}} \meanfunc(\param) \leq \priorbig.
\]
\end{itemize}

We restate \cref{fact:pos-mean-compute} here for convenience.

\PosMeanCompute*

\begin{proof}[Proof of \cref{fact:pos-mean-compute}]
We want to verify that the assumptions of \cref{sec:assumptions} hold for the new ``prior'' $\distfunc(\param)$. Under (M1), \cref{assn:kderiv} holds since by product rule,
\*[
	\absbig{\partial^{\dumderivvec}\log\distfunc(\data,\param)}
	&= \frac{\abssmall{\partial^{\dumderivvec}\log\distfunc(\data,\param)}}{\distfunc(\data,\param)} \\
	&= \frac{\abssmall{\partial^{\dumderivvec}\dist(\data,\param)}\meanfunc(\param) + \dist(\data,\param)\abssmall{\partial^{\dumderivvec}\meanfunc(\param)}}{\dist(\data,\param)\meanfunc(\param)} \\
	&= \abssmall{\partial^{\dumderivvec}\log\dist(\data,\param)} + \abssmall{\partial^{\dumderivvec}\log\meanfunc(\param)}.
\]
Under (M2), it holds almost surely that
\*[
	\hesssmallmean + \inf_{\param \in \ball{\paramdim}{\paramtrue}{\universalradius}}\eigen_{\paramdim}(\logposthess(\param))
	&\leq \inf_{\param \in \ball{\paramdim}{\paramtrue}{\universalradius}}\eigen_{\paramdim}(\partial^2[-\log \meanfunc(\param)]) + \inf_{\param \in \ball{\paramdim}{\paramtrue}{\universalradius}}\eigen_{\paramdim}(\logposthess(\param)) \\
	&\leq \inf_{\param \in \ball{\paramdim}{\paramtrue}{\universalradius}}\eigen_{\paramdim}(\logposthessfunc(\param)) \\
	&\leq \sup_{\param \in \ball{\paramdim}{\paramtrue}{\universalradius}}\eigen_{1}(\logposthessfunc(\param)) \\
	&\leq \sup_{\param \in \ball{\paramdim}{\paramtrue}{\universalradius}}\eigen_{1}(\partial^2[-\log \meanfunc(\param)]) + \sup_{\param \in \ball{\paramdim}{\paramtrue}{\universalradius}}\eigen_{1}(\logposthess(\param)) \\
	&\leq \hessbigmean + \sup_{\param \in \ball{\paramdim}{\paramtrue}{\universalradius}}\eigen_{1}(\logposthess(\param)).
\]
Thus, if \cref{assn:hessian} holds for the prior $\dist$, we also have 
\*[
	\lim_{n \to \infty} \datatrueprobn\Big[ n \hesssmall / 2 \leq \inf_{\param \in \ball{\paramdim}{\paramtrue}{\universalradius}}\eigen_{\paramdim}(\logposthessfunc(\param)) \leq \sup_{\param \in \ball{\paramdim}{\paramtrue}{\universalradius}} \eigen_{1}(\logposthessfunc(\param))  \leq 2 n \hessbig\Big] = 1,
\]
since for large enough $n$, $\hesssmallmean > - n \hesssmall / 2$ and $\hessbigmean < n \hessbig$. 
In particular, \cref{assn:hessian} holds for $\distfunc$ with the constants $\hesssmall / 2$ and $2 \hessbig$, and so $\logposthessfunc$ is locally positive definite at $\paramtrue$ (and hence $\logpostcholfunc$ exists) with probability tending to one.

\cref{assn:limsup} is implied by (M3) and the positivity of $\meanfunc$.
Finally, \cref{assn:consistency} is implied by (M3) and the usual consistency argument, while \cref{assn:prior} is trivially implied by (M3).

Having verified these conditions, we also note that for $x \in (0,1/2)$,
\*[
	\frac{1+x}{1-x} \leq 1+4x \quad
	\text{ and } \quad
	\frac{1-x}{1+x} \geq 1-2x.
\]
Thus, the statement follows from the following three facts:
\*[
	\lim_{n\to\infty}\datatrueprobn\left( \absbig{\frac{\int \distfunc(\param,\data)\dee\param}{\abssmall{\logpostcholfunc} \sum_{\quadpointvec\in\quadpointset} \distfunc(\logpostcholfunc \, \quadpointvec + \parammodefunc, \data) \weight(\quadpointvec)} - 1} \leq \constt \, n^{-\lfloor \frac{\quadnum+2}{3} \rfloor}\right) = 1,
\]
\*[
	\lim_{n\to\infty}\datatrueprobn\left( \absbig{\frac{\int \dist(\param,\data)\dee\param}{\abssmall{\logpostchol} \sum_{\quadpointvec\in\quadpointset} \dist(\logpostchol \, \quadpointvec + \parammode, \data) \weight(\quadpointvec)} - 1} \leq \constt \, n^{-\lfloor \frac{\quadnum+2}{3} \rfloor}\right) = 1,
\]
and
\*[
	\EE[\meanfunc(\param)\setdelim\data]
	= \frac{\int \distfunc(\param,\data)\dee\param}{\int \dist(\param,\data)\dee\param},
\]
as $0 < \constt \, n^{-\lfloor \frac{\quadnum+2}{3} \rfloor} < 1/2$ eventually for large $n$, leading to the desired limiting statement.
\end{proof}

The most relevant application of \cref{fact:pos-mean-compute} is to compute the marginal posterior moments, which we now show satisfies the conditions of \cref{fact:pos-mean-compute}. 
\begin{proposition}\label{fact:pos-mean-compute-verify}
For every $j\in[\paramdim]$ and $i \in \Nats$, if $\meanfunc(\param) = (\paramidx_j)^i$,
Assumptions (M1) through (M3) from \cref{sec:proof-pos-mean-compute} are satisfied for
$\meanfunc^{+}$ when $\paramtrueidx_j > 0$ and $\meanfunc^{-}$ when $\paramtrueidx_j < 0$.
\end{proposition}
\begin{remark}
For odd moments, when the parameter is negative the integral cannot be approximated using the techniques of this paper, but by posterior concentration this contribution to the integral is tending to zero and can be discarded (see \cref{app:computesummaries} for computational details).
\end{remark}

\begin{proof}[Proof of \cref{fact:pos-mean-compute-verify}]
Without loss of generality, suppose $\paramtrueidx_j > 0$; when $\paramtrueidx_j < 0$ the argument is identical swapping $\meanfunc^{+}$ and $\meanfunc^{-}$. Thus, there exists $\universalradius>0$ small enough such that for all $\param \in \ball{\paramdim}{\paramtrue}{\universalradius}$, $\paramidx_j \in (a_1, a_2)$ for some $0 < a_1 < a_2$. Since $i$ will only change the scaling of $\log \meanfunc(\param)$, it suffices to verify (M1) through (M3) for $i=1$.

For any $\param \in \ball{\paramdim}{\paramtrue}{\universalradius}$ and $\dumderiv_j \in [\derivnum]$, $\dumderivvec = (0,\dots,\dumderiv_j,0,\dots,0)$ satisfies
\*[
	\partial^{\dumderivvec}[-\log\meanfunc^{+}(\param)]
	=
	(-1)^{\dumderiv_j} (\dumderiv_j-1)! \paramidx_j^{-\dumderiv_j},
\]
and for every $\dumderivvec \in \Nats^\paramdim$ not of this form, $\partial^{\dumderivvec}\log\meanfunc^{+}(\param) = 0$.
Thus, by boundedness of $\paramidx_j$, (M1) holds for $\meanfunc^{+}$.

For (M2), the Hessian satisfies $\partial^2[-\log\meanfunc(\param)]_{jj} = \paramidx_j^{-2}$ and $\partial^2[-\log\meanfunc(\param)]_{ij} = 0$ otherwise. Thus, by boundedness of $\paramidx_j$ in $\ball{\paramdim}{\paramtrue}{\universalradius}$, the eigenvalues are all nonnegative and bounded above as required.

(M3) holds trivially since $\paramidx_j \in (a_1, a_2)$.

\end{proof}

\section{Computational Considerations}\label{sec:computational}

In this section we describe the necessary computational and implementation details for applying \AGHQ{} to models of the type we consider in \cref{sec:exampleslowdim,sec:highdimexamples}.

\subsection{The \texttt{aghq} Package}\label{subsec:aghqpackage}

All of the computations described in this paper
are implemented in the \texttt{R} package \texttt{aghq}, current version \texttt{0.4.0}, on \texttt{CRAN}. 
The user only needs to provide an unnormalized log-posterior and two derivatives (which can be obtain automatically, see \cref{subsec:autodiff}).
From this input the \texttt{aghq} package performs all subsequent computations automatically, including: optimization and approximate normalization; approximate moments; and marginal densities, distribution functions, and quantiles. This section gives the details on how the package performs these computations automatically without requiring any additional user input.

The \AGHQ{} procedure employed in \cref{sec:exampleslowdim} (i.e., for low-dimensional models) is implemented as follows. The user provides a list \texttt{ff} containing the following elements, each of which are functions of $\param$,
\*[
	&\texttt{fn}: \log\poststar{\param},\\
	&\texttt{gr}: \partial_{\param}\log\poststar{\param},\\
	&\texttt{he}: \partial^{2}_{\param}\log\poststar{\param}.
\]
In all of our examples, \texttt{ff} is obtained via a call to \texttt{TMB::MakeADFun} (see \cref{subsec:autodiff}), and the user therefore only has to construct a \texttt{TMB} template implementing $\log\poststar{\param}$. This construction is problem-specific.

Given \texttt{ff}, a \texttt{numeric} number of (one-dimensional) quadrature points \texttt{k}, and a \texttt{numeric} vector of length $\paramdim=\text{dim}(\param)$ of starting values for the optimization \texttt{start}, the command 

\*[
	\texttt{quad <- aghq(ff,k,start)}
\]
calculates $\log\GHmarg$ using product GHQ as the base grid (other grids satisfying $\property{\quadnum}{\paramdim}$ are also supported). The command \texttt{get_log_normconst(quad)} returns the $\log\GHmarg$ prescribed by \cref{THM:MAINRESULT}.

The object \texttt{quad} has class \texttt{aghq}, and the commands
\*[
	&\texttt{summary(quad)}\\
	&\texttt{plot(aghq)}
\]
will compute and print or plot univariate marginal densities according to \cref{eqn:approx-marg-dist-defn} and approximate moments according to \cref{eqn:approx-moment-defn},
which are exactly the quantities for which the theoretical guarantees of \cref{fact:pos-mean-compute,fact:pos-marg-compute} apply. 
Also computed are approximate quantiles and cumulative distribution functions, which are not covered by the theoretical guarantees of the present work.

As of version \texttt{0.4.0}, the user must set 
\begin{center}
\texttt{control = default_control(method_summaries=`correct')}
\end{center}
to turn on the computation of moments and marginals according to \cref{eqn:approx-moment-defn,eqn:approx-marg-dist-defn}. This was done for backwards compatibility, and the correct computation will be made the default setting in the eventual \texttt{1.0.0} version release of \texttt{aghq}.

The use of parameter transformations is ubiquitous and convenient in Bayesian models. While \cref{THM:MAINRESULT} does not require any transformation to be made (only the assumptions of \cref{sec:assumptions} to hold), \cite{nayloradaptive} point out that often a simple transformation, like \texttt{log} or \texttt{logit}, can yield a transformed parameter whose posterior is closer to being log-quadratic than that of the parameter of inferential interest, and that this can improve the finite-sample accuracy of the quadrature and/or the speed and stability of the optimization. 

The \texttt{aghq} package provides an interface for parameter transformations. Suppose inferential interest is in parameter $\phi$, but the user implements \texttt{ff} to depend on a transformed parameter $\theta = h(\phi)$ where $h:\Reals\to\Reals$ is monotonic and invertable. It is desirable for the quadrature to be done on the $\theta$ scale, but all summary methods to return results for $\phi = h^{-1}(\theta)$. The user creates a transformation object of class \texttt{aghqtrans} using the command
\*[
	\texttt{trans <- make_transformation(totheta = h,fromtheta = hinv)},
\]
where $\texttt{h} = h$ and $\texttt{hinv} = h^{-1}$. These functions are passed through \texttt{match.fun} internally. 

The quadrature
\*[
	\texttt{quad <- aghq(ff,k,start,transformation = trans)}
\]
is then performed in exactly the same way, but the \texttt{summary} and \texttt{plot} commands will now return inferences for $\phi = h^{-1}(\theta)$. When $\paramdim>1$, $h$ is interpreted as a vectorized scalar-to-scalar function; fully multivariate transformations are not yet supported.

As a concrete example, suppose \texttt{ff} is a template implementing $\log\dist(\param,\data)$ for the infectious disease model of \cref{sec:exampleslowdim}. Recall the parameters of interest are $(\alpha,\beta)$, but the quadrature was done on the posterior of the transformed parameters $\theta_{1}=\log\alpha$ and $\theta_{2}=\log\beta$. The full code to implement one instance of this example is
\*[
	&\texttt{quad <- aghq(ff,7,c(0,0),}\\
	&\qquad\texttt{make_transformation(`log',`exp'),}\\
	&\qquad\texttt{control = default_control(method_summaries=`correct'))}\\
	&\texttt{summary(quad); plot(quad)}.
\]

In \cref{sec:highdimexamples}, we describe the use of \AGHQ{} within a more complicated framework for making approximate Bayesian inferences. This full framework is also implemented within the \texttt{aghq} package. The simplest way, which we describe here, is for the user to implement a \texttt{TMB} template computing $-\log\dist(\highdimparam,\param,\data)$ and set $\texttt{random} = \highdimparam$. This provides a list \texttt{ff} containing elements (again functions of $\param$):
\*[
	&\texttt{fn}: -\log\LAapprox(\param,\data),\\
	&\texttt{gr}: -\partial_{\param}\log\LAapprox(\param,\data).
\]
The requirement to implement the \emph{negative} log-posterior is for compatibility with \texttt{TMB} and its automatic Laplace approximation. This is handled internally by \texttt{aghq}.

The command
\*[
	\texttt{quad <- marginal_laplace_tmb(ff,k,start)}
\]
performs the computations necessary to use $\approxdist(\highdimparam \setdelim \data)$ (\cref{eqn:Wapprox}). The \texttt{summary} and \texttt{plot} methods provide inferences for $\param$ based on the AGHQ-normalized marginal Laplace approximation $\log\LAapprox(\param|\data)$, providing an implementation of the method of \citet{laplace}. The user obtains \texttt{M} samples from the mixture of Gaussians $\approxdist(\highdimparam \setdelim \data)$ using the command
\*[
	\texttt{sample_marginal(quad,M)}.
\]
This can be done automatically within \texttt{summary}, by setting the \texttt{max_print} option to be greater than $\text{dim}(\highdimparam)$, in which case \texttt{summary} will compute and return sample-based summary statistics of $\highdimparam$. By default, \texttt{max_print} is set to $30$, and summaries are computed using $1000$ samples, but these can be easily changed by the user.

\subsection{Computing Posterior Summaries}\label{app:computesummaries}

The \texttt{marginal_posterior}, \texttt{compute_moment}, \texttt{compute_pdf_and_cdf}, and \texttt{compute_quantiles} functions are all automatically called within \texttt{aghq} and the \texttt{summary} and \texttt{plot} methods for \texttt{aghq} objects, and automatically handle any parameter transformations provided using \texttt{make_transformation}. They are also exported directly so that the user has further control over the computation of summary statistics.

The \texttt{compute_moment} function computes the approximate moment $\approxEE[\meanfunc(\param) \setdelim \data]$ of a function $g:\R^{\paramdim}\to\R^{+}$ according to \cref{eqn:approx-moment-defn}. The user provides a list \texttt{gg} containing the following elements, each of which are functions of $\param$,
\*[
	&\texttt{fn}: \log g(\param),\\
	&\texttt{gr}: \partial_{\param}\log g(\param),\\
	&\texttt{he}: \partial^{2}_{\param}\log g(\param).
\]
The \texttt{make_moment_function} helper helps to automate this process. The user calls 
\begin{center}
\texttt{make_moment_function(g)}, 
\end{center}
where $\texttt{g} = g$, and \texttt{make_moment_function} creates the appropriate list, using numeric derivatives. Other, more detailed options are described in the package documentation.

The \texttt{marginal_posterior} function computes the approximate marginal posterior $\approxdist(\margparamval \setdelim \data)$ at any point $\texttt{q}\in\R$ according to \cref{eqn:approx-marg-dist-defn}. If unspecified by the user (the default), the evaluation points \texttt{q} are chosen automatically, using a default based on a one-dimensional adapted GHQ rule. 

For computing raw and central moments, the user may instead pass a numeric scalar \texttt{nn} to \texttt{compute_moment}, as well as \texttt{type = `raw'} or \texttt{type = `central'}. In this case, \texttt{compute_moment} automatically constructs an appropriate input list for the function $g(\param) = \paramidx_{j}^{\texttt{nn}}$ (\texttt{type=`raw'}) or $g(\param) = (\paramidx_{j} - \approxEE(\paramidx_{j}|\data))^{\texttt{nn}}$ (\texttt{type=`central'}), for $j\in[\paramdim]$, and returns the corresponding vector of approximate moments. To ensure positivity (which is required both theoretically for \cref{fact:pos-mean-compute} and computationally for \AGHQ{} to be applicable), the function automatically detects whether 
$\min_{\quadpointvec\in\quadpointset}(\logpostchol\quadpointvec + \parammode)_j^{\texttt{nn}} < 0$, adds a buffer value $a > -\min_{\quadpointvec\in\quadpointset}(\logpostchol\quadpointvec + \parammode)_j^{\texttt{nn}}$, and outputs $\approxEE(\paramidx_j^{\texttt{nn}} - a|\data) + a$. 

Data suitable for creating plots of the approximate probability density and cumulative distribution functions are computed using \texttt{compute_pdf_and_cdf}. Unlike \texttt{marginal_posterior} and \texttt{compute_moment}, the output of these functions are not covered by \cref{fact:pos-mean-compute,fact:pos-marg-compute}. For any $\psi\in\R$, denote the Lagrange polynomial interpolant of $\log\GHapproxidx{j}{\psi}$ by $\PolyP_{j}(\psi)$ and define $$\GHapproxidxpoly{j}{\psi} = \exp\left\{ \PolyP_{j}(\psi) \right\}.$$The marginal CDF is defined by
\*[
	\postcdfidx{j}{\psi}
	= \int_{-\infty}^{\psi} \postidx{j}{\psi^{\prime}} \dee \psi^{\prime},
\]
and approximated by choosing a fine grid $x_{1},\ldots,x_{L}$ for some large $L\in\N$ and computing
\*[
	\GHapproxcdfidx{j}{\psi}
	= \sum_{l: x_l \leq \psi} \GHapproxidxpoly{j}{x_l}(x_{l+1}- x_l).
\]
The choice of grid is again handled internally by \texttt{compute_pdf_and_cdf}, with no input required by the user.

Finally, marginal quantiles are computed by \texttt{compute_quantiles}. For any level $\alpha \in (0,1)$, \texttt{compute_quantile} outputs:
\*[
	\approxquantpoly{j}{\alpha}
	= \min\left\{x \in \{x_1,\dots,x_L\} \Bigsetdelim \GHapproxcdfidx{j}{x} \geq \alpha \right\}.
\]
We reiterate that all of the quantities described in this section are computed and displayed to the user automatically by \texttt{summary.aghq}.

\subsection{Software Package Versions}

For \AGHQ{}, we use \texttt{CRAN} version 0.4.0 of the \texttt{aghq} package, 
which may be installed using the command \texttt{install.packages(`aghq')}.
For \MCMC{}, we use the \texttt{tmbstan} package \citep{tmbstan}, version 1.0.2 from \texttt{CRAN}, which implements the state-of-the-art No-U-Turn sampler \citep{nuts}, the self-tuning version of Hamiltonian Monte Carlo that is the default in the popular \texttt{STAN} language \citep{stan}. 

\subsection{Automatic Differentiation}\label{subsec:autodiff}

Approximate computation of $\GHmarg$ requires two derivatives of $\log\dist(\param,\data)$, and this is often too burdensome to be done by hand or numerically. Automatic Differentiation (AD) \citep{tmb,stan,bbsvi,hybrid,torch,pyhessian} provides exact derivatives of any differentiable function that can be represented by a computer. Other prominant methods for Bayesian inference, including Hamiltonian Monte Carlo \citep{stan} and Stochastic Variational Inference \citep{bbsvi}, also require differentiation of complicated (and in the latter case, intractable) objective functions, and the cited implementations of these methods use AD for this purpose. In our examples, we use \texttt{TMB} \citep{tmb}, but as described in \cref{subsec:aghqpackage}, any manner by which the derivatives are obtained is compatible with the \texttt{aghq} package. Because of the wide availability of AD software, including in \texttt{R} (see \citealt{stan,torch}), the requirement of two derivatives of $\log\dist(\param,\data)$ is computationally benign.

\subsection{Optimization Software}

Computing $\GHmarg$ requires computing $\parammode = \argmax_\param\log\dist(\param,\data)$, and this requires numerical optimization. By \cref{assn:hessian}, $\log\dist(\param,\data)$  is locally convex for sufficiently large $n$, and we therefore use convex optimization techniques. When $\paramspace = \R^{\paramdim}$, we use trust region optimization as implemented in the \texttt{trustOptim} \citep{trustoptim} or \texttt{trust} \citep{trustdense} packages. Box parameter constraints can be removed via parameter transformations (\cref{subsec:infectiousdisease,subsec:astro,subsec:aghqpackage}) or handled using more advanced optimization tools. General constraints, including box and non-linear constraints, are accomodated by using the \texttt{IPOPT} package \citep{ipopt} for constrained optimization (\cref{subsec:astro}).

A referee pointed out that the concentration behaviour of the log-likelihood implied by \cref{assn:kderiv,assn:hessian,assn:limsup,assn:consistency,assn:prior} may make optimization challenging for large $n$. If any such difficulty is encountered, we recommend dividing the objective function by $n$ when computing $\parammode$. By \cref{assn:kderiv}, the log-likelihood is bounded in probability when scaled by $n$, and this precise knowledge of the scaling behaviour of the objective function is a useful feature of models satisfying these assumptions.
However, such adjustments were not necessary to obtain stable results in the optimization step for any of the examples we considered.

\section{Simulations}\label{sec:simulations}

The results of \cref{sec:convergence} provide guarantees on the accuracy of approximating posterior distributions and posterior summary statistics using \AGHQ.
Naturally, such results require certain assumptions (see \cref{sec:assumptions}) about the model, and are all statements about guarantees as the sample size tends to infinity. 
Additionally, all of these guarantees are upper bounds, and individual models may or may not achieve faster rates of convergence. 
\cref{thm:mainresult} cannot be tight in all cases since, for example, if the posterior is a normal distribution then \AGHQ{} will exactly approximate the density, resulting in zero error.
However, we conjecture that for many models, \AGHQ's dependence on $n$ and $\quadnum$ is no better than in our upper bounds.

In the absence of theoretical lower bounds, we use simulation to show an example of a simple model in which the empirical error rate is not lower than our prescribed upper bound.
For this simple model, we observe that the convergence rate given in \cref{thm:mainresult} is realized at very small sample sizes ($n \ll 100$), empirically demonstrating the tightness of the upper bound. 
Further, in contrast to the existing literature, our results are stochastic in nature. We have designed our simulation to demonstrate this, as it is important that not only a single ideal realization of data achieves the desired rate, but that such datasets occur with high probability under the model.

We consider the following simple model:
\begin{equation}\begin{aligned}\label{eqn:simmodel}
\dataidx_i \setdelim \lambda &\overset{ind}{\sim} \text{Poisson}(\lambda), i\in[n], \\
\lambda &\sim \text{Exponential}(1),
\end{aligned}\end{equation}
with posterior
\*[
	\lambda \setdelim \data &\sim \text{Gamma}\Big(1 + \sum_{i=1}^{n}\dataidx_{i},n+1\Big).
\]
We have chosen this conjugate model because the posterior and normalizing constant are known exactly, facilitating computation of error rates. In contrast, \cite{adaptive_GH_2020} use an example in which their integral is not known exactly, and use \AGHQ{} with a large number of quadrature points in place of the exact answer. Consequently, their simulation confirms that the variance of \AGHQ{} diminishes with the number of quadrature points, but does not demonstrate anything about its bias. 

If $\relerror(\data) \approx C n^{- \lfloor \frac{\quadnum+2}{3} \rfloor}$ for some constant $C$, then $\log \relerror(\data) \approx \log(C) - \lfloor \frac{\quadnum+2}{3} \rfloor \log n $. Therefore, we compute $\log \relerror(\data) + \lfloor \frac{\quadnum+2}{3} \rfloor \log n$ for many simulated datasets and various values of $n$ and $\quadnum$, and observe that there is no pattern in $n$ in the resulting plots.
 The full details of this simulation procedure are described in Algorithm \ref{alg:simulation2}. 

Figure \ref{fig:relraterandom} demonstrates the results over $1000$ simulations with $\lambda = 5$. We consider $n$ up to $100$, and $\quadnum \in \{3,5,7,11\}$, which correspond respectively to rates of $\Ordp(n^{-1}),\Ordp(n^{-2}),\Ordp(n^{-3})$, and $\Ordp(n^{-4})$ by \cref{thm:mainresult}.
Each point represents a realization of $\log \relerror(\data) + \lfloor \frac{\quadnum+2}{3} \rfloor \log n$ for a dataset simulated from this model, which will equal $\log \constt$ for some constant $\constt > 0$ if the rate of \cref{thm:mainresult} is achieved.
The exact value of the vertical axis is of only secondary interest; the relevant observation is that there is no visible pattern with respect to $n$.

We observe no pattern in \cref{fig:relraterandom}, implying that the rate of \cref{thm:mainresult} is tight for this simple model. Computations are all done on the log scale for numerical stability, however due to the simplicity of the model we observed roundoff error when computing $\log\relerror(\data)$ for some of the simulated datasets when $k = 11$ and $n > 80$, an artifact that appears in the lower right corner of \cref{fig:relraterandom} (d). This is due to roundoff error when computing $\log \relerror(\data)$, and is not related to the properties of the \AGHQ{} procedure.

\begin{algorithm}[p]
\textbf{Input}: max sample size $N\in\N$, quadrature points parameter $k\in\N$, number of simulations $M\in\N$, mean response $\lambda\in\R$.\\
Let $r_k = \lfloor\frac{k+2}{3} \rfloor$.\\
\textbf{For} $n = 1,\ldots,N$ \textbf{do}:
\begin{itemize}
\item \textbf{For} $l = 1,\ldots,M$, \textbf{do}:
\begin{enumerate}
\item Generate the $l^{\text{th}}$ dataset of length $n$:
$\data_{l} = \dataidx_{1,l},\dots,\dataidx_{n,l}\isim\text{Poisson}\left(\lambda\right)$.
\item Compute the approximate normalizing constant:
$\approxdist^{\text{\upshape\tiny AGHQ}}(\data_{l})$ as in \cref{eqn:aghq-normalizing}.
\item Compute the relative error: 
$E_{n,l} = \absbig{\frac{\dist(\data_{l})}{\approxdist^{\text{\upshape\tiny AGHQ}}(\data_{l})} - 1}$.
\item Compute the \emph{de-trended} log-relative error:
$D_{n,l} = \log E_{n,l} + r_k\log n$.
\end{enumerate}
\end{itemize}
\textbf{Output}: Sampled de-trended errors: $(D_{n,m})_{n\in[N],m\in[M]}$.
\caption{Computing sample rates in the simulation study}
\label{alg:simulation2}
\end{algorithm}

\begin{figure}[p]
\centering
\subfloat[$k = 3$, $\lvert \frac{\dist(\data)}{\GHmarg} - 1 \rvert = \Ordp(n^{-1})$]{\includegraphics[width=.5\textwidth]{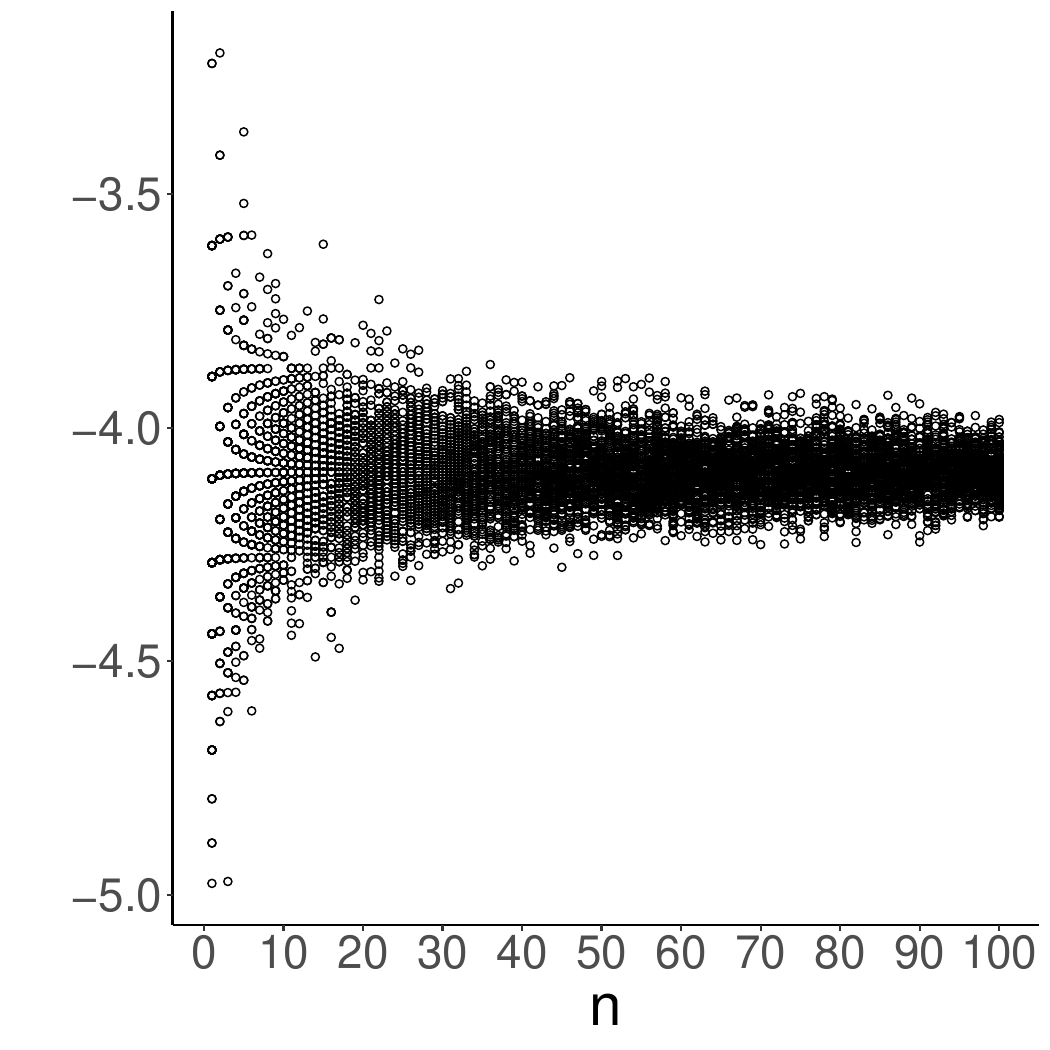}}
\subfloat[$k = 5$, $\lvert \frac{\dist(\data)}{\GHmarg} - 1 \rvert = \Ordp(n^{-2})$]{\includegraphics[width=.5\textwidth]{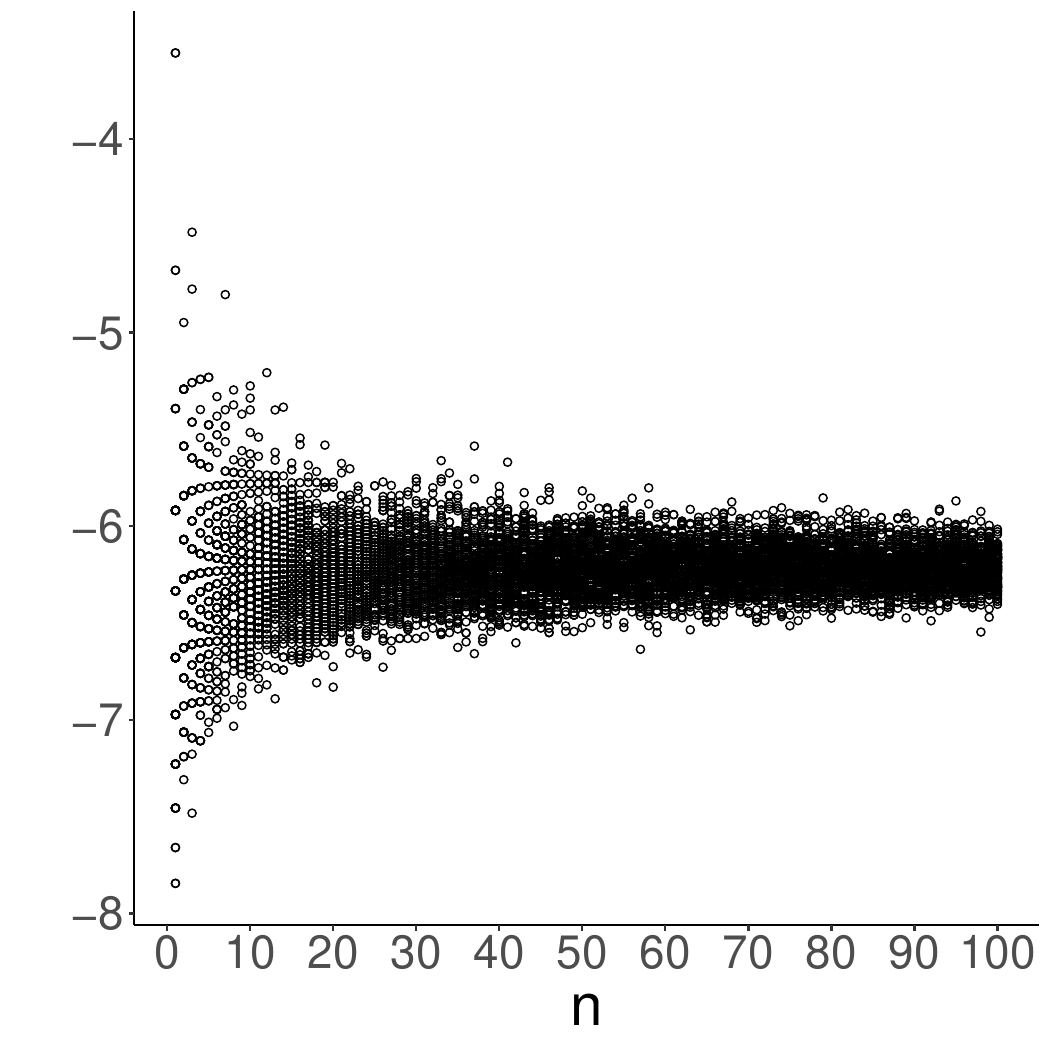}}\\
\subfloat[$k = 7$, $\lvert \frac{\dist(\data)}{\GHmarg} - 1 \rvert = \Ordp(n^{-3})$]
{\includegraphics[width=.5\textwidth]{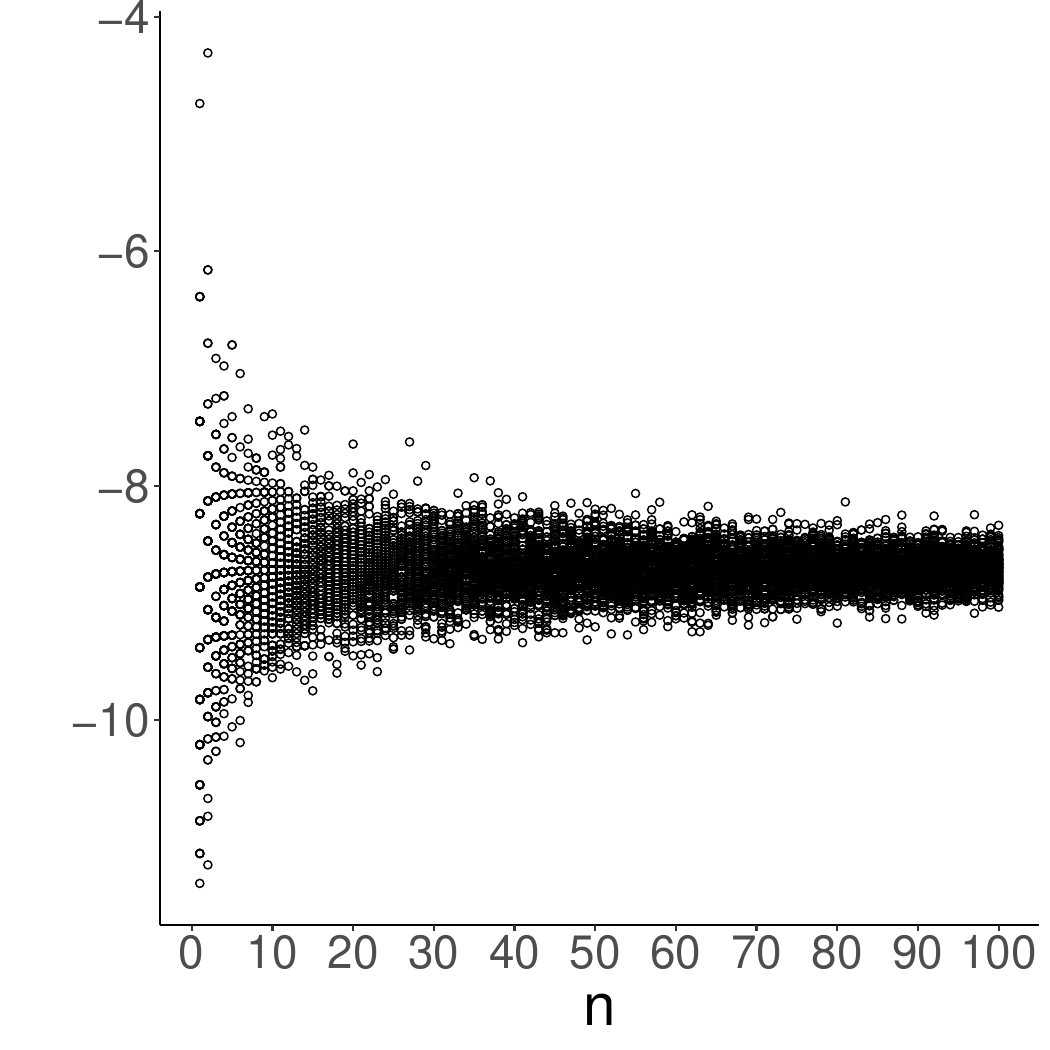}}
\subfloat[$k = 11$, $\lvert \frac{\dist(\data)}{\GHmarg} - 1 \rvert = \Ordp(n^{-4})$]{\includegraphics[width=.5\textwidth]{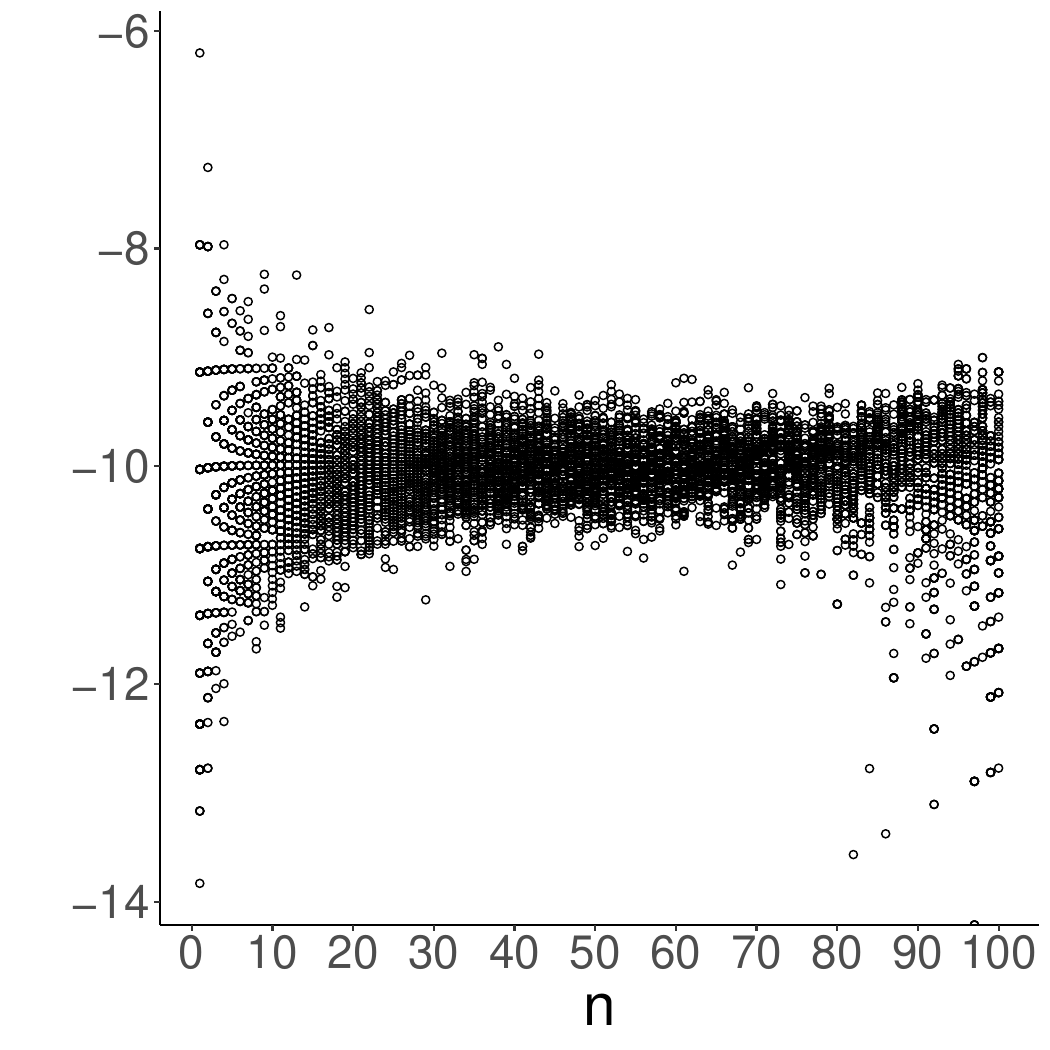}}
\caption{Realized de-trended error rates $\log \relerror(\data) + \lfloor \frac{\quadnum+2}{3} \rfloor \log n$ for data generated from the model (see \cref{eqn:simmodel}), $k = 3,5,7,11$, and $n < 100$.}
\label{fig:relraterandom}
\end{figure}

\section{Further detail for \cref{SUBSEC:ZEROINF}}\label{sec:loaloa-simulations}

In this section we discuss the results of running \MCMC{} for the zero-inflated binomial geostatistical regression from \cref{subsec:zeroinf} 
and include a brief simulation study to assess the empirical accuracy of the adaptive quadrature-based approximations used in \cref{subsec:marginallaplace}. 

\subsection{\MCMC{} Results}\label{subsec:mcmc-results}

We present the results of running the NUTS sampler through \texttt{tmbstan} using the default settings, with a computation time of $66$ hours for running eight parallel chains of $10,000$ iterations each, including a warmup of $1,000$ iterations. We stress that we are confident an expert user of \MCMC{} could tune the algorithm to produce favourable results, however the observed runtime of almost three days illustrates that doing so would be inconvenient and laborious even for such an expert. In contrast, the \AGHQ{} strategy runs in approximately $90$ seconds without problem-specific tuning, and if any tuning were required it could be done much more efficiently due to the short running time.

\cref{fig:loaloa-mcmc-compare} shows the predicted suitability and incidence maps from the \MCMC{} run, alongside those from \AGHQ{} (\cref{fig:loaloaresults}) for comparison. While the predicted incidence probabilities are visually similar, the \MCMC{} results appear to fail to identify the spatial pattern in suitability, which is the main practical reason to consider this model in the first place. Closer inspection reveals the problem is a failure to accurately sample from the posterior for $\beta_{\texttt{suit}}$, leading to inflated estimates of $\phi(\mb{s})$ at all locations. \cref{fig:loaloa-pairs} shows pairs plots of the two intercepts from the \texttt{tmbstan} output, which illustrate the divergent transitions responsible for the inflated posterior of $\beta_{\texttt{suit}}$. Also shown are corresponding plots of posterior samples from the \AGHQ{} fit for comparison; note the difference in scale for $\beta_{\texttt{suit}}$. Further explanation of the meaning of ``divergent transition'' and advice for tuning the sampler can be found in the \texttt{STAN} documentation at \href{https://mc-stan.org/misc/warnings.html\#divergent-transitions-after-warmup}{https://mc-stan.org/misc/warnings.html\#divergent-transitions-after-warmup} or in \citet{visualizationbayesian}. In particular, \citet{visualizationbayesian} suggest that divergent transitions clustered in one region of the parameter space as is clearly seen in \cref{fig:loaloa-pairs} indicates a serious problem with the ability of the sampler to adequately explore the posterior. The available advice amounts to either changing tuning parameters, which would lead to an increase in computational cost, or rewriting the model entirely. We reiterate that while an expert user may be able to tune \MCMC{} in a problem-specific manner or implement a different type of sampler that would yield satisfactory results for this problem, tuning of this nature is extremely inconvenient due to the already astronomical computational cost of running \MCMC{} in this example, and the complexity of the model. In contrast, the \AGHQ{}-based approximation strategy of \cref{subsec:marginallaplace} runs in minutes without problem-specific tuning, and is hence a potentially appealing practical alternative, that would be made more appealing by the development of convergence theory for it.

\begin{figure}[p]
\caption{\label{fig:loaloa-mcmc-compare} Estimated posterior mean (a,c) suitability probabilities and (b,d) incidence rates for the \texttt{loaloa} example of \cref{sec:highdimexamples} using (a,b) \MCMC{} and (c,d) \AGHQ{}.}
\centering
\makebox{
\subfloat[{$\EE\left[\phi(\cdot) | \mb{Y} \right]$, \MCMC{}}]{\includegraphics[width=0.4\textwidth]{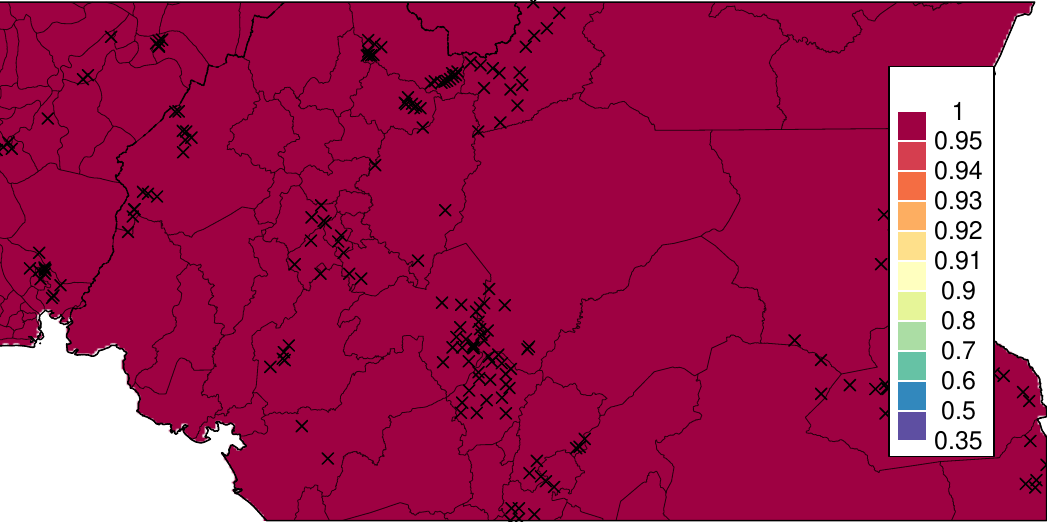}}
\subfloat[{$\EE\left[\phi(\cdot)\times p(\cdot) | \mb{Y}\right]$, \MCMC{}}]{\includegraphics[width=0.4\textwidth]{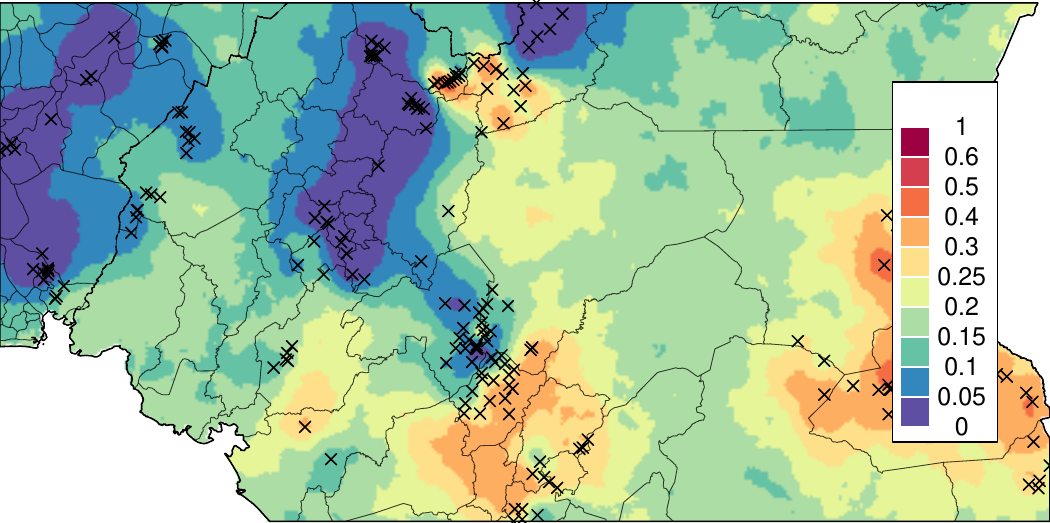}}
} \\
\makebox{
\subfloat[{$\EE\left[\phi(\cdot) | \mb{Y} \right]$, \AGHQ{}}]{\includegraphics[width=0.4\textwidth]{figures/loaloazip/loaloa-zip-postmean.pdf}}
\subfloat[{$\EE\left[\phi(\cdot)\times p(\cdot) | \mb{Y}\right]$, \AGHQ{}}]{\includegraphics[width=0.4\textwidth]{figures/loaloazip/loaloa-risk-postmean.pdf}}
} \\
\end{figure}

\begin{figure}[p]
\caption{\label{fig:loaloa-pairs} Pairs plots for (a) \MCMC{} and (b) \AGHQ{} posterior samples of $\beta_{\texttt{suit}}$ (left) and $\beta_{\texttt{inc}}$ (right). Divergent transitions ($\textcolor{red}{\bullet}$) cause the \MCMC{} algorithm to put non-negligible posterior mass on very large values of $\beta_{\texttt{suit}}$, causing the high estimated posterior mean for $\phi(\cdot)$ at all locations (\cref{fig:loaloa-mcmc-compare}). The clustering of these transitions indicates a potentially serious problem with the algorithm \citep{visualizationbayesian}.}
\centering
\subfloat[{$\widetilde{\dist}(\beta_{\texttt{suit}},\beta_{\texttt{inc}}|\data)$, \MCMC{}}]{\includegraphics[width=0.5\textwidth]{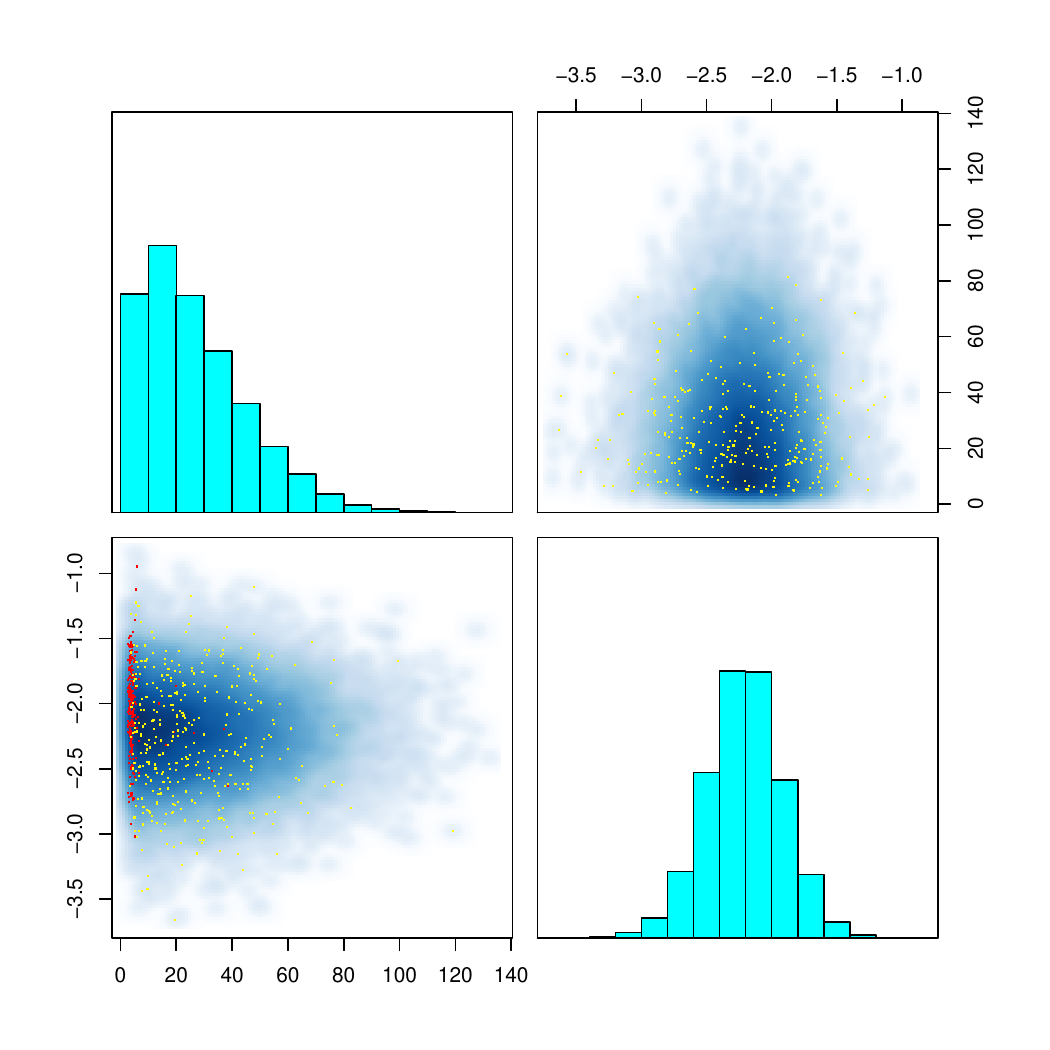}}
\subfloat[{$\widetilde{\dist}(\beta_{\texttt{suit}},\beta_{\texttt{inc}}|\data)$, \AGHQ{}}]{\includegraphics[width=0.5\textwidth]{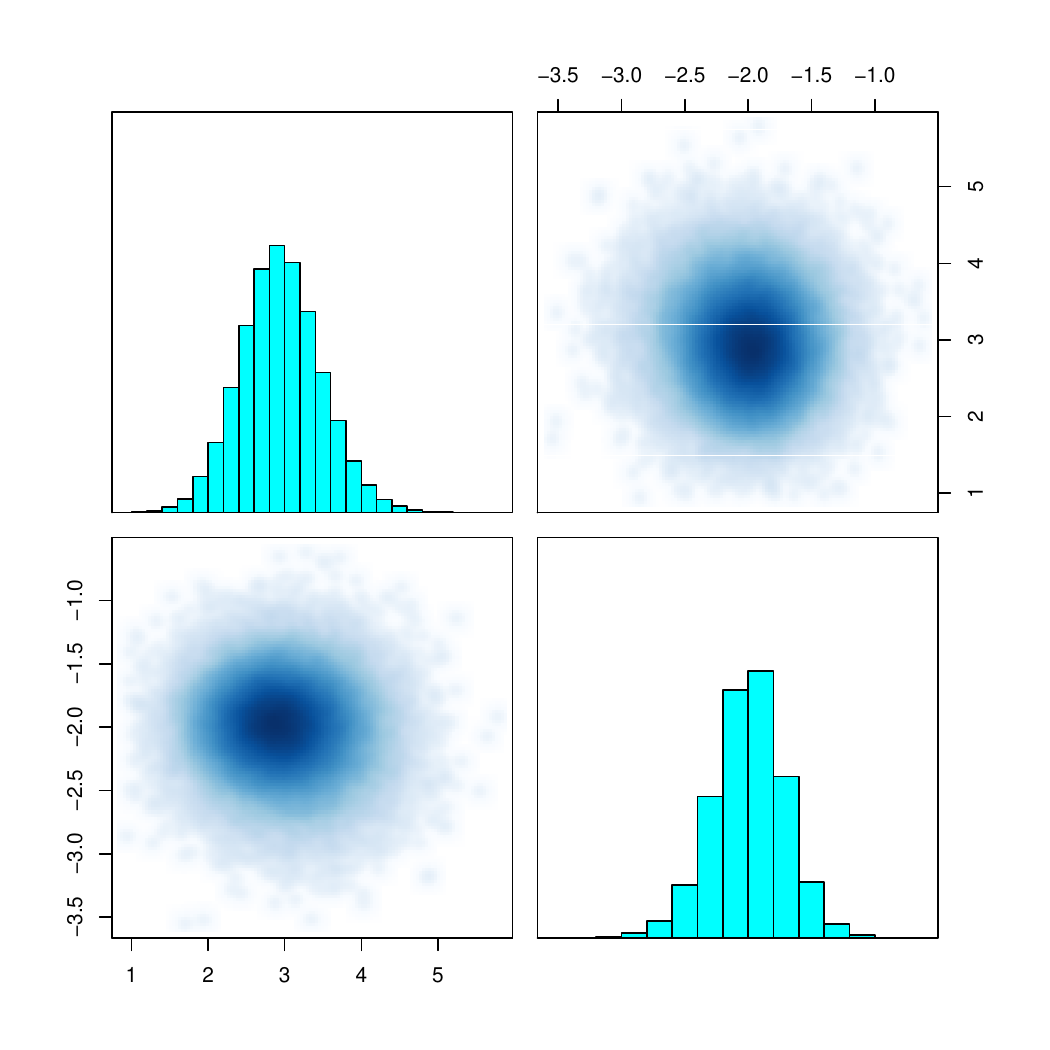}} \\
\end{figure}

\subsection{\MCMC{} Results with $\beta$ Fixed}\label{subsec:mcmc-beta-fixed}

We re-ran the \MCMC{} algorithm with the two intercept parameters fixed at their initial \AGHQ{} estimates of $\beta_{\texttt{suit}} = 2.912111$ and $\beta_{\texttt{inc}} = -1.984655$. The wall time for $8$ parallel chains of $10,000$ iterations of each was $19.5$ hours, compared to a wall time of $224$ seconds for \AGHQ{} with $\quadnum=7$. \AGHQ{} ran in the time taken for approximately $32$ iterations of \MCMC{}. \cref{fig:loaloa-ks} shows the estimated KS statistics between the $2n=380$ marginal distributions of $u(\mb{s}_{i})$ and $v(\mb{s}_{i}),i\in[n]$, for \AGHQ{} and \MCMC{}. The incidence spatial field $v(\mb{s})$ is more accurately estimated than the suitability field $u(\mb{s})$, and both show broad agreement with some villages having moderate disagreement.

\begin{figure}[p]
\caption{\label{fig:loaloa-ks} KS statistics, calculated as maximal difference in approximate marginal posterior empirical cumulative distribution functions from $72,000$ samples for \AGHQ{} and \MCMC{}, for (a) zero-inflation spatial effects $u(\mb{s}_{i})$ and (b) incidence spatial effects $v(\mb{s}_{i}),i\in[n]$.}
\centering
\subfloat[KS (\AGHQ{}/\MCMC{}), $u(\mb{s}_{i})$]{\includegraphics[width=0.5\textwidth]{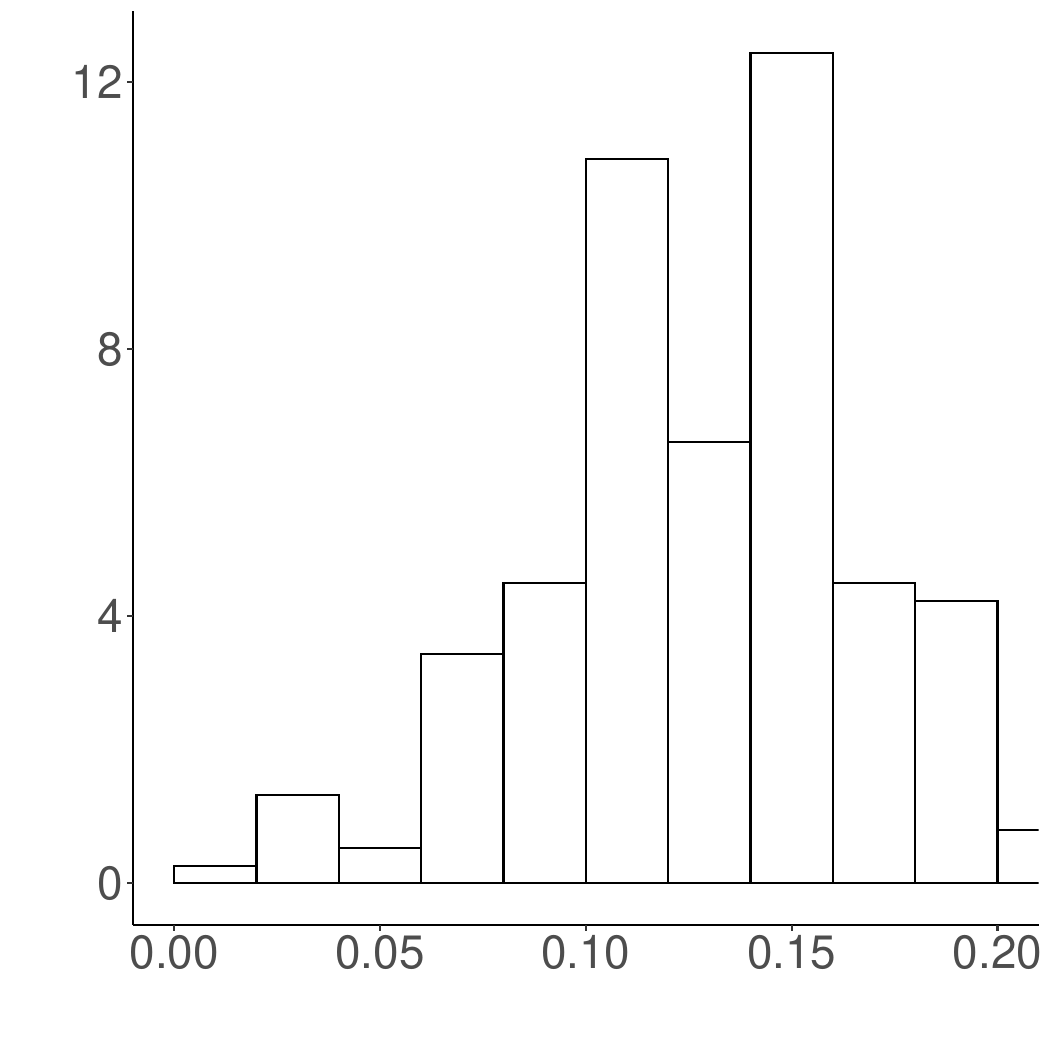}}
\subfloat[KS (\AGHQ{}/\MCMC{}), $v(\mb{s}_{i})$]{\includegraphics[width=0.5\textwidth]{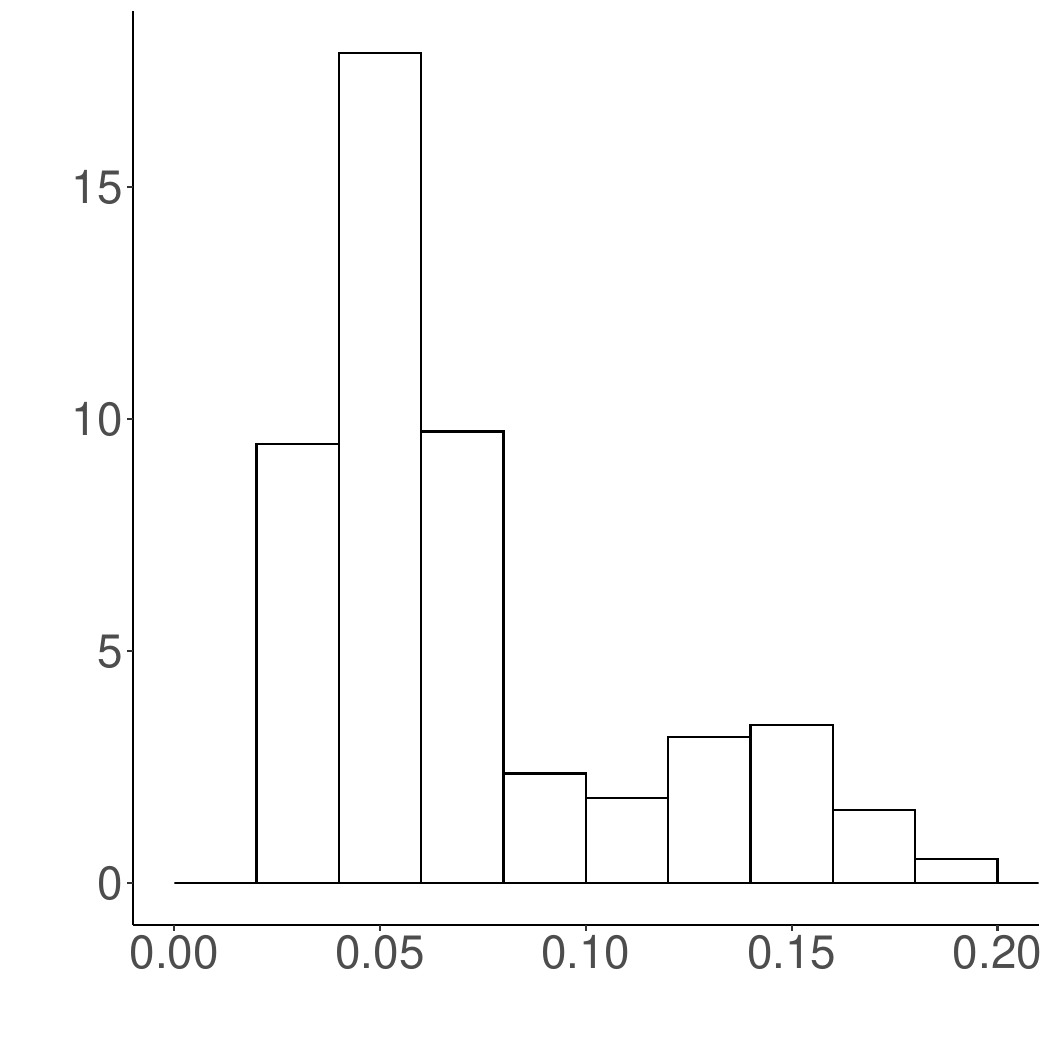}} \\
\end{figure}

\subsection{Empirical Accuracy of \AGHQ{}}\label{subsec:empirical-accuracy-aghq}

In this section we present a brief simulation study to assess the accuracy of using \AGHQ{} to fit the spatial model of \cref{subsec:zeroinf}. We observe that the empirical root-mean-square-error in the posterior mean of the parameter vector decreases with increased simulated sample size, suggesting empirical convergence to the true data-generating parameter. Performing a simulation of this nature is only feasible because of the favourable runtime of \AGHQ{} in this example; running the $250$ total simulations we ran would take approximately $(66\cdot 250) / (24\cdot 365) = 1.88$ years with \MCMC{} based on the one run we completed, and only if we could tune that algorithm to produce satisfactory results.

The simulation procedure is described in \cref{alg:loaloa-simulation}. We use sample sizes 
\*[
	\samplesizes = \{200, 500, 1000, 5000, 10000\}
\]
with $\simsize = 50$ simulated datasets for each size. 
To make the simulation as realistic as possible, we set the true parameter equal to 
the \AGHQ{} estimated posterior mean of $\highdimparam = (\mb{U},\beta_{\texttt{suit}},\mb{V},\beta_{\texttt{inc}})$ from \cref{subsec:zeroinf}. \cref{fig:loaloa-boxplot} shows the empirical RMSE and average coverage of quantile-based pointwise approximate $95\%$ credible intervals for the parameter $\mb{W}$ as well as the suitability probabilities $\phi(\cdot)$ and (conditional) incidence probabilities $p(\cdot)$ from each simulation. The RMSE for all three sets of parameters decreases on average with higher simulated sample size, with $p(\cdot)$ appearing especially accurately estimated and $\phi(\cdot)$ quite accurate as well. The credible intervals for $\mb{W}$ and $\phi(\cdot)$ appear conservative---the less severe of the two types of possible inaccurate coverage---while those for $p(\cdot)$ appear to generally agree with their nominal level.

\begin{algorithm}[p]
\textbf{Input}: size $n_{0}$ of original \texttt{loaloa} dataset, sample sizes $\samplesizes\subseteq\Nats_{\geq n_0}$ to simulate data for, number of simulations $\simsize\in\Nats$ of each sample size to do, true parameters $\highdimparam_{0} = (\mb{U}_{0},\beta_{\texttt{suit},0},\mb{V}_{0},\beta_{\texttt{inc},0})\in\R^{\paramdimbig}$.

\textbf{For} $n \in \samplesizes$ \textbf{do}:
\begin{itemize}
	\item Choose village indices $\JJ \subseteq[n_{0}]$ uniformly and with replacement such that $\abssmall{\JJ} = n$ and each $j\in[n_{0}]$ appears at minimum once in $\JJ$.
\item \textbf{For} $l = 1,\ldots,\simsize$, \textbf{do}:
\begin{enumerate}
	\item Generate the $l^{\text{th}}$ dataset of length $n$, $\data_{l} = \bracevec{Y_{l,i},i\in[n]}$ as follows. 

	\textbf{For} $i = 1,\ldots,n$, \textbf{do}:
	\begin{enumerate}
		\item Let $p_{i} = \left[ 1 + \exp(-\beta_{\texttt{inc},0} - V_{\JJ_{i}})\right]^{-1}$ and $\phi_{i} = \left[ 1 + \exp(-\beta_{\texttt{suit},0} - U_{\JJ_{i}})\right]^{-1}$,
		\item Generate $Z_{i}\sim\text{Binomial}(N_{\JJ_{i}},p_{i})$ and $X \sim \text{Unif}(0,1)$,
		\item If $X \leq \phi_{i}$ set $Y_{l,i} = Z_{i}$, else set $Y_{l,i} = 0$.
	\end{enumerate}
	\item Fit the model (\cref{subsec:zeroinf}) using the \AGHQ{} procedure (\cref{subsec:marginallaplace}) to the data $\data_{l}$.
	\item Compute the estimates:
		\begin{itemize}
			\item Approximate posterior mean $\widehat{\mb{W}}_{l,n} = \approxEE(\mb{W}|\data_{l})$,
			\item Approximate pointwise $95\%$ credible interval: $$\left(\mb{W}_{l,n}^{(\text{lower})},\mb{W}_{l,n}^{(\text{upper})}\right) = \bracevec{(\approxquant{j}{.025},\approxquant{j}{.975}):j\in[\paramdimbig], \data = \data_{l}},$$
			\item (similarly for $\phi(\cdot)$ and $p(\cdot)$).
		\end{itemize}
	\item Compute the metrics:
		\begin{itemize}
			\item Root-Mean-Square-Error $\text{RMSE}_{l,n} = \left[\frac{1}{\paramdimbig}\left(\widehat{\mb{W}}_{l,n} - \mb{W}_{0} \right)^{\tpose}\left(\widehat{\mb{W}}_{l,n} - \mb{W}_{0} \right)\right]^{1/2}$,
			\item Average coverage: $$\text{COVR}_{l,n} = \frac{1}{m}\sum_{j=1}^{m}\ind\left[\left(\mb{W}_{l,n}^{(\text{lower})}\right)_{j} \leq \left(\mb{W}_{0}\right)_{j}\right]\times\ind\left[\left(\mb{W}_{l,n}^{(\text{upper})}\right)_{j} \geq \left(\mb{W}_{0}\right)_{j}\right],$$
			\item (similarly for $\phi(\cdot)$ and $p(\cdot)$).
		\end{itemize}
\end{enumerate}
\end{itemize}
\textbf{Output}: Sampled RMSE and coverage values for $\mb{W},\phi(\cdot)$ and $p(\cdot)$ for $(n,l)\in\samplesizes\times[\simsize]$.
\caption{Simulations to assess empirical accuracy of \AGHQ{} for \cref{subsec:zeroinf}}
\label{alg:loaloa-simulation}
\end{algorithm}

\begin{figure}[p]
\caption{\label{fig:loaloa-boxplot} Simulated RMSE (a--c) and average coverage (d--f) for the parameter vector $\mb{W}$ (a,d), suitability probabilities $\phi(\cdot)$ (b,e), and (conditional) incidence probabilities $p(\cdot)$ (c,f) for the zero-inflated binomial model of \cref{subsec:empirical-accuracy-aghq}.}
\centering
\makebox{
	\subfloat[RMSE, $\mb{W}$]{\includegraphics[width=0.3\textwidth]{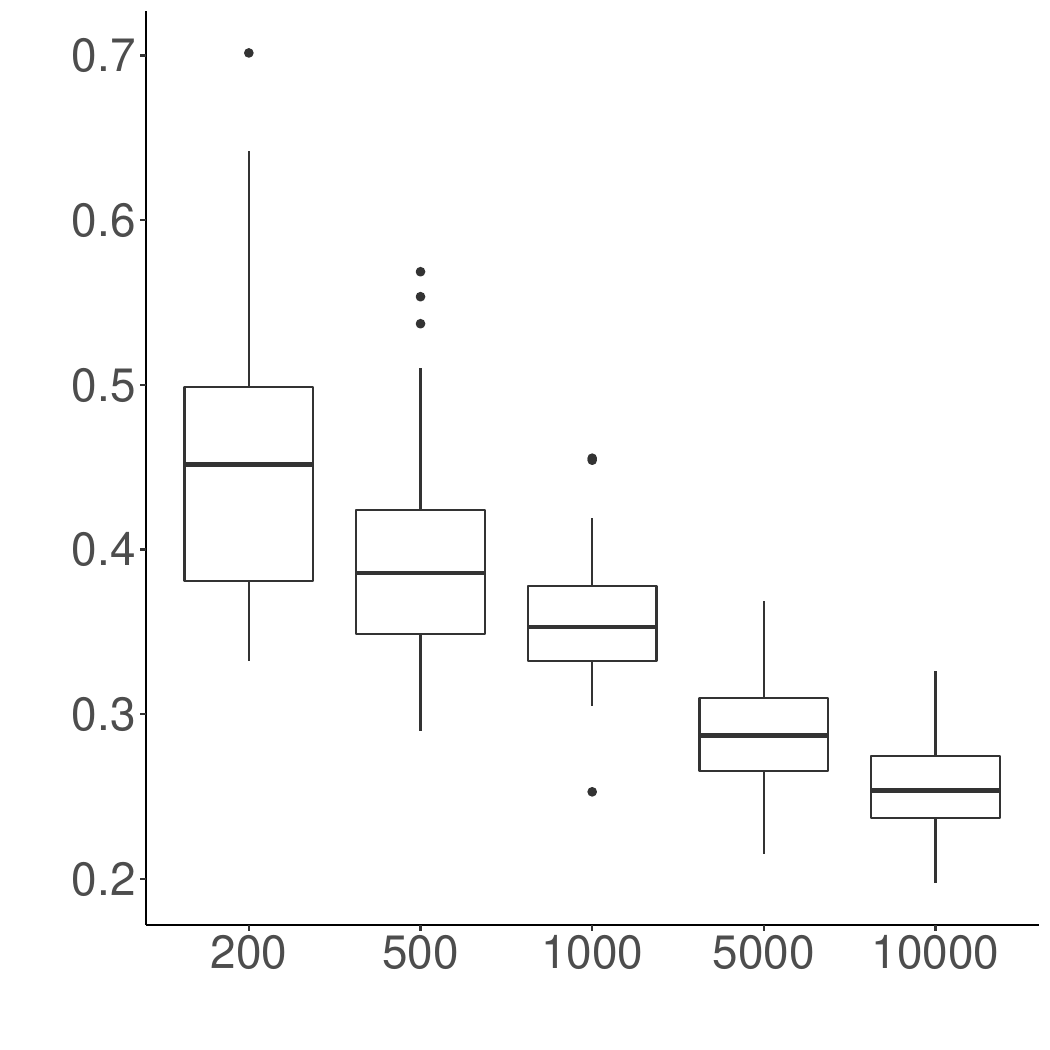}}
	\subfloat[RMSE, $\phi(\cdot)$]{\includegraphics[width=0.3\textwidth]{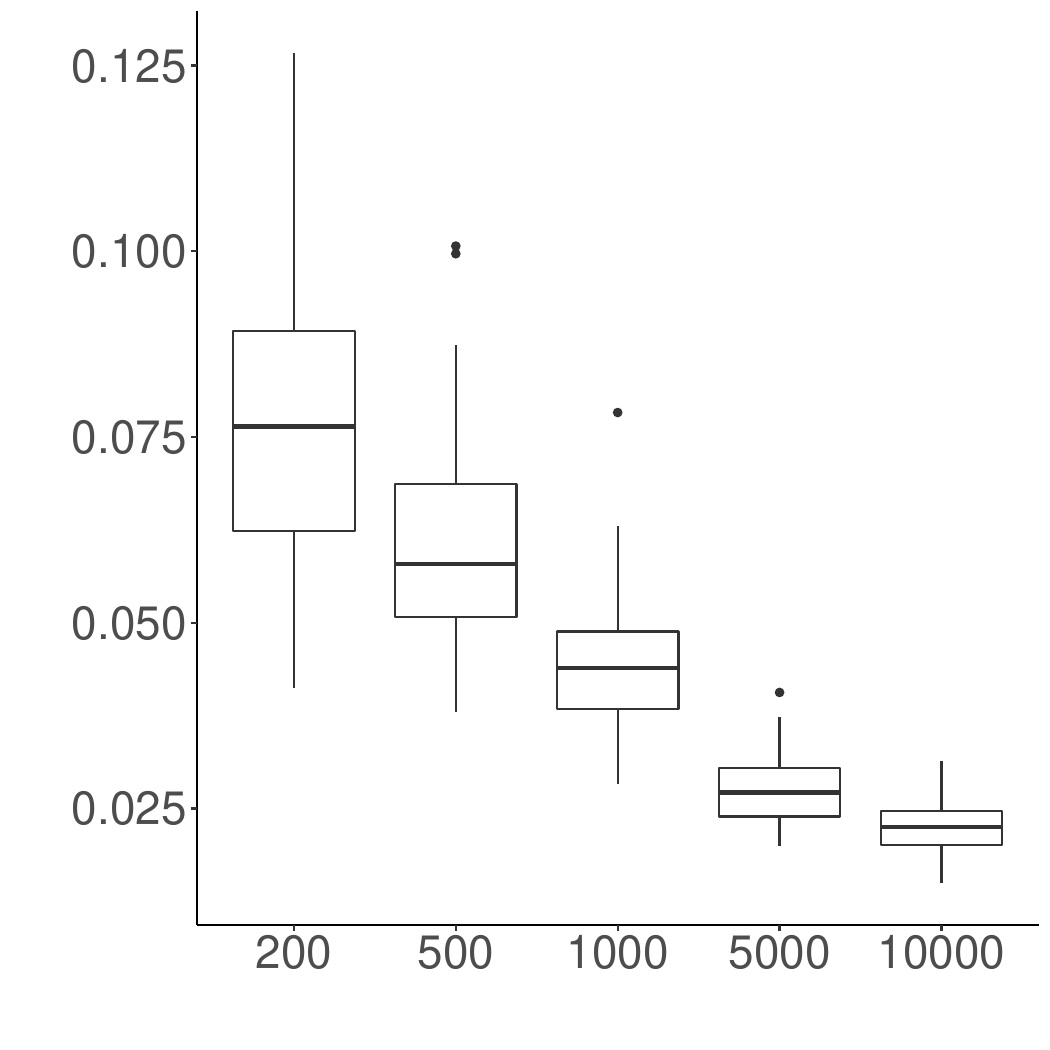}}
	\subfloat[RMSE, $p(\cdot)$]{\includegraphics[width=0.3\textwidth]{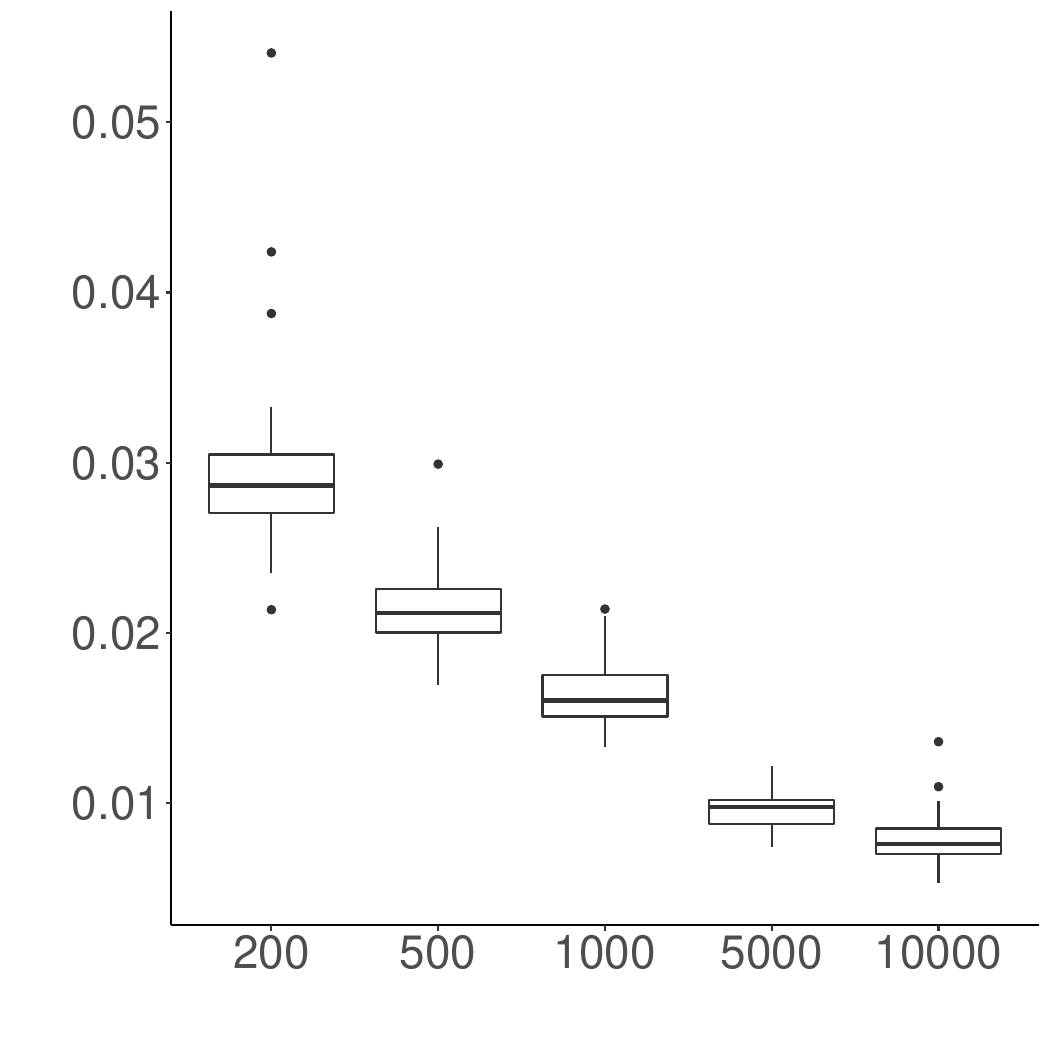}}
} \\
\makebox{
	\subfloat[Coverage, $\mb{W}$]{\includegraphics[width=0.3\textwidth]{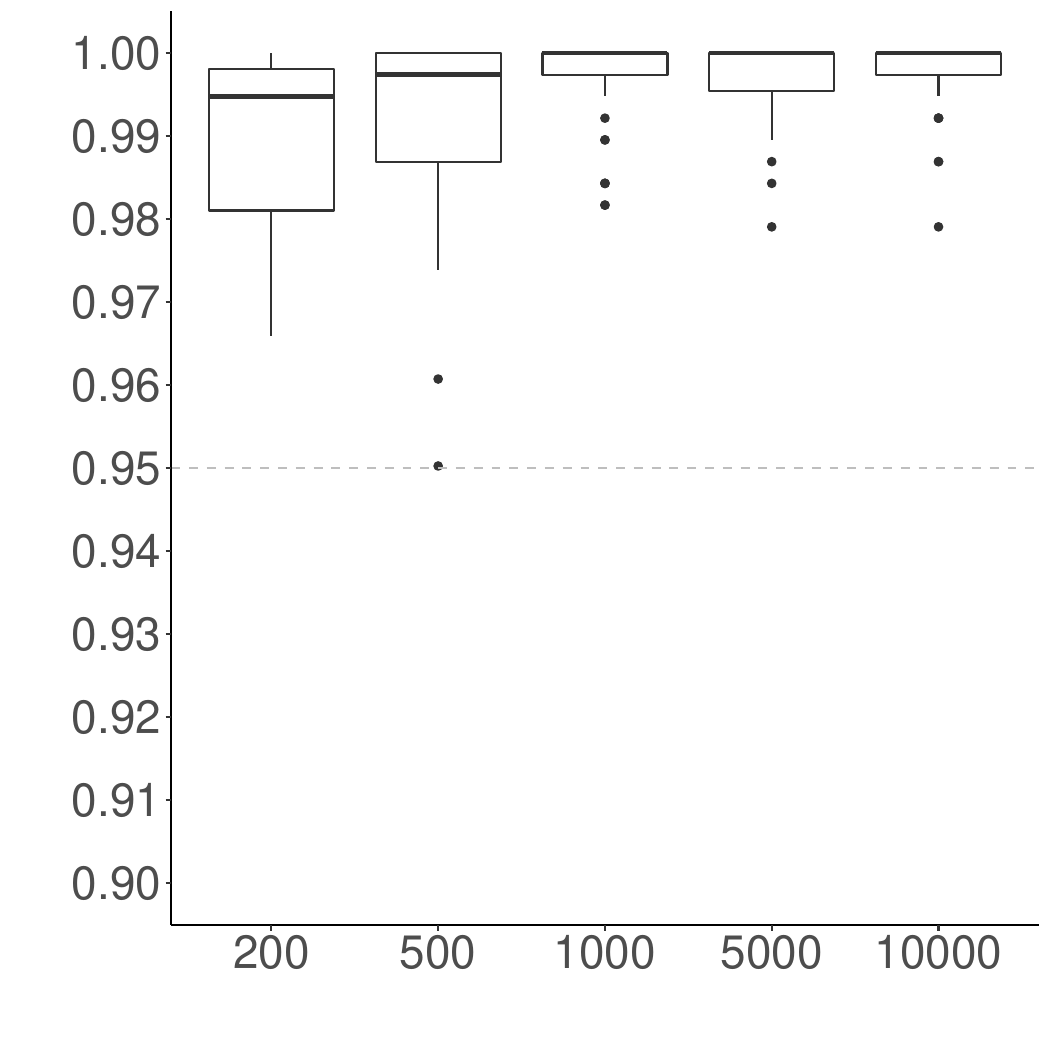}}
	\subfloat[Coverage, $\phi(\cdot)$]{\includegraphics[width=0.3\textwidth]{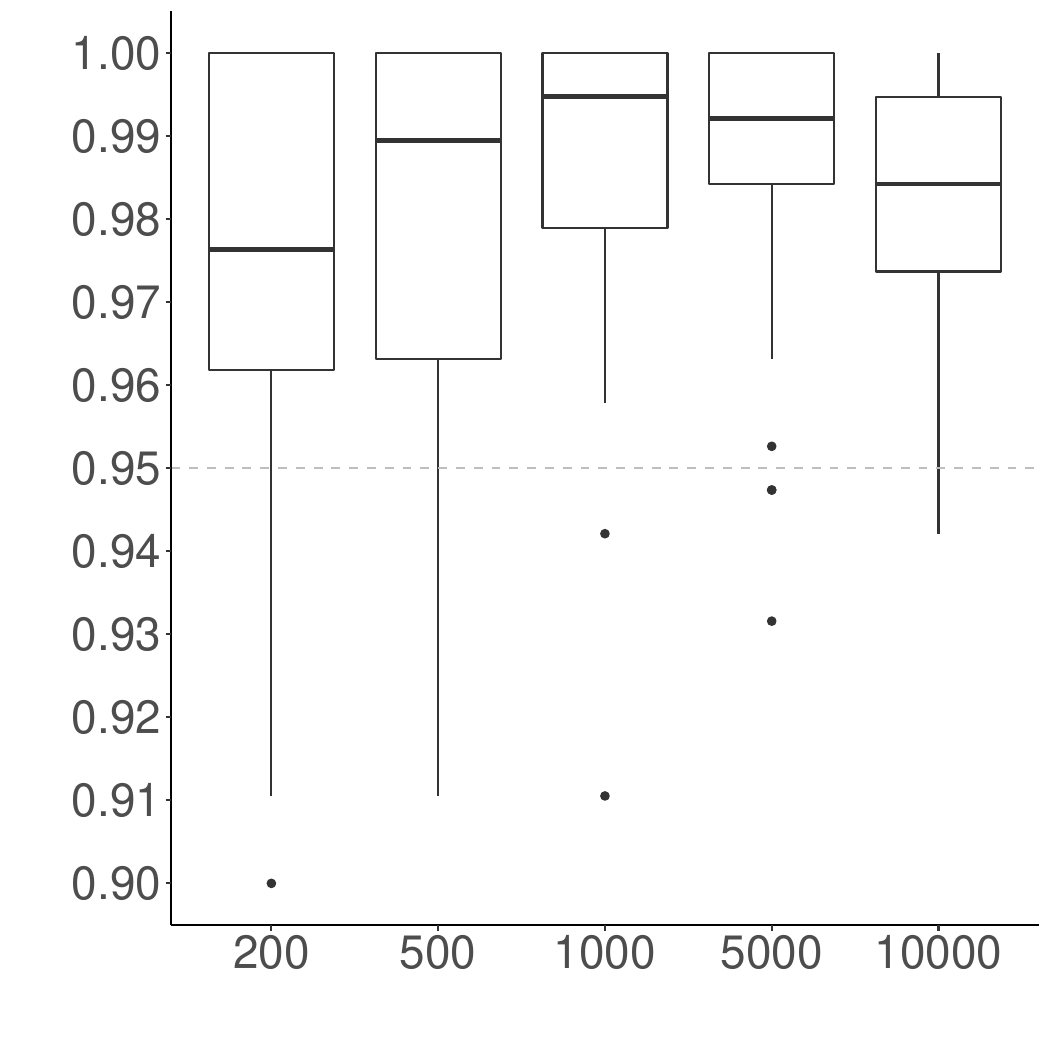}}
	\subfloat[Coverage, $p(\cdot)$]{\includegraphics[width=0.3\textwidth]{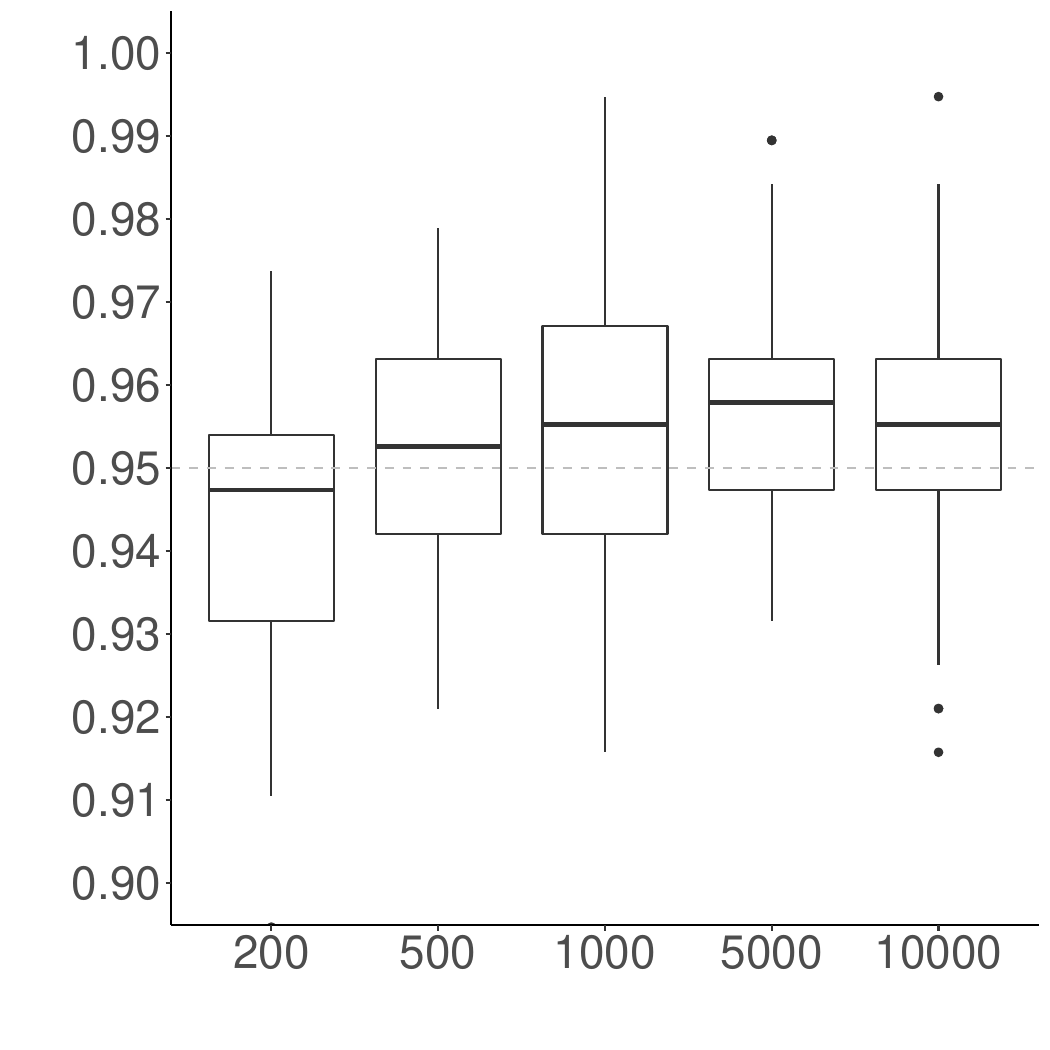}}
} \\
\end{figure}
\section{Glossary of Commonly Used Terms}
\centering
\begin{tabular}{ |m{1.7cm} ? >{\centering}m{5.5cm}| m{6cm}|  }
 \hline
 \textbf{Symbol}     & \textbf{Mathematical Description}  & \textbf{Verbal Description} \\
 \thickhline
 $\param$   &   -  & Model Parameter \\
 \hline
 $\data$   &   -  & Data vector \\
 \hline
$\llhood(\param)$   &   $\log \dist(\data \setdelim \param)$  & Log-likelihood function \\
 \hline
 $\dist(\data)$ & - & Marginal likelihood\\
 \hline
 $\approxdist(\data)$ & \cref{eqn:aghq-normalizing} & Approximate marginal likelihood \\
 \hline
 $\dist(\param)$ & - & Prior distribution on $\paramspace$\\
 \hline 
 $\llandp(\param)$ & $\llhood(\param) + \log\dist(\param)$ & Unnormalized log-posterior\\
 \hline
 $\paramtrue$ & - & Closest parameter to truth\\
 \hline 
 $\parammle$ & $\argmax_{\param\in\paramspace}\llhood(\param)$ & Maximum likelihood estimator\\
 \hline
 $\parammode $ & $\argmax_{\param\in\paramspace}\llandp(\param)$ & Posterior mode\\
 \hline  
 $\logposthess(\param) $ & $-\frac{\partial^{2}}{\partial \param\partial \param\tpose}\llandp(\param)$ & Negative Hessian of $\llandp$\\
 \hline 
 $\logpostchol$ & $[\logposthess(\parammode)]^{-1} = \logpostchol\logpostchol\tpose$ & Lower Cholesky decomposition \\
 \hline 
 $\quadnum$ & - & Desired polynomial accuracy\\
 \hline 
 $\hermitezerovec$ & \cref{subsec:adaptivequadrature} & Quadrature point ($\in \Reals^\paramdim$)\\
 \hline
 $\weight(\hermitezerovec)$ & \cref{subsec:adaptivequadrature} & Quadrature weight assigned to $\hermitezerovec$  \\
 \hline
 $\quadpointset$ &  \cref{subsec:adaptivequadrature} & Set of all quadrature points\\
 \hline
 $\hermitezerobound $ & $\sup_{\hermitezerovec \in \hermiteset(\quadnum,\paramdim)} \norm{\hermitezerovec}_2$ &Max quadrature point norm \\
 \hline
 $\tau^{(j)}_{< \dummytau}$ & \multicolumn{2}{c|}{$ \left\{(t_3,\dots,t_{2\quadnum}) \in \PosInts^{2\quadnum -3} \setdelim \sum_{s=3}^{2\quadnum} t_s = j \text{ and } \sum_{s=3}^{2\quadnum} s t_s \leq \dummytau - 1 \right\}$} \\
 \hline 
 $\tau^{(j)}_{\geq \dummytau}$ & \multicolumn{2}{c|}{ $\left\{(t_3,\dots,t_{2\quadnum}) \in \PosInts^{2\quadnum-3} \setdelim \sum_{s=3}^{2\quadnum} t_s = j \text{ and } \sum_{s=3}^{2\quadnum} s t_s \geq \dummytau \right\}$} \\
 \hline
 $\tau^{(j)}_{= \dummytau}$ & \multicolumn{2}{c|}{ $\left\{(t_3,\dots,t_{2\quadnum}) \in \PosInts^{2\quadnum-3} \setdelim \sum_{s=3}^{2\quadnum} t_s = j \text{ and } \sum_{s=3}^{2\quadnum} s t_s = \dummytau \right\}$} \\
 \hline
 $\tsum$ &  \multicolumn{2}{c|}{$\sum_{s=3}^{a} s t_s$ for any $\tvec = (t_3,\dots,t_{a}) \in \PosInts^{a-3}$ for $a \geq 3$}   \\
 \hline
 $\multicoeff{j}{\tvec}$ & ${j \choose t_3,\dots,t_{2\quadnum}}$ & Multinomial coefficients \\
 \hline 
 $\hessbig $ & $\sup_{\param \in \ball{\paramdim}{\paramtrue}{\universalradius}} \eigen_{1}(\logposthess(\param))  \leq n \hessbig$ & Maximal  Eigenvalue\\
 \hline  
 $\hessbigmode  $ & $\eigen_{1}(\logposthess(\parammode))/n$ & Largest normalized Eigenvalue of the negative Hessian at the posterior mode \\
 \hline 
  $\hesssmall $ & $\inf_{\param \in \ball{\paramdim}{\paramtrue}{\universalradius}} \eigen_{\paramdim}(\logposthess(\param))  \geq n \hesssmall$ & Minimal Eigenvalue\\
 \hline  
$\hesssmallmode  $ & $\eigen_{\paramdim}(\logposthess(\parammode))/n$ & Smallest normalized Eigenvalue of the negative Hessian at the posterior mode \\
\hline
$\cdfpost{j}(x)$ & \cref{subsec:posterior} & True $j$th marginal CDF of $\param$\\
 \hline 
 $ \quant{j}{\alpha}$ & $\inf \{x \setdelim \cdfpost{j}(x) \geq \alpha\}$ & True $\alpha$-quantile for the $j$th marginal of $\param$\\
 \hline
$\cdfapprox{j}(x)$ & \cref{sec:approximatebayesianinference} & Approx $j$th marginal CDF of $\param$\\
 \hline 
 $\approxquant{j}{\alpha}$  & $\inf \{x \setdelim \cdfapprox{j}(x) \geq \alpha\}$ & Approx $\alpha$-quantile for the $j$th marginal of $\param$\\
 \hline
 $\derivboundmode$ & \multicolumn{2}{c|}{$\sup_{\dumderivvec: \abssmall{\dumderivvec}\leq \derivnum} \sup_{\param\in\ball{\paramdim}{\parammode}{\hermitezerobound\lvert\logpostchol\rvert \lor \shrinkingrad}} \absbig{\partial^{\dumderivvec} \llandp(\parammode)}$}\\
 \hline
\end{tabular}

\end{document}